\keywords{concurrency, lambda-calculus, process calculi, intersection types, session types}
\newcommand{\spi}{\ensuremath{\mathsf{s}\pi}\xspace}
\newcommand{\lamr}{\ensuremath{\lambda_{\oplus}}\xspace} 
\newcommand{\lamrshar}{\ensuremath{\widehat{\lambda}_{\oplus}}\xspace} 
\newcommand{\lamrfail}{\ensuremath{\lambda^{\lightning}_{\oplus}}\xspace}
\newcommand{\lamrsharfail}{\ensuremath{\widehat{\lambda}^{\lightning}_{\oplus}}\xspace} 
\newcommand{\sep}{\ | \ } 
\newcommand{\bag}[1]{\Lbag #1 \Rbag} 
\newcommand{\oneb}{\mathtt{1}}
\newcommand{\perm}[1]{\ensuremath{\mathsf{PER}(#1)}}
\newcommand{\dom}[1]{\mathtt{dom}(#1)}
\newcommand{\arrt}[2]{\ensuremath{#1 \rightarrow #2}}
\newcommand{\fail}{\mathtt{fail}}
\newcommand{\contexcat}{\ensuremath{\wedge}}
\newcommand{\size}[1]{\mathsf{size}(#1)} 
\newcommand{\headf}[1]{\ensuremath{\mathsf{head}(#1)}}
\newcommand{\headfsum}[1]{\ensuremath{\mathsf{head}_{\sum}}(#1)}
\newcommand{\dash}{\text{-}}
\newcommand{\shar}[2]{[#1\leftarrow #2]}
\def\subst#1#2{\{ \raisebox{.5ex}{\small$#1$}\! / \mbox{\small$#2$}\}} 
\def\linsub#1#2{\langle \raisebox{.5ex}{\small$#1$}\! / \mbox{\small$#2$}\rangle} 
\def\esubst#1#2{\langle\!\langle \raisebox{.5ex}{\small$#1$}\! / \mbox{\small$#2$}\rangle\!\rangle} 
\newcommand{\headlin}[1]{ { \{\!|} #1 { |\!\} }} 
\newcommand{\linexsub}[1]{{\langle \! |} #1 {| \! \rangle} }
\newcommand{\expr}[1]{\ensuremath{\mathbb{#1}}}
\newcommand{\lfv}[1]{\mathsf{fv}(#1)}
\newcommand{\mfv}[1]{\mathsf{mfv}(#1)}
\newcommand{\unit}{\mathbf{unit}}
\newcommand{\red}{\longrightarrow}
\newcommand{\tred}{\stackrel{*}{\red}}
\newcommand{\redlab}[1]{\ensuremath{\mathtt{[#1]} }}
\newcommand{\pequiv}{\equiv_\lambda}
\newcommand{\secref}[1]{$\S$\,\ref{#1}\xspace}
\newcommand{\figref}[1]{Fig.\,\ref{#1}\xspace}
\newcommand{\defref}[1]{Def.\,\ref{#1}\xspace}
\newcommand{\appref}[1]{App.\,\ref{#1}\xspace}
\newcommand{\thmref}[1]{Theorem~\ref{#1}\xspace}
\newcommand{\wfdash}{\models}
\newcommand{\core}[1]{#1^\dagger}
\newcommand{\strcore}[1]{\widehat{#1}^\dagger}
\newcommand{\outact}[2]{\overline{#1}(#2)}
\newcommand{\some}{\mathtt{some}}
\newcommand{\none}{\mathtt{none}}
\newcommand{\close}{\mathtt{close}}
\newcommand{\para}{\mathord{\;\mathbf{|}\;}}
\newcommand{\zero}{{\bf 0}}
\newcommand{\fn}[1]{\mathit{fn}(#1)}
\newcommand{\ampy}{\mathbin{\bindnasrepma}}
\newcommand{\with}{{\binampersand}}
\newcommand{\onef}{\mathbf{1}}
\newcommand{\dual}[1]{\overline{#1}}
\newcommand{\colorone}[1]{\textcolor{orange}{#1}}
\newcommand{\colorthree}[1]{\textcolor{purple}{#1}}
\newcommand{\encod}[2]{\llbracket#1\rrbracket_{#2}} 
\newcommand{\recencodf}[1]{\colorone{\llparenthesis}  #1 \colorone{\rrparenthesis^{\bullet}}} 
\newcommand{\recencodopenf}[1]{\colorone{\llparenthesis}  #1 \colorone{\rrparenthesis^{\circ}}}
\newcommand{\piencodf}[1]{\colorthree{\llbracket}  #1 \colorthree{\rrbracket}^{\colorthree{\lightning}}}
\newcommand{\piencod}[1]{\llbracket  #1 \rrbracket}
\newcommand{\linsetminus}{\setminus \!\! \setminus}
\newcommand{\succp}[2]{\ensuremath{#1 \Downarrow_{#2}}} 
\newcommand{\blue}[1]{\textcolor{RoyalBlue}{#1}}
\newcommand{\cred}[1]{\textcolor{BrickRed}{#1}}
\newcommand{\revo}[2]{#2}
\newcommand{\revd}[2]{#2}
\newcommand{\revt}[2]{#2}
\newcommand{\revdaniele}[1]{#1}
 \newcommand{\secondrev}[1]{#1}
\newcommand{\srev}[1]{\secondrev{#1}}
\theoremstyle{plain} 
\begin{document}

\title[Non-Deterministic Functions as Non-Deterministic Processes]{Non-Deterministic Functions as\texorpdfstring{\\}{ }Non-Deterministic Processes}

\author[J.~Paulus]{Joseph W. N. Paulus\lmcsorcid{0000-0002-1711-9361}}[a]	
\address{University of Groningen, The Netherlands}	
\email{j.w.n.paulus@rug.nl, j.a.perez@rug.nl}  

\author[D.~Nantes]{Daniele Nantes-Sobrinho\lmcsorcid{0000-0002-1959-8730}}[b]	
\address{Imperial College London, UK and University of Bras\'ilia, Brazil}	
\email{dnantess@ic.ac.uk}  

\author[J.A.~P\'erez]{Jorge A. P\'erez\lmcsorcid{0000-0002-1452-6180}}[a]	





\begin{abstract} 
  \noindent We study encodings of the $\lambda$-calculus into the $\pi$-cal\-cu\-lus in the unexplored case of calculi with \emph{non-determinism}  and \emph{failures}.
On the sequential side, we consider \lamrfail, a new  non-deterministic calculus in which intersection types control resources (terms); on the concurrent side, we consider~\spi, a $\pi$-calculus in which non-determinism and failure rest upon a Curry-Howard correspondence between linear logic and session types. 
We present a typed encoding of \lamrfail into \spi and establish its correctness. 
Our encoding precisely explains the interplay of  non-deterministic and fail-prone  evaluation in \lamrfail via  typed processes in~\spi.
In particular, it shows how failures in sequential evaluation (absence/excess of resources) can be neatly codified as interaction protocols.
\end{abstract}

\maketitle

\section*{Introduction}\label{S:one}
Milner's seminal work on encodings of the $\lambda$-calculus into the $\pi$-calculus~\cite{Milner92} explains how \emph{interaction} in $\pi$ subsumes \emph{evaluation} in~$\lambda$. 
It opened a research strand on formal connections between sequential and concurrent calculi, covering untyped and typed regimes (see, e.g.,~\cite{DBLP:journals/mscs/Sangiorgi99,DBLP:conf/birthday/BoudolL00,DBLP:conf/fossacs/BergerHY03,DBLP:conf/fossacs/ToninhoCP12,DBLP:conf/rta/HondaYB14,DBLP:conf/popl/OrchardY16,DBLP:conf/esop/ToninhoY18}). 
This paper extends this line of work by tackling a hitherto unexplored angle, namely encodability of  calculi in which computation is \emph{non-deterministic} and may be subject to \emph{failures}---two relevant features in sequential and  concurrent programming models.

We focus on \emph{typed} calculi and study how non-determinism and failures interact with \emph{resource-aware} computation. 
In sequential calculi, \emph{non-idempotent intersection types}  offer one fruitful perspective at resource-aware\-ness \revt{}{(see, e.g.,~\cite{DBLP:conf/tacs/Gardner94,DBLP:journals/logcom/Kfoury00,DBLP:journals/tcs/KfouryW04,DBLP:conf/icfp/NeergaardM04,BucciarelliKV17})}.
Because non-idempotency amounts to distinguish between types $\sigma$ and $\sigma \land \sigma$, this class of intersection types can ``count'' different resources and enforce quantitative guarantees. 
In  concurrent calculi, resource-awareness has been much studied using \emph{linear types}. Linearity ensures that process actions occur exactly once, which is key to enforce protocol correctness. 
 \revd{B3}{In particular, \emph{session types}~\cite{DBLP:conf/concur/Honda93,DBLP:conf/esop/HondaVK98} 
specify the protocols that channels must respect; 
this typing discipline exploits linearity to  
ensure absence of communication errors and stuck processes.}
To our knowledge, connections between calculi adopting these two distinct views of resource-awareness  via types are still to be established. 
We aim to develop such connections by relating models of sequential and concurrent computation.

On the sequential side, we introduce \lamrfail: a   $\lambda$-calculus with resources, non-de\-ter\-mi\-nism, and failures, which
distills key elements from  $\lambda$-calculi studied in~\cite{DBLP:conf/concur/Boudol93,PaganiR10}.
Evaluation in  \lamrfail considers \emph{bags} of resources, and determines alternative executions governed by non-determinism.
Failure results from a lack  or excess of resources (terms), and is   captured by the term $\fail^{\widetilde{x}}$, where $\widetilde{x}$ denotes a sequence of variables.
Non-determinism  in  \lamrfail  is \emph{non-collapsing} (i.e., confluent): intuitively, given   $M$ and $N$ with reductions $M \red M'$ and $N \red N'$, the non-deterministic sum $M + N$ reduces to $M' + N'$.
In contrast, under a \emph{collapsing} (i.e., non-confluent) approach, as in, e.g.,~\cite{DBLP:conf/mfcs/Dezani-CiancaglinidP93},  the non-deterministic sum  $M + N$ reduces to either $M$ or $N$. 

On the concurrent side, we consider \spi: a  session-typed $\pi$-calculus with (non-collap\-sing) non-de\-ter\-mi\-nism and failure, proposed in~\cite{CairesP17}.
\spi rests upon a Curry-Howard correspondence between session types and   (classical) linear logic, extended with modalities that express   \emph{non-deterministic protocols} that may succeed or fail. Non-determinism in \spi is non-collapsing, which ensures confluent process reductions. 

\paragraph{Contributions} 
This paper presents \secondrev{the first formal connection between a $\lambda$-calculus with non-idempotent intersection types and a $\pi$-calculus with session types. Specifically, the paper presents} the following contributions:
\begin{enumerate}
    \item \textbf{The resource calculus \lamrfail}, a new calculus that distills the distinctive elements from previous resource calculi~\cite{DBLP:conf/birthday/BoudolL00,PaganiR10}, while offering an explicit treatment of failures in a setting with non-collapsing non-determinism. 
    
    \secondrev{We develop the syntax, semantics, and essential meta-theoretical results for \lamrfail. In particular, }
    using intersection types, we  define \emph{well-typed} (fail-free) expressions and \emph{well-formed} (fail-prone) expressions in \lamrfail \secondrev{and establish their properties}. 
    
    \item \textbf{An encoding of \lamrfail into \spi},  proven correct following established  criteria \secondrev{in the realm of relative expressiveness for concurrency}~\cite{DBLP:journals/iandc/Gorla10,DBLP:journals/iandc/KouzapasPY19}. 
     These criteria attest to an encoding's quality;  we consider
\emph{type preservation},  \emph{operational correspondence} \srev{(including \;{completeness} and \emph{soundness})},  \emph{success sensitiveness}, and \emph{compositionality}.

Thanks to these correctness properties, our encoding precisely describes how typed interaction protocols \secondrev{(given by session types)} can codify sequential evaluation in which absence and excess of resources leads to failures \secondrev{(as governed  by intersection types)}. 
\end{enumerate}

\smallskip

These contributions entail different challenges. 
The first is bridging the different mechanisms for resource-awareness involved  (i.e., intersection types in \lamrfail, session types in \spi).
A direct encoding of \lamrfail into \spi is far from obvious, as multiple occurrences of a variable in \lamrfail must  be accommodated into the linear setting of \spi. 
\srev{To overcome this challenge, 
we introduce a variant of  \lamrfail, dubbed  \lamrsharfail.
The distinctive feature of \lamrsharfail is a 
\emph{sharing} construct, which we adopt following the \emph{atomic} $\lambda$-calculus presented in~\cite{DBLP:conf/lics/GundersenHP13}.
}
Our encoding of \lamrfail  expressions into \spi processes is then in two steps.
We first define a correct encoding from \lamrfail to \lamrsharfail, which relies on the sharing construct to ``atomize''  occurrences of the same variable.
Then, we define another correct encoding, from \lamrsharfail to \spi, which extends Milner's with constructs for non-determinism. 

Another challenge is framing failures  in \lamrfail (undesirable computations)  as well-typed \spi processes. 
Using intersection types, we define \emph{well-formed} \lamrfail expressions,  which   can fail, in two stages. 
First, we consider \lamr, the sub-language of \lamrfail without   $\fail^{\widetilde{x}}$. 
We give an intersection type system for \lamr to regulate fail-free  evaluation. 
Well-formed expressions are then defined on top of well-typed \lamr expressions.
We show that \spi can correctly encode the fail-free \lamr but, more interestingly, also well-formed \lamrfail expressions, which are fail-prone. 

\figref{fig:summary} summarizes our approach: the encoding 
from \lamrfail to \lamrsharfail is denoted 
 $\recencodopenf{\cdot}$, whereas the encoding from \lamrsharfail to \spi is denoted $\piencodf{\cdot}_u$.

\paragraph{Organization}

 Next, \secref{s:key} informally discusses  key ideas in our work. 
 \secref{s:lambda} introduces the syntax and semantics of \lamrfail, and defines its intersection type system.
\secref{sec:lamsharfail} introduces \lamrsharfail, the variant of \lamrfail with sharing. It also presents its associated intersection type system, and defines an encoding from $\lamrfail $ to $\lamrsharfail$.
In \secref{s:pi} we summarize the syntax, semantics, and session type system of \spi, following~\cite{CairesP17}. 
\secref{s:encoding} establishes the correctness of the encoding of $\lamrfail $ into $\lamrsharfail$ and presents and proves correct the encoding of \lamrshar into \spi. 
\secref{s:rw} presents comparisons with related works.
\secref{s:disc} closes with a discussion about our approach and results.
 
 This paper is an extended and revised version of the conference paper \cite{DBLP:conf/fscd/PaulusN021}. Here we have included full technical details, additional examples, and extended explanations. For the sake of readability, and to make the paper self-contained, we have included proof sketches in the main text; their corresponding full proofs have been collected in the appendices.
  
\begin{figure}[!t]
    \centering
\includegraphics[scale=0.95]{./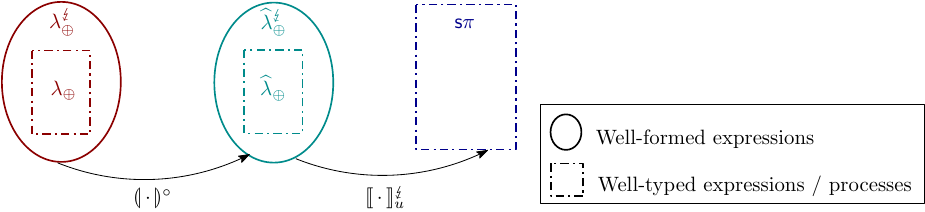}
    \caption{Overview of our approach.}
    \label{fig:summary}
\end{figure}



\section{Overview of Key Ideas}
\label{s:key}


\revd{}{Before embarking into our technical developments, we discuss some key ideas in the definition of \lamrfail and its correct encodability into \spi.} 

\paragraph{Non-determinism.} Our source language \lamrfail has three syntactic categories: terms ($M,M'$), bags ($B,B'$) and expressions ($\expr{M}, \expr{L}$). Terms can be variables, abstractions $\lambda x. M$, 
applications $(M\ B)$, explicit substitutions $M\esubst{B}{x}$, or the explicit failure term $\fail$ (see below). 
Bags are multisets of terms (the resources); this way, e.g., $B= \bag{M_1,M_1,M_2}$ is a bag with 
three resources ($M_1$, $M_1$, and $M_2$). Expressions are sums of terms, written $M_1+M_2$; they denote a non-deterministic choice between different ways of \emph{fetching} resources from the bag.

In  $\lamrfail$, reduction is lazy: first, a $\beta$-reduction evolves to an explicit substitution, which will then fetch the elements in the bag to be 
substituted for the corresponding variable, when some conditions are satisfied: we interpret this  as ``consuming a resource". 
For instance, 
given a $\lamrfail$-term $M$
with head variable $x$ and two occurrences of $x$, we have the  reduction:
\begin{eqnarray}
\lambda x. M \bag{M_1,M_2}  & \red &
 M \esubst{\bag{M_1,M_2}}{x} \nonumber
 \\
  & \red & M\headlin{M_1/x}\esubst{\bag{M_2}}{x} + M\headlin{M_2/x}\esubst{\bag{M_1}}{x} = M'
	\label{key:ex1} 
\end{eqnarray}
The resulting expression $M'$ is a sum that gathers two alternative computations: it may reduce by either (i)~first fetching $M_1$ from the bag and \secondrev{linearly} substituting it for $x$ in \secondrev{the head position of} $M$  (this is denoted with $M\headlin{M_1/x}$) and then continue with the rest of the bag ($M_2$, wrapped in an explicit substitution), or (ii)~fetching  and 
\secondrev{linearly} substituting $M_2$ \secondrev{in head position}, leaving $M_1$ in an explicit substitution.

\paragraph{Successful Reductions} We consider a computation as successful only  when the number of elements in the bag
matches the number of occurrences of the variable to be substituted; otherwise the computation fails.
 As an example, consider the previous example,  now with  $ M= x\bag{x\bag{I}}$ 
where $I=\lambda z.z$ is the identity. The reduction in (\ref{key:ex1}) is then 
$$
 ( \lambda x. x\bag{x\bag{I}}) \bag{M_1,M_2}  \red^* M_1\bag{x\bag{I}}\esubst{\bag{M_2}}{x} + M_2\bag{x\bag{I}}\esubst{\bag{M_1}}{x}
 $$

Hence,    when $\lambda x. M$ is applied to a bag with two resources, it evolves successfully.  
However, if $\lambda x. M$ is applied to a bag with less (or more) than two resources, the computation  evolves to the \emph{explicit failure} term $\fail^{\widetilde{z}}$,  where $\widetilde{z}$ is a multiset of variables, as we explain next.

\paragraph{Explicit Failure.}
A construct for failure is present in the resource $\lambda$-calculus in~\cite{PaganiR10}. 
In this formulation, the failure term `$0$' is consumed by sums and disappears at the end of the computation; as such, it gives no information about the failed computation and its origins.

Following~\cite{PaganiR10}, a design decision in $\lamrfail$ is to have $\fail^{\widetilde{x}}$ in the syntax of terms. 
The sequence $\widetilde{x}$ denotes the variables captured by failure; this provides useful information on the origins of a failure. 
As an example, consider a term $M$ with free variables $\widetilde{y}$ and in which the number of occurrences of $x$ is  different from 2. 
Given a bag $B=\bag{M_1,M_2}$, reduction leads to a failure, as follows:
$$
    (\lambda x. M) B \red M\ \esubst{\bag{M_1,M_2}}{x}\red {}  \displaystyle\sum_{\perm{\bag{M_1,M_2}}} \fail^{\widetilde{y}} = M'
    $$
In this case, $M'$ is the sum $\fail^{\widetilde{y}}+\fail^{\widetilde{y}}$, which has as many summands as the permutations of the elements of $B$. Intuitively, it means that it does not matter if one replaces the occurrence(s) of $x$ first with $M_1$ (or $M_2$), then the other occurrence (if any), with $M_2$ (or $M_1$), the result will be the same, i.e., $\fail^{\widetilde{y}}$. Here again both possibilities are expressed in a sum. The precise semantics of failure will be presented in \secref{ssec:red_sem}.

\paragraph{Typability and Well-formedness}

We define an intersection type system for $\lamrfail$. This choice follows a well-established tradition of coupling resource $\lambda$-calculi with intersection types~\cite{DBLP:conf/concur/Boudol93,BoudolL96,PaganiR10}. Intersection types are also adopted in related calculi~\cite{DBLP:journals/tcs/EhrhardR03}. Intersection types are a natural typing structure for resources: they have similar mathematical properties of non-idempotency and commutativity, and can help to ``count'' the number of occurrences of a variable in a term, as well as the number of components in a bag. 

In our type systems, each element of a bag must have the same type. 
This way, e.g., a well-typed bag $B=\bag{M_1,M_2,M_3}$ has type $ \sigma\wedge \sigma \wedge \sigma$, where $\sigma$ is a strict type (cf. \defref{d:typeslamrfail}). Then,  an application $M\ B$ is well-typed, say, with  type $\tau$, only if $M:\sigma\wedge  \sigma \wedge \sigma\to \tau$. 
We shall write $\sigma^k$ to denote the intersection type $\sigma\wedge \ldots \wedge \sigma$, with $k\geq 0$ copies of $\sigma$.
Notice that $\sigma^0$ denotes the empty type $\omega$.
The typing rule for application is then as expected:

\begin{prooftree}
  \AxiomC{\(  \Gamma \vdash M :  \sigma^k \to \tau \)}
    \AxiomC{\( \Gamma\vdash B : \sigma^k \)}
        \LeftLabel{\redlab{T:app}}
    \BinaryInfC{\(  \Gamma\vdash M\ B : \tau\)}
\end{prooftree}
where $\Gamma$ is a type context assigning types to variables.  

We chose to express explicit failing terms and computation.
To properly account for these computations, we define  a separate type system with  so-called \emph{well-formedness} rules, with notation `$\wfdash$'. 
Unlike rules for typability, rules for well-formedness capture computations that fail due to a mismatch  of resources (lack or excess). 
This entails some increased flexibility in selected rules. 
This way, e.g, the following is the well-formedness rule for application:
\begin{prooftree}
  \AxiomC{\( \Gamma \wfdash M :  \sigma^j \to \tau \)}
    \AxiomC{\( \Gamma \wfdash B : \sigma^k\)}
        \LeftLabel{\redlab{F:app}}
    \BinaryInfC{\( \Gamma \wfdash M\ B : \tau\)}
\end{prooftree}
Here the added flexibility is that we do not require $k=j$; hence, the rule can capture successful \emph{and} failing computations, depending on whether $k=j$ or not. As expected, the term $\fail^{\widetilde{z}}$ is not well-typed, but it is well-formed: the judgement $\Gamma\wfdash \fail^{\widetilde{z}}:\tau$ holds for an arbitrary type $\tau$ and a $\Gamma$ consisting of variable assignments 
for the variables in $\widetilde{z}$. 

Therefore, we consider two intersection type systems: one captures exclusively successful computations (see   \figref{fig:app_typingrepeat}); the other, which we call the well-formedness system (see  \figref{fig:app_wf_rules}), subsumes the first one by admitting both successful and failing computations. 
The weakening rule is admissible in both systems (see below).
Both systems enjoy subject reduction, whereas only well-typed terms satisfy subject expansion.

\paragraph{Controlling resources via sharing}
In order to better control the use of resources, i.e., substituting variables for terms with a careful form of duplication, we borrow ideas from the {\em sharing graphs} by Guerrini et\,al.~\cite{GUERRINI199999,GUERRINI2003379} and define the calculus $\lamrsharfail$.
The key idea is as follows: whenever a bound variable $x$ occurs multiple times within a term, these occurrences, say \(x_1,\ldots, x_n\), are temporarily assigned new names (think aliases). This assignment is indicated with the {\em sharing construct}   $\shar{x_1,\ldots, x_n}{x}$, which we adopt following~\cite{DBLP:conf/lics/GundersenHP13}. 
This way, for instance, the $\lamrfail$-term $\lambda x. x \bag{x}$ would correspond to $\lambda x. x_1\bag{x_2}\shar{x_1,x_2}{x}$ in  $\lamrsharfail$.

 We also carefully treat the ``erasing'' of resources: if a term has vacuous abstractions, this is also indicated with the sharing construct, where the bound variable maps to ``empty''. Hence, the $\lamrfail$-term $\lambda x. y \bag{z}$ is expressed as $\lambda x. y\bag{z}\shar{}{x}$ in  $\lamrsharfail$.
The tight control of resources in $\lamrsharfail$ turns out to be very convenient to encode $\lamrfail$ into \spi, as we discuss next.

\paragraph{Encoding $\lamrfail$ into \spi.}
The central result of our work is a correct translation of $\lamrfail$ into \spi. 
In defining our translation we use $\lamrsharfail$ as a stepping stone. 
This is advantageous, because (i)~the relation between $\lamrshar$ and $\lamrsharfail$ is fairly direct and (ii)~the sharing construct in $\lamrsharfail$ makes it explicit the variable occurrences that should be treated as linear names in \spi. 

The encoding of $\lamrfail$ into $\lamrsharfail$ is denoted $\recencodf{\cdot }$ and given in \secref{ss:auxtrans}.
The encoding of $\lamrsharfail$ into \spi, denoted $\piencodf{\cdot}_u$ and presented in \secref{ss:second_trans}, is arguably more interesting---we discuss it below. 


The definition of $\piencodf{\cdot}_u$ considers well-formed source terms in $\lamrsharfail$ which are translated into well-typed \spi processes.
As usual, the translation is parametric on a channel name $u$, which is used to provide the behavior of the source term. 

The calculus \spi includes a non-deterministic choice operator $P \oplus Q$ and formalizes sessions which are \emph{non-deterministically available}. Intuitively, this means that a given session protocol along a name can either be available and proceed as prescribed by the corresponding session type, or fail to be available. Clearly, such a failure may have repercussions on other sessions that depend on it. 
To this end, \spi includes prefixes $x.\overline{\some}$ and $x.\overline{\none}$, which are used to confirm the availability of $x$ and to signal its failure, respectively. 
Process $x.\some_{(w_1, \cdots, w_k)};Q$ declares the dependency of sessions $w_1, \ldots, w_k$ in $Q$ on an external session along $x$. 
The corresponding reduction rules are then:
\begin{eqnarray*}
	\label{eq:motiv_some}
x.\overline{\some} \para x.\some_{(w_1, \cdots, w_k)};Q  & \red &
Q
\\
	\label{eq:motiv_none}
x.\overline{\none} \para x.\some_{(w_1, \cdots, w_k)};Q  & \red &
w_1.\overline{\none} \para \cdots  \para w_k.\overline{\none}
\end{eqnarray*}

%
\noindent
Following Milner, $\piencodf{\cdot}_u$ maps computation in $\lamrsharfail$ into session communication in \spi; non-deterministic sessions are used to codify the non-deterministic fetching of resources in $\lamrsharfail$.
This way, the translation of $(\lambda x. M\shar{x_1,x_2}{x}) B$ will enable synchronizations between the translations of $M\shar{x_1,x_2}{x}$ and $B$. 
More in details, the translation of a bag $B=\bag{M_1,M_2}$ is as follows: 
\begin{eqnarray*}
\piencodf{\bag{M_1}\cdot \bag{M_2}}_x & = &   x.\some_{\widetilde{z_1},\widetilde{z_2}} ; x(y_i). x.\some_{y_i,\widetilde{z_1},\widetilde{z_2}};x.\overline{\some} ; 
\\
& & \quad 
 \secondrev{\outact{x}{x_i}. (x_i.\some_{\widetilde{z_1}} ; \piencodf{M_1}_{x_i} \mid \piencodf{\bag{M_2}}_x \mid y_i. \overline{\none}) }
\end{eqnarray*}
where $\widetilde{z_1}$ and $\widetilde{z_2}$ denote the free variables of $M_1$ and $M_2$, respectively. 
Process $\piencodf{\bag{M_1}\cdot \bag{M_2}}_x$  first expects confirmation of session $x$; then, the translation of each resource $M_i$ is made available in a dedicated name $x_i$, which will be communicated to other processes.
Accordingly, the translation of  $\piencodf{M\shar{x_1,x_2}{x}}_u$ is expected to synchronize with the translation of the bag $B$: indeed, it confirms behavior along $x$, before receiving the names, one for each shared copy of $x$ that should be used throughout the synchronizations:
\[
\begin{aligned}
\piencodf{M\shar{x_1,x_2}{x}}_{u}&= x.\overline{\some}. \outact{x}{y_1}. \Big(y_1 . \some_{\emptyset} ;y_{1}.\close 
       \! \mid \! x.\overline{\some};x.\some_{u, (\lfv{M} \setminus \{x_1 ,  x_2\} )};x(x_1) . \\
       & \hspace{2.0em} . x.\overline{\some}. \outact{x}{y_2} . \big(y_2 . \some_{\emptyset} ; y_{2}.\close  \mid x.\overline{\some};x.\some_{u,( \lfv{M} \setminus \{x_2\} ) };x(x_2)
      \\
       & \hspace{2.8em} . x.\overline{\some}; \outact{x}{y_{}}. ( y_{} . \some_{u, \lfv{M} } ;y_{}.\close; \piencodf{M}_u \mid x.\overline{\none} )~ \big)  \Big) 
\end{aligned}
\]
Several confirmations take place along the channel names involved in the synchronizations; see  \secref{ss:second_trans} for details. 

Non-determinism plays a key role in the translation of an application $M' B$. In this case, we consider the permutations of the elements of $B$ using non-deterministic choice in~$\spi$. 
When $B=\bag{M_1,M_2}$,  the translation is:

\[
\begin{aligned} 
\piencodf{M'\, B}_u  =~~ & (\nu v)(\piencodf{M'}_v \mid v.\some_{u , \lfv{B}} ; \outact{v}{x} . ([v \leftrightarrow u] \mid \piencodf{\bag{M_1,M_2}}_x ) ) 
\\
& \qquad  \oplus 
\\
&  (\nu v)(\piencodf{M'}_v \mid v.\some_{u , \lfv{B}} ; \outact{v}{x} . ([v \leftrightarrow u] \mid \piencodf{\bag{M_2,M_1}}_x ) )
\end{aligned}
\]
A synchronization occurs when process $\piencodf{M'}_v$ can confirm its behavior along $v$. For instance, when $M'=\lambda x. M\shar{x_1,x_2}{x}$ the translation is as
\[   \piencodf{\lambda x.M[x_1,x_2 \leftarrow x]}_v  = v.\overline{\some}; v(x).\piencodf{M[x_2,x_2 \leftarrow x]}_v
\]
and the synchronization may be possible; it depends on the translations of $M$, $M_1$, and $M_2$. 

We close this section by observing that our translations $\recencodf{\cdot }$ and $\piencodf{\cdot}_u$ satisfy well-known  correctness criteria, as formulated by Gorla~\cite{DBLP:journals/iandc/Gorla10} and Kouzapas et al.~\cite{DBLP:journals/iandc/KouzapasPY19} (see \secref{ss:criteria} for details).

\section{ \texorpdfstring{$\lamrfail$}{λ^↯_⊕}: A \texorpdfstring{$\lambda$}{λ}-calculus with Non-Determinism and Failure}\label{s:lambda}

We define the syntax and reduction semantics of \lamrfail, our new resource calculus with non-determinism and failure. \revd{B4}{We then equip it with} non-idempotent session types, and establish the subject reduction property for well-typed and well-formed expressions (Theorems~\ref{t:app_lamrsr} and \ref{t:app_lamrfailsr}, respectively).
\revt{}{We also consider the subject expansion property, which holds for well-typed expressions (Theorem~\ref{t:app_lamrexp}) but not for well-formed ones (Theorem~\ref{t:app_lamrfailse}).}

\subsection{Syntax}\hfill

The syntax of \lamrfail combines elements from calculi introduced and studied by Boudol and Laneve~\cite{DBLP:conf/birthday/BoudolL00} and by Pagani and Ronchi della Rocca~\cite{PaganiR10}.
We use $x, y, \ldots$ to range over the set of \emph{variables}.
 We write $\widetilde{x}$ to denote the sequence of pairwise distinct variables $x_1,\ldots,x_k$, for some $k\geq 0$.
 {We write $|\widetilde{x}|$ to denote the length of $\widetilde{x}$}.
 
 \begin{defi}[Syntax of \lamrfail ]
\label{def:rsyntaxfail}
The \lamrfail calculus is defined by the following grammar:
\[
\begin{array}{l@{\hspace{10mm}}lll}
\mbox{(Terms)} &M,N, L&::=& x\sep \lambda x . M \sep (M\ B) \sep  M \esubst{B}{x} \sep \fail^{\widetilde{x}}\\
\mbox{(Bags)} &A, B&::=& \oneb \sep \bag{M} \sep A \cdot B \\
\mbox{(Expressions)} & \expr{M}, \expr{N}, \expr{L}&::=&  M \sep \expr{M}+\expr{N}\\
\end{array}
\]
\end{defi}
\noindent
We have three syntactic categories: \emph{terms} (in functional position); \emph{bags} (in argument position), which denote multisets of resources; and \emph{expressions}, which are finite formal sums that represent possible results of a computation. 
Terms are unary expressions: they can be  variables, abstractions, and applications. 
Following~\cite{DBLP:conf/concur/Boudol93,DBLP:conf/birthday/BoudolL00}, the \emph{explicit substitution} of a bag $B$ for a variable $x$ \revd{B5}{in a term $M$, written $M\esubst{B}{x}$}, is also a term.
The term $\fail^{\widetilde{x}}$ results from a reduction in which there is a lack or excess of resources to be substituted, where  $\widetilde{x}$ denotes a multiset of free variables that are encapsulated within failure.

The empty bag is denoted $\oneb$. 
The bag enclosing the term $M$ is  $\bag{M}$.
The concatenation of bags $B_1$ and $B_2$ is denoted as   $B_1 \cdot B_2$; the concatenation operator `$\cdot$' is associative and  commutative, with  $\oneb$ as its identity. 
\secondrev{To ease readability, we rely on a shorthand notation for bags: we often write $\bag{N_1, N_2}$ rather than $\bag{N_1}\cdot \bag{N_2}$.}

We treat expressions as \emph{sums}, and use notations such as $\sum_{i}^{n} N_i$ for them. Sums are associative and commutative; reordering of the terms in a sum is performed silently.

\begin{exa}
\label{ex:terms}
We give some examples of terms and expressions in \lamrfail:

\begin{multicols}{2}
    \begin{itemize}
        \item $M_1 = (\lambda x. x ) \bag{y}$
        \item $M_2 = (\lambda x. x ) (\bag{y,z} )$
        \item $M_3 = (\lambda x. x ) \oneb  $
        \item $M_4 = (\lambda x. y ) \oneb $
        \item $M_5 = \fail^{\emptyset} $
        \item $M_6 = (\lambda x. x ) \bag{y} + (\lambda x. x ) \bag{z} $
    \end{itemize}
\end{multicols}
Terms $M_1$, $M_2$, and $M_3$ illustrate the application of the identity function  $I=\lambda x.x$ to bags with different formats: a bag with one component, two components, and the empty bag, respectively. Special attention should be given to the fact that the $x$ has only one occurrence in $I$, whereas the bags contain zero or more components (resources). This way:
\begin{itemize}
    \item $M_1$ represents a term with a \emph{correct} number of resources;
    \item $M_2$ denotes a term with an \emph{excess} of resources; and 
    \item $M_3$ denotes a term with a \emph{lack} of resources
\end{itemize}
 This resource interpretation will become clearer once the  reduction semantics is introduced in the next subsection (cf. Example~\ref{ex:reducts}).

    Term $M_4$ denotes the application of a vacuous abstraction on $x$  to the empty bag $\oneb$.
    Term $M_5$ denotes a failure term with no associated variables.
    Expression $M_6$ denotes the non-deterministic sum between two terms, each of which denotes an application of $I$ to a bag containing one element.
\end{exa}

\begin{nota}[Expressions]
Notation $N \in \expr{M}$ denotes that 
$N$ is part of the sum denoted by $\expr{M}$. 
Similarly, we write $N_i \in B$ to denote that $N_i$ occurs in the bag $B$, and $B \linsetminus N_i$ to denote the   bag that is obtained by removing one occurrence of the term $N_i$ from $B$. 
\end{nota}

\subsection{Reduction Semantics}\label{ssec:red_sem}\hfill

Reduction in \lamrfail is defined in terms of the relation $\red$, defined  in \figref{fig:reductions_lamrfail}; it operates lazily on expressions,  and  will be described after introducing some auxiliary notions. 

\begin{nota}
  We write  $\perm{B}$ to denote the set of all permutations of bag $B$.
Also, $B_i(n)$ denotes the $n$-th term in the (permuted)  $B_i$.
We define $\size{B}$ to denote the number of terms in bag $B$. 
That is, $\size{\oneb} = 0$
and 
$\size{\bag{M}  \cdot B} = 1 + \size{B}$.
\end{nota}

\begin{defi}[Set and Multiset of Free Variables]
\label{def:fvfail}
The set of free variables of a term, bag, and expression, is defined as
    \[
    \begin{array}{l@{\hspace{1cm}}l}
\begin{array}{r@{\hspace{-.01mm}}l}
\lfv{x} \; &= \{ x \}   \\
\lfv{\lambda x . M} \; &= \lfv{M}\!\setminus\! \{x\} \\
\lfv{M\ B} \; &=  \lfv{M} \cup \lfv{B}\\
 \lfv{M \esubst{B}{x}} \; &= (\lfv{M}\setminus \{x\}) \cup \lfv{B} 
\end{array}
         & 
 \begin{array}{r@{\hspace{-.01mm}}l}
\lfv{\oneb} \; &= \emptyset \\
\lfv{\bag{M}} \; &= \lfv{M} \\
 \lfv{B_1 \cdot B_2} \; &= \lfv{B_1} \cup \lfv{B_2}\\
  \lfv{\fail^{x_1, \cdots , x_n}}\; & = \{ x_1, \cdots , x_n \}\\
  \lfv{\expr{M}+\expr{N}}\; & = \lfv{\expr{M}} \cup \lfv{\expr{N}}
 \end{array}
    \end{array}
    \]

We use $\mfv{M}$ or $\mfv{B}$ to denote a multiset of free variables, defined similarly. 
We sometimes treat the sequence $\widetilde{x}$ as a (multi)set.
We write $\widetilde{x}\uplus \widetilde{y}$ to denote the multiset union of $\widetilde{x}$ and $\widetilde{y}$ and $\widetilde{x} \setminus y$ to express that every occurrence of $y$ is removed from $\widetilde{x}$.
A term $M$ is \emph{closed} if $\lfv{M} = \emptyset$ (and similarly for expressions). \revdaniele{As usual, we shall consider $\lamrfail$-terms modulo $\alpha$-equivalence.}
\end{defi}

\begin{nota}
 $\#(x , M)$ denotes the number of (free) occurrences of $x$ in $M$. 
Similarly, we write $\#(x,\widetilde{y}) $ to denote the number of occurrences of $x$ in the multiset $\widetilde{y}$. 
\end{nota}

\begin{defi}[Head]
\label{def:headfailure}
Given a term $M$, we define $\headf{M}$ inductively as:
\[
\begin{array}{l@{\hspace{1cm}}l}
\begin{array}{l}
 \headf{x}  = x     \\
  \headf{\lambda x.M}  = \lambda x.M \\
  \headf{M\ B}  = \headf{M}
\end{array}
     & 
\begin{array}{l}
\headf{\fail^{\widetilde{x}}}  = \fail^{\widetilde{x}}\\
\headf{M \esubst{ B }{x}} = 
\begin{cases}
    \headf{M} & \text{if $\#(x,M) = \size{B}$}\\
    \fail^{\emptyset} & \text{otherwise}
\end{cases}
\end{array}
\end{array}
\]
\end{defi}

\begin{defi}[Linear Head Substitution]
\label{def:linsubfail}
Let $M$ be a term such that $\headf{M}=x$. 
The \emph{linear head substitution} of a term $N$ for $x$ in $M$, denoted  $M\headlin{ N/x }$,  is  defined  as:
\[
\begin{aligned}
     x \headlin{ N / x}   &= N \\
    (M\ B)\headlin{ N/x}  &= (M \headlin{ N/x })\ B \\
    (M\ \esubst{B}{y})\headlin{ N/x }  &= (M\headlin{ N/x })\ \esubst{B}{y} \qquad \text{where } x \not = y
\end{aligned}
\]
\end{defi}

\noindent 
Finally, we define contexts for terms and expressions and convenient notations:

\begin{defi}[Term and Expression Contexts]\label{def:context_lamrfail}
Contexts for terms (CTerm) and expressions (CExpr) are defined by the following grammar:
\[
\begin{array}{c@{\hspace{1cm}}c}
 \text{(CTerm)}\quad  C[\cdot] ,  C'[\cdot] ::=  ([\cdot])B \mid ([\cdot])\esubst{B}{x}     & \text{(CExpr)}  \quad  D[\cdot] , D'[\cdot] ::= M + [\cdot] \mid   [\cdot] + M 
\end{array}
\]
\end{defi}


\begin{figure}[t]
    \centering
    
\begin{prooftree}
    \AxiomC{}
    \LeftLabel{\redlab{R:Beta}}
    \UnaryInfC{\((\lambda x. M) B \red M\ \esubst{B}{x}\)}
    \end{prooftree}

\begin{prooftree}
    \AxiomC{$\headf{M} = x$}
    \AxiomC{$B = \bag{N_1, \dots , N_k} \ , \ k\geq 1 $}
    \AxiomC{$ \#(x,M) = k $}
    \LeftLabel{\redlab{R:Fetch}}
    \TrinaryInfC{\(
    M\ \esubst{ B}{x } \red M \headlin{ N_{1}/x } \esubst{ (B\linsetminus N_1)}{ x }  + \cdots + M \headlin{ N_{k}/x } \esubst{ (B\linsetminus N_k)}{x}
    \)}
\end{prooftree}   

\begin{prooftree}
    \AxiomC{$\#(x,M) \neq \size{B} \qquad \widetilde{y} = (\mfv{M} \setminus x) \uplus \mfv{B} $}
    \LeftLabel{\redlab{R:Fail}}
    \UnaryInfC{\(  M\ \esubst{ B}{x } \red {}  \displaystyle\sum_{\perm{B}} \fail^{\widetilde{y}} \)}
\end{prooftree}

\begin{prooftree}
    \AxiomC{$  \widetilde{y} = \mfv{B} $}
    \LeftLabel{$\redlab{R:Cons_1}$}
    \UnaryInfC{\(  \fail^{\widetilde{x}}\ B \red{}  \displaystyle\sum_{\perm{B}} \fail^{\widetilde{x} \uplus \widetilde{y}} \)}
\end{prooftree}

\begin{prooftree}   
\AxiomC{$\size{B} = k  \qquad \#(z , \widetilde{x}) + k  \not= 0 \qquad  \widetilde{y} = \mfv{B}$}
    \LeftLabel{$\redlab{R:Cons_2}$}
    \UnaryInfC{\( \fail^{\widetilde{x}}\ \esubst{B}{z}  \red {} \displaystyle \sum_{\perm{B}} \fail^{(\widetilde{x} \setminus z) \uplus \widetilde{y}}  \)}
\end{prooftree}

\begin{prooftree}
        \AxiomC{$   M \red M'_{1} + \cdots + M'_{k} $}
        \LeftLabel{\redlab{R:TCont}}
        \UnaryInfC{$ C[M] \red  C[M'_{1}] + \cdots +  C[M'_{k}] $}
\DisplayProof\hfill%
        \AxiomC{$ \expr{M}  \red \expr{M}'  $}
        \LeftLabel{\redlab{R:ECont}}
        \UnaryInfC{$D[\expr{M}]  \red D[\expr{M}']  $}
\end{prooftree}

    \caption{Reduction Rules for $\lamrfail$}
    \label{fig:reductions_lamrfail}
    \hfill \break
\end{figure}

\revdaniele{The reduction relation on $\lamrfail$ is defined by the rules in 
\figref{fig:reductions_lamrfail}. 
} Intuitively, reductions in $\lamrfail$ work as follows: A $\beta$-reduction induces an explicit substitution of a bag $B$ for a variable $x$ \revdaniele{in a term $M$}, denoted $\revdaniele{M}\esubst{B}{x}$. \revdaniele{In the case the head of the term $M$ is $x$ and the size of the bag $B$ coincides with the number of occurrences of $x$ in $M$, }this explicit substitution is expanded into a sum of terms, each of which features a \emph{linear head substitution} $M\headlin{ N_i/x }$, where $N_i$ is a term in $B$, \revdaniele{which will replace the variable $x$ occurring in the head of $M$}; the rest of the bag ($B\linsetminus N_i$) is kept in an explicit substitution. However, if there is a mismatch between the number of occurrences of the variable to be substituted and the number of resources available, then the reduction leads to the failure term. Formally,

\begin{itemize}
\item {\bf Rule~$\redlab{R:Beta}$} is standard and admits a  bag (possibly empty) as parameter. 

\item {\bf Rule~$\redlab{R:Fetch}$} transforms a term into an expression: it opens up an explicit substitution into a sum of terms with linear head  substitutions, each denoting the partial evaluation of an element from the bag, \revdaniele{considering all the possible choices for substituting an element $N_i$ of the bag for $x$}. 
Hence, the size of the bag will determine the number of summands in the resulting expression.
\end{itemize}

\noindent
There are three rules reduce to the failure term: their objective is to accumulate all (free) variables involved in failed reductions. 

\begin{itemize}
\item  {\bf Rule~$\redlab{R:Fail}$} formalizes failure in the evaluation of an explicit substitution $M\ \esubst{ B}{x }$, which occurs if there is a mismatch between the resources (terms) present in $B$ and the number of occurrences of $x$ to be substituted. 
The resulting failure preserves all free variables in $M$ and $B$ within its attached multiset $\widetilde{y}$, \revdaniele{and all possible computations that could have failed, via permutation of the bags, are captured  in a non-deterministic sum}.
\item {\bf Rules~$\redlab{R:Cons_1}$ and $\redlab{R:Cons_2}$} describe reductions that lazily consume the failure term, when a term has $\fail^{\widetilde{x}}$ at its head position. 
The former rule consumes bags attached to it whilst preserving all its free variables.
The latter rule is similar but for the case of explicit substitutions; its second premise ensures that either (i)~the bag in the substitution is not empty or (ii)~the number of occurrences of $z$ in the current multiset of accumulated variables is not zero. 
\end{itemize}
\revdaniele{
Notice that our Rule~\redlab{R:Fail} rule evolves to  a sum of failure terms, where each summand accounts for a permutation of the elements of the bag. As our reduction strategy fails eagerly this may  not be evident at first; however, there is still a non-deterministic choice of elements in $B$ that are waiting to be substituted at the point of failure (see Example~\ref{exa:fail_sum}).
}

Finally, we describe the contextual rules:
\begin{itemize}
\item {\bf Rule~$\redlab{R:TCont}$} describes the reduction of sub-terms within an expression; in this rule, summations are expanded outside of term  contexts.

\item {\bf Rule $\redlab{R:ECont}$} says that reduction of expressions is closed by expression contexts.
\end{itemize}

\begin{nota}
As standard, 
 $\red$ denotes one step reduction; 
 $\red^+$ and  $\red^*$ denote the transitive and the reflexive-transitive closure of $\red$, respectively. 
 We write $\expr{N}\red_{\redlab{R}} \expr{M}$ to denote that $\redlab{R}$ is the last (non-contextual) rule used in inferring the step from $\expr{N}$ to $\expr{M}$.
\end{nota}

\begin{exa}[Cont. Example~\ref{ex:terms}]
\label{ex:reducts}
We show how the terms in Example~\ref{ex:terms} can 
 reduce: 

\begin{itemize}
    \item Reduction of the term $M_1$ with an adequate number of resources:  
    \[
    \begin{aligned} 
    (\lambda x. x ) \bag{y} &\red_\redlab{R:Beta} x \esubst{\bag{y}}{x}\\
    &\red_\redlab{R:Fetch} y\esubst{\oneb}{x}    &\text{ since }  \#(x,x) =\size{\bag{y}} =1
    \end{aligned}
    \]
    \item  Reduction of  term $M_2$ with  excess  of resources:  
    \[
    \begin{aligned}
    (\lambda x. x ) (\bag{y,z}  ) &\red_\redlab{R:Beta} x \esubst{(\bag{y,z}  )}{x} \\
    &\red_{\redlab{R:Fail}} \revd{B6}{\fail^{y,z} +  \fail^{y,z}} , \ \text{ since }  \#(x,x)  =1 \neq \size{\bag{y,z}}=2
    \end{aligned}
    \]
    \item Reduction of  term $M_3$ with lack  of resources:  
    \[
    \begin{aligned} 
    (\lambda x. x ) \oneb &\red_\redlab{R:Beta} x \esubst{\oneb}{x}\\
    &\red_{\redlab{R:Fail}} \fail^{\emptyset}, \ \text{ since }  \#(x,x)  =1 \neq \size{\oneb}=0
    \end{aligned}
    \]
    \item  Reduction of  term $M_4$ which is a vacuous abstraction applied to an empty bag:
    \[
    \begin{aligned} 
    (\lambda x. y ) \oneb & \red_\redlab{R:Beta} y\esubst{\oneb}{x}\\
    \end{aligned}
    \]

    
    
    \item $M_5 = \fail^{\emptyset} $ is unable to perform any reductions, i.e., it is irreducible.
    \item Reductions of the expression $M_6 = (\lambda x. x ) \bag{y} + (\lambda x. x ) \bag{z} $ :
    
        \begin{tikzpicture}
          \matrix (m) [matrix of math nodes, row sep=0.5em, column sep=0.5em,ampersand replacement=\&]
            { 
            \node(A){ }; \& 
                \node(B){ x \esubst{\bag{y}}{x} + (\lambda x. x ) \bag{z} }; \\
            \node(C){  (\lambda x. x ) \bag{y} + (\lambda x. x ) \bag{z} }; \& 
                \node(D){ }; \& 
                \node(G){x \esubst{\bag{y}}{x} +  x \esubst{\bag{z}}{x} }; \\
            \node(E){ }; \&
                \node(F){ (\lambda x. x ) \bag{y} + x \esubst{\bag{z}}{x} }; \\};
        \path (C) edge[->](B);
        \path (C) edge[->](F);
        \path (B) edge[->](G);
        \path (F) edge[->](G);
        \end{tikzpicture} 
\end{itemize}

\end{exa}

The following example illustrates the use of $\perm{B}$ in Rule~\redlab{R:Fail}: independently of the order in which the resources in the bag are used, the computation fails. 

\begin{exa}\label{exa:fail_sum}
Let $M = ( \lambda x . x \bag{x \bag{y}} )\  B$, with 
$B = \bag{z_1,z_2,z_1}$.
We have: 
  \[
    \begin{aligned}
       M \red_\redlab{R:Beta}& x \bag{x \bag{y}} \esubst{ \bag{z_1,z_2,z_1} }{ x } \\
         \red_\redlab{R:Fail}& \sum_{\perm{B}} \fail^{y , z_1, z_2, z_1 }
    \end{aligned}
 \]
 The number of occurrences of $x$ in the term 
obtained after  $\beta$-reduction (2) does not match the size of the bag (3). Therefore, the reduction leads to failure. 
 \revo{:(A5)}{Notice that 
 $ \sum_{\perm{B}} \fail^{y , z_1, z_2, z_1 }$
 expands to a sum between six instances of 
$\fail^{y , z_1, z_2, z_1}$,
corresponding to \revdaniele{permutation of  3 elements of} the bag $B$.}
 \end{exa}
 
 Notice that the left-hand sides of the reduction rules in $\lamrfail$  do not interfere with each other. 
Therefore, reduction in \lamrfail satisfies a \emph{diamond property}:

\begin{restatable}[Diamond Property for \lamrfail]{propo}{diamondone}
\label{prop:conf1_lamrfail}
     For all $\expr{N}$, $\expr{N}_1$, $\expr{N}_2$ in $\lamrfail$ s.t. $\expr{N} \red \expr{N}_1$, $\expr{N} \red \expr{N}_2$ { with } $\expr{N}_1 \neq \expr{N}_2$ { then } there exists $\expr{M}$ such that  $\expr{N}_1 \red \expr{M}$, $\expr{N}_2 \red \expr{M}$.
\end{restatable}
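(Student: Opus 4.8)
The plan is to prove the statement by analysing the two redexes contracted in the steps $\expr{N}\red\expr{N}_1$ and $\expr{N}\red\expr{N}_2$ and showing that, because the left-hand sides of the rules in \figref{fig:reductions_lamrfail} do not genuinely overlap, these two contracted redex occurrences are independent, so that contracting the residual of each in the opposite branch yields a common reduct $\expr{M}$ in one further step. I would first push the analysis down to a single summand: writing $\expr{N}$ as a sum of terms, the contextual rules $\redlab{R:ECont}$ and $\redlab{R:TCont}$ expand sums outside of contexts, so a single step rewrites exactly one summand. If the two steps rewrite different summands of $\expr{N}$, they are trivially independent and the diamond closes by performing the other step in each branch; this is precisely the situation drawn for $M_6$ in Example~\ref{ex:reducts}. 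The genuine content is therefore the case where both steps rewrite the same summand, a term $N$, via $N\red\expr{P}_1$ and $N\red\expr{P}_2$ with $\expr{P}_1\neq\expr{P}_2$.

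For a single term I would proceed by induction on its structure with a case analysis on the last non-contextual rule used in each step. The key observation is that no redex occurrence admits two rules with different contracta: the shapes $(\lambda x.M)B$, $\fail^{\widetilde{x}}\,B$ and $\fail^{\widetilde{x}}\esubst{B}{z}$ each determine their applicable rule uniquely, while for $M\esubst{B}{x}$ the side conditions of $\redlab{R:Fetch}$ (namely $\#(x,M)=\size{B}$) and of $\redlab{R:Fail}$ (namely $\#(x,M)\neq\size{B}$) are mutually exclusive. The only genuine coincidence is that $\redlab{R:Fail}$ and $\redlab{R:Cons_2}$ both apply to a fail-headed substitution $\fail^{\widetilde{x}}\esubst{B}{z}$; a direct computation shows both produce the identical expression $\sum_{\perm{B}}\fail^{(\widetilde{x}\setminus z)\uplus\mfv{B}}$, so since $\expr{P}_1\neq\expr{P}_2$ by hypothesis this overlap never realises the problematic case. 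Hence $\expr{P}_1$ and $\expr{P}_2$ must arise from contracting two distinct redex occurrences of $N$, which, since $\text{CTerm}$ contexts descend only along the head spine, sit at different depths of that spine.

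It then remains to show that contracting these two occurrences commutes, which I would support with two auxiliary facts. First, the data governing a spine redex, namely $\headf{\cdot}$, $\#(x,\cdot)$ and $\size{\cdot}$, are unaffected by contracting a redex strictly below it, so that the outer redex persists (up to its residual) after the inner step and conversely. Second, linear head substitution $\headlin{N_i/x}$ commutes with a contextual reduction occurring below the head, so that the summands generated by $\redlab{R:Fetch}$ line up with the inner reduct; this is where the definitions of $\headf{\cdot}$ and of $\headlin{\cdot/x}$ (and, where needed, the precongruence $\pcong$) would be invoked. Closing the two branches then reduces to checking that the resulting expressions agree as sums, and in particular that the multiplicities produced by the permutation sums $\sum_{\perm{B}}$ coincide on both sides.

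I expect this last point to be the main obstacle: the bookkeeping of summand multiplicities when one of the two contractions is an erasing step ($\redlab{R:Fail}$, $\redlab{R:Cons_1}$, $\redlab{R:Cons_2}$) discarding a subterm that, taken in the opposite order, would first branch via $\redlab{R:Fetch}$ or $\redlab{R:Fail}$. Meeting the two branches in a single step requires that the permutation sums deliver exactly the multiplicities needed to reconcile the erasing and branching orders; verifying this alignment case by case, together with the commutation of $\headlin{N_i/x}$ with the contextual reduction, is where the real work of the proof lies, whereas the different-summand case and the rule-coincidence argument are routine.
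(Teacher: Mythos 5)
You take a genuinely different route from the paper, and the part you defer is exactly where your route breaks. The paper's proof consists of a single observation: a term by itself never offers a choice of reduction steps, so two distinct reducts $\expr{N}_1 \neq \expr{N}_2$ can only arise by reducing two \emph{different summands} of a sum, after which the diamond closes with one $\redlab{R:ECont}$ step on each side. Your proposal instead admits that a single summand may contain two distinct redex occurrences along its spine and sets out to commute them. Your top-level overlap analysis is fine (the side conditions of $\redlab{R:Fetch}$ and $\redlab{R:Fail}$ are mutually exclusive, and the $\redlab{R:Fail}$/$\redlab{R:Cons_2}$ coincidence yields the same contractum), but the nested-redex commutation that you call ``where the real work of the proof lies'' is not postponed routine work: it is a configuration in which the one-step diamond does not hold, so no case analysis along your lines can close it.

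Concretely, let $\expr{N} = (y \esubst{\bag{z}}{x})\esubst{\oneb}{y}$. The outer substitution is an $\redlab{R:Fail}$ redex, since $\#(y, y\esubst{\bag{z}}{x}) = 1 \neq 0 = \size{\oneb}$, giving $\expr{N} \red \fail^{z}$, which is irreducible; the inner substitution is also an $\redlab{R:Fail}$ redex (as $0 \neq 1$), giving via $\redlab{R:TCont}$ the distinct reduct $\fail^{y,z}\esubst{\oneb}{y}$. Since $\fail^{z}$ cannot take a step, there is no $\expr{M}$ with $\fail^{z} \red \expr{M}$, and the diamond fails for this pair. Worse, replacing the inner bag by $\bag{z_1}\cdot\bag{z_2}$ makes the inner route produce two summands (the sum over $\perm{B}$), and every descendant keeps two summands, while the outer route yields the single irreducible term $\fail^{z_1,z_2}$; as sums are not idempotent, the two sides are not even joinable. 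This confirms your own suspicion about the multiplicity bookkeeping, but in the strong sense that it is a genuine failure of commutation rather than an alignment to be verified. The upshot is that a correct proof must \emph{rule out} the same-summand case rather than commute it: it must establish the paper's opening claim that a choice of steps only ever arises between different summands of a sum (i.e., that term-level reduction offers no choice), which is what the paper's sketch asserts and which your auxiliary facts (1) and (2) do not supply. Note also that the examples above show this claim needs more than the remark that rule left-hand sides do not overlap; it implicitly requires, e.g., giving the top-level rules priority over $\redlab{R:TCont}$, or restricting $\redlab{R:Fail}$ to terms whose head subterm is not itself a redex.
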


\begin{proof}[Proof (Sketch)]
By inspecting the rules of \figref{fig:reductions_lamrfail} one can check that the left-hand sides only clash in a non-variable position with Rules~\redlab{R:Fail} and \redlab{R:Cons2}. The clash does not generate a critical pair: in fact, when applied to the $\lamrfail$-term $\fail^{z,\widetilde{x}}\esubst{\oneb}{z}$ both rules reduce to $\fail^{\widetilde{x}}$.
For all the other rules, whenever they have the same shape, the side conditions of the rules determine which rule can be applied. Therefore,
   an expression can only perform a choice of reduction steps when it is a sum of terms in which multiple summands can perform independent reductions. Without loss of generality, consider an expression $\expr{N} = N + M $ such that  $N \red N'$ and $M \red M'$. Then we let $\expr{N}_1 = N' + M$ and $\expr{N}_2 = N + M'$ by Rule~$\redlab{R:ECont}$.  The result follows for $\expr{M} = N' + M' $, since $\mathbb{N}_1
   \red \mathbb{M}$ and $\mathbb{N}_2
   \red \mathbb{M}$.
\end{proof}

\begin{rem}[A Sub-calculus without Failure (\lamr)]\label{rem:lamr}
We find it convenient to define \lamr, the sub-calculus of \lamrfail without  explicit failure. 
The syntax of \lamr is obtained from Definition~\ref{def:rsyntaxfail} by excluding $\fail^{\widetilde{x}}$ from the syntax of terms.
Accordingly, the  reduction relation for \lamr is given by  Rules~\redlab{R:Beta}, \redlab{R:Fetch}, \redlab{R:ECont}, and \redlab{R:TCont} in \figref{fig:reductions_lamrfail}. 
Finally, Definition~\ref{def:headfailure} is kept unchanged with the provision that $\headf{M \esubst{ B }{x}}$ is undefined when $\#(x,M) \not = \size{B}$.
\end{rem}


\subsection{Well-formed \texorpdfstring{$\lamrfail$}{λ^↯_⊕}-Expressions}
\label{sec:lamfailintertypes}\hfill 

As mentioned in \secref{s:key}, we define a notion of {\em well-formed expressions}  for $\lamrfail$ by relying on a non-idempotent intersection type system, similar to the one given by Pagani and Della Rocca in~\cite{PaganiR10}. Our system for well-formed expressions will be defined in two stages:
\begin{enumerate}
\item First we define a intersection type system for the sub-language $\lamr$ (cf. Rem.~\ref{rem:lamr}), given in  \figref{fig:app_typingrepeat}. 
\revd{:B7}{Unlike the system in \cite{PaganiR10}, our type system includes a weakening rule and a rule for typing explicit substitutions.}
\item Second, we define well-formed expressions for the full language \lamrfail, via \defref{d:wellf}.
\end{enumerate}

We say that we check for ``well-formedness'' (of terms, bags, and expressions) to stress that, unlike standard type systems, our system is able to account for terms that may reduce to the failure term.

\subsubsection{Intersection Types}\label{ss:lamrfail_types}\hfill

Intersection types allow us to reason about types of resources in bags but also about every occurrence of a variable. 
That is, non-idempotent intersection types enable us to distinguish expressions not only by measuring the size of a bag but also by counting the number of times a variable occurs within a term.
\begin{defi}[Types for \lamrfail]
\label{d:typeslamrfail}
We define {\em strict} and {\em multiset types} by the   grammar:
\[
\begin{array}{c@{\hspace{1.2cm}}c}
  \text{(Strict)}\quad  \sigma, \tau, \delta ::= \unit \sep \arrt{\pi}{\sigma}   & \text{(Multiset)} \quad \pi ::=  \secondrev{\sigma^k} \sep \omega
\end{array}
\]
\secondrev{where $\sigma^k$ stands 
 for  $\sigma\wedge \cdots \wedge \sigma$
($k$ times, for some $k>0$).} 
\end{defi}
A strict type can be the unit type  $\unit$  or a functional type  \arrt{\pi}{\sigma}, where 
$\pi$ is a multiset type and $\sigma$ is  a strict type. 
Multiset types can be either 
an intersection  of strict types
\secondrev{ $\sigma^k$
(if $k>0$)
or the empty type $\omega$, which would correspond to $\secondrev{\sigma^k}$ with $k = 0$. Hence, $\sigma^k$ denotes an intersection;}
 the operator $\wedge $ is commutative, associative, and non-idempotent, that is, $\sigma \wedge \sigma \neq \sigma$. The empty type is the type of the empty bag; it acts as  the identity element to~$\wedge$.



\begin{defi}
\label{d:tcontsource}
\revo{}{
{\em Type contexts}  $\Gamma , \Delta, \ldots $ are sets of type assignments  $x: \pi$, as defined by the grammar:}
\revd{B9}{
\[
    \begin{aligned}
        \Gamma, \Delta & = \dash \sep \Gamma , x:\pi 
    \end{aligned}
\]}
\revo{}{
The set of variables in $\Gamma$ is denoted as $\dom{\Gamma}$.
In writing $\Gamma, x:\pi$  we assume that $x \not \in \dom{\Gamma}$. 
}
\revo{}{
We generalize \revdaniele{the operator} $\wedge$ from types to contexts, and define  $\Gamma \contexcat \Delta$ as follows:
$$(\Gamma_1 \contexcat \Gamma_2)(x) = 
\begin{cases}
    x: \pi_1 \wedge \pi_2 &  x:\pi_i \in \Gamma_i ,\  \pi_i \not = \omega , \  i \in \{ 1 , 2 \} \\
    x: \pi_i  &  x:\pi_i \in \Gamma_i, x \not \in \dom{\Gamma_j} ,\  i \not = j,\ i,j \in \{1,2\} \\
    \text{undefined} & \text{otherwise}
\end{cases}
$$
 }

\end{defi}

{\em Type judgements} are of the form $\Gamma \vdash \expr{M}:\sigma$, where $\Gamma$ is a type context. We write $\vdash \expr{M}:\sigma$ to denote $\dash \vdash \expr{M}:\sigma$.

\begin{defi}(Well-typed Expressions)
An expression $\expr{M} \in \lamr$ is {\em well-typed} (or typable) if there exist $\Gamma$ and  $\tau$ such that $\Gamma \vdash \expr{M} : \tau$ is entailed via the rules in \figref{fig:app_typingrepeat}.
\end{defi}

\begin{figure*}[!t]
    \centering
    
\begin{prooftree}
    \AxiomC{}
    \LeftLabel{\redlab{T:var}}
    \UnaryInfC{\( x: \sigma \vdash x : \sigma\)}
    \DisplayProof
    \hfill
    \AxiomC{\(  \)}
    \LeftLabel{\redlab{T:\oneb}}
    \UnaryInfC{\( \vdash \oneb : \omega \)}
    \DisplayProof
    \hfill
    \AxiomC{$ \Gamma \vdash M: \sigma$}
    \AxiomC{$ $}
    \LeftLabel{\redlab{T:weak}}
    \BinaryInfC{$ \Gamma, x:\omega \vdash M: \sigma $}
\end{prooftree}

\begin{prooftree}
    \AxiomC{\( \revo{}{\Gamma , {x}: \sigma^k \vdash M : \tau} \)}
    \LeftLabel{\redlab{T:abs}}
    \UnaryInfC{\( \Gamma \vdash \lambda x . M :  \sigma^k  \rightarrow \tau \)}
\DisplayProof
\hfill
  \AxiomC{\( \Gamma \vdash M : \sigma\)}
    \AxiomC{\( \Delta \vdash B : \sigma^k\)}
    \LeftLabel{\redlab{T:bag}}
    \BinaryInfC{\( \revo{}{\Gamma \contexcat \Delta \vdash \bag{M}\cdot B:\sigma^{k+1}} \)}
\end{prooftree}

\begin{prooftree}
  \AxiomC{\( \Gamma \vdash M : \pi \rightarrow \tau \)}
    \AxiomC{\( \Delta \vdash B : \pi \)}
        \LeftLabel{\redlab{T:app}}
    \BinaryInfC{\( \revo{}{\Gamma \contexcat \Delta \vdash M\ B : \tau}\)}
 \DisplayProof
 \hfill
   \AxiomC{\( \revo{}{\Gamma ,  {x}:\sigma^{k} \vdash M : \tau} \)}
         \AxiomC{\( \Delta \vdash B : \sigma^{k} \)}
    \LeftLabel{\redlab{T:ex \dash sub}}    
    \BinaryInfC{\( \revo{}{\Gamma \contexcat \Delta \vdash M \esubst{ B }{ x } : \tau} \)}
\end{prooftree}

\begin{prooftree}
    \AxiomC{$ \Gamma \vdash \expr{M} : \sigma$}
    \AxiomC{$ \Gamma \vdash \expr{N} : \sigma$}
    \LeftLabel{\redlab{T:sum}}
    \BinaryInfC{$ \Gamma \vdash \expr{M}+\expr{N}: \sigma$}
\end{prooftree}

    \caption{Typing Rules for \lamr }
    \label{fig:app_typingrepeat}
\end{figure*}

The rules are standard. We only consider intersections of the same strict type, say $\sigma$, since the current objective is to count the number of occurrences of a variable in a term, and measure the size of a bag. We now give a brief description of the rules in \figref{fig:app_typingrepeat}:
\begin{itemize}
\item {\bf Rules~\redlab{T{:}var}, \redlab{T{:}\oneb} and \redlab{T{:}weak}} are as expected: the first assigns a type to a variable, the second 
assigns the empty bag $\oneb$ the empty type $\omega$, and the third introduces a  useful weakening principle. 
\item {\bf Rule~\redlab{T:abs}} types an abstraction $\lambda x. M$ with $\sigma^k\to \tau$, as long as the variable assignment \revo{}{$x:\sigma^k$ has an intersection type with $\sigma $ occurring exactly $k$ times.}
\item {\bf Rule~\redlab{T:bag}} types a bag $B$ with a type $\sigma^{k+1}$ as long as every component of $B$ is typed with same type $\sigma$, a defined amount of times.
\item {\bf Rule~\redlab{T{:}app}} types an application $M\ B$ with $\tau$ as long as $M$ and $B$ match on the multiset type $\pi$, i.e., $M:\pi\to \tau$ and $B:\pi$. Intuitively, this means that $M$ expects a fixed amount of resources, and $B$ has exactly this number of resources.
\item {\bf Rule~\redlab{T{:}ex \dash sub}} types an explicit substitution $M\esubst{B}{x}$ with $\tau$ as long as the bag $B$ consists of elements of the same type as $x$ and the size of $B$ matches the number of times $x$ occurs in $M$, i.e., $B:\sigma^k$ and $x:\sigma^k$ types the assignment of $M:\tau$.
\item {\bf Rule~\redlab{T:sum}} types an expression (a sum) with a type $\sigma$, if each summand has type $\sigma$.
\end{itemize}

Notice that with the typing rules for $\lamr$ the failure term $\fail$ cannot be typed.  We could consider this set of rules as a type system for  \lamrfail, i.e. the extension of $\lamr$ with failure,  in which failure can be expressed but not typed.

\begin{exa}[Cont. Example~\ref{ex:reducts}]
\label{ex:welltyped}
We explore the typability of some of the terms given in previous examples:
\begin{enumerate}
    \item Term $M_1 = (\lambda x. x ) \bag{y} $ is typable, as we have:
  \begin{prooftree}
  \AxiomC{}
  \LeftLabel{\redlab{T:var}}
  \UnaryInfC{\( x: \sigma \vdash x : \sigma\)}
  \LeftLabel{\redlab{T:abs}}
  \UnaryInfC{\(  \vdash \lambda x . x :  \sigma \rightarrow \sigma \)}
  \AxiomC{}
  \LeftLabel{\redlab{T:var}}
  \UnaryInfC{\(  y : \sigma \vdash y : \sigma\)}
  \AxiomC{\(  \)}
  \LeftLabel{\redlab{T:\oneb}}
  \UnaryInfC{\( \vdash \oneb : \omega \)}
  \LeftLabel{\redlab{T:bag}}
  \BinaryInfC{\( y : \sigma \vdash \bag{y}\cdot \oneb:\sigma \)}
  \LeftLabel{\redlab{T:app}}
  \BinaryInfC{\( y : \sigma \vdash (\lambda x. x ) \bag{y} : \sigma\)}
  \end{prooftree}
    \item Term $M_2 = (\lambda x. x ) (\bag{y,z}) $ is not typable.
    \begin{itemize}
    \item The function $\lambda x. x$ has a functional type $\sigma \to \sigma$;
        \item  The bag has an intersection type of size two: $y:\sigma, z:\sigma \vdash(\bag{y,z}):\sigma^2$;
        \item  Rule~$\redlab{T:app}$  requires a match between the type of the bag and the left of the arrow: it can only consume a bag of type $\sigma$.
    \end{itemize}
    
    \item Similarly,  $M_3 = (\lambda x. x ) \oneb $ is not typable: since $\lambda x.x$ has type $\sigma\to \sigma$, to apply the Rule~$\redlab{T:app}$  the bag must have a type $\sigma$, but the empty bag $\oneb$ can only be typed with~$\omega$.
    \item Term $M_4 = (\lambda x. y ) \oneb $ is typable, as follows:
   \begin{prooftree}
    \AxiomC{}
   \LeftLabel{\redlab{T:var}}
  \UnaryInfC{$ y: \sigma  \vdash y : \sigma$}
  \AxiomC{$ $}
  \LeftLabel{\redlab{T:weak}}
  \BinaryInfC{$y: \sigma, x:\omega  \vdash y : \sigma  $}
  \LeftLabel{\redlab{T:abs}}
  \UnaryInfC{\( y : \sigma \vdash \lambda x . y :  \omega \rightarrow \sigma \)}
  \AxiomC{\(  \)}
  \LeftLabel{\redlab{T:\oneb}}
  \UnaryInfC{\( \vdash \oneb : \omega \)}
  \LeftLabel{\redlab{T:app}}
  \BinaryInfC{\( y : \sigma \vdash (\lambda x. y ) \oneb : \sigma\)}
  \end{prooftree}
    \end{enumerate}
\end{exa}

Our typing system for \lamr satisfies standard properties,  such as subject reduction, which follows from the {\em Linear} Substitution Lemma. We stress `linearity' because the lemma is stated in terms of the head linear substitution $\headlin{\cdot}$.

\begin{restatable}[Linear Substitution Lemma for \lamr]{lema}{subtlemfailnofail}
\label{lem:subt_lem}
\revo{}{
If $\Gamma , x:\sigma^k \vdash M: \tau$ (with $k \geq 1$), $\headf{M} = x$, and $\Delta \vdash N : \sigma$ 
then 
$\Gamma \contexcat \Delta, x:\sigma^{k-1} \vdash M \headlin{ N / x }: \tau $.
}
\end{restatable}

\begin{proof}
Standard, by induction on the rule applied in $\Gamma, x:\sigma \vdash M:\tau$. 
\end{proof}

\begin{restatable}[Subject Reduction for \lamr]{thms}{subredone}
\label{t:app_lamrsr}
If $\Gamma \vdash \expr{M}:\tau$ and $\expr{M} \red \expr{M}'$ then $\Gamma \vdash \expr{M}' :\tau$.
\end{restatable}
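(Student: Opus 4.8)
The plan is to argue by induction on the derivation of $\expr{M} \red \expr{M}'$, with a case analysis on the last reduction rule applied. By Remark~\ref{rem:lamr} only four rules are available in \lamr, namely \redlab{R:Beta}, \redlab{R:Fetch}, \redlab{R:TCont}, and \redlab{R:ECont}. In each case I first apply \emph{inversion} to the hypothesis $\Gamma \vdash \expr{M} : \tau$ to expose the premises that produced it, and then reassemble a typing of the contractum with the very same context $\Gamma$ and type $\tau$.

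For \redlab{R:Beta}, where $(\lambda x. M)\,B \red M\esubst{B}{x}$, inversion on \redlab{T:app} yields $\Gamma = \Gamma_1, \Gamma_2$ with $\Gamma_1 \vdash \lambda x. M : \sigma^k \rightarrow \tau$ and $\Gamma_2 \vdash B : \sigma^k$; inversion on \redlab{T:abs} then gives $\Gamma_1, \hat{x}:\sigma^k \vdash M : \tau$. These are exactly the two premises required by \redlab{T:ex \dash sub}, which delivers $\Gamma \vdash M\esubst{B}{x} : \tau$.

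The central case is \redlab{R:Fetch}. Here inversion on \redlab{T:ex \dash sub} gives $\Gamma', \hat{x}:\sigma^k \vdash M : \tau$ and $\Delta \vdash B : \sigma^k$ with $\Gamma = \Gamma', \Delta$, while $B = \bag{N_1}\cdots\bag{N_k}$ and $\#(x,M)=k$. Repeated inversion on \redlab{T:bag} decomposes the bag typing as $\Delta = \Delta_1, \ldots, \Delta_k$ with $\Delta_i \vdash N_i : \sigma$ (all components carry the same strict type $\sigma$, since the bag has multiset type $\sigma^k$). For the $i$-th summand I isolate a single occurrence $x:\sigma$, writing the body context as $(\Gamma', \hat{x}:\sigma^{k-1}), x:\sigma$, and, using $\headf{M}=x$, invoke the Linear Substitution Lemma (Lemma~\ref{lem:subt_lem}) to obtain $\Gamma', \hat{x}:\sigma^{k-1}, \Delta_i \vdash M\headlin{N_i/x} : \tau$. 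The residual bag is typed by the remaining components, $B\setminus N_i : \sigma^{k-1}$, so \redlab{T:ex \dash sub} types the $i$-th summand $M\headlin{N_i/x}\esubst{(B\setminus N_i)}{x}$ under context $\Gamma', \Delta = \Gamma$ with type $\tau$. A final application of \redlab{T:sum}, iterated over the $k$ summands, assembles the whole expression under $\Gamma \vdash \expr{M}' : \tau$.

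The contextual cases \redlab{R:TCont} and \redlab{R:ECont} follow from the induction hypothesis together with the compositionality of the typing rules over the context grammars of Definition~\ref{def:context_lamrfail}. For \redlab{R:ECont} one inverts \redlab{T:sum}, applies the hypothesis to the reducing summand, and reassembles with \redlab{T:sum}. For \redlab{R:TCont} one inverts \redlab{T:app} or \redlab{T:ex \dash sub} according to the shape of $C[\cdot]$, applies the induction hypothesis to the typing of the hole (distributing the outer context over the emerging sum), and reapplies the same rule to each summand before closing with \redlab{T:sum}. I expect the main obstacle to lie in the \redlab{R:Fetch} case, precisely because the intersection types are non-idempotent: the context must be split so that the $k$ resources of $B$ are matched exactly against the $k$ head occurrences of $x$, and one must verify that across \emph{all} $k$ summands the reassembled context is invariably $\Gamma$ --- the component $\Delta_i$ migrating out of the bag into the body typing exactly compensating for the shrinking residual bag $B\setminus N_i$. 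Equally delicate is checking the side condition needed to reapply \redlab{T:ex \dash sub}, namely that $\#(x, M\headlin{N_i/x}) = k-1$ matches $\size{B\setminus N_i}$, which holds because head linear substitution removes exactly one occurrence of $x$.
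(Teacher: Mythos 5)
Your proof is correct and follows essentially the same route as the paper's: induction on the reduction derivation, with \redlab{R:Beta} handled by inversion and \redlab{T:ex\dash sub}, the \redlab{R:Fetch} case resolved via the Linear Substitution Lemma (Lemma~\ref{lem:subt_lem}) followed by retyping each summand with \redlab{T:ex\dash sub} on the residual bag and closing under \redlab{T:sum}, and the contextual rules dispatched by the induction hypothesis. The bookkeeping you flag as delicate in \redlab{R:Fetch} --- one $x{:}\sigma$ assignment consumed per head substitution, with $\Delta_i$ migrating from the bag typing into the body typing so that every summand is typed under the same $\Gamma$ --- is exactly how the paper's appendix proof of the well-formedness analogue reassembles the context.
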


\begin{proof} By  induction on the reduction rule (\figref{fig:reductions_lamrfail}) applied in $\expr{M}$. 
\end{proof}

\revo{}{
\begin{restatable}[Linear Anti-substitution Lemma for \lamr]{lema}{explemfailnofail}
\label{lem:antisubt_lem}
\revdaniele{Let $M$ and $N$ be $\lamr$-terms such that} $\headf{M} = x$, then we have:
\begin{itemize}
    \item $\Gamma, x:\sigma^{k-1} \vdash M \headlin{ N / x }: \tau$, with $k > 1$, then  \revdaniele{there exist} $  \Gamma_1, \Gamma_2$  such that  $\Gamma_1 , x:\sigma^k \vdash M: \tau$, and $\Gamma_2 \vdash N : \sigma$, where $\Gamma = \Gamma_1 \contexcat \Gamma_2$.
    \item $\Gamma \vdash M \headlin{ N / x }: \tau$, with $x \not \in \dom{\Gamma}$, then \revdaniele{there exist} $ \Gamma_1, \Gamma_2$  such that  $\Gamma_1 , x:\sigma \vdash M: \tau$ , and $\Gamma_2 \vdash N : \sigma$, where $\Gamma = \Gamma_1 \contexcat \Gamma_2$.
\end{itemize}
%
%
\end{restatable}
}

\begin{proof}
 \revdaniele{Standard, by structural induction.}
\end{proof}

\revo{}{
\begin{restatable}[Subject Expansion for \lamr]{thms}{subexpone}
\label{t:app_lamrexp}
If $\Gamma \vdash \expr{M}':\tau$ and $\expr{M} \red \expr{M}'$ then $\Gamma \vdash \expr{M} :\tau$.
\end{restatable}
}

\begin{proof}
 \revdaniele{Standard, by structural induction.} See \appref{app:typeshar} for details.
\end{proof}

\subsubsection{Well-formed Expressions (in \texorpdfstring{$\lamrfail$}{})}\label{ss:lamrfail_wf}\hfill

Building upon the type system for \lamr, we now define a type system for checking {\em well-formed} \lamrfail-expressions. This approach enables us to admit expressions with a failing  computational behavior, may it be due to the mismatch in the number of resources required and available, or be due to consumption of a failing behavior by another expression.
 \secondrev{Such definition relies on the \emph{core context} which is the key to the well-formedness of failure terms:  free variables that are result of weakening will disregarded in the typing of the failure term.}
 

\secondrev{
\begin{defi}[Core Context]
    Given a context $\Gamma$, the associated  \emph{core context} is defined as 
$\core{\Gamma} = \{ x:\pi \in \Gamma \,|\, \pi \not = \omega\}$. 
\end{defi}
}


\begin{defi}[Well-formed \lamrfail expressions]
\label{d:wellf}
An expression $ \expr{M}$ is \revd{B11}{\emph{well-formed}} if  there exist  $\Gamma$ and  $\tau$ such that  $ \Gamma \wfdash  \expr{M} : \tau  $ is entailed via the rules in \figref{fig:app_wf_rules}.
\end{defi}

\begin{figure}
    \centering

\begin{prooftree}
\AxiomC{\( \Gamma \vdash \expr{M} : \tau \)}
\LeftLabel{\redlab{F:wf \dash expr}}
\UnaryInfC{\( \Gamma \wfdash  \expr{M} : \tau \)}
\DisplayProof
\hfill
\AxiomC{\( \Gamma \vdash B : \pi \)}
\LeftLabel{\redlab{F:wf \dash bag}}
\UnaryInfC{\( \Gamma \wfdash  B : \pi \)}
\DisplayProof
\hfill
\AxiomC{\( \Delta  \wfdash M : \tau\)}
\LeftLabel{ \redlab{F:weak}}
\UnaryInfC{\( \Delta , x: \omega \wfdash M: \tau \)}
\end{prooftree}

 \begin{prooftree}
 \AxiomC{\( \Gamma , {x}: \sigma^n \wfdash M : \tau \quad x\notin \dom{\Gamma} \)}
 \LeftLabel{\redlab{F:abs}}
 \UnaryInfC{\( \Gamma \wfdash \lambda x . M : \sigma^n  \rightarrow \tau \)}
 \DisplayProof\hfill
 \AxiomC{\( \Gamma \wfdash M : \sigma\)}
 \AxiomC{\( \Delta \wfdash B : \sigma^k\)}
 \LeftLabel{\redlab{F:bag}}
 \BinaryInfC{\( \Gamma \contexcat \Delta \wfdash \bag{M}\cdot B:\sigma^{k+1}\)}
 \end{prooftree}
        
  \begin{prooftree}
\AxiomC{$ \Gamma \wfdash \expr{M} : \sigma$}
\AxiomC{$ \Gamma \wfdash \expr{N} : \sigma$}
\LeftLabel{\redlab{F:sum}}
\BinaryInfC{$ \Gamma \wfdash \expr{M}+\expr{N}: \sigma$}
\DisplayProof
\hfill
\AxiomC{\(\secondrev{ \dom{\core{\Gamma}} = \widetilde{x} } \)}
\LeftLabel{\redlab{F:fail}}
\UnaryInfC{\( \secondrev{ {\Gamma} \wfdash  \fail^{\widetilde{x}} : \tau } \)}
\end{prooftree}

\begin{prooftree}
\AxiomC{\( \Gamma , {x}:\sigma^{k} \wfdash M : \tau \) }
\AxiomC{\( \Delta \wfdash B : \sigma^{j} \) \ \( k, j \geq 0 \)}
\LeftLabel{\redlab{F:ex \dash sub}}  
\BinaryInfC{\( \Gamma \contexcat \Delta \wfdash M \esubst{ B }{ x } : \tau \)}
\end{prooftree}
    
    \begin{prooftree}
        \AxiomC{\( \Gamma \wfdash M : \sigma^j \rightarrow \tau \)}
        \AxiomC{\( \Delta \wfdash B : \sigma^{k} \)}
        \AxiomC{\( k, j \geq 0 \)}
            \LeftLabel{\redlab{F:app}}
        \TrinaryInfC{\( \Gamma \contexcat \Delta \wfdash M\ B : \tau\)}
    \end{prooftree}
    \caption{Well-Formed Rules for \lamrfail}\label{fig:app_wf_rules}
\end{figure}
    
Below we give a brief description of the rules in~\figref{fig:app_wf_rules}. Essentially, they differ from the ones in \figref{fig:app_typingrepeat}, by allowing mismatches between the number of copies of a variable in a functional position and the number of components in a bag. 
\begin{itemize}
\item {\bf Rules~$\redlab{F{:}wf \dash expr}$ and $\redlab{F{:}wf \dash bag}$} derive that well-typed expressions and bags in $\lamr$ are well-formed. 
\item {\bf Rules~\redlab{F{:}abs}, \redlab{F{:}bag}, and \redlab{F{:}sum}} are as in the  type system for \lamr, but extended to the system of well-formed expressions. 
\item {\bf Rules~$\redlab{F{:}ex \dash sub}$ and  $\redlab{F{:}app}$} differ from the similar typing rules as the size of the bags (as declared in their types) is no longer required to match the number of occurrences of the variable assignment in the typing context (\redlab{F:ex\dash sub}), or the type of the term in the functional position (\redlab{F:app}).
\item {\bf Rule~$\redlab{F{:}fail}$} has no analogue in the type system: we allow $\fail^{\widetilde{x}}$ to be well-formed with any strict type, provided that the core context contains the types of the variables in $\widetilde{x}$ (i.e., none of the variables in $\widetilde{x}$ is typed with $\omega$). 
\end{itemize}
Clearly, the set of {well-typed} expressions is strictly included in the set of {well-formed} expressions. 
Take, e.g., $M=x\esubst{ \bag{N_1,N_2} }{ x }$, where both $N_1$ and $N_2$ are well-typed.
    It is easy to see that  $M$  is well-formed. However, $M$ is not well-typed.

\begin{exa}[Cont. Example~\ref{ex:welltyped}]
\label{ex:wellformed}
We explore the well-formedness of some of the terms motivated in previous examples:

\begin{enumerate}
    \item Term $M_1 = (\lambda x. x ) \bag{y} $ is well-typed and also well-formed, as we have:
    \begin{prooftree}
           \AxiomC{\( y:\sigma \vdash (\lambda x. x ) \bag{y} : \sigma \)}
           \LeftLabel{\redlab{F:wf \dash expr}}
           \UnaryInfC{\( y:\sigma \wfdash  (\lambda x. x ) \bag{y} : \sigma \)}
    \end{prooftree}

    \item We saw that term $M_2 = (\lambda x. x ) (\bag{y,z}) $ is not typable; however, it is well-formed:        
    \begin{prooftree}
            \AxiomC{\(   \vdash \lambda x. x : \sigma^1 \rightarrow \sigma \)}
            \LeftLabel{\redlab{F:wf \dash expr}}
            \UnaryInfC{\(  \wfdash \lambda x. x : \sigma^1 \rightarrow \sigma \)}
            \AxiomC{\( y:\sigma, z:\sigma  \vdash \bag{y,z} : \sigma^{2} \)}
            \LeftLabel{\redlab{F:wf \dash bag}}
            \UnaryInfC{\( y:\sigma, z:\sigma \wfdash \bag{y,z} : \sigma^{2} \)}
                        \AxiomC{\( 1, 2 \geq 0 \)}
             \LeftLabel{\redlab{F:app}}
        \TrinaryInfC{\( y:\sigma, z:\sigma \wfdash (\lambda x. x ) (\bag{y,z})  : \sigma \)}
    \end{prooftree}
  Notice that   both $\vdash \lambda x. x : \sigma^1 \rightarrow \sigma$ and $ \Gamma \vdash \bag{y,z} : \sigma^{2} $ are well-typed.
    
    \item Similarly, the term $M_3 = (\lambda x. x ) \oneb $ is also well-formed. The corresponding derivation is as above, but uses an empty context as well as the well-formedness rule for bags: 
    \begin{prooftree}
            \AxiomC{\( \ \vdash \oneb : \sigma^{0} \)}
            \LeftLabel{\redlab{F:wf \dash bag}}
            \UnaryInfC{\(  \wfdash \oneb : \sigma^{0} \)}
    \end{prooftree}
    Notice how $\sigma^0 = \omega$ and that $  \wfdash \oneb : \omega$.
    
    \item Term $M_4 = (\lambda x. y ) \oneb $ is well-typed and also well-formed.
    
    \item Interestingly, term $M_5 = \fail^{\emptyset} $ is well-formed as:
    \begin{prooftree}
        \AxiomC{\(  \)}
        \LeftLabel{\redlab{F:fail}}
        \UnaryInfC{\(  \wfdash  \fail^{\emptyset} : \tau  \)}
    \end{prooftree}
\end{enumerate}
\end{exa}

\begin{exa}
Let us consider an expression that is not well-formed:
 $$ \lambda x . x \bag{\lambda y . y,~\lambda z . z_1 \bag{z_1\bag{z_2}}}.$$
Notice that $\lambda x.x$ is applied to bags of two different types:  
\begin{itemize}
\item The first bag containing $\lambda y. y$ is well-typed, thus well-formed. Consider the derivation $\Pi_1$:

\begin{prooftree}
    \AxiomC{}
    \LeftLabel{\redlab{T:var}}
    \UnaryInfC{\(y:\sigma\vdash y: \sigma\)}
        \LeftLabel{\redlab{T:abs}}
    \UnaryInfC{\(\vdash  \lambda y.y: \sigma\to \sigma\)}
    \AxiomC{}
    \LeftLabel{\redlab{T:\oneb}}
    \UnaryInfC{\(\vdash\oneb:\omega\)}
        \LeftLabel{\redlab{T:bag}}
    \BinaryInfC{\(\vdash \bag{\lambda y.y}\cdot \oneb:\sigma\to \sigma\)}
        \LeftLabel{\redlab{F:wf\dash bag}}
     \UnaryInfC{\(\wfdash \bag{\lambda y.y}\cdot \oneb:\sigma\to \sigma\)}
\end{prooftree}
\end{itemize}
 In the rest of the example we will omit the labels of rule applications, and concatenations with the empty bag $\oneb$ (i.e., $\bag{\lambda y. y}\cdot \oneb$ will be written simply as $\bag{\lambda y. y}$) and corresponding sub-derivations consisting of applications of Rule~\redlab{T:\oneb}.
\begin{itemize}
\item The second bag  contains $\lambda z . z_1 \bag{z_1\bag{z_2}} $  contains an abstraction that acts as a weakening as $z$ does not appear within $z_1 \bag{z_1\bag{z_2}}$. Consider the derivation $\Pi_2$:
 
\begin{prooftree}
     \AxiomC{\(z_1:\sigma\to\sigma\vdash z_1:\sigma\to \sigma\)} 
  \AxiomC{\(z_1:\sigma\to\sigma\vdash z_1:\sigma\to \sigma\)}  
     \AxiomC{\(z_2:\sigma\vdash z_2:\sigma\)} 
     \UnaryInfC{\(z_2:\sigma\vdash \bag{z_2}:\sigma\)}
     \BinaryInfC{\(z_1:\sigma\to\sigma, z_2:\sigma\vdash z_1\bag{z_2}:\sigma\)}
     \UnaryInfC{\(z_1:\sigma\to\sigma, z_2:\sigma\vdash \bag{z_1\bag{z_2}}:\sigma\)}
    \BinaryInfC{\(z_1:\sigma\to\sigma\wedge\sigma\to\sigma, z_2:\sigma\vdash z_1\bag{z_1\bag{z_2}}:\sigma\)}
\UnaryInfC{\(z_1:\sigma\to\sigma\wedge\sigma\to\sigma, z_2:\sigma, z:\omega\vdash z_1\bag{z_1\bag{z_2}}:\sigma\)}
\UnaryInfC{\(z_1:\sigma\to\sigma\wedge\sigma\to\sigma, z_2:\sigma\vdash\lambda z. z_1\bag{z_1\bag{z_2}}:\omega\to \sigma\)}
\UnaryInfC{\(z_1:\sigma\to\sigma\wedge\sigma\to\sigma, z_2:\sigma\wfdash\lambda z. z_1\bag{z_1\bag{z_2}}:\omega\to \sigma\)}
\UnaryInfC{\(\underbrace{z_1:\sigma\to\sigma \wedge \sigma\to\sigma, z_2:\sigma}_{\Gamma}\wfdash\bag{\lambda z. z_1\bag{z_1\bag{z_2}}}:\omega\to \sigma\)}
\end{prooftree}

\item The concatenation of these two bags is not well-formed since each component has a different type: $\sigma\to \sigma$ and $\omega\to \sigma$. Therefore,  $ \lambda x . x \bag{ ~ \lambda y . y ~ , ~ \lambda z . z_1 \bag{z_1\bag{z_2}} ~}$ is not well-formed.
\end{itemize}
Notice that if we change $\lambda y.y$ to $\lambda y.y_1$ in the first bag, we would have a derivation $\Pi_1'$ for $y_1:\sigma\wfdash\lambda y.y_1:\omega \to \sigma$. This would allow us to concatenate the bags with derivation $\Pi_3$:

\begin{prooftree}
     \AxiomC{$\Pi_1'$}
     \noLine
     \UnaryInfC{$y_1:\sigma\wfdash \lambda y.y_1:\omega \to \sigma$}
     \AxiomC{$\Pi_2$}
     \noLine
     \UnaryInfC{\(\Gamma\wfdash\lambda z. z_1\bag{z_1\bag{z_2}}:\omega\to \sigma\)}
      \AxiomC{}
     \UnaryInfC{\(\oneb:\omega\)}
     \BinaryInfC{\(\Gamma\wfdash\bag{\lambda z. z_1\bag{z_1\bag{z_2}}}\cdot \oneb:\omega\to \sigma\)}
     \BinaryInfC{\(\Gamma, y_1:\sigma\wfdash\bag{\lambda y.y_1}\cdot \bag{\lambda z. z_1\bag{z_1\bag{z_2}}}\cdot \oneb:(\omega\to \sigma)^2\)}
\end{prooftree}

Thus, the whole term becomes well-formed:
\begin{prooftree}
     \AxiomC{}
     \noLine
     \UnaryInfC{$x:\omega\to \sigma \vdash x:\omega\to \sigma$}
\UnaryInfC{$\vdash \lambda x. x : (\omega\to \sigma)\to \omega \to \sigma$}
\UnaryInfC{$\wfdash \lambda x. x : (\omega\to \sigma)\to \omega \to \sigma$}
\AxiomC{$\Pi_3$}
\noLine
\UnaryInfC{\(\Gamma, y_1:\sigma\wfdash\bag{ ~ \lambda y.y_1 ~,~ \lambda z. z_1\bag{z_1\bag{z_2}} ~ }:(\omega\to \sigma)^2\)}
\BinaryInfC{\(\Gamma, y_1:\sigma\wfdash\lambda x. x\bag{ ~ \lambda y.y_1 ~,~ \lambda z. z_1\bag{z_1\bag{z_2}} ~ }:\omega\to \sigma\)}
\end{prooftree}
\end{exa}

Well-formedness  rules satisfy subject reduction with respect to the rules in~\figref{fig:reductions_lamrfail} and relies on the linear substitution lemma for $\lamrfail$:

\begin{restatable}[Substitution Lemma for \lamrfail]{lema}{subtlemfail}
\label{lem:subt_lem_fail}

\revo{}{
If $\Gamma , x:\sigma^k \wfdash M: \tau$ (with $k \geq 1$), $\headf{M} = x$, and $\Delta \wfdash N : \sigma$ 
then 
$\Gamma \contexcat \Delta, x:\sigma^{k-1}  \wfdash M \headlin{ N / x }$.
}
\end{restatable}

We now show subject reduction on well formed expressions in \lamrfail. We use our results of subject reduction for well-typed \lamr (Theorem~\ref{t:app_lamrsr}) and extend them to \lamrfail.

\begin{restatable}[Subject Reduction in \lamrfail]{thms}{applamrfailsr}
\label{t:app_lamrfailsr}
If $\Gamma \wfdash \expr{M}:\tau$ and $\expr{M} \red \expr{M}'$ then $\Gamma \wfdash \expr{M}' :\tau$.
\end{restatable}

\begin{proof}[Proof (Sketch)]
By structural induction on the reduction rules. 	
See \appref{app:lamfailintertypes} for details.
\end{proof}

\revo{}{Differently from $\lamr$,  subject expansion fails for $\lamrfail$. This is due to the possibility of failure in the use of resources. In $\lamr$, if a resource is substituted within a term it is always done once, hence the term substituted must always be well-typed; however, in reductions that lead to the failure term, resources within a bag may be discarded before ever being substituted and hence, there is  no requirement to be well-formed. Formally, we have:}

\revo{}{
\begin{restatable}[Failure of Subject Expansion in \lamrfail]{thms}{applamrfailexp}
\label{t:app_lamrfailse}
If $\Gamma \wfdash \expr{M}':\tau$ and $\expr{M} \red \expr{M}'$ then \revdaniele{it is not necessarily the case that} $ \Gamma \wfdash \expr{M} :\tau$.
\end{restatable}
}

\revo{}{
\begin{proof}
A counter-example suffices here. Consider the term $  \fail^\emptyset$, which is well-formed but not well-typed, and let $ \Omega^l$  be the term $( \lambda x. x \bag{x} ) \bag{ \lambda x. x \bag{x} } $.
 \revdaniele{Notice that $\dash \wfdash \fail^\emptyset:\tau$ and 
 $ \fail^x \esubst{\bag{\Omega^l}}{x}\red\fail^\emptyset$, but
 $\fail^x \esubst{\bag{\Omega^l}}{x}$ is not well-formed (nor well-typed)}
 .
\end{proof}
}

\section[A Resource Calculus With Sharing]{\texorpdfstring{$\lamrsharfail$}{λ̂^↯_⊕}: A Resource Calculus With Sharing}\label{sec:lamsharfail}

We define $\lamrsharfail$, a variant of $\lamrfail$ with a sharing construct, \srev{which we adopt following the {atomic} $\lambda$-calculus in~\cite{DBLP:conf/lics/GundersenHP13}.}
In $\lamrsharfail$, a variable is only allowed to appear once in a term:  
{multiple occurrences of the same variable are  atomized, i.e., they are given new different variable names. 
\srev{The ``atomization'' of variable occurrences realized in $\lamrsharfail$ via sharing will turn out to be very convenient to define our encoding into \spi.}

 Our language $\lamrsharfail$, defined in \secref{ss:syntaxshar},  includes also a form of explicit substitution, called \emph{explicit linear substitution}, which enables a refined analysis of the consumption of linear resources. Later, in \secref{ssec:lamshar_semantics}, we introduce the reduction semantics that implements a lazy evaluation.
 In \secref{ss:typeshar}, we present a non-idempotent intersection type system to control the use of resources.
 Finally, in \secref{ss:auxtrans} we give an encoding from $\lamrfail$ into $\lamrsharfail$, denoted
  $\recencodopenf{\cdot}$, whose correctness is established in \secref{s:encoding}.
 


\subsection{Syntax}
\label{ss:syntaxshar}
\hfill

The syntax of \lamrsharfail only modifies the syntax of \lamrfail-terms, which is defined by the grammar below; the syntax of bags $B$ and expressions $\expr{M}$ is  as in \defref{def:rsyntaxfail}.
\begin{align*}
\mbox{(Terms)} \quad  M,N, L ::= &
~~x 
\sep \lambda x . (M[ \widetilde{x} \leftarrow x ]) 
\sep (M\ B) 
\sep M \linexsub {N /x} 
\sep \fail^{\widetilde{x}}
\\
& \sep 
M [ \widetilde{x} \leftarrow x ] 
\sep (M[\widetilde{x} \leftarrow x])\esubst{ B }{ x } 
\end{align*}
Distinctive aspects are the \emph{sharing construct}  
$M [ \widetilde{x} \leftarrow x ]$ 
and the \emph{explicit linear substitution} 
$M \linexsub{ N /x}$.
The term  $M [ \widetilde{x} \leftarrow x ]$ defines the sharing of variables $\widetilde{x}$ occurring in $M$ using $x$. 
We shall refer to $x$ as \emph{sharing variable} and to $\widetilde{x}$ as \emph{shared variables}. 
Notice that $\widetilde{x}$ can be empty: $M[\leftarrow x]$ expresses that $x$ does not share any variables in $M$. \revdaniele{The sharing construct}  $M [ \widetilde{x} \leftarrow x ]$ \revdaniele{binds the variables in \(\widetilde{x}\);} the occurrence of $x_i$ can appear within the fail term $\fail^{\widetilde{y}}$, if $x_i \in \widetilde{y}$.
 In the explicit linear substitution $M \linexsub{ N /x}$ \revdaniele{binds $x$ in $M$}. 
 As in $\lamrfail$, the term $\fail^{\widetilde{x}}$ explicitly accounts for failed attempts at substituting the variables $\widetilde{x}$, due to an excess or lack of resources.  \secondrev{A variable that is not explicitly sharing/shared is called {\em independent}}.

\begin{exa}\label{ex:lambshar_terms}
The following are examples of  $\lamrsharfail$-terms.
\begin{itemize}
\item (Shared identity) $\hat{{\bf I}}=\lambda x.x_1[x_1\leftarrow x]$ 
\item (Independent  variables) \secondrev{An independent variable $x$ applied to a 1-component bag (another independent variable): $x\bag{x_1}$}
\item $\hat{{\bf I}}$ applied to a 1-component bag: $ \hat{{\bf I}} \bag{y_1}[y_1\leftarrow y]$
    \item  $\hat{{\bf I}}$ applied to a 2-component bag: \(\hat{{\bf I}}(\bag{y_1,y_2}) \shar{y_1,y_2}{y}\)
    \item Shared vacuous abstraction: $(\lambda y. x_1\bag{x_2}\shar{}{y})\shar{x_1,x_2}{x}$
    \item $\hat{{\bf I}}$ applied to a bag containing an explicit substitution of a failure term that does not share the  variable $y$: \(\hat{{\bf I}} \bag{ \fail^{\emptyset}[ \leftarrow y] \esubst{\bag{N}}{y} }  \)
    \item An abstraction on $x$ of two shared occurrences of $x$: $\hat{D}=\lambda x. x_1 \bag{x_2}\shar{x_1,x_2}{x}$
\end{itemize}
\end{exa}

\srev{The syntax of terms is subject to some natural conditions on variable occurrences and on the structure of the sharing construct and the {explicit linear substitution}. We formalize these conditions as \emph{consistency}, defined as follows:}


\secondrev{
\begin{defi}[Consistent Terms, Bags, and Expressions]
\label{d:consistent}
	We say that the expression $\expr{M}$ is 
\emph{consistent} 
if each subterm $M_0$ of $\expr{M}$ satisfies the following conditions:
\begin{enumerate}
	\item If $M_0 = M [ \widetilde{x} \leftarrow x ]$ then: (i) 
	$\widetilde{x}$ contains pairwise distinct variables; 
	(ii)~every $x_i \in \widetilde{x}$ must occur exactly once in $M$; (iii) $x_i$ is not a sharing variable;
	(iv)~$M$ is consistent. 
	\item If $M_0 = M \linexsub{ N /x}$ then: (i) the variable $x$ must occur exactly once in $M$;
	(ii) $x$ cannot be a sharing variable; 	(iii)~$M$ and $N$ are consistent; (iv)~$\lfv{M} \cap \lfv{N} = \emptyset$. 
    \item Otherwise, for other forms of $M_0$, variables must occur exactly once, i.e.,:
        \begin{itemize}
            \item If $M_0 = \lambda x . (M[ \widetilde{x} \leftarrow x ])$ then: $x \not \in \lfv{M} $; $\widetilde{x}$ contains pairwise distinct variables; every $x_i \in \widetilde{x}$ must occur exactly once in $M$ and is not a sharing variable; $M$ is consistent.            
            \item If $M_0 = (M\ B)$ then $\lfv{M} \cap \lfv{B} = \emptyset$ and $M$ and $B$ are consistent.
            \item If $M_0 = \fail^{\widetilde{x}}$ then $\widetilde{x}$ contains pairwise distinct variables.
            \item If $M_0 =(M[\widetilde{x} \leftarrow x])\esubst{ B }{ x } $  then: $x \not \in \lfv{M} $; $\widetilde{x}$ contains pairwise distinct variables; every $x_i \in \widetilde{x}$ must occur exactly once in $M$ and is not a sharing variable;   $\lfv{M} \cap \lfv{B} = \emptyset$; and $M$ and $B$ are consistent.
        \end{itemize}
\end{enumerate}
Consistency extends to bags as follows. 
The bag $\oneb$ is always consistent. 
The bag $\bag{M}$ is consistent if $M$ is consistent.
The bag $A \cdot B$ is consistent  if (i)~$A$ and $B$ are consistent and (ii)~$\lfv{A} \cap \lfv{B} = \emptyset$.
\end{defi}  
}

\srev{We now discuss the consistency conditions for the sharing construct $M [ \widetilde{x} \leftarrow x ]$. Condition~1(ii) enforces that variables cannot have more than one linear occurrence in the subject of a sharing construct: this condition rules out terms such as 
    $x_1 \bag{x_1\bag{y}}[x_1 \leftarrow x]$. Condition~1(iii), which rules out terms of the form $x_1 \bag{x_2\bag{x_3\bag{y}}}[x_1 , x_2 \leftarrow x'] [x' , x_3 \leftarrow x]$,  is for convenience: by requiring that sharing occurrences appear at the top level in bindings, we can easily deduce the number of occurrences of a variable by measuring the size of $\widetilde{x}$ in $[\widetilde{x} \leftarrow x]$, rather than inductively having to measure the occurrences of each $x' \in \widetilde{x}$ in multiple sharing constructs.}
    
    \srev{Conditions on the explicit linear substitution  $M\linexsub{N/x}$ formalize our design choice:
    an explicit linear substitution is defined when the number of 
    variables to be substituted coincides with the number of available resources. 
    In particular, Condition~2(i) rules out 
    terms of the form $y\linexsub{M/x}$, where an explicit linear substitution has no variable to perform a substitution. Condition~2(ii) rules  out terms such as $M[x_1,x_2 \leftarrow x]\linexsub{M/x}$, in which a term is to be linearly substituted for a single variable $x$; however, as the variable is shared twice within $M$, there are less available terms to be substituted than it is necessary. }
 
   \secondrev{ Finally, Condition 3 enforces that each variable occurs only once  in a consistent term, and also that in 
 $\fail^{\widetilde{x}}$, the $\widetilde{x}$  denotes a  set of variables (rather than a multiset), as variables can appear at most once within consistent terms. Thus, consistent terms also excludes terms such as $\fail^{x,x}$}.
   
 \secondrev{In what follows, we shall be working with consistent terms only, which we will call simply terms in 
our definitions and results. As we will see, consistency will be preserved by 
reduction (\thmref{thm:consistency_reductions})
and ensured by typing  (\thmref{thm:consistency_type}) and a structural congruence on terms~(\thmref{thm:term_consistency}).}

\subsection{Reduction Semantics}\label{ssec:lamshar_semantics}\hfill

Similarly to $\lamrfail$, the reduction semantics of \lamrsharfail is given by a relation $\red$, defined by the rules in \figref{fig:share-reductfailure}; it consists of an extension of reductions in $\lamrfail$ that deals with the sharing construct $\shar{\cdot}{\cdot }$ and  with the explicit linear substitution $\cdot \linexsub{\cdot /\cdot }$. In order to define the reduction rules formally,  we require some auxiliary notions:
the free variables of an expression/term, 
the head of a term, linear head substitution, and contexts.

\begin{defi}[Free Variables]
\label{d:fvsh}
The set of free variables of a term, bag and expressions in \lamrsharfail, is defined inductively as

       \[
       \begin{array}{l}
       \begin{array}{l@{\hspace{2cm}}l}
          \lfv{x} = \{ x \}  &  \lfv{ \fail^{\widetilde{x}}} =\{ \widetilde{x}\} \\
          \lfv{ \bag{M}} = \lfv{M} & \lfv{B_1 \cdot B_2} = \lfv{B_1} \cup \lfv{B_2} \\
          \lfv{M\ B} = \lfv{M}  \cup \lfv{B} & \lfv{\oneb} = \emptyset \\
          \lfv{M \linexsub {N /x}} = (\lfv{M}\setminus \{x\})\cup \lfv{N} &
          \lfv{M [ \widetilde{x} \leftarrow x ]} = (\lfv{M}  \setminus \{\widetilde{x}\}) \cup \{ x \}\\
          \lfv{\lambda x . (M[ \widetilde{x} \leftarrow x ])} = \lfv{M[ \widetilde{x} \leftarrow x ]}\setminus \{x\}&  \lfv{\expr{M}+\expr{N}} = \lfv{\expr{M}} \cup \lfv{\expr{N}}
          \end{array}\\
          \ \lfv{ (M[\widetilde{x} \leftarrow x])\esubst{ B }{ x }} = (\lfv{M[\widetilde{x} \leftarrow x]}\setminus \{x\})\cup \lfv{B}
    \end{array}
    \]
    
 As usual, a term $M$ is {\em closed} if $\lfv{M}=\emptyset$.    
\end{defi}

\begin{defi}[Head]
\label{d:headshar}
The head of a term $M$, denoted $\headf{M}$, is defined inductively:
\[
\begin{array}{l}
\begin{array}{l@{\hspace{3cm}}l}
\headf{x}  = x   &  \headf{\lambda x . (M[ \widetilde{x} \leftarrow x ])}  = \lambda x . (M[ \widetilde{x} \leftarrow x ])
\\
\headf{M\ B}  = \headf{M} & \headf{M \linexsub{N /x}}  = \headf{M}
\\
\headf{\fail^{\widetilde{x}}}  = \fail^{\widetilde{x}}&
\end{array}
\\
\ \headf{M[\widetilde{x} \leftarrow x]} = 
\begin{cases}
    x & \text{If $\headf{M} = y \text{ and } y \in \widetilde{x}$}\\
    \headf{M} & \text{Otherwise}
\end{cases}
\\
\ \revd{B29}{ \headf{(M[\widetilde{x} \leftarrow x])\esubst{ B }{ x }} = 
\begin{cases}
    \fail^{\emptyset} & \text{If $ | \widetilde{x} | \not = \size{B}$}\\
    \headf{M[ \leftarrow x]} & \text{If $ \widetilde{x} = \emptyset$ and $B = \oneb$}\\
    (M[\widetilde{x} \leftarrow x])\esubst{ B }{ x } & 
    \text{Otherwise} 
\end{cases} }
\end{array}
\]
\end{defi}


\revd{B29}{The most notable difference between $\headf{\cdot}$ in $\lamrfail$ (cf. Definition \ref{def:headfailure}) and in $\lamrsharfail$ concerns explicit substitution. Both definitions return $\fail^\emptyset$ in a mismatch of resources; in $\lamrsharfail$, the head term of an explicit substitution is only defined in the case of empty sharing (weakening). As we will see, this allows us to prioritize explicit substitution reductions over fetch reductions, as the head variable will block until an explicit substitution is separated into its linear component. 
}

\begin{defi}[Linear Head Substitution]\label{def:headlinfail}
Given a term $M$ with $\headf{M} = x$, the linear substitution of a term $N$ for   $x$ in $M$, written $M\headlin{ N / x}$ is inductively defined as:
\begin{align*}
x \headlin{ N / x}   & = N 
\\
(M\ B)\headlin{ N/x }  & = (M \headlin{ N/x })\ B  
\\
(M \linexsub{L /y} ) \headlin{ N/x } &= (M\headlin{ N/x })\ \linexsub{L /y}  & x \not = y\\
((M[\widetilde{y} \leftarrow y])\esubst{ B }{ y })\headlin{ N/x } &= (M[\widetilde{y} \leftarrow y]\headlin{ N/x })\ \esubst{ B }{ y }  
& x \not = y \\
(M[\widetilde{y} \leftarrow y]) \headlin{ N/x } &=  (M\headlin{ N/x }) [\widetilde{y} \leftarrow y] & x \not = y
\end{align*}

\end{defi}

We now define contexts for terms and expressions in  \lamrsharfail.  Term contexts involve an explicit linear substitution, rather than an explicit substitution: this is due to the reduction strategy we have chosen to adopt (cf. Rule~\redlab{RS:Ex\dash Sub} in \figref{fig:share-reductfailure}), as we always wish to evaluate explicit substitutions first. Expression contexts can be seen as sums with holes. 

\begin{defi}[Term and Expression Contexts in \lamrsharfail]\label{def:ctxt_lamsharfail}
Let $[\cdot]$ denote a hole.
Contexts for terms and expressions are defined by the following grammar:
\[
\begin{array}{l@{\hspace{.7cm}}rl}
\text{(Term Contexts)}     & C[\cdot] ,  C'[\cdot] &::=([\cdot])B \mid ([\cdot])\linexsub{N/x} \mid ([\cdot])[\widetilde{x} \leftarrow x] \mid ([\cdot])[ \leftarrow x]\esubst{\oneb}{ x} \\
 \text{(Expression Contexts)}& D[\cdot] , D'[\cdot] & ::= M + [\cdot] \mid [\cdot] + M
\end{array}
\]
The substitution of a hole with a 
term $M$ in a context $C[\cdot]$, denoted  $C[M]$, must be a 
\lamrsharfail-term. 
\end{defi}

We assume that the terms that fill in the holes respect 
\secondrev{consistency}
(i.e., variables appear in a term only once, shared variables must occur in the context).

\begin{exa}
    This example illustrates that certain contexts cannot be filled with certain terms. Consider the hole in context $C[\cdot ]= ([\cdot])\linexsub{N/x}$. 
    \begin{itemize} 
     \item  The hole cannot be filled with $y$, since  $C[y]= y\linexsub{N/x}$ is not a consistent term.
    Indeed, $M\linexsub{N/x}$ requires that $x$ occurs exactly once within $M$.
    \item Similarly, the hole cannot be filled with $\fail^{z}$ with $z\neq x$, since $C[\fail^{z}]= (\fail^{z})\linexsub{N/x}$ and $x$ does not occur in the $\fail^z$, thus, the result is  not a consistent term.
    \end{itemize}
\end{exa}

\begin{figure}[!t]
\centering
  \begin{prooftree}
    \AxiomC{$\raisebox{17.0pt}{}$}
    \LeftLabel{\redlab{RS{:}Beta}}
    \UnaryInfC{\(  (\lambda x. M[\widetilde{x} \leftarrow x]) B  \red M[\widetilde{x} \leftarrow x]\esubst{ B }{ x }  \)}
 \end{prooftree}

 \begin{prooftree}
    \AxiomC{$B = \bag{M_1}
    \cdots  \bag{M_k} \qquad k \geq  1 $}
    \AxiomC{$ M \not= \fail^{\widetilde{y}} $}
    \LeftLabel{\redlab{RS{:}Ex \dash Sub}}
    \BinaryInfC{\( \!M[x_1,\ldots, x_k \leftarrow x]\esubst{ B }{ x } \red \sum_{B_i \in \perm{B}}M\linexsub{B_i(1)/x_1} \cdots \linexsub{B_i(k)/x_k}    \)}
 \end{prooftree}

 \begin{prooftree}
    \AxiomC{$ \headf{M} = x$}
     \LeftLabel{\redlab{RS{:}Lin\dash Fetch}}
     \UnaryInfC{\(  M \linexsub{N/x} \red  M \headlin{ N/x } \)}
\end{prooftree}
     
\begin{prooftree}
       \AxiomC{$ k \neq \size{B} \quad   \widetilde{y} = (\lfv{M} \setminus \{ x_1,\ldots, x_k \} ) \cup \lfv{B}$}
   \LeftLabel{\redlab{RS{:}Fail}}
   \UnaryInfC{\( M[x_1,\ldots, x_k\leftarrow x]\esubst{ B }{ x } \red \sum_{\perm{B}}  \fail^{\widetilde{y}} \)}
 \end{prooftree}

\begin{prooftree}
    \AxiomC{\( \widetilde{y} = \lfv{B} \)}
    \LeftLabel{$\redlab{RS{:}Cons_1}$}
    \UnaryInfC{\(  \fail^{\widetilde{x}} B  \red {} \hspace{-4mm} \displaystyle\sum_{\perm{B}} \hspace{-3mm}\fail^{\widetilde{x} \cup \widetilde{y}}  \)}
        \DisplayProof\hspace{-3mm}
      \AxiomC{\(  \size{B} = k \quad k  +  | \widetilde{x} | \not= 0  \quad  \widetilde{z} = \lfv{B}\)}
    \LeftLabel{$\redlab{RS{:}Cons_2}$}
    \UnaryInfC{\(  (\fail^{\widetilde{x}\cup \widetilde{y}} [ \widetilde{x} \leftarrow x])\esubst{ B }{ x }  {} \red\hspace{-3mm} \displaystyle \sum_{\perm{B}}\hspace{-3mm}\fail^{\widetilde{y}\cup \widetilde{z}} \)}
\end{prooftree}

\begin{prooftree}
    \AxiomC{\( \widetilde{z} = \lfv{N} \)}
    \LeftLabel{$\redlab{RS{:}Cons_3}$}
    \UnaryInfC{\( \fail^{\widetilde{y}\cup x} \linexsub{N/x}  {} \red  \fail^{\widetilde{y} \cup \widetilde{z}}  \)}
\end{prooftree}

\begin{prooftree}
        \AxiomC{$   M \red M'_{1} + \cdots + M'_{k} $}
        \LeftLabel{$\redlab{RS{:}TCont}$}
        \UnaryInfC{$ C[M] \red  C[M'_{1}] + \cdots +  C[M'_{k}] $}
\DisplayProof\hfill%
        \AxiomC{$ \expr{M}  \red \expr{M}'  $}
        \LeftLabel{$\redlab{RS{:}ECont}$}
        \UnaryInfC{$D[\expr{M}]  \red D[\expr{M}']  $}
\end{prooftree}

    \caption{Reduction Rules for \lamrsharfail.}
\label{fig:share-reductfailure}
\end{figure}

Now we are ready to describe the rules in ~\figref{fig:share-reductfailure}. Intuitively, the lazy reduction relation $\red$ on expressions works as follows: 
a $\beta$-reduction in \lamrsharfail results into an explicit substitution $M\shar{\widetilde{x}}{x}\esubst{ B }{ x }$, which then evolves, as an in intermediate step, to an expression consisting of explicit linear substitutions,  which are the ones reducing to a linear head substitution $\headlin{ N / x}$ (with $N \in B$) when the size of $B$ coincides with the number of occurrences of  $x$ in $M$. 
The term reduces to failure when there is a mismatch between the size of $B$ and the number of shared variables to be substituted.
More in details, we have:
\begin{itemize}
    \item {\bf Rule~\redlab{RS{:}Beta}} is standard and reduces to an explicit substitution.
    \item {\bf Rule~\redlab{RS{:}Ex \dash Sub}} applies when the size $k$ of the bag coincides with the length of the list  $\widetilde{x}=x_1,\ldots,x_k$. Intuitively, this rule ``distributes'' an  explicit substitution into a sum of terms involving explicit linear substitutions; it considers all possible permutations of the elements in the bag among all shared variables.
   \item  {\bf Rule~\redlab{RS{:}Lin \dash Fetch}} specifies the evaluation of a term with an explicit linear substitution into a linear head substitution.
\end{itemize}

We have three rules that reduce to the failure term---their objective is to accumulate all (free) variables involved in failed reductions. 
Accordingly:
\begin{itemize} 
\item {\bf Rule~$\redlab{RS{:}Fail}$} formalizes failure in the evaluation of an explicit substitution $M[\widetilde{x}\leftarrow x]\esubst{ B}{x }$, which occurs if there is a mismatch between the resources (terms) present in $B$ and the number of occurrences of $x$ to be substituted. 
The resulting failure term preserves all free variables in $M$ and $B$ within its attached set $\widetilde{y}$.
\item {\bf Rules~$\redlab{RS{:}Cons_1}$ and~$\redlab{RS{:}Cons_2}$}  describe reductions that lazily consume the failure term, when a term has $\fail^{\widetilde{x}}$ at its head position. 
The former rule consumes bags attached to it whilst preserving all its free variables. 
\item  {\bf  Rule~\redlab{RS{:}Cons_3}} accumulates into the failure term the free variables involved in an explicit linear substitution. 
\end{itemize}
The contextual Rules~$\redlab{RS{:}TCont}$ and $\redlab{RS{:}Econt}$ are standard. 

\begin{exa}
We show how a term can reduce using Rule~$\redlab{RS{:}Cons_2}$.

\[
      \begin{aligned}
            (\lambda x . x_1 [x_1 \leftarrow x]) \bag{ \fail^{\emptyset}[ \leftarrow y] \esubst{\bag{N}}{y} }  &\red_{\redlab{RS{:}Beta}} x_1 [x_1 \leftarrow x]  \esubst{\bag{ \fail^{\emptyset}[ \leftarrow y] \esubst{\bag{N}}{y} }}{x} \\
            & \red_{\redlab{RS{:}Ex \dash Sub}} x_1  \linexsub{ \fail^{\emptyset}[ \leftarrow y] \esubst{\bag{N}}{y}  / x_1} \\
             &\red_{\redlab{RS{:}Lin \dash Fetch}} \fail^{\emptyset}[ \leftarrow y] \esubst{\bag{N}}{y}  \\ &\red_{\redlab{RS{:}Cons_2}} \fail^{\lfv{N}}
        \end{aligned}
 \]

 \end{exa}

   \begin{exa}
We illustrate how Rule~$\redlab{RS{:}Fail} $ can introduce $\fail^{\widetilde{x}}$ into a term. It also shows how Rule~$\redlab{RS{:}Cons_3} $ consumes an explicit linear substitution:
 \[
      \begin{aligned}
            x_1 [\leftarrow y] \esubst{\bag{N}}{y}[x_1 \leftarrow x] \esubst{\bag{M}}{x}  &\red_{\redlab{RS{:}Ex \dash Sub}} x_1 [\leftarrow y] \esubst{\bag{N}}{y}\linexsub{M/x_1}  \\
            &\red_{\redlab{RS{:}Fail}} \fail^{ \{ x_1 \} \cup \lfv{N}  }\linexsub{M/x_1} \\ &\red_{\redlab{RS{:}Cons_3}} \fail^{\lfv{M} \cup \lfv{N} }
        \end{aligned}
  \]
 \end{exa}

Similarly to \lamrfail, reduction in \lamrsharfail satisfies a \emph{diamond property}. Therefore, we have the analogue of Proposition~\ref{prop:conf1_lamrfail}: 


\begin{prop}[Diamond Property for \lamrsharfail]
\label{prop:conf1_lamrsharfail}
     For all $\expr{N}$, $\expr{N}_1$, $\expr{N}_2$ in $\lamrsharfail$ s.t. $\expr{N} \red \expr{N}_1$, $\expr{N} \red \expr{N}_2$ { with } $\expr{N}_1 \neq \expr{N}_2$  then there exists   $\expr{M}$ such that $\expr{N}_1 \red \expr{M}$ and $\expr{N}_2 \red \expr{M}$.
\end{prop}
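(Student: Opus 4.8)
The plan is to mirror the proof sketch of the Diamond Property for \lamrfail (Proposition~\ref{prop:conf1_lamrfail}), adapting it to the reduction rules of \figref{fig:share-reductfailure}. The key structural observation to establish first is that genuine non-determinism in \lamrsharfail --- the kind that produces two \emph{distinct} one-step reducts $\expr{N}_1 \neq \expr{N}_2$ --- can only arise from the expression-level contextual rule \redlab{RS{:}ECont}, when $\expr{N}$ is a sum with two or more summands each capable of reducing independently. I would argue that within a single term, the left-hand sides of the reduction rules do not overlap: the head of a term (Definition~\ref{d:headshar}) uniquely determines which term-level rule applies, so term reduction is essentially deterministic up to the choice of permutation in \redlab{RS{:}Ex\dash Sub} and \redlab{RS{:}Fail}. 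Crucially, the permutation choices are collected under a single $\sum_{\perm{B}}$ in one reduction step, so they do not generate branching between different $\expr{N}_i$; they all appear together in the same reduct.

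Having isolated the source of branching, the proof proceeds exactly as in the sketch for Proposition~\ref{prop:conf1_lamrfail}. Without loss of generality I would take $\expr{N} = N + M$ with independent redexes giving $N \red N'$ and $M \red M'$, so that the two distinct reducts are $\expr{N}_1 = N' + M$ and $\expr{N}_2 = N + M'$, each obtained by one application of \redlab{RS{:}ECont} (using the appropriate expression context $D[\cdot]$ from Definition~\ref{def:ctxt_lamsharfail}). Then I close the diamond by taking $\expr{M} = N' + M'$: from $\expr{N}_1 = N' + M$ we apply \redlab{RS{:}ECont} to the $M$-summand to obtain $N' + M'$, and symmetrically $\expr{N}_2 = N + M'$ reduces to $N' + M'$ by reducing the $N$-summand. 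This gives $\expr{N}_1 \red \expr{M}$ and $\expr{N}_2 \red \expr{M}$, as required. The argument generalises to sums with more summands by the same independence reasoning.

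The main obstacle I anticipate lies in rigorously justifying the claim that term-level reduction cannot itself produce two distinct reducts --- i.e., that the rules \redlab{RS{:}Beta}, \redlab{RS{:}Ex\dash Sub}, \redlab{RS{:}Lin\dash Fetch}, \redlab{RS{:}Fail}, \redlab{RS{:}Cons_1}, \redlab{RS{:}Cons_2}, and \redlab{RS{:}Cons_3} are mutually exclusive on any given term. This requires a careful case analysis on the head of the term and its top-level structure, checking in particular that the side conditions of \redlab{RS{:}Ex\dash Sub} ($M \neq \fail^{\widetilde{y}}$, $k = \size{B}$) versus \redlab{RS{:}Fail} ($k \neq \size{B}$) and of \redlab{RS{:}Cons_2} are genuinely disjoint and jointly determine a unique applicable rule. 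One subtlety worth confirming is that the contextual rule \redlab{RS{:}TCont} does not reintroduce branching: since a term context $C[\cdot]$ (Definition~\ref{def:ctxt_lamsharfail}) has a single hole and the reduction strategy is lazy (always evaluating the head redex first), at most one decomposition $C[M]$ with $M$ a redex is available. Once this determinism-modulo-permutation is established, the expression-level argument is routine, and the result follows as the direct analogue of Proposition~\ref{prop:conf1_lamrfail}.
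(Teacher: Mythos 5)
Your proof is correct and takes essentially the same approach as the paper: the paper closes the diamond exactly as you do, observing that distinct one-step reducts $\expr{N}_1 \neq \expr{N}_2$ can only arise from a sum $\expr{N} = N + M$ via \redlab{RS{:}ECont} (since the left-hand sides of the term-level rules in \lamrsharfail do not interfere with each other, just as in \lamrfail), and then joining $\expr{N}_1 = N' + M$ and $\expr{N}_2 = N + M'$ at $\expr{M} = N' + M'$. Your explicit case analysis of the side conditions of \redlab{RS{:}Ex\dash Sub}, \redlab{RS{:}Fail}, and \redlab{RS{:}Cons_2} merely elaborates the non-interference claim that the paper asserts without further detail.
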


\begin{proof}
The thesis follows as in $\lamrfail$ since  the left-hand sides of the reduction rules in $\lamrsharfail$ do not interfere with each other. 
\end{proof}


\begin{rem}[A Calculus with Sharing but Without Failure (\lamrshar)]
\label{r:lamrshar}
As we did in Remark~\ref{rem:lamr}, we define a sub-calculus of \lamrsharfail  in which failure is not explicit. The calculus   \lamrshar is obtained from the syntax of \lamrsharfail by disallowing the term $\fail^{\widetilde{x}}$. The relevant reduction rules from \figref{fig:share-reductfailure} are \redlab{RS:Beta}, \redlab{RS{:}Ex \dash Sub}, \redlab{RS{:}Lin\dash Fetch}, and the two contextual rules. We keep \defref{d:headshar} unchanged with the provision that $\headf{M \esubst{ B }{x}}$ is undefined when $| \widetilde{x} | \not = \size{B}$. 
\end{rem}

\subsection{Non-Idempotent Intersection Types}
\label{ss:typeshar}\hfill

Similarly to $\lamrfail$, we now define \emph{well-formed} \lamrsharfail expressions and a system of rules for checking {well-formedness} by modifying the rules in~\figref{fig:app_wf_rules}. The grammar of strict and multiset types,  the notions of typing assignments, 
and judgements are  the same as in Section~\ref{sec:lamfailintertypes}.
\secondrev{We need an extension to the notion of typing context: whereas in \lamrfail variables were only assigned to multiset types, now sharing variables are assigned to multiset types, shared and independent variables  are assigned to strict types.}
\begin{defi}
\label{d:tcont}
\secondrev{
We extend the definition of typing contexts (\defref{d:tcontsource}) as follows:
\[
    \begin{aligned}
        \Gamma, \Delta & = \dash \sep \Gamma, x:\pi \sep \Gamma , x:\sigma
    \end{aligned}
\]
 The definition of core contexts is extended accordingly, and also denoted as $\core{\Gamma}$.}
\end{defi}

The presentation is in two phases:
\begin{enumerate}
\item We consider the intersection type system given in \figref{fig:typing_sharing} for which we consider the sub-calculus \lamrshar, the sharing calculus excluding failure (cf. Rem.~\ref{r:lamrshar}).
\item We define well-formed expressions for the full language \lamrsharfail, via \defref{def:wf_sharlam} (see below).
\end{enumerate}

\secondrev{To avoid ambiguities, we write $x:\sigma^1$  to make it explicit that the type assignment involves an intersection type (and a sharing variable), rather than a strict type.}

    

\subsubsection{Well-typed Expressions (in \texorpdfstring{$\lamrshar$}{})}\hfill 

 The typing rules in \figref{fig:typing_sharing} are essentially the same as the ones in \figref{fig:app_typingrepeat},  but now taking into account the sharing construct $M [\widetilde{x}\leftarrow x ]$ and the explicit linear substitution. We discuss selected rules:

\begin{itemize}
\item {\bf Rules~\redlab{TS{:}var}, \redlab{TS{:}\oneb}, \redlab{TS{:}bag}, \redlab{TS{:}app}, and \redlab{TS{:}sum}} are the same as in \figref{fig:app_typingrepeat}, considering sharing within the terms and bags.

\item {\bf Rule~\redlab{TS{:}weak}} deals with $k=0$, typing the term $M[\leftarrow x] $, when there are no occurrences of $x$ in $M$, as long as $M$ is typable.
\item {\bf Rule~\redlab{TS{:}abs\dash sh}} is as expected: it requires that the sharing variable is assigned the $k$-fold intersection type $\sigma^k$. 
\item {\bf Rule~\redlab{TS{:}ex \dash lin \dash sub}} supports explicit linear substitutions and consumes one occurrence of $x:\sigma$ from the context.
\item {\bf Rule~\redlab{TS{:}ex \dash sub}} types explicit substitutions where a bag must consist of both the same type and length of the shared variable it is being substituted for.
\item {\bf Rule~\redlab{TS{:}share}} requires that the shared variables $x_1,\ldots, x_k$ have the same type as the sharing variable $x$, for $k\neq 0$. \secondrev{This rule justifies the need for the extension of contexts with assignments of the form $x:\sigma$. This way, e.g.,   Example~\ref{ex:termwelltyped} below gives  an application of Rule~\redlab{TS:share} with $k=1$).}
\end{itemize}

\begin{defi}[Well-typed Expressions]
An expression $\expr{M} \in \lamrsharfail$ is {\em well-typed} (or typable) if there exist $\Gamma$ and  $\tau$ such that $\Gamma \vdash \expr{M} : \tau$ is entailed via the rules in \figref{fig:typing_sharing}.
\end{defi}

Again, the failure term $\fail$  in $\lamrsharfail$ is not typable via this typing system. The following examples illustrate the typing rules.

\begin{exa}
\label{ex:termwelltyped}
The term $ ((\lambda x. x_1 [x_1 \leftarrow x] ) \bag{y_1} ) [ y_1 \leftarrow y] $ is well-typed, as follows:

            \begin{prooftree}
                    \AxiomC{}
                    \LeftLabel{\redlab{TS{:}var}}
                    \UnaryInfC{\( x_1: \sigma \vdash  x_1 : \sigma\)}
                    \LeftLabel{ \redlab{TS{:}share}}
                    \UnaryInfC{\( x: \sigma^1 \vdash  x_1 [x_1 \leftarrow x] : \sigma\)}
                    \LeftLabel{\redlab{TS{:}abs\dash sh}}
                    \UnaryInfC{\(  \vdash \lambda x. x_1 [x_1 \leftarrow x] :  \sigma^1 \rightarrow \sigma \)}
                                            \AxiomC{}
                        \LeftLabel{\redlab{TS{:}var}}
                        \UnaryInfC{\(  y_1 : \sigma \vdash y_1 : \sigma\)}
                            \AxiomC{\(  \)}
                        \LeftLabel{\redlab{TS{:}\oneb}}
                        \UnaryInfC{\( \vdash \oneb : \omega \)}
                    \LeftLabel{\redlab{TS{:}bag}}
                    \BinaryInfC{\( y_1 : \sigma \vdash \bag{y_1}\cdot \oneb:\sigma^{1} \)}
                    \LeftLabel{\redlab{TS{:}app\dash sh}}
                \BinaryInfC{\( y_1 : \sigma \vdash ((\lambda x. x_1 [x_1 \leftarrow x] ) \bag{y_1} ) : \sigma\)}
                \LeftLabel{ \redlab{TS{:}share}}
                \UnaryInfC{\(  y: \sigma^1 \vdash ((\lambda x. x_1 [x_1 \leftarrow x] ) \bag{y_1} ) [ y_1 \leftarrow y] : \sigma \)}
            \end{prooftree}
       \end{exa}

\begin{figure}[t]
    \centering

\begin{prooftree}
    \AxiomC{}
    \LeftLabel{\redlab{TS{:}var}}
    \UnaryInfC{\( x: \sigma \vdash x : \sigma\)}
    \DisplayProof
    \hfill
    \AxiomC{\(  \)}
    \RightLabel{\(\)}
    \LeftLabel{\redlab{TS{:}\oneb}}
    \UnaryInfC{\( \vdash \oneb : \omega \)}
\DisplayProof
    \hfill
      \AxiomC{\( \Delta  \vdash M : \tau\)}
    \LeftLabel{ \redlab{TS{:}weak}}
    \UnaryInfC{\( \Delta , x: \omega \vdash M[\leftarrow x]: \tau \)}
\end{prooftree}

\begin{prooftree}
    \AxiomC{\( \Delta , x: \sigma^k \vdash M[\widetilde{x} \leftarrow x] : \tau \)}
    \LeftLabel{ \redlab{TS{:}abs \dash sh}}
    \UnaryInfC{\( \Delta \vdash \lambda x . (M[\widetilde{x} \leftarrow x]) : \sigma^k \rightarrow \tau \)}
    \DisplayProof
    \hfill
    \AxiomC{\( \Gamma \vdash M : \pi \rightarrow \tau \)}
    \AxiomC{\( \Delta \vdash B : \pi \)}
        \LeftLabel{\redlab{TS{:}app}}
    \BinaryInfC{\( \Gamma , \Delta \vdash M\ B : \tau\)}
    \end{prooftree}

\begin{prooftree}
     \AxiomC{\( \Gamma \vdash M : \sigma\)}
    \AxiomC{\( \Delta \vdash B : \sigma^{k}\)}
    \LeftLabel{\redlab{TS{:}bag}}
    \BinaryInfC{\( \Gamma , \Delta \vdash \bag{M}\cdot B:\sigma^{k+1} \)}
\DisplayProof\hfill
    \AxiomC{\( \Delta \vdash N : \sigma \qquad \Gamma  , x:\sigma \vdash M : \tau \)}
    \LeftLabel{\redlab{TS\!:\!ex\dash lin\dash sub}}
    \UnaryInfC{\( \Gamma , \Delta \vdash M \linexsub{N / x} : \tau \)}
\end{prooftree}

\begin{prooftree}
    \AxiomC{\( \Delta \vdash B : \sigma^k \qquad \Gamma , x:\sigma^k \vdash M [\widetilde{x} \leftarrow x]: \tau \)}
    \LeftLabel{\redlab{TS\!:ex \dash sub}}    
    \UnaryInfC{\( \Gamma , \Delta \vdash M[\widetilde{x} \leftarrow x] \esubst{ B }{ x } : \tau \)}
    \DisplayProof\hfill
     \AxiomC{$ \Gamma \vdash \expr{M} : \sigma$}
    \AxiomC{$ \Gamma \vdash \expr{N} : \sigma$}
    \LeftLabel{\redlab{TS{:}sum}}
    \BinaryInfC{$ \Gamma \vdash \expr{M}+\expr{N}: \sigma$}
\end{prooftree}

\begin{prooftree}
    \AxiomC{\( \Delta , x_1: \sigma, \cdots, x_k: \sigma \vdash M : \tau \quad x\notin \dom{\Delta} \quad k \not = 0\)}
    \LeftLabel{ \redlab{TS{:}share}}
    \UnaryInfC{\( \Delta , x: \sigma^k \vdash M[x_1 , \cdots , x_k \leftarrow x] : \tau \)}
    \end{prooftree}

    \caption{Typing Rules for \lamrshar.}
    \label{fig:typing_sharing}
\end{figure}


\secondrev{
\begin{restatable}[Consistency Stability Under $\red$]{thms}{Consistencyreductions}
\label{thm:consistency_reductions}
    If $\expr{M}$ is a consistent $\lamrsharfail$-expression and $\expr{M} \red \expr{M}'$ then $\expr{M}'$ is consistent.
\end{restatable}
}

\begin{proof}
   \secondrev{By structural induction, and analyzing  the reduction rules applied in $\expr{M}$. See \Cref{app:typeshar} for details.} 
\end{proof}

As expected, the typing system satisfies the subject reduction property w.r.t. the reduction relation given in \figref{fig:share-reductfailure}, excluding rules for failure.
\begin{thm}[Subject Reduction in \lamrshar]
\label{t:app_lamrsrshar}
If $\Gamma \vdash \expr{M}:\tau$ and $\expr{M} \red \expr{M}'$ then $\Gamma \vdash \expr{M}' :\tau$.
\end{thm}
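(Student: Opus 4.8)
The plan is to proceed by induction on the derivation of $\expr{M} \red \expr{M}'$, i.e., by a case analysis on the last reduction rule applied; recall from Remark~\ref{r:lamrshar} that for \lamrshar\ these are only \redlab{RS:Beta}, \redlab{RS{:}Ex \dash Sub}, \redlab{RS{:}Lin\dash Fetch}, \redlab{RS{:}TCont}, and \redlab{RS{:}ECont}. The single nontrivial ingredient is a \emph{Linear Substitution Lemma} for \lamrshar, the analogue of Lemma~\ref{lem:subt_lem}: if $\Gamma, x:\sigma \vdash M:\tau$, $\headf{M}=x$, and $\Delta \vdash N:\sigma$, then $\Gamma,\Delta \vdash M\headlin{N/x}:\tau$. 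I would establish this first, by induction on $M$ following the clauses of Definition~\ref{def:headlinfail}; the cases for application, explicit linear substitution, and sharing push the substitution inside while leaving the surrounding construct and its typing untouched, using inversion on the corresponding typing rule together with the fact that $x$, being the head, is disjoint from the variable bound or shared by the traversed construct.

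Next I would dispatch the non-contextual rules by inversion on the typing derivation followed by reassembly. For \redlab{RS:Beta}, inversion of \redlab{TS:app} splits $\Gamma = \Gamma_1,\Gamma_2$ with $\Gamma_1 \vdash \lambda x.(M[\widetilde{x}\leftarrow x]):\sigma^k \rightarrow \tau$ and $\Gamma_2 \vdash B:\sigma^k$; inversion of \redlab{TS:abs \dash sh} yields $\Gamma_1, x:\sigma^k \vdash M[\widetilde{x}\leftarrow x]:\tau$, and a single application of \redlab{TS:ex \dash sub} reconstructs $\Gamma \vdash M[\widetilde{x}\leftarrow x]\esubst{B}{x}:\tau$. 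For \redlab{RS{:}Ex \dash Sub}, inversion of \redlab{TS:ex \dash sub} and then \redlab{TS:share} (here $k \geq 1$) gives $\Gamma_1, x_1:\sigma,\ldots,x_k:\sigma \vdash M:\tau$ together with a decomposition $\Gamma_2 = \Delta_1,\ldots,\Delta_k$ and $\Delta_j \vdash M_j:\sigma$ for the $k$ components of $B$. Since every component carries the \emph{same} type $\sigma$, any permutation $B_i$ pairs each shared variable $x_j:\sigma$ with a correctly typed resource, so $k$ successive applications of \redlab{TS:ex \dash lin \dash sub} type each summand $M\linexsub{B_i(1)/x_1}\cdots\linexsub{B_i(k)/x_k}$ with $\tau$ under $\Gamma$, and \redlab{TS:sum} collects all permutations into the resulting expression. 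For \redlab{RS{:}Lin\dash Fetch}, inversion of \redlab{TS:ex \dash lin \dash sub} gives $\Delta \vdash N:\sigma$ and $\Gamma', x:\sigma \vdash M:\tau$; since $\headf{M}=x$, the Linear Substitution Lemma delivers $\Gamma',\Delta \vdash M\headlin{N/x}:\tau$ directly.

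Finally, the contextual rules \redlab{RS{:}TCont} and \redlab{RS{:}ECont} follow from the induction hypothesis plus a routine replacement observation: substituting a subterm (resp.\ subexpression) by one of equal type inside a context of Definition~\ref{def:ctxt_lamsharfail} preserves the overall typing, again by inversion on the typing rule governing the context former and reassembly, noting that all summands produced in the premise share one type so \redlab{TS:sum} applies. The main obstacle is the Linear Substitution Lemma: one must check that the head-guided traversal in Definition~\ref{def:headlinfail} never crosses a binder capturing $x$, which is exactly what the side conditions $x \neq y$ in its clauses guarantee, keeping the typing contexts coherent throughout the induction.
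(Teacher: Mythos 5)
Your proposal is correct and follows essentially the same route as the paper: the paper's own proof is stated as ``standard by induction on the rule applied,'' and its detailed proofs of the analogous results (Theorem~\ref{t:app_lamrsr} via Lemma~\ref{lem:subt_lem}, and Theorem~\ref{t:app_lamrsharfailsr} via Lemma~\ref{l:lamrsharfailsubs} in the appendix) have exactly the structure you describe --- a linear substitution lemma for the \redlab{RS{:}Lin\dash Fetch} case, inversion-and-reassembly for \redlab{RS:Beta} and \redlab{RS{:}Ex\dash Sub}, and induction for the contextual rules. Your Linear Substitution Lemma for \lamrshar is precisely the typed ($\vdash$) restriction of the paper's Lemma~\ref{l:lamrsharfailsubs}, so the argument goes through as you outline it.
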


\begin{proof}
    Standard by induction on the rule applied in $\expr{M}$.
\end{proof}

\revo{}{
\begin{restatable}[Linear Anti-substitution Lemma for \lamrshar]{lema}{explemfailnofailshar}
\label{lem:antisubt_lem_shar}
Let $M$ and $N$ be $\lamrshar$-terms such that $\headf{M} = x$. The following hold:
\begin{itemize}
    \item If $\Gamma, x:\sigma^{k-1} \vdash M \headlin{ N / x }: \tau$, with $k > 1$, then  there exist $  \Gamma_1, \Gamma_2$  such that  $\Gamma_1 , x:\sigma^k \vdash M: \tau$, and $\Gamma_2 \vdash N : \sigma$, where $\Gamma = \Gamma_1 \contexcat \Gamma_2$.
    \item If $\Gamma \vdash M \headlin{ N / x }: \tau$, with $x \not \in \dom{\Gamma}$, then there exist $ \Gamma_1, \Gamma_2$  such that  $\Gamma_1 , x:\sigma \vdash M: \tau$ , and $\Gamma_2 \vdash N : \sigma$, where $\Gamma = \Gamma_1 \contexcat \Gamma_2$.
\end{itemize}
\end{restatable}
}

\begin{proof}
By structural induction on the reduction rule from Fig.~\ref{fig:typing_sharing}.
See \appref{app:typeshar} for details.
\end{proof}

\revo{}{
\begin{restatable}[Subject Expansion for \lamrshar]{thms}{subexponeshar}
\label{t:app_lamrexpshar}
If $\Gamma \vdash \expr{M}':\tau$ and $\expr{M} \red \expr{M}'$ then $\Gamma \vdash \expr{M} :\tau$.
\end{restatable}
}

\begin{proof}
Standard, by induction on the reduction rule applied. See \appref{app:typeshar} for details.
\end{proof}

\subsubsection{Well-formed Expressions (in \texorpdfstring{$\lamrsharfail$}{})}\hfill

On top of the intersection type system for \lamrshar, we  define well-formed expressions: {$\lamrsharfail$-terms whose computation may lead to failure}. 

\begin{defi}[Well-formedness in $\lamrsharfail$]\label{def:wf_sharlam}
An expression $ \expr{M}$ is well formed if there exist  $\Gamma$ and  $\tau$ such that $\Gamma \wfdash  \expr{M} : \tau $ is entailed via the rules in \figref{fig:wfsh_rules}.
\end{defi}

\begin{figure}[!t]
    \centering

    \begin{prooftree}
            \AxiomC{\( \Gamma \vdash \expr{M} : \tau \)}
            \LeftLabel{\redlab{FS\!:\!wf \dash expr}}
            \UnaryInfC{\( \Gamma \wfdash  \expr{M} : \tau \)}
            \DisplayProof\hfill
            \AxiomC{\( \Gamma \vdash B : \pi \)}
            \LeftLabel{\redlab{FS\!:\!wf \dash bag}}
            \UnaryInfC{\( \Gamma \wfdash  B : \pi \)}
            \DisplayProof\hfill
             \AxiomC{\( \Gamma  \wfdash M : \tau\)}
        \LeftLabel{ \redlab{FS\!:\!weak}}
        \UnaryInfC{\( \Gamma , x: \omega \wfdash M[\leftarrow x]: \tau \)}
    \end{prooftree}

\begin{prooftree}
        \AxiomC{\( \Gamma , x: \sigma^k \wfdash M[\widetilde{x} \leftarrow x] : \tau \quad x\notin \dom{\Gamma} \)}
        \LeftLabel{\redlab{FS{:}abs\dash sh}}
        \UnaryInfC{\( \Gamma \wfdash \lambda x . (M[\widetilde{x} \leftarrow x]) : \sigma^k  \rightarrow \tau \)}
         \DisplayProof
    \hfill
    \AxiomC{\( \secondrev{ \dom{\core{\Gamma}} = \widetilde{x} }\)}
        \LeftLabel{\redlab{FS{:}fail}}
        \UnaryInfC{\( \secondrev{ \Gamma \wfdash  \fail^{\widetilde{x}} : \tau } \)}
   \end{prooftree}

   \begin{prooftree}
        \AxiomC{\( \Gamma \wfdash M : \sigma^{j} \rightarrow \tau \quad \Delta \wfdash B : \sigma^{k} \)}
            \LeftLabel{\redlab{FS{:}app}}
        \UnaryInfC{\( \Gamma, \Delta \wfdash M\ B : \tau\)}
        \DisplayProof
\hfill
           \AxiomC{\( \Gamma \wfdash M : \sigma\quad \Delta \wfdash B : \sigma^{k} \)}
        \LeftLabel{\redlab{FS{:}bag}}
        \UnaryInfC{\( \Gamma, \Delta \wfdash \bag{M}\cdot B:\sigma^{k+1} \)}
    \end{prooftree}

\begin{prooftree}
\AxiomC{\( \Gamma  , x:\sigma \wfdash M : \tau \quad \Delta \wfdash N : \sigma \)}
\LeftLabel{\redlab{FS{:}ex \dash lin \dash sub}}
\UnaryInfC{\( \Gamma, \Delta \wfdash M \linexsub{N / x} : \tau \)}
 \DisplayProof
\hfill
\AxiomC{$ \Gamma \wfdash \expr{M} : \sigma \quad  \Gamma \wfdash \expr{N} : \sigma$}
\LeftLabel{\redlab{FS{:}sum}}
\UnaryInfC{$ \Gamma \wfdash \expr{M}+\expr{N}: \sigma$}
 \end{prooftree}

    \begin{prooftree}
            \AxiomC{\( \Gamma , x:\sigma^{k} \wfdash M[\widetilde{x} \leftarrow x] : \tau \)}
             \AxiomC{\( \Delta \wfdash B : \sigma^{j} \)}
        \LeftLabel{\redlab{FS{:}ex \dash sub}}    
        \BinaryInfC{\( \Gamma, \Delta \wfdash M [\widetilde{x} \leftarrow x]\esubst{ B }{ x } : \tau \)}
    \end{prooftree}

    \begin{prooftree}
        \AxiomC{\( \Gamma , x_1: \sigma, \cdots, x_k: \sigma \wfdash M : \tau \quad x\notin \dom{\Gamma} \quad k \not = 0\)}
        \LeftLabel{ \redlab{FS{:}share}}
        \UnaryInfC{\( \Gamma , x: \sigma^{k} \wfdash M[x_1 , \cdots , x_k \leftarrow x] : \tau \)}  
    \end{prooftree}

    \caption{Well-formedness Rules for \lamrsharfail.}\label{fig:wfsh_rules}
\end{figure}

Rules~$\redlab{FS{:}wf\dash expr}$ and $\redlab{FS{:}wf\dash bag}$ guarantee that every well-typed expression and bag, respectively, is well-formed. {Since our language is expressive enough to account for failing computations, we include rules for checking the structure of these ill-behaved terms---terms that can be well-formed, but not typable. For instance,} 

\begin{itemize} 
\item {\bf Rules~$\redlab{FS{:}ex \dash sub}$ and  $\redlab{FS{:}app}$} differ from similar typing rules in \figref{fig:typing_sharing}: the size of the bags (as declared in their types) is no longer required to match. 
\item {\bf Rule~$\redlab{FS{:}fail}$} has no analogue in the type system: we allow the failure term $\fail^{\widetilde{x}}$ to be well-formed with any type, provided that the core context contains types for the variables in $\widetilde{x}$.
\end{itemize}
The other rules are similar to their corresponding ones in~\figref{fig:app_wf_rules} and \figref{fig:typing_sharing}.

The following example illustrates a $\lamrsharfail$ expression that is well-formed but not well-typed.
\begin{exa}[Cont. Example~\ref{ex:shar-wf}]\label{ex:shar-wf}
The $\lamrsharfail$ expression consisting of an application of $\hat{I}$ to a bag containing a failure term 
 \(\lambda x . x_1 [x_1 \leftarrow x]) \bag{ \fail^{\emptyset}[ \leftarrow y] \esubst{ \oneb }{y} }\) is well-formed with type $\sigma$. The derivation, with omitted rule labels, is the following:
 
 \begin{prooftree}
 \AxiomC{\(x_1:\sigma \vdash x_1:\sigma\)}
 \UnaryInfC{\(x_1:\sigma \wfdash x_1:\sigma\)}
 \UnaryInfC{\(x:\sigma^1 \wfdash x_1\shar{x_1}{x}:\sigma\)}
  \UnaryInfC{\( \wfdash \lambda x. x_1\shar{x_1}{x}:\sigma\to \sigma\)}
    \AxiomC{}
    \UnaryInfC{\(\wfdash \fail^{\emptyset}:\sigma\)}
    \UnaryInfC{\(y:\omega\wfdash
     \fail^{\emptyset}\shar{}{y}:\sigma\)} 
    \AxiomC{}
    \UnaryInfC{\(\vdash \oneb:\omega\)}
    \UnaryInfC{\(\wfdash \oneb:\omega\)}
    \BinaryInfC{\(\wfdash
     \fail^{\emptyset}\shar{}{y}\esubst{1}{y}:\sigma\)}
     \AxiomC{}
    \UnaryInfC{\(\vdash \oneb:\omega\)}
     \UnaryInfC{\(\wfdash \oneb:\omega\)}
     \BinaryInfC{\(\wfdash
     \bag{\fail^{\emptyset}\shar{}{y}\esubst{1}{y}}:\sigma^1\)}
     \BinaryInfC{\( \wfdash \lambda x. x_1\shar{x_1}{x}\bag{\fail^{\emptyset}\shar{}{y}\esubst{1}{y}}:\sigma\)}
     \end{prooftree}
     
Besides, we have 
$\lambda x . x_1 [x_1 \leftarrow x]) \bag{ \fail^{\emptyset}[ \leftarrow y] \esubst{ \oneb }{y} }\red^* \fail^{\emptyset}[ \leftarrow y] \esubst{ \oneb }{y}$.
\end{exa}

Well-formed $\lamrsharfail$ expressions satisfy the subject reduction property; as usual, the  proof relies on a linear substitution lemma for $\lamrsharfail$.

\begin{restatable}[Substitution Lemma for \lamrsharfail]{lema}{lamrsharfailsubs}
\label{l:lamrsharfailsubs}
If $\Gamma , x:\sigma \wfdash M: \tau$, $\headf{M} = x$, and $\Delta \wfdash N : \sigma$ 
then 
$\Gamma , \Delta \wfdash M \headlin{ N / x }:\tau$.
\end{restatable}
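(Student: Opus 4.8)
The plan is to prove the statement by induction on the derivation of $\Gamma, x:\sigma \wfdash M:\tau$, with the case analysis driven by the hypothesis $\headf{M}=x$, exactly as in the proofs of the Linear Substitution Lemma for \lamr (Lemma~\ref{lem:subt_lem}) and the Substitution Lemma for \lamrfail (Lemma~\ref{lem:subt_lem_fail}). The condition $\headf{M}=x$ is the engine of the argument: consulting \defref{d:headshar}, it forces $M$ to have one of exactly five shapes, namely $M = x$, $M = M'\,B$, $M = M'\linexsub{L/y}$, $M = (M'[\widetilde{y}\leftarrow y])\esubst{B}{y}$, or $M = M'[\widetilde{y}\leftarrow y]$, where in every compound shape $\headf{M'}=x$ and, because $x$ is free in $M$ with a strict type, $x\neq y$ and $x\notin\widetilde{y}$. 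These are precisely the clauses along which $M\headlin{N/x}$ is defined in \defref{def:headlinfail}, so the induction hypothesis will always apply to the immediate subterm $M'$ carrying the head.

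For the base case $M=x$, inversion of the judgment (after absorbing any trailing weakenings) yields $\tau=\sigma$ together with the fact that every other variable in $\Gamma$ is assigned $\omega$; since $x\headlin{N/x}=N$ and $\Delta\wfdash N:\sigma$, I recover $\Gamma,\Delta\wfdash N:\tau$ by reintroducing those $\omega$-assignments with Rule~\redlab{FS{:}weak}. The genuinely compositional cases $M=M'\,B$ (Rule~\redlab{FS{:}app}) and $M=M'\linexsub{L/y}$ (Rule~\redlab{FS{:}ex \dash lin \dash sub}) are routine: inversion splits the context as $\Gamma',\Delta'$, the assignment $x:\sigma$ lands in the part typing $M'$ (because $x$ is its head and hence occurs in it), the induction hypothesis rewrites that subderivation, and the original rule reassembles the conclusion for $(M'\headlin{N/x})\,B$, respectively $(M'\headlin{N/x})\linexsub{L/y}$, with the unchanged remaining premise merged in.

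The delicate cases are the two sharing constructs, $M=M'[\widetilde{y}\leftarrow y]$ (via Rule~\redlab{FS{:}share}, or Rule~\redlab{FS{:}weak} when $\widetilde{y}$ is empty) and $M=(M'[\widetilde{y}\leftarrow y])\esubst{B}{y}$ (via Rule~\redlab{FS{:}ex \dash sub}). Here the substitution must be pushed underneath the sharing binder, and soundness rests on the side conditions extracted from $\headf{M}=x$: since $x\neq y$ and $x\notin\widetilde{y}$, the replaced head occurrence is disjoint from the shared variables, so the premise assigning $\sigma^{k}$ to $y$ is left untouched and only the premise typing $M'$ is rewritten by the induction hypothesis; for the explicit-substitution shape, \defref{d:headshar} additionally guarantees $|\widetilde{y}|=\size{B}$, which is exactly what makes the relevant clause of $\headlin{\cdot}$ fire and keeps the bag-typing premise intact. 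I must also check that the free-variable bookkeeping of the sharing construct is preserved, i.e.\ that every shared variable still occurs exactly once after substitution, which holds because $N$ is inserted at the single head position.

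I expect the main obstacle to be the non-syntax-directed weakening Rule~\redlab{FS{:}weak}, which may wrap a subderivation at any point and therefore obstructs clean inversion; I would handle this either by proving a small generation lemma that absorbs trailing weakenings into the context, or by strengthening the induction statement to allow arbitrary $\omega$-padding of $\Gamma$. A secondary point of care is that $N$ is only \emph{well-formed}---and thus possibly fail-prone---rather than well-typed, so the entire argument must stay inside the $\wfdash$ system and may not appeal to the well-typed substitution behaviour of \lamrshar.
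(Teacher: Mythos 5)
Your proposal is correct and follows essentially the same route as the paper's proof: induction on $M$ guided by $\headf{M}=x$, with the same case split (variable; application; explicit linear substitution; sharing and empty sharing; explicit substitution under a sharing), applying the induction hypothesis to the head subterm and reassembling with the same well-formedness rule. One correction, though: the ``main obstacle'' you anticipate does not exist, because in \lamrsharfail the rule \redlab{FS{:}weak} is \emph{syntax-directed} --- it types the term $M[\leftarrow x]$, not an arbitrary $M$ --- so it can never wrap a subderivation; inversion is clean (in the base case $M=x$ only \redlab{FS{:}wf\dash expr} over \redlab{TS{:}var} applies, forcing $\Gamma=\emptyset$ and $\tau=\sigma$), and no generation lemma or strengthened induction statement is needed. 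Indeed your proposed repair would not even be available: re-padding the context with \redlab{FS{:}weak} changes the subject term to $N[\leftarrow z]$, which is not the term $x\headlin{N/x}=N$ required by the conclusion.
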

\begin{proof}
	By structural induction on $M$. See \appref{app:typeshar} for details.
\end{proof}

\begin{restatable}[Subject Reduction in \lamrsharfail]{thms}{applamrsharfailsr}
\label{t:app_lamrsharfailsr}
If $\Gamma \wfdash \expr{M}:\tau$ and $\expr{M} \red \expr{M}'$ then $\Gamma \wfdash \expr{M}' :\tau$.
\end{restatable}

\begin{proof}
By structural induction on the reduction rule from Fig.~\ref{fig:share-reductfailure}.
See \appref{app:typeshar} for details.
\end{proof}

We close this part by stating the failure of subject expansion for well-formed expressions.
 
\begin{restatable}[Failure of Subject Expansion in \lamrsharfail]{thms}{applamrfailexpshar}
\label{t:app_lamrfailsrshar}
If $\Gamma \wfdash \expr{M}':\tau$ and $\expr{M} \red \expr{M}'$ then it is not necessarily the case that $\Gamma \wfdash \expr{M} :\tau$.
\end{restatable}

\begin{proof}
We adapt the counter-example from the proof of \thmref{t:app_lamrfailse}.
 Consider the term $ \fail^{\emptyset}$, which is well-formed but not well-typed, and let $ \Omega^l$  be the term $( \lambda x. x_1 \bag{x_2}[x_1, x_2 \leftarrow x] ) \bag{ \lambda x. x_1 \bag{x_2}[x_1, x_2 \leftarrow x]} $.
 \revdaniele{Notice that  
 $ \fail^{x_1}[x_1 \leftarrow x] \esubst{\bag{\Omega^l}}{x}\red\fail^\emptyset$
 and $\dash \wfdash \fail^\emptyset:\tau$, 
  but
 $\fail^{x_1}[x_1 \leftarrow x] \esubst{\bag{\Omega^l}}{x}$ is not well-formed (nor well-typed).}
\end{proof}

\secondrev{
\begin{restatable}[Consistency enforced by typing]{thms}{consistencytype}
\label{thm:consistency_type}
    Let $\expr{M}$ be a $\lamrsharfail$-expression. If $\Gamma \wfdash \expr{M}$ then ${\expr{M}}$ is consistent.
\end{restatable}
\begin{proof}
    By induction on the type derivation. See \Cref{app:typeshar} for details.
\end{proof}
}

\subsubsection*{Taking Stock}
Up to here, we have presented our source language \lamrfail---a new resource lambda calculus with failure---and its fail-free sub-calculus \lamr. Based on them we defined well-typed and well-formed expressions. Similarly, we defined the intermediate calculus \lamrsharfail and its sub-calculus \lamrshar. We now move on to define 
a translation of $\lamrfail$ into $\lamrsharfail$. 

\subsection{From \texorpdfstring{\lamrfail}{λ^↯_⊕} into \texorpdfstring{\lamrsharfail}{λ̂^↯_⊕}}
\label{ss:auxtrans}
\secondrev{Borrowing inspiration from translations given in~\cite{DBLP:conf/lics/GundersenHP13} for the atomic $\lambda$-calculus, we now define a translation  $\recencodopenf{\cdot}$ from well-formed expressions in $\lamrfail$ into $\lamrsharfail$.  It relies on an auxiliary translation $\recencodf{\cdot}$ on 
$\lamrfail$-terms, which depends on 
 the notion of (simultaneous) linear substitution (\defref{def:lin_subst}) which, intuitively,  forces all bound variables in \lamrfail to become shared variables in \lamrsharfail.
}
The correctness of $\recencodopenf{\cdot}$ will be addressed in \secref{ss:firststep}.



\begin{defi}[Linear substitution]\label{def:lin_subst}
    \secondrev{Suppose given a \lamrfail-term  $M$, a variable $x$, and a sequence of variables $\widetilde{w} = y,\widetilde{z}$.
    When \(\#(x , M) = |\widetilde{w} |\)  and  \(\{y\} \cap \widetilde{z}=\emptyset \), 
    the {\em linear substitution} $M\linsub{y,\widetilde{z}}{x}$ of  variable $x$ for variables $\widetilde{w}$ in $M$ is defined inductively as follows:}

    \secondrev{
    \[
        \begin{aligned}
            x\linsub{y}{x} & = 
                    y\\
            (\lambda z . M )\linsub{y}{x} & =  
                    \lambda z . (M\linsub{y}{x}) \quad \text{if } x \in \lfv{M}\\
            (M\ B)\linsub{y}{x} & = 
                \begin{cases}
                    ((M\linsub{y}{x})\ B) &\text{if } x \in \lfv{M} \\ 
                    (M\ (B\linsub{y}{x})) &\text{if } x \not \in \lfv{M}, x \in \lfv{B}\\
                \end{cases}
                \\
            \fail^{\widetilde{z}}\linsub{y}{x} & = 
                    \fail^{\widetilde{z}',y} \quad \text{if } x \in \widetilde{z} \text{ and } \widetilde{z} = \widetilde{z}' , x\\
            (M\esubst{ B }{ z })\linsub{y}{x} & = 
                \begin{cases}
                    (M\linsub{y}{x})\esubst{ B }{ z } &\text{if } x \in \lfv{M} \\ 
                    M\esubst{ B\linsub{y}{x} }{ z } &\text{if }  x \not \in \lfv{M} ,x \in \lfv{B}\\
                \end{cases}
                \\
            \oneb\linsub{y}{x} & = \text{undefined }\\
            \bag{M}\linsub{y}{x} & = 
                    \bag{M\linsub{y}{x}} \quad \text{if } x \in \lfv{M} \\
            (A \cdot B)\linsub{y}{x} & = 
                \begin{cases}
                    ((A\linsub{y}{x}) \cdot B) &\text{if } x \in \lfv{A} \\ 
                    (A \cdot (B\linsub{y}{x})) &\text{if } x \not \in \lfv{A} , x \in \lfv{B}
                    \\
                \end{cases}\\
             M\linsub{y,\widetilde{z}}{x} &= (M\linsub{y}{x})\linsub{\widetilde{z}}{x} 
        \end{aligned}
    \]
    }
    
    Otherwise, in all other cases, the substitution is undefined. 
    We write $M\linsub{z_1,z_2, \cdots, z_k}{x}$ to stand for 
    $( \cdots ((M\linsub{z_1}{x}) \linsub{z_2}{x})\cdots \linsub{z_k}{x})$.
    %
    %
    %
    %
\end{defi}

\secondrev{Notice that for a \lamrfail-term with multiple occurrences of the variable to be substituted for, this linear substitution fixes an ordering of instantiation. For example,  $\lambda x. y\bag{y,x}\linsub{z_1,z_2}{y}$  results in $\lambda x. z_1\bag{z_2,x}$, and a permutation of variables as in  $\lambda x. z_2\bag{z_1,x}$ is not accounted for. This is not restrictive; actually it is enough for our purposes since this substitution will only be used in \defref{def:enctolamrsharfail} and the variables being substituted will be bound by sharing, and therefore could be $\alpha$-renamed.}

\begin{defi}[From $\lamrfail$ to $\lamrsharfail$]
\label{def:enctolamrsharfail}
Let $M \in \lamrfail$.
Suppose $\Gamma \wfdash {M} : \tau$, with
$\dom{\Gamma} = \lfv{M}=\{x_1,\cdots,x_k\}$ and  $\#(x_i,M)=j_i$.  
We define $\recencodopenf{M}$ as
\[
 \recencodopenf{M} = 
\recencodf{M\linsub{\widetilde{y_{1}}}{x_1}\cdots \linsub{\widetilde{y_k}}{x_k}}[\widetilde{y_1}\leftarrow x_1]\cdots [\widetilde{y_k}\leftarrow x_k] 
\]
   where  $\widetilde{y_i}=y_{i_1},\cdots, y_{i_{j_i}}$
   and the translation $\recencodf{\cdot}: \lamrfail \to \lamrsharfail$  is defined in~\figref{fig:auxencfail}.
   The translation $\recencodopenf{\cdot}$ extends homomorphically to expressions.
   \end{defi}

\srev{As already mentioned, } 
the translation $\recencodopenf{\cdot}$ ``atomizes'' occurrences of variables, \srev{in the spirit of~\cite{DBLP:conf/lics/GundersenHP13}}: it converts $n$ occurrences of a variable $x$ in a term into $n$ distinct variables $y_1, \ldots, y_n$.
The sharing construct coordinates the occurrences of these variables by constraining each to occur exactly once within a term. 
We proceed in two stages:
\begin{enumerate} 
\item First, we use $\recencodopenf{\cdot}$ to ensure that each free variable (say, $y$) is replaced by a shared variable (say, $y_i \in \widetilde{y}$),  \revo{}{which is externally bound by the $y$ in $[\widetilde{y}\leftarrow y]$}. 
\item Second, we apply the auxiliary translation $\recencodf{\cdot}$ on the corresponding  \revo{A20}{to the sharing of bound variables.}
\end{enumerate}
We now describe the two cases of \figref{fig:auxencfail} that are noteworthy. 
\begin{itemize}
\item In $\recencodf{  \lambda x . M  }$, the occurrences of $x$ are replaced with fresh shared variables that only occur once  in $M$.
\item The definition of $\recencodf{  M \esubst{ B }{ x }  }$ considers two possibilities.
If the bag being translated is non-empty and the explicit substitution would not lead to failure (the number of occurrences of $x$ and the size of the bag coincide) then we translate the explicit substitution as a sum of explicit linear substitutions. 
Otherwise, the explicit substitution will lead to a failure, and the translation proceeds inductively. 
As we will see, doing this will enable a tight operational correspondence result with $\spi$.
\end{itemize}

\begin{figure*}[t]
\[
\begin{aligned}
\recencodf{ x  } & =  x  \hspace{3.2cm} \recencodf{  \oneb  } =  \oneb  \hspace{3cm}
\recencodf{  \fail^{\widetilde{x}} }  =  \fail^{\widetilde{x}}\\
\recencodf{  M\ B }  &=  \recencodf{M}\ \recencodf{B} \hspace{5.7cm}  \recencodf{\bag{M}\cdot B} =  \bag{\recencodf{M}}\cdot \recencodf{B} \\
  \recencodf{  \lambda x . M  }  &=   \lambda x . (\recencodf{M\langle \widetilde{y} / x  \rangle} [\widetilde{y} \leftarrow x]) 
    \quad  \text{ $\#(x,M) = n$, each $y_i\in\widetilde{y}$ is fresh} \\
   \recencodf{  M \esubst{ B }{ x }  } &= 
   \begin{cases}
   \displaystyle\hspace*{-1em}\sum_{\hspace*{1em}B_i \in \perm{\recencodf{ B }}}\hspace{-.6cm}\recencodf{ M \langle \widetilde{y}/ x  \rangle } \linexsub{B_i(1)/x_1} \cdots \linexsub{B_i(k)/x_k} & \hspace*{-0.5em} \#(x,M) = \size{B} = k \geq 1 \\
    \recencodf{M\langle y_1. \cdots , y_k / x  \rangle} [\widetilde{y} \leftarrow x] \esubst{ \recencodf{B} }{ x } & \hspace*{-0.5em} \text{otherwise, } \#(x,M) = k\geq 0
     \end{cases}
\end{aligned}
\]
    \caption{Auxiliary Translation: \lamrfail into \lamrsharfail.}
    \label{fig:auxencfail}
\end{figure*}

\begin{exa}[Cont. Example~\ref{ex:terms}]
\label{ex:termencoded}
We illustrate the translation $\recencodopenf{\cdot}$ on previously discussed examples. In all cases, we start by ensuring that the free variables are shared. This explains the occurrence of $[ y_1 \leftarrow y]$ in the translation of $M_1$ as well as $ [ y_1 \leftarrow y]$ and $ [ z_1 \leftarrow z] $ in the translation of $M_2$. Then, the auxiliary translation $\recencodf{\cdot}$ ensures that bound variables that are guarded by an abstraction are shared. This explains, e.g., the occurrence of  $[x_1 \leftarrow x] $ in the translation of $M_1$.

    \begin{itemize}
        \item The translation of a $\lamrfail$-term with one occurrence of a bound variable and one occurrence of a free variable: $M_1=(\lambda x. x ) \bag{y}$.
        
        \[
        \begin{aligned} 
        \recencodopenf{M_1} &=  \recencodopenf{ (\lambda x. x ) \bag{y} } \\
        &= \recencodf{ (\lambda x. x ) \bag{y_1}   } [ y_1 \leftarrow y]\\
        &= ((\lambda x. x_1 [x_1 \leftarrow x] ) \bag{y_1} ) [ y_1 \leftarrow y] 
        \end{aligned}
        \]
         \item The translation of a $\lamrfail$-term with one bound and two  different free variables: $M_2= (\lambda x. x ) (\bag{y,z} )$.
        \[
        \begin{aligned}
            \recencodopenf{M_2} &=\recencodopenf{ (\lambda x. x ) (\bag{y,z} ) }\\
            &= \recencodf{ (\lambda x. x ) (\bag{y_1,z_1} )  } [ y_1 \leftarrow y] [ z_1 \leftarrow z]\\
            & = ( (\lambda x. x_1 [x_1 \leftarrow x] ) (\bag{y_1,z_1} )  )[ y_1 \leftarrow y] [ z_1 \leftarrow z]
        \end{aligned}
        \]
        \item The translation of a $\lamrfail$-term with a vacuous abstraction: $M_4=(\lambda x. y ) \oneb$.
        \[
        \begin{aligned}
        \recencodopenf{M_4} &= \recencodopenf{(\lambda x. y ) \oneb } \\
        &= \recencodf{(\lambda x. y_1 ) \oneb  } [y_1 \leftarrow y]\\ &= ((\lambda x. y_1 [ \leftarrow x] ) \oneb )[y_1 \leftarrow y]
        \end{aligned}
        \]
        \item The translation of a $\lamrfail$-expression: $M_6=(\lambda x. x ) \bag{y} + (\lambda x. x ) \bag{z}$.
        \[
        \begin{aligned}
            \recencodopenf{M_6} &= \recencodopenf{(\lambda x. x ) \bag{y} + (\lambda x. x ) \bag{z} } \\
            &= \recencodopenf{ (\lambda x. x ) \bag{y}} + \recencodopenf{ (\lambda x. x ) \bag{z}}\\
            & = ((\lambda x. x_1 [x_1 \leftarrow x] ) \bag{y_1} ) [ y_1 \leftarrow y] 
              + ((\lambda x. x_1 [x_1 \leftarrow x] ) \bag{z_1} ) [ z_1 \leftarrow y]
        \end{aligned}
        \]
    \end{itemize}

\end{exa}

\begin{exa}
The translation of a $\lamrfail$-term with two occurrences of a bound variable and two occurrences of a free variable: $M= (\lambda x. x \bag{x}) (\bag{y,y} )$.
        \[
        \begin{aligned}
            \recencodopenf{M} &=\recencodopenf{ (\lambda x. x\bag{x} ) (\bag{y,y} ) } \\
            &= \recencodf{ (\lambda x. x\bag{x} ) (\bag{y_1,y_2} )  } [ y_1,y_2 \leftarrow y] \\
            & = ( (\lambda x. x_1 \bag{x_2}\shar{x_1,x_2}{x} ) (\bag{y_1,y_2} )  )\shar{ y_1,y_2}{y}
        \end{aligned}
        \]
\end{exa}

\begin{exa}
Now consider the translation of $y \esubst{B}{x}$, with $\lfv{B}=\emptyset$ and $y \neq x$:
 \[
 \begin{aligned} 
 \recencodopenf{y\esubst{B}{x}}&=\recencodf{y_0 \esubst{B}{x}}[y_0\leftarrow y]\\
 &=y_0[\leftarrow x]\esubst{\recencodf{B}}{x}[y_0\leftarrow y].
 \end{aligned}
 \]
Hence, the translation induces (empty) sharing on $x$, even if $x$ does not occur in the term $y$.
\end{exa}

\secondrev{
\begin{restatable}[$\recencodopenf{\cdot }$ Preserves  Consistency]{propo}{consistencyencode}
\label{thm:consistency_encod}
Let $\expr{M}$ be a $\lamrfail$-expression. Then
    $\recencodopenf{\expr{M}}$ is a consistent 
    $\lamrsharfail$-expression.
\end{restatable}
}
\begin{proof}
\secondrev{
  By induction on the structure of $\expr{M}$. See \appref{app:typeshar} for details.
}
\end{proof}

\section[A Session-Typed Calculus]{\texorpdfstring{\spi}{sπ}: A Session-Typed \texorpdfstring{$\pi$}{π}-Calculus with Non-Determinism}
\label{s:pi}

The $\pi$-calculus~\cite{DBLP:journals/iandc/MilnerPW92a} is a model of concurrency in which \emph{processes} interact via \emph{names} (or \emph{channels}) to exchange values, which  can be  themselves names.
Here we overview \spi, introduced by Caires and P\'{e}rez in~\cite{CairesP17}, in which \emph{session types}~\cite{DBLP:conf/concur/Honda93,DBLP:conf/esop/HondaVK98}  ensure that the two endpoints of a channel perform matching actions:
when one endpoint sends, the other receives; when an endpoint closes, the other closes too.
Following~\cite{CairesP10,DBLP:conf/icfp/Wadler12},
\spi defines a Curry-Howard correspondence between session types and a  linear logic with two dual modalities  ($\with A$ and $\oplus A$),  which define \emph{non-deterministic} sessions.
In \spi, cut elimination corresponds to process communication, proofs correspond to processes, and propositions correspond to session types. 

\subsection{Syntax and Semantics}
We use $x, y,z, w \ldots$ to denote {names}   implementing the \emph{(session) endpoints} of protocols specified by session types. 
We consider the sub-language of~\cite{CairesP17}
without labeled choices and replication, which is actually sufficient to encode $\lamrfail$.

\begin{defi}[Processes]\label{d:spi}
The syntax of \spi processes is given by the grammar in Fig.~\ref{f:spi}. 
\end{defi}

\begin{figure}[!t]
\[
\begin{array}{rcl@{\hspace{1.5cm}}l}
    P,Q &::=  & \zero & \text{(inaction)} \\
      &\sep &\overline{x}(y).P& \text{(output)}\\
      &\sep &  x(y).P & \text{(input)}\\
      &\sep &  (P \para Q) & \text{(parallel)}\\
      &\sep &  (\nu x)P & \text{(restriction)}\\
      &\sep & [x \leftrightarrow y] & \text{(forwarder)}\\
      &\sep &x.\overline{\close} & \text{(session close)}\\
      &\sep &x.\close;P & \text{(complementary close)}\\
      &\sep & x.\overline{\some};P &\text{(session confirmation)}\\
      &\sep & x.\overline{\none} & \text{(session failure)}\\
      &\sep & x.\some_{(w_1, \cdots, w_n)};P & \text{(session dependency)}\\
      &\sep & P \oplus Q & \text{(non-deterministic choice)}
\end{array}
\]
\caption{Syntax of \spi. \label{f:spi}}
\end{figure}


As standard, $\zero$ is the inactive process. Session communication is performed using the pair of primitives output and input: the output process $\overline{x}(y).P$ sends a fresh name $y$ along session $x$ and then continues as $P$; the input process $x(y).P$ receives a name $z$ along $x$ and then continues as  $P\subst{z}{y}$, which denotes the capture-avoiding substitution of $z$ for $y$ in $P$.
Process $P \para Q$ denotes the parallel execution of $P$ and $Q$. 
Process $(\nu x)P$ denotes the process $P$ in which name $x$ has been restricted, i.e., $x$ is kept private to $P$. The forwarder process $[x \leftrightarrow y]$ denotes a bi-directional link between sessions $x$ and $y$.
 Processes $x.\overline{\close}$ and $x.\close;P$ denote complementary actions for   closing session $x$.

The following constructs introduce non-determi\-nis\-tic sessions which, intuitively, \emph{may} provide a session protocol  \emph{or} fail. 
    \begin{itemize}
    \item Process $x.\overline{\some};P$ confirms that the session  on $x$ will execute and  continues as $P$.
    \item  Process $x.\overline{\none}$ signals the failure of implementing the session on $x$.
    
    \item Process $x.\some_{(w_1, \cdots,w_n)};P$ specifies a dependency on a non-deterministic session $x$. 
    This process can  either (i)~synchronize with an action $x.\overline{\some}$ and continue as $P$, or (ii)~synchronize with an action $x.\overline{\none}$, discard $P$, and propagate the failure on $x$ to $(w_1, \cdots, w_n)$, which are sessions implemented in $P$.
    When $x$ is the only session implemented in $P$, the tuple of dependencies is empty and so we write simply $x.\some;P$.
    
    \item $P \oplus Q$ denotes a \emph{non-deterministic choice} between $P$ and $Q$. 
            We shall often write $\bigoplus_{i \in I} P_i$ to stand for $P_1 \oplus \cdots \oplus P_n$.
            
\end{itemize}
\noindent
In  $(\nu y)P$ and $x(y).P$ the distinguished occurrence of name $y$ is binding, with scope $P$.
The set of free names of $P$ is denoted by $\fn{P}$. 
We identify process up to consistent renaming of bound names, writing $\equiv_\alpha$ for this congruence. 
We omit trailing occurrences of $\zero$; this way, e.g., we write $x.\close$ instead of $x.\close;\zero$.

\emph{Structural congruence}, denoted $\equiv$, expresses basic identities on the structure of processes and the non-collapsing nature of non-determinism.   

\begin{defi}[Structural Congruence]
\label{def:spistructcong}
 Structural congruence 
is defined as the least congruence relation on processes such that:
\[
\begin{array}{ll}
\begin{array}{rcl}
P \para \zero \! &\equiv& \! \zero\\
P \para Q &\equiv& Q \para P \\
(P \para Q) \para R  &\equiv& P \para (Q \para R)\\
\left[x\leftrightarrow y\right]&\equiv&\left[y \leftrightarrow x\right]\\
  ((\nu x )P) \para Q  &\equiv& (\nu x)(P \para Q), x \not \in \fn{P}\\
(\nu x)(P \para (Q \oplus R))  &\equiv& (\nu x)(P \para Q) \oplus (\nu x)(P \para R)\\
\end{array}
&
\begin{array}{rcl}
\zero \oplus \zero  &\equiv&  \zero\\
P \oplus Q &\equiv & Q \oplus P
\\
(P \oplus Q) \oplus R  &\equiv& P \oplus (Q \oplus R)\\
(\nu x)\zero  &\equiv& \zero\\
(\nu x)(\nu y)P  &\equiv& (\nu y)(\nu x)P\\
   P \equiv_\alpha Q &\Longrightarrow& P  \equiv Q
\end{array}
\end{array}
\]

\end{defi}

\subsection{Operational Semantics}\hfill

The {operational semantics} of \spi  is given by a reduction relation, denoted $P\red Q$, which is the smallest relation on processes generated by the rules in~\figref{fig:redspi}. These rules specify the computations that a process performs on its own. We now explain each rule.

\begin{figure}[!t]

{\
\[
  \begin{array}{l@{\hspace{1.5cm}}rcl}
  \redlab{Comm}&\overline{x}{(y)}.Q \para x(y).P    & \red & 
  (\nu y) (Q \para P)
  \\
  \redlab{Forw}& (\nu x)( [x \leftrightarrow y] \para P) & \red &  P \subst{y}{x}  \quad (x \neq y)
 \\
 \redlab{Close}&x.\overline{\close} \para x.\close;P   & \red &  P 
 \\
 \redlab{Some}&x.\overline{\some};P \para x.\some_{(w_1, \cdots, w_n)};Q  & \red & P \para Q
 \\
\redlab{None}&x.\overline{\none} \para x.\some_{(w_1, \cdots, w_n)};Q  & \red & 
w_1.\overline{\none} \para \cdots \para w_n.\overline{\none}\\
\redlab{Cong}& P\equiv P'\wedge P' \red Q' \wedge Q'\equiv Q &\Longrightarrow& P  \red Q\\
\redlab{Par}& Q \red Q' &\Longrightarrow& P \para Q  \red P \para Q'\\
  \redlab{Res}&P \red Q  &\Longrightarrow& (\nu y)P \red (\nu y)Q \\
  \redlab{NChoice}&Q \red Q' &\Longrightarrow& P \oplus Q  \red P \oplus Q' 
\\[2mm]
\end{array}
\]}
\caption{Reduction for \spi.}
    \label{fig:redspi}
\end{figure}

\begin{itemize}
\item {\bf Rule~\redlab{Comm}}  formalizes communication, which concerns bound names only (internal mobility): name $y$ is bound in both $\overline{x}{(y)}.Q$ and $x(y).P$.
\item  {\bf Rule~\redlab{Forw}}  implements the  forwarder process  that leads to a name  substitution.
\item {\bf Rule~\redlab{Close}} formalizes session closure and is self-explanatory. 
\item {\bf Rule~\redlab{Some}} describes the synchronization of a process, that is dependent on a non-deterministic session $x$, with the complementary  process  $x.\overline{\some}$ that confirms the availability of such non-deterministic session. 
\item {\bf Rule~\redlab{None}} applies when the non-deterministic session is not available, prefix $x.\overline{\none}$ triggers this failure to all dependent sessions $w_1, \ldots, w_n$; this may in turn trigger further failures (i.e., on sessions that depend on $w_1, \ldots, w_n$). \item {\bf Rule~\redlab{NChoice}} defines the closure of reduction w.r.t. non-collapsing non-deterministic choice.
\item {\bf Rules~\redlab{Cong}, \redlab{Par} and
\redlab{Res}} are standard and formalize that reduction is closed under structural congruence, and also contextual closure of  parallel and restriction constructs.
\end{itemize}

\begin{exa}
We illustrate confluent reductions starting in a non-deterministic process $R$ which will fail during communication due to unavailability of a session: 
\begin{align*}
R=& (\nu x) ( x.\some_{(y_1,y_2)};y_1(z).y_2(w).\zero \para  ( x.\overline{\some};P \oplus x.\overline{\none}  ) )
\\	
  \equiv  & (\nu x) ( x.\some_{(y_1,y_2)};y_1(z).y_2(w).\zero  \para\! x.\overline{\some};P )  \oplus  (\nu x) ( x.\some_{(y_1,y_2)};y_1(z).y_2(w).\zero  \!\para\! x.\overline{\none}  ) 
\end{align*}
Letting $Q = y_1(z).y_2(w).\zero$, we have:

    {\small \hspace*{-1em}
        \begin{tikzpicture}
          \matrix (m) [matrix of math nodes, row sep=2em, column sep=-10em,ampersand replacement=\&]
            { 
                \node(A){ }; \& 
                \node(B){(\nu x) ( x.\some_{(y_1,y_2)};Q \para  x.\overline{\some};P)  \oplus (y_1.\overline{\none} \para y_2.\overline{\none} ) }; \\
                \node(C){R= (\nu x) ( x.\some_{(y_1,y_2)};Q \para ( x.\overline{\some};P \oplus x.\overline{\none}  ) )}; \& 
                \node(D){ }; \& 
                \node(G){ (\nu x)(Q \para P) \oplus (y_1.\overline{\none} \para y_2.\overline{\none} ) }; \\
                \node(E){ }; \&
                \node(F){ (\nu x)(Q \para P) \oplus (\nu x) ( x.\some_{(y_1,y_2)};Q \para x.\overline{\none} ) }; \\};
                
            \path (C) edge[->](B);
            \path (C) edge[->](F);
            \path (B) edge[->](G);
            \path (F) edge[->](G);
        \end{tikzpicture}
        }
Observe that  reduction is confluent. 
The resulting term 
$(\nu x)(Q \para P) \oplus (y_1.\overline{\none} \para y_2.\overline{\none} )$
includes both alternatives for the interaction on $x$, namely the successful one (i.e., $(\nu x)(Q \para P)$) but also the failure of $x$, which is then propagated to $y_1$ and $y_2$, i.e., $y_1.\overline{\none} \para y_2.\overline{\none}$. 
\end{exa}

\subsection{Type System}\hfill
The type discipline for  $\spi$  is based on the type system given in~\cite{CairesP17}, which contains modalities $\with A$ and $\oplus A$, as dual types for non-deterministic sessions.

\begin{defi}[Session Types]
\label{d:sts}
Session types are given by 
\begin{align*}
A,B & ::=  \bot \sep   \onef \sep 
A \otimes B  \sep A \ampy B  
 \sep  \with A \sep \oplus A  
\end{align*} 
\end{defi}

\noindent
Types are assigned to names: an \emph{assignment} $x:A$ enforces the use of name $x$ according to the   protocol specified by $A$.
The multiplicative units  $\bot$ and  $\onef$ are used to type terminated (closed) endpoints.
 $A \otimes B$ types a name that first outputs a name of type $A$ before proceeding as specified by $B$.
Similarly, $A \ampy B $ types a name that first inputs a name of type $A$ before proceeding as specified by $B$.
Then we have the two modalities introduced in~\cite{CairesP17}.
We use $\with A$ as the type of a (non-deterministic) session that \emph{may  produce} a behavior of type $A$.
Dually, $\oplus A$ denotes the type of a session that \emph{may consume} a behavior of type $A$.

The two endpoints of a  session must be \emph{dual} to ensure  absence of communication errors. 
The dual of a type $A$ is denoted $\dual{A}$. 
Duality corresponds to negation $(\cdot)^\bot$ in linear logic:

\begin{defi}[Duality]
\label{def:duality}
The duality relation on types is given by:
\begin{align*}
\dual{\onef} & =  \bot 
&
\dual{\bot}   & =  \onef
&
\dual{A\otimes B}  & = \dual{A} \ampy \dual{B}
&
\dual{A \ampy B}  & = \dual{A} \otimes \dual{B} 
&
\dual{\oplus A}   & =    \with \dual{A}
&
\dual{\with A}  & =   \oplus \overline {A}  
\end{align*}
\end{defi}

Typing judgments are of the form $P \vdash \Delta$, where $P$ is a process and $\Delta$ is a context of 
\srev{the form $x_1:A_1, \ldots, x_n:A_n$, which defines the assignment of type $A_i$ to name $x_i$ (with $1 \leq i \leq n$); all names $x_i$ must be distinct. The context $\Delta$ is \emph{linear} in that it is subject to exchange (the ordering of assignments does not matter), but not to weakening and contraction. In writing `$\Delta, x:A$', we assume that $x$ does not occur in~$\Delta$; also, in writing `$\Delta_1, \Delta_2$', we assume that the names in $\Delta_1$ are distinct from those in $\Delta_2$.}
The empty context is denoted `$\cdot$'. 
We write $\with \Delta$ to denote that all assignments in $\Delta$ have a non-deterministic type, i.e., $\with \Delta = w_1:\with A_1, \ldots, w_n:\with A_n$, for some $A_1, \ldots, A_n$. 
The typing judgment $P \vdash \Delta$ corresponds to the logical sequent $ \vdash \Delta$ for classical linear logic, which can be recovered by erasing processes and name assignments.

\begin{figure}[!t]
\begin{prooftree}
\AxiomC{}
\LeftLabel{\redlab{T\cdot}}
\UnaryInfC{$\zero \vdash $}
\DisplayProof
\hfill
\AxiomC{}
\LeftLabel{\redlab{Tid}}
\UnaryInfC{$[x \leftrightarrow y] \vdash x{:}A, y{:}\dual{A}$}
\end{prooftree}

\begin{prooftree}
\AxiomC{$P \vdash  \Delta, y:{A} \quad Q \vdash \Delta', x:B $}
\LeftLabel{\redlab{T\otimes}}
\UnaryInfC{$\dual{x}(y). (P \mid Q) \vdash  \Delta, \Delta', x: A\otimes B$}
\DisplayProof
\hfill
\AxiomC{$P \vdash \Gamma, y:C, x:D$}
\LeftLabel{\redlab{T\ampy}}
\UnaryInfC{$x(y).P \vdash \Gamma, x: C\ampy D $}
\end{prooftree}

\begin{prooftree}
\AxiomC{\mbox{\ }}
\LeftLabel{\redlab{T\onef}}
\UnaryInfC{$x.\dual{\close} \vdash x: \onef$}
\DisplayProof
\hfill
\AxiomC{$P\vdash \Delta$}
\LeftLabel{\redlab{T\bot}}
\UnaryInfC{$x.\close;P \vdash x{:}\bot, \Delta$}
\end{prooftree}

\begin{prooftree}
\AxiomC{$P \vdash  \Delta \quad Q \vdash  \Delta'$}
\LeftLabel{\redlab{T\mid}}
\UnaryInfC{$P\mid Q \vdash \Delta, \Delta'$}
\DisplayProof
\hfill
\AxiomC{$P \vdash \Delta, x:\dual{A} \quad  Q \vdash  \Delta', x:A$}
\LeftLabel{\redlab{Tcut}}
\UnaryInfC{$(\nu x)(P \mid Q) \vdash\Delta, \Delta'$}
\end{prooftree}

\begin{prooftree}
\AxiomC{$P \vdash \Delta, x:A $}
\LeftLabel{\redlab{T\with_d^x}}
\UnaryInfC{$x.\dual{\some};P \vdash \Delta, x :\with A$}
\DisplayProof
\hfill
\AxiomC{$P \;{ \vdash} \widetilde{w}:\with\Delta, x:A$}
\LeftLabel{\redlab{T\oplus^x_{\widetilde{w}}}}
\UnaryInfC{$x.\some_{\widetilde{w}};P \vdash \widetilde{w}{:}\with\Delta, x{:}\oplus A$}
\end{prooftree}

\begin{prooftree}
\AxiomC{}
\LeftLabel{\redlab{T\with^x}}
\UnaryInfC{$x.\dual{\none} \vdash x :\with A$}
\DisplayProof
\hfill
\AxiomC{$P \vdash \with\Delta \qquad Q  \;{\vdash} \with\Delta$}
\LeftLabel{\redlab{T\with}}
\UnaryInfC{$P\oplus Q \vdash \with\Delta$}
\end{prooftree}

\caption{Typing rules for \spi.}
\label{fig:trulespifull}
\end{figure}

Typing rules for processes correspond to proof rules in the logic; see \figref{fig:trulespifull}. 
This way, Rule~$\redlab{T\cdot}$ allows us to introduce the inactive process $\zero$. Rule~$\redlab{Tid}$ interprets the identity axiom using the forwarder process. Rules~$\redlab{T\otimes}$ and $\redlab{T \ampy }$ type output and input of a name along a session, respectively. 
Rules~$\redlab{T \onef}$ and $\redlab{T \bot}$ type the process constructs for session termination.
  Rules~$\redlab{T cut}$ and  $\redlab{T \mid }$  define cut and mix principles in the logic, which induce typing rules for independent and dependent parallel composition, respectively.

The last four rules in \figref{fig:trulespifull} are used to type process constructs related to non-de\-ter\-mi\-nism and failure. 
 Rules~$ \redlab{T \with_d^x}$ and $ \redlab{T \with^x}$ introduce a session of type $\with A$, which may produce a behavior of type $A$: while the former rule covers the case in which $x:A$ is indeed available, the latter rule formalizes the case in which $x:A$ is not available (i.e., a failure).
 Rule~$\redlab{T \oplus^x_{\widetilde{w}}}$, accounts for the possibility of not being able 
to consume the session $x:A$  by considering sessions, the sequence of names $\widetilde{w} = w_1, \ldots, w_n$,  different from $x$ as potentially not available. 
 Rule~$\redlab{T \with }$ expresses non-deterministic choice of processes $P$ and $Q$ that implement non-deterministic behaviors only.

The type system enjoys type preservation, a result that
follows directly from the cut elimination property in the underlying logic; it ensures that the observable interface of a system is invariant under reduction.
The type system also ensures other properties for well-typed processes (e.g. global progress  and confluence); see~\cite{CairesP17} for details.

\begin{thm}[Type Preservation~\cite{CairesP17}]
If $P \vdash \Delta$ and $P \red Q$ then $Q \vdash \Delta$.
\end{thm}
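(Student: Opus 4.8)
The plan is to proceed by induction on the derivation of the reduction $P \red Q$, following the rules of \figref{fig:redspi}. The guiding principle is the Curry-Howard reading of \spi: each reduction step corresponds to a cut-reduction (a step of cut elimination) in classical linear logic, and since cut elimination preserves the conclusion of a sequent, the typing context $\Delta$ is left unchanged. Before treating the reduction rules proper, I would first record an auxiliary lemma stating that structural congruence preserves typing, i.e., if $P \vdash \Delta$ and $P \equiv Q$ then $Q \vdash \Delta$. This is proved by a routine induction on the derivation of $P \equiv Q$, checking each clause of \defref{def:spistructcong} (associativity and commutativity of $\para$ and of $\oplus$, scope extrusion, the distribution law for $\oplus$ under restriction, and the identities for $\zero$); it discharges Rule \redlab{Cong} immediately.

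For the principal (non-contextual) reductions I would invert the typing derivation of the redex and reconstruct a derivation of the reduct. Since a communicating redex is only typeable when its active name is bound by a cut, in each case the relevant derivation ends in Rule \redlab{Tcut} on that name $x$, and the work is to exhibit the corresponding cut-reduction. The cases are: Rule \redlab{Comm}, where the cut between $x{:}A\otimes B$ (introduced by \redlab{T\otimes}) and its dual $\dual{A}\ampy\dual{B}$ (introduced by \redlab{T\ampy}) is replaced by two smaller cuts, on $y{:}A$ and on the residual session; Rule \redlab{Forw}, where a cut against the axiom $[x\leftrightarrow y]$ (Rule \redlab{Tid}) collapses to the renaming $P\subst{y}{x}$, leaving the context unchanged; Rule \redlab{Close}, the standard $\onef/\bot$ reduction using \redlab{T\onef} and \redlab{T\bot}; and the two modal cases. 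In Rule \redlab{Some}, the cut between $x{:}\with A$ (Rule \redlab{T\with_d^x}) and its dual $\oplus\dual{A}$ (Rule \redlab{T\oplus^x_{\widetilde{w}}}) becomes a cut on the continuation behaviour $x{:}A$, with the surviving parallel components retyped by \redlab{T\mid}.

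The contextual rules \redlab{Par}, \redlab{Res}, and \redlab{NChoice} then follow directly from the induction hypothesis by re-applying Rules \redlab{T\mid}, \redlab{Tcut}, and \redlab{T\with} to the typed reduct supplied by the IH. The step I expect to require the most care is Rule \redlab{None}, the genuinely new ingredient absent from standard session-typed $\pi$-calculi: here $x.\overline{\none}$ (typed by \redlab{T\with^x} at $x{:}\with A$) meets $x.\some_{(w_1,\ldots,w_n)};Q$ (typed by \redlab{T\oplus^x_{\widetilde{w}}}), and the reduct is $w_1.\overline{\none}\para\cdots\para w_n.\overline{\none}$. The crux is that the premise of Rule \redlab{T\oplus^x_{\widetilde{w}}} forces the dependent names to carry non-deterministic types, $\widetilde{w}{:}\with\Delta$; this is precisely what allows each $w_i.\overline{\none}$ to be typed by \redlab{T\with^x} and their parallel composition (via \redlab{T\mid}) to inhabit exactly the residual context $\with\Delta$, which is the conclusion of the cut being eliminated. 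Verifying that this side condition on $\widetilde{w}$ is faithfully recovered by inverting the typing of the redex, so that no typing information is lost as the failure is propagated, is what makes this case the heart of the argument.
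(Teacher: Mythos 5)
The paper offers no proof of this statement at all: it is imported verbatim from~\cite{CairesP17}, with the one-line remark that it ``follows directly from the cut elimination property in the underlying logic.'' Your overall plan---induction on the derivation of $P \red Q$, reading each principal reduction as a cut-reduction, with $\redlab{Par}$, $\redlab{Res}$, $\redlab{NChoice}$ discharged by the induction hypothesis---is exactly the standard argument that remark alludes to, and your principal cases are handled correctly. In particular, you are right that in case $\redlab{None}$ the side condition $\widetilde{w}{:}\with\Delta$ of Rule~$\redlab{T\oplus^x_{\widetilde{w}}}$ is what lets each $w_i.\overline{\none}$ be retyped by Rule~$\redlab{T\with^x}$ so that the parallel composition inhabits the residual context of the eliminated cut.

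There is, however, a genuine gap in the step you dismiss as routine: the auxiliary lemma that structural congruence preserves typing. For the distribution clause $(\nu x)(P \para (Q \oplus R)) \equiv (\nu x)(P \para Q) \oplus (\nu x)(P \para R)$, typability of the left-hand side only forces the context of $Q \oplus R$ (including $x$) to consist of non-deterministic types; the context of $P$ may contain arbitrary linear types. But the right-hand side is a sum, and the \emph{only} rule that types a sum is Rule~$\redlab{T\with}$, which demands that the \emph{entire} context be non-deterministic. Concretely, take $P_0 = z.\close; x.\some_{\emptyset}; x.\overline{\close}$, typed as $P_0 \vdash z{:}\bot,\, x{:}\oplus\onef$ via Rules~$\redlab{T\onef}$, $\redlab{T\oplus^x_{\widetilde{w}}}$ and $\redlab{T\bot}$, and take $Q = R = x.\overline{\none} \vdash x{:}\with\bot$ via Rule~$\redlab{T\with^x}$. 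Then $(\nu x)(P_0 \para (Q \oplus R)) \vdash z{:}\bot$ by Rules~$\redlab{T\with}$ and $\redlab{Tcut}$ (note $\dual{\oplus\onef} = \with\bot$), whereas the distributed process $(\nu x)(P_0 \para Q) \oplus (\nu x)(P_0 \para R)$ is not typable at all: each summand's context necessarily contains $z{:}\bot$, which is not a $\with$-type, so Rule~$\redlab{T\with}$ cannot apply. Hence the lemma is false as stated for the system as presented here, and Rule~$\redlab{Cong}$ (which uses $\equiv$ in both directions) cannot be discharged by it. Ironically, this---not $\redlab{None}$---is the heart of the matter, and it is precisely the delicate point the paper sidesteps by deferring the whole proof to~\cite{CairesP17}.
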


Having defined  \spi, we now move on to define a correct translation from \lamrfail to \spi.


\section{A Correct Encoding}
\label{s:encoding}
\srev{Having introduced the typed sequential calculi \lamrfail and \lamrsharfail (as well as the translation~$\recencodopenf{~\cdot~} : \lamrfail \to \lamrsharfail $) and the typed concurrent calculus \spi, in this section we show how to correctly translate \lamrfail into \spi, using \lamrsharfail as a stepping stone.} 

\srev{Before delving into technical details, we briefly discuss the significance of our encoding. As in Milner's seminal work, our translation explains how interaction in $\pi$ provides a principled interpretation of evaluation in $\lambda$. We tackle the challenging case in which evaluation and interaction are fail-prone and non-deterministic, effectively generalizing previous translations. Because our encoding preserves types, our developments also delineate a new connection between non-idempotent intersection types and logically motivated session types---indeed, our translation of functions as processes goes hand-in-hand with a translation on types (\figref{fig:enc_sestypfail}), which reveals a new protocol-oriented interpretation of the non-idempotent intersections that govern functional resources. }

\srev{As already mentioned, we shall proceed in two steps. We rely on the translation $\recencodopenf{\cdot}$ from well-formed expressions in \lamrfail to well-formed expressions in \lamrsharfail given in \secref{ss:auxtrans}.
As \lamrfail and \lamrsharfail share the same syntax of types, in this case the translation of types  is the identity. 
    Then, the translation $\piencodf{\cdot}_u$ (for some name $u$) transforms well-formed expressions in \lamrsharfail to well-typed processes in \spi (cf. Fig.~\ref{f:sum}).
We first define \emph{encodability criteria} for translations, which include type preservation; these criteria lead to the notion of \emph{correct encoding} (\secref{ss:criteria}). Then, in \secref{ss:firststep} we establish the correctness of the translation  $\recencodopenf{\cdot}$ (Corollary~\ref{cor:one}); 
finally, in \secref{ss:secondstep}, we present the translation $\piencodf{\cdot}_u$ and establish its correctness (Corollary~\ref{cor:two}).
    }
    
\tikzstyle{mynode1} = [rectangle, rounded corners, minimum width=2cm, minimum height=1cm,text centered, draw=black, fill=brown!80!purple!40]
\tikzstyle{arrow} = [thick,->,>=stealth]

\tikzstyle{mynode2} = [rectangle, rounded corners, minimum width=2cm, minimum height=1cm,text centered, draw=black, fill=teal!20]

\tikzstyle{mynode3} = [rectangle, rounded corners, minimum width=2cm, minimum height=1cm,text centered, draw=black, fill=violet!20]
\tikzstyle{arrow} = [thick,->,>=stealth]

\begin{figure}[!t]
\begin{center}
\begin{tikzpicture}[node distance=2cm]
\node (source) [mynode1] {$\lamrfail$};
\node (interm) [mynode2, right of=source,  xshift=3cm] {$\lamrsharfail$};
\node (target) [mynode3, right of=interm,  xshift=3cm] {$\spi$};
\draw[arrow] (source) --  node[anchor=south] {$ \recencodopenf{\cdot}$} (interm);
\node (enc1) [right of= source, xshift=.4cm, yshift=-.5cm] {\secref{ss:auxtrans}};
\node (enc2) [right of= interm, xshift=.4cm, yshift=-.5cm] {\secref{ss:secondstep}};
\draw[arrow] (interm) -- node[anchor=south] {$ \piencodf{\cdot }_u $ } (target);
\end{tikzpicture}
\end{center}
\caption{Summary of our approach.\label{f:sum}}
\end{figure}

\subsection{Encodability Criteria}
\label{ss:criteria}\hfill

We follow most of the criteria defined by Gorla in~\cite{DBLP:journals/iandc/Gorla10}, a widely studied abstract framework for establishing the \emph{quality} of translations.
A \emph{language} $\mathcal{L}$ is defined as a pair containing a set of terms $\mathcal{M}$ and a reduction semantics $\red$ on terms (with reflexive, transitive closure denoted $\tred$). \srev{A behavioral equivalence on terms, denoted $\approx$, is also assumed.} 
Then, a \emph{correct encoding}, defined next, concerns a translation of terms of a source language $\mathcal{L}_1$ into terms of a target language  $\mathcal{L}_2$ that respects  certain criteria. 
The criteria in~\cite{DBLP:journals/iandc/Gorla10} concern \emph{untyped} languages; because we consider \emph{typed} languages,  we follow Kouzapas et al.~\cite{DBLP:journals/iandc/KouzapasPY19} in requiring also that translations preserve typability. 

\begin{defi}[\secondrev{Correct Encoding}]
\label{d:encoding}
Let $\mathcal{L}_1 = (\mathcal{M}, \red_1)$
and 
$\mathcal{L}_2 = (\mathcal{P}, \red_2)$
be two languages and let $\approx_1$ be a behavioral equivalence on terms in $\mathcal{M}$.
We use $M, M', \ldots$ and $P, P', \ldots$ to range over elements in  $\mathcal{M}$ and $\mathcal{P}$.
We say that a translation  $\encod{\cdot}{}: \mathcal{M} \to \mathcal{P}$ is a \emph{correct encoding} if it satisfies the following criteria:
\begin{enumerate}
\item {\it Type preservation:} For every well-typed $M$, it holds that $\encod{M}{}$ is well-typed.

    \item {\it Operational Completeness:} For every ${M}, {M}'$, and ${M}''$ such that ${M} \tred_1 {M}' \approx_1 {M}'' $, it holds that $\encod{{M}}{} \tred_2 \encod{{M}''}{}$.
    
    \item {\it Operational Soundness:} For every $M$ and $P$ such that $\encod{M}{} \tred_2 P$, there exist $M'$ and $M''$ such that $M \red^*_1 M' \approx_1 M''$ and $P \tred_2 \encod{M''}{}  $.
    
    \item {\it Success Sensitiveness:} Let $\checkmark_1$ and $\checkmark_2$ denote a success predicate in $\mathcal{M}$ and $\mathcal{P}$, respectively. 
For every ${M}$, it holds that $M \checkmark_1$ if and only if $\encod{M}{} \checkmark_2$.     
    
\end{enumerate}
\end{defi}

We briefly describe  the criteria. First, type preservation is a natural requirement and a distinguishing aspect of our work, given that we always consider source and target calculi with types.
Operational completeness formalizes how reduction steps of a source term are mimicked by its corresponding translation in the target language; $\approx_1$ conveniently abstracts away from source terms useful in the translation but which are not meaningful in comparisons. 
Operational soundness concerns the opposite direction: it formalizes the correspondence between (i)~the reductions of a target term obtained via the translation and (ii)~the reductions of the corresponding source term. The role of $\approx_1$ can be explained as in completeness.
\srev{Our use of the equivalence $\approx_1$ for $\mathcal{M}_1$, rather than of an equivalence on $\mathcal{M}_2$, is a minor difference with respect to~\cite{DBLP:journals/iandc/Gorla10}.}
Finally, success sensitiveness complements completeness and soundness, which concern reductions and therefore do not contain information about observable behaviors. 
The so-called success predicates $\checkmark_1$ and $\checkmark_2$ serve as a minimal notion of \emph{observables}; the criterion then says that observability of success of a source term implies observability of success in the corresponding target term, and vice versa.


Besides these semantic criteria, we also consider \emph{compositionality}, a syntactic criterion that requires that a composite source term is translated as the combination of the translations of its sub-terms. 

 \subsection{Correctness of \texorpdfstring{$\recencodopenf{\cdot}$}{⦇⋅⦈^∘}}
 \label{ss:firststep}\hfill

We prove that the translation  $\recencodopenf{\cdot}$ from $\lamrfail$ into $\lamrsharfail$ in \secref{ss:auxtrans} is a correct encoding, in the sense of \defref{d:encoding}.
Because our translation $\recencodopenf{\cdot}$ is defined in terms of $\recencodf{\cdot}$, it satisfies \emph{weak compositionality}, in the sense of Parrow~\cite{DBLP:journals/entcs/Parrow08}.

\subsubsection{Type Preservation}\hfill

We now prove that 
$\recencodopenf{\cdot}$
translates well-formed $\lamrfail$-expressions
into
well-formed expressions \lamrsharfail-expressions
(Theorem~\ref{thm:preservencintolamrfail2}). 
Notice that because \lamrfail and \lamrsharfail share the same type syntax, \revo{}{there is no translation on types/contexts involved (i.e., an identity translation applies)}. 


Next we define well formed preservation in the translation $\recencodf{\cdot}$ from $\lamrfail$ to $\lamrsharfail$. We rely on the prerequisite proof of type preservation in the translation $\recencodf{\cdot}$ on the sub-calculi from $\lamr$ to $\lamrshar$, and also on syntactic properties of the translation such as: (i) the property below guarantees that the translation $\recencodf{\cdot }$ commutes with the linear head  substitution; (ii) preservation of typability/well-formedness w.r.t. linear substitutions in \lamrsharfail.
\begin{prop}
\label{prop:linhed_encfail}
Let $M, N$ be $\lamrfail$-terms. We have:
 \begin{enumerate}
 \item $ \recencodf{M\headlin{N/x}}=\recencodf{M}\headlin{\recencodf{N}/x}$.
 \item $ \recencodf{M\linsub{\widetilde{x}}{x}}=\recencodf{M}\linsub{\widetilde{x}}{x}$, where $\widetilde{x}=x_1,\ldots, x_k$ is sequence of pairwise distinct fresh variables.
 \end{enumerate}
\end{prop}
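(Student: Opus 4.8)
The plan is to establish both identities by structural induction on $M$, proving part~(2) first and then using it inside the proof of part~(1). Both arguments rest on the fact that the auxiliary translation $\recencodf{\cdot}$ of Figure~\ref{fig:auxencfail} is defined compositionally, so it commutes with any operation acting away from the head position and the bound variables it manipulates.

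For part~(2) I would induct on the structure of $M$. The base cases $x$, $\oneb$, and $\fail^{\widetilde{x}}$ are immediate: renaming the free occurrences of $x$ by the fresh, pairwise distinct $\widetilde{x}$ either hits the unique occurrence or is vacuous. For applications and bags the clauses $\recencodf{M\,B}=\recencodf{M}\,\recencodf{B}$ and $\recencodf{\bag{M}\cdot B}=\bag{\recencodf{M}}\cdot\recencodf{B}$ let the induction hypothesis pass through, since $\linsub{\widetilde{x}}{x}$ distributes over the components. The only delicate cases are abstraction and explicit substitution, where $\recencodf{\cdot}$ itself introduces internal renamings $\langle\widetilde{z}/z\rangle$ and sharing constructs $[\widetilde{z}\leftarrow z]$ for \emph{bound} variables $z$; since each such $z$ is distinct from the free $x$, and the fresh names $\widetilde{x}$ are chosen disjoint from everything, the outer substitution $\linsub{\widetilde{x}}{x}$ commutes past these constructs, reducing each case to the induction hypothesis.

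For part~(1) I would induct on $M$ under the standing assumption $\headf{M}=x$ (needed for $M\headlin{N/x}$ to be defined), so that only the three shapes of Definition~\ref{def:linsubfail} arise: $M=x$, $M=M'\,B$, and $M=M'\esubst{B}{y}$ with $x\neq y$. The variable case is a direct computation. The application case follows from $\recencodf{M'\,B}=\recencodf{M'}\,\recencodf{B}$, the \lamrsharfail clause $(P\,B)\headlin{N/x}=(P\headlin{N/x})\,B$ of Definition~\ref{def:headlinfail}, and the induction hypothesis, once we record a small auxiliary induction showing that $\recencodf{\cdot}$ preserves the head variable (if $\headf{P}=x$, then every summand of $\recencodf{P}$ has head $x$, using Definition~\ref{d:headshar}). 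In the explicit-substitution case the translation branches, but $\headf{M}=x$ forces $\#(y,M')=\size{B}$, so only the summation branch ($k\geq1$) and the trailing branch ($k=0$) can occur. In the trailing branch the translation is $\recencodf{M'}[\leftarrow y]\esubst{\oneb}{y}$, and the head substitution commutes past the empty sharing by the matching clause of Definition~\ref{def:headlinfail}, closing the case with the induction hypothesis.

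The main obstacle is the summation branch of the explicit-substitution case, where $\recencodf{M'\esubst{B}{y}}$ is the sum $\sum_{B_i\in\perm{\recencodf{B}}}\recencodf{M'\langle\widetilde{y}/y\rangle}\linexsub{B_i(1)/y_1}\cdots\linexsub{B_i(k)/y_k}$. Here I first extend $\headlin{\cdot}$ to sums summand-wise and check that $\headlin{\recencodf{N}/x}$ distributes over the sum without disturbing the inner linear substitutions, which is legitimate because each $y_j\neq x$ and Definition~\ref{def:headlinfail} gives $(P\linexsub{L/y})\headlin{N/x}=(P\headlin{N/x})\linexsub{L/y}$. It then remains to commute the head substitution for $x$ past the internal renaming: since $x$ is distinct from $y$ and from the fresh $\widetilde{y}$, combining the induction hypothesis of part~(1) on $M'$ with part~(2) yields $\recencodf{M'\langle\widetilde{y}/y\rangle}\headlin{\recencodf{N}/x}=\recencodf{(M'\headlin{N/x})\langle\widetilde{y}/y\rangle}$. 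Finally one checks that the head substitution for $x$ changes neither $\#(y,M')$ nor $\size{B}$, so $\recencodf{(M'\headlin{N/x})\esubst{B}{y}}$ reproduces exactly the same summation; this bookkeeping is what makes the two sides coincide.
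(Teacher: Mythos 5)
Your proof is correct and follows exactly the route the paper takes: the paper's entire proof of this proposition is the single line ``By induction of the structure of $M$'', of which your case analysis (part~(2) first, the auxiliary head-preservation fact for $\recencodf{\cdot}$, and the distribution of $\headlin{\cdot}$ over the sum produced by the explicit-substitution clause) is a faithful and more informative elaboration. The only point to flag is your closing bookkeeping step --- that $\headlin{N/x}$ leaves $\#(y,M')$ and hence the translation branch unchanged --- which silently relies on the convention $y \notin \lfv{N}$ (otherwise the two sides of part~(1) genuinely differ, e.g.\ for $M = x\esubst{\oneb}{y}$ and $N = y$ the left side yields $y_1[y_1\leftarrow y]\esubst{\oneb}{y}$ while the right side yields $y[\leftarrow y]\esubst{\oneb}{y}$); this freshness convention is equally implicit in the paper, since $\recencodf{\cdot}$ is only ever applied after all variables have been renamed apart.
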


\begin{proof}
By induction of the structure of $M$.
\end{proof}


\begin{lem}[Preservation under Linear Substitutions in $\lamrsharfail$] Let ${M} \in \lamrsharfail$.\label{lem:preser_linsub}
    \begin{enumerate}
        \item Typing: If $\Gamma, x:\sigma^{k} \vdash {M} : \tau$
    \revo{A6}{}
    then 
     $\Gamma, x_i:\sigma^{k-1} \vdash {M}\linsub{x_i}{x} : \tau$.
        \item Well-formedness: If $\Gamma, x:\sigma^{k} \wfdash {M} : \tau$
     \revo{A6}{}
     then $\Gamma, x_i:\sigma^{k-1} \wfdash {M}\linsub{x_i}{x} : \tau$.
    \end{enumerate}
\end{lem}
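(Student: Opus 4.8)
The plan is to prove both statements by induction on the structure of $M$ (equivalently, on the last rule of the given derivation of $\Gamma, x{:}\sigma \vdash M : \tau$, resp. $\Gamma, x{:}\sigma \wfdash M : \tau$). The guiding observation is that $\linsub{x_i}{x}$ is nothing more than a capture-avoiding renaming of the free variable $x$ by the fresh variable $x_i$: since no rule of \figref{fig:typing_sharing} (resp. \figref{fig:wfsh_rules}) introduces a \emph{strict} assignment by weakening---weakening only ever adds an $\omega$-typed entry via \redlab{TS{:}weak} (resp. \redlab{FS{:}weak})---the single strict assignment $x{:}\sigma$ in the context records a genuine free occurrence of $x$ in $M$. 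I would make this invariant explicit first, as it guarantees both that $\linsub{x_i}{x}$ actually acts (so the statement is non-vacuous) and that, in the context-splitting rules, $x{:}\sigma$ lies in exactly the premise that types the subterm containing that occurrence.

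For part (1), the base case is $M = x$, typed by \redlab{TS{:}var}; here $M\linsub{x_i}{x} = x_i$ and the conclusion $x_i{:}\sigma \vdash x_i : \sigma$ is precisely the hypothesis. Every inductive case follows the same pattern: the renaming commutes with the term constructor (using freshness of $x_i$ to avoid capture under the binders of $\lambda$, of sharing $\shar{\widetilde{y}}{y}$, and of explicit (linear) substitutions, where necessarily $x \neq y$ since $x$ is free), so I push $\linsub{x_i}{x}$ into the unique subterm in which $x$ occurs, apply the induction hypothesis to the corresponding premise, and re-apply the same typing rule. For the unary rules \redlab{TS{:}abs \dash sh} and \redlab{TS{:}share} this is immediate; for the binary/splitting rules \redlab{TS{:}app}, \redlab{TS{:}bag}, \redlab{TS{:}ex \dash sub}, and \redlab{TS{:}ex \dash lin \dash sub} I invoke the invariant above to route $x{:}\sigma$ to the correct premise and leave the other premise untouched.

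Part (2) is entirely analogous, reading $\wfdash$ for $\vdash$ and replaying the well-formedness rules of \figref{fig:wfsh_rules}; the hypothesis $\Gamma \wfdash x_i : \sigma$ supplies the base case, and the rules \redlab{FS{:}wf \dash expr} and \redlab{FS{:}wf \dash bag} let us import the already-proved typing facts. The one genuinely new case is $M = \fail^{\widetilde{x}}$, derived by \redlab{FS{:}fail} with $\dom{\Gamma, x{:}\sigma} = \widetilde{x}$, hence $x \in \widetilde{x}$; then $M\linsub{x_i}{x} = \fail^{(\widetilde{x}\setminus x)\cup\{x_i\}}$, and we conclude by \redlab{FS{:}fail} since $\dom{\Gamma, x_i{:}\sigma} = (\widetilde{x}\setminus x)\cup\{x_i\}$.

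The main obstacle I anticipate is bookkeeping rather than depth: ensuring in the splitting rules that the strict assignment $x{:}\sigma$ is routed to the premise matching the occurrence of $x$, and handling the case where $x$ is itself a \emph{sharing} variable (so $x{:}\sigma = x{:}\sigma^1$ arises from \redlab{TS{:}share}/\redlab{FS{:}share} with a single shared variable), in which the renaming must act on the sharing position $\shar{\widetilde{y}}{x}$. Both points are settled by the occurrence-tracking invariant together with the freshness of $x_i$, so no essentially new argument is required beyond careful case analysis.
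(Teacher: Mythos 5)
Your proposal is correct and follows essentially the same route as the paper, whose entire proof of this lemma is the one-line remark ``Standard by induction on the rules from \figref{fig:typing_sharing} for item (1), and \figref{fig:wfsh_rules} for item (2).'' Your write-up simply fills in the details of that same rule induction (the renaming view of $\linsub{x_i}{x}$, the occurrence-tracking invariant for routing $x{:}\sigma$ in splitting rules, the \redlab{FS{:}fail} and sharing-variable cases), all of which are sound.
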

\begin{proof}
Standard by induction on the rules from \figref{fig:typing_sharing} for item (1), and ~\figref{fig:wfsh_rules} for item~(2).
\end{proof}

The following example illustrates that the translation of a well-formed $\lamrfail$-expression is a well-formed $\lamrsharfail$-expression.
\begin{exa}[Cont. Example~\ref{ex:wellformed}]
Term $M_2=(\lambda x. x ) (\bag{y,z})$ is well-formed with a well-formedness judgment  \( y:\sigma, z:\sigma \wfdash (\lambda x. x ) (\bag{y,z})  : \sigma \). In Example~\ref{ex:termencoded} we showed that $\recencodopenf{M_2}=( (\lambda x. x_1 [x_1 \leftarrow x] ) (\bag{y_1,z_1} )  )[ y_1 \leftarrow y] [ z_1 \leftarrow z]$ which is well-formed with translated well-formed judgment $ y:\sigma^1, z:\sigma^1 \wfdash \recencodopenf{M_2}:\sigma$.
The derivation is given below (using rules from \figref{fig:wfsh_rules});  we omit the labels of rule applications and concatenations with the empty bag, i.e., we write $\bag{y_1}$ instead of $\bag{y_1}\cdot \oneb$.
\begin{prooftree}
\AxiomC{}
\UnaryInfC{\(x_1:\sigma\vdash x_1:\sigma\)}
\UnaryInfC{\(x_1:\sigma\wfdash x_1:\sigma\)}
\UnaryInfC{\(x:\sigma^1 \wfdash x\shar{x_1}{x}:\sigma\)}
\UnaryInfC{\( \wfdash \lambda x.(x\shar{x_1}{x}):\sigma\to\sigma \)}
\AxiomC{}
\UnaryInfC{\(y_1:\sigma\vdash y_1:\sigma\)}
\UnaryInfC{\(y_1:\sigma\wfdash y_1:\sigma\)}
\UnaryInfC{\(y_1:\sigma^1\wfdash \bag{y_1}:\sigma^1\)}
\AxiomC{}
\UnaryInfC{\(z_1:\sigma\vdash z_1:\sigma\)}
\UnaryInfC{\(z_1:\sigma\wfdash z_1:\sigma\)}
\UnaryInfC{\(z_1:\sigma^1\wfdash \bag{z_1}:\sigma^1\)}
\BinaryInfC{\(y_1:\sigma^1,z_1 :\sigma^1\wfdash \bag{y_1}\cdot \bag{z_1}:\sigma^2\)}
\BinaryInfC{\(y_1:\sigma^1,z_1:\sigma^1\wfdash \lambda x. (x_1\shar{x_1}{x})\bag{y_1,z_1}:\sigma\)}
\UnaryInfC{\(y:\sigma^1,z_1:\sigma^1\wfdash \lambda x. (x_1\shar{x_1}{x})\bag{y_1,z_1}\shar{y_1}{y}:\sigma\)}
\UnaryInfC{\(y:\sigma^1,z:\sigma^1\wfdash \lambda x. (x_1\shar{x_1}{x})\bag{y_1,z_1}\shar{y_1}{y}\shar{z_1}{z}:\sigma\)}
\end{prooftree}
\end{exa}

    
     


 

 
 As the translation $\recencodopenf{\cdot }$ for 
 \revo{}{} 
 $\lamrfail$-terms is defined in terms of $\recencodf{\cdot}$,
 \revo{A20}{}
 it is natural that  preservation of well-formedness under $\recencodopenf{\cdot }$ (Theorem~\ref{thm:preservencintolamrfail2}) relies on the preservation of well-formedness  under $\recencodf{\cdot}$, given next.
 
 \secondrev{To state well-formedness preservation, we use $\core{\Gamma}$, the core context of $\Gamma$ (\defref{d:tcont}). In the following property, we use an additional condition on $\core{\Gamma}$, which reflects the fact that intersection types get ``flattened'' by virtue of the translation. The condition, denoted~$\strcore{\Gamma}$, is defined whenever $\core{\Gamma}$ contains only unary multisets as follows: if
$x: \sigma^1 \in \core{\Gamma}$ for all $x \in \dom{\core{\Gamma}}$, then $x: \sigma \in \strcore{\Gamma}$. }

\begin{restatable}[Well-formedness preservation for $\recencodf{\cdot}$]{lema}{preservencintolamrfail}
\label{lem:wfpreserv_closedtrans}
\revo{A20,A21,A22}{
Let $B$ and  $\expr{M}$  be a  bag and an expression in $\lamrfail$, respectively. \secondrev{Also, let $\Gamma$ be a context such that $\strcore{\Gamma}$ is defined. We have:}
\begin{enumerate}
\item
    \revo{A8}{If $\Gamma \wfdash B:\pi$  
then $\strcore{\Gamma} \wfdash \recencodf{B}:\pi$.}
    \item 
    \revo{A9}{If $\Gamma \wfdash \expr{M}:\sigma$ 
then $\strcore{\Gamma} \wfdash \recencodf{\expr{M}}:\sigma$.}
\end{enumerate}
}
\end{restatable}

\begin{proof}[Proof (Sketch)]
By mutual induction on the typing derivations $\Gamma\wfdash B:\sigma$ and $\Gamma\wfdash \expr{M}:\sigma$. The proof of item (1) follows mostly by induction hypothesis, by analyzing the rule applied (\figref{fig:app_wf_rules}).
The proof of item (2), also follows by analyzing the rule applied, but it is more delicate, especially when treating  cases involving Rules~\redlab{FS:app} or \redlab{FS:ex\dash sub}, for which the size of the bag does not match the number of occurrences of variables in the expression. See  \appref{app:encodingprop} for full details.
\end{proof}

\begin{restatable}[Well-formedness Preservation for $\recencodopenf{\cdot}$]{thms}{preservencintolamrfailtwo}
\label{thm:preservencintolamrfail2}
Let $B$ and  $\expr{M}$  be a bag and an expression in $\lamrfail$, respectively. 
\begin{enumerate}

\item
    \revo{A10}{If $\Gamma \wfdash B:\pi$ 
then $\core{\Gamma} \wfdash \recencodopenf{B}:\pi$}.

    \item 
    \revo{A11}{If $\Gamma \wfdash \expr{M}:\sigma$ 
then $\core{\Gamma}\wfdash \recencodopenf{\expr{M}}:\sigma$}.

\end{enumerate}
\end{restatable}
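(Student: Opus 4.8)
The plan is to reduce the statement to the closed-term case already covered by Lemma~\ref{lem:wfpreserv_closedtrans}, exploiting the fact that $\recencodopenf{\cdot}$ is obtained from $\recencodf{\cdot}$ by first \emph{atomizing} the free variables through linear substitutions and then re-identifying them through sharing constructs (cf.\ Definition~\ref{def:enctolamrsharfail}). Both items are treated uniformly, so I focus on item~(2); item~(1) is the simpler case in which no sum or abstraction wraps the bag and follows the same three steps. Throughout I use the invariant, readily checked from the rules of \figref{fig:app_wf_rules}, that whenever $\Gamma \wfdash \expr{M}:\sigma$ with $\dom{\Gamma}=\lfv{\expr{M}}=\{x_1,\dots,x_k\}$, each $x_i$ occurs in $\Gamma$ with multiplicity $j_i=\#(x_i,\expr{M})$ and carries a single strict type $\sigma_i$; hence $\recencodopenf{\Gamma}$ (Definition~\ref{d:enclamcontfail}) collapses these $j_i$ copies into the assignment $x_i:\sigma_i^{j_i}$.

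First I would introduce the atomized expression $\expr{M}'=\expr{M}\linsub{\widetilde{x_1}}{x_1}\cdots\linsub{\widetilde{x_k}}{x_k}$ together with the atomized context $\Gamma'$ obtained from $\Gamma$ by renaming, for each $i$, the $j_i$ copies of $x_i:\sigma_i$ into the distinct fresh assignments $x_{i_1}:\sigma_i,\dots,x_{i_{j_i}}:\sigma_i$. A routine induction on the well-formedness derivation, each step being a single-occurrence renaming (the analogue of Lemma~\ref{lem:preser_linsub} for \lamrfail), yields $\Gamma'\wfdash\expr{M}':\sigma$. By construction every originally free variable now occurs exactly once in $\expr{M}'$, so $\expr{M}'$ is in the atomized form on which $\recencodf{\cdot}$ is well-behaved; this is precisely the inner translation appearing in Definition~\ref{def:enctolamrsharfail}. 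Lemma~\ref{lem:wfpreserv_closedtrans} then applies, and since each variable of $\Gamma'$ occurs once we have $\recencodopenf{\Gamma'}=\Gamma'$, giving $\Gamma'\wfdash\recencodf{\expr{M}'}:\sigma$.

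It then remains to reattach the sharing prefixes. Because $x_i\in\lfv{\expr{M}}$ we have $j_i\geq 1$, so for each $i$ I would apply Rule~\redlab{FS{:}share} to the $j_i$ assignments $x_{i_1}:\sigma_i,\dots,x_{i_{j_i}}:\sigma_i$, collapsing them into $x_i:\sigma_i^{j_i}$ and wrapping the term with $\shar{\widetilde{x_i}}{x_i}$; iterating over $i=1,\dots,k$ turns $\Gamma'$ into precisely $\recencodopenf{\Gamma}$ and $\recencodf{\expr{M}'}$ into $\recencodopenf{\expr{M}}$, as required. Should a typed-but-absent variable ever arise, i.e.\ $j_i=0$, Rule~\redlab{FS{:}weak} would be used instead, since \redlab{FS{:}share} requires $k\neq 0$; under the hypothesis $\dom{\Gamma}=\lfv{\expr{M}}$ this case does not occur at the top level.

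The main obstacle is the bookkeeping tying the three context transformations together: verifying that the multiplicities introduced by atomization, the single strict types tracked through $\recencodf{\cdot}$, and the $k$-fold intersections rebuilt by \redlab{FS{:}share} all match, so that the final context is literally $\recencodopenf{\Gamma}$. The genuinely delicate reasoning, namely the cases where the size of a bag does not match the number of occurrences of the substituted variable, governed by Rules~\redlab{FS{:}app} and \redlab{FS{:}ex\dash sub} and the corresponding clauses of \figref{fig:auxencfail}, is confined to the proof of Lemma~\ref{lem:wfpreserv_closedtrans}, on which this theorem rests; here such cases surface only through the invocation of that lemma.
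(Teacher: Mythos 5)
Your proposal follows essentially the same route as the paper's proof: atomize the free variables via linear substitutions (the role played by Lemma~\ref{lem:preser_linsub}), apply the well-formedness lemma for the auxiliary translation $\recencodf{\cdot}$ (Lemma~\ref{lem:wfpreserv_closedtrans}, which in the paper is used together with Proposition~\ref{prop:linhed_encfail}) to the atomized judgment whose translated context is itself, and then rebuild $\recencodopenf{\Gamma}$ and $\recencodopenf{\expr{M}}$ by iterated applications of Rule~\redlab{FS{:}share}, with sums handled homomorphically by Rule~\redlab{FS{:}sum}. The argument is correct and matches the paper's, down to the observation that $\dom{\Gamma}=\lfv{\expr{M}}$ rules out the $j_i=0$ case.
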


\begin{proof}[Proof (Sketch)]
By mutual induction on the typing derivations $\Gamma\wfdash B:\sigma$ and $\Gamma\wfdash \expr{M}:\sigma$. Note that for a bag $B$, since the first part of translation consists in sharing the free variables of $B$, we will work with the translated bag $\recencodopenf{B}=\recencodf{B\linsub{\widetilde{x_1}}{x_1}\ldots \linsub{\widetilde{x_k}}{x_k}}\shar{\widetilde{x_1}}{x_1}\ldots \shar{\widetilde{x_k}}{x_k}$, and the rest of the proof depends on Proposition~\ref{prop:linhed_encfail} that moves linear substitutions outside $\recencodf{\cdot}$, then Lemma~\ref{lem:preser_linsub} that guarantees preservation of typability/well-formedness under linear substitutions,  and Lemma~\ref{lem:wfpreserv_closedtrans} for treating the closed translation. The dependency extends to the proof of item (2), for expressions. The full proof can be found in \appref{app:encodingprop}.
\end{proof}

\subsubsection{Operational Correspondence: Completeness and Soundness}\label{app:ss:compsound}\hfill

Def.~\ref{d:encoding} states operational completeness and soundness over the reflexive, transitive closure of the reduction rules. However, in the case of $\recencodopenf{\cdot}$, we prove completeness and soundness for a single reduction step (cf. Fig.~\ref{f:opcf}).  This is sufficient: by the diamond property (Proposition~\ref{prop:conf1_lamrfail}) a result stated for $\red$ can be extended easily to $\tred$, by induction on the length of the reduction sequence. (The result is immediate when the length is zero.)

\revdaniele{We rely on a {\em structural equivalence} over $\lamrfail$-expressions, denoted $\pequiv$, which is the least congruence satisfying $\alpha$-conversion and satisfying the identities in ~\figref{fig:rPrecongruencefail}. This congruence allows us to move explicit substitutions to the right of the term and to ignore explicit substitutions of a variable $x$ for empty bags in a term that does not contain $x$. }

\begin{exa}
   Consider the failure term $M= \fail^{y,y,z} \esubst{\oneb}{x}$.
   Since $\size{\oneb} = 0$, the term $M$ cannot reduce using Rule~$\redlab{R:Cons_2}$, which requires that the size of the bag is greater than 0. Instead, \revdaniele{we use the structural equivalence identity in \figref{fig:rPrecongruencefail}: 
   $\fail^{y,y,z}\esubst{\oneb}{x} \pequiv \fail^{y,y,z}$.}
 \end{exa}

\begin{figure}[!t]
    \[
    \begin{array}{rcl@{\hspace{0.5cm}}l}
      M \esubst{\oneb}{x} &\revdaniele{\pequiv}& M & \text{(if  $x \not \in \lfv{M}$)} 
    \\
    M B_1\esubst{B_2}{x} & \pequiv & (M\esubst{B_2}{x})B_1 & \text{(if $x \not \in \lfv{B_1}$)}
    \\
        M\esubst{B_1}{y}\esubst{B_2}{x} & \pequiv& (M\esubst{B_2}{x})\esubst{B_1}{y} & \revt{C12}{\text{(if  $ x\neq y, x \not \in \lfv{B_1}$ and $y \not \in \lfv{B_2}$)}}\\
         M \revdaniele{\pequiv} M' &\Rightarrow&  C[M] \revdaniele{\pequiv} C[M']\\
    \expr{M} \revdaniele{\pequiv} \expr{M}' &\Rightarrow&      D[\expr{M}]  \revdaniele{\pequiv} D[\expr{M}'] 
        \end{array}
    \]
\caption{Congruence in \lamrfail}
    \label{fig:rPrecongruencefail}
\end{figure}


\begin{figure}[!t]
\begin{tikzpicture}[scale=.9pt]
\node (opcom) at (4.3,7.0){Operational Completeness};
\draw[rounded corners, color=brown!80!purple!90!black, fill=brown!80!purple!40] (0,5.2) rectangle (15.5,6.6);
\draw[rounded corners, color=teal, fill=teal!20] (0,1.4) rectangle (15.5,2.8);
\node (lamrfail) at (1.2,6) {$\lamrfail$:};
\node (expr1) [right of=lamrfail, xshift=.3cm] {$\mathbb{N}$};
\node (expr2) [right of=expr1, xshift=2cm] {$\mathbb{M} \pequiv \mathbb{M}'$};
\node (expr3) [right of=expr2, xshift=-0.5cm, yshift = -0.1cm] {};
\draw[arrow] (expr1) -- (expr2);
 \node at (5.0,5.6) {\redlab{R}};
\node (lamrsharfail) at (1.2,2) {$\lamrsharfail$:};
\node (transl1) [right of=lamrsharfail, xshift=.3cm] {$\recencodopenf{\mathbb{N}}$};
\node (transl2) [right of=transl1, xshift=2.5cm] {$\recencodopenf{\mathbb{M'}}$};
\draw[arrow, dotted] (transl1) -- node[anchor= south] {$*$ }(transl2);
\node (enc1) at (2,4) {$\recencodopenf{\cdot }$};
\node (refcomp) [right of=enc1, xshift=1.4cm]{Thm~\ref{l:app_completenessone}};
\node  at (7.0,4) {$\recencodopenf{\cdot }$};
\draw[arrow, dotted] (expr1) -- (transl1);
\draw[arrow, dotted] (expr3) -- (transl2);
\node (opcom) at (11,7.0){Operational Soundness};
\node (expr1shar) [right of=expr2, xshift=1.8cm] {$\mathbb{N}$}; 
\node (expr2shar) [right of=expr1shar, xshift=2.8cm] {$\mathbb{N'}\pequiv \mathbb{N}''$};
\node (expr3shar) [right of=expr2shar, xshift=-0.5cm, yshift = -0.1cm] {};
\draw[arrow, dotted] (expr1shar) -- (expr2shar);
\node at (12.5,5.6) {\redlab{R}};
\node (transl1shar) [right of=transl2, xshift=1.3cm] {$\recencodopenf{\mathbb{N}}$};
\node (transl2shar) [right of=transl1shar, xshift=1.25cm] {$\mathbb{L}$};
\node (transl3shar) [right of=transl2shar, xshift=1.05cm]{$\recencodopenf{\mathbb{N'}}$};
\draw[arrow,dotted] (expr3shar) -- (transl3shar);
\draw[arrow,dotted] (expr1shar) -- (transl1shar);
\node (enc2) at (8.5,4) {$\recencodopenf{\cdot }$};
\node (refsound) [right of=enc2, xshift=1.5cm]{Thm~\ref{l:soundnessone}};
\node  at (14.5,4) {$\recencodopenf{\cdot }$};
\draw[arrow] (transl1shar) -- (transl2shar);
\draw[arrow,dotted] (transl2shar) -- node[anchor= south] {*}(transl3shar);
\end{tikzpicture}
\caption{Operational correspondence for $\recencodopenf{\cdot }$ \label{f:opcf}}	
\end{figure}

\begin{restatable}[Operational Completeness]{thms}{appcompletenessone}
\label{l:app_completenessone}
Let $\expr{M}, \expr{N}$ be well-formed $\lamrfail$ expressions. 
Suppose $\expr{N}\red_{\redlab{R}} \expr{M}$.
\begin{enumerate}
\item If $\redlab{R} =  \redlab{R:Beta}$  then $ \recencodopenf{\expr{N}}  \red^{\leq 2}\recencodopenf{\expr{M}}$;

\item If $\redlab{R} = \redlab{R:Fetch}$   then   $ \recencodopenf{\expr{N}}  \red^+ \recencodopenf{\expr{M}'}$, for some $ \expr{M}'$ such that  $\expr{M} \pequiv \expr{M}'$. 
\item If $\redlab{R} \neq \redlab{R:Beta}$    and $\redlab{R} \neq \redlab{R:Fetch}$ then   $ \recencodopenf{\expr{N}}  \red \recencodopenf{\expr{M}}$.
\end{enumerate}
\end{restatable}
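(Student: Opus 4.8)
The plan is to proceed by induction on the derivation of $\expr{N} \red_{\redlab{R}} \expr{M}$, with a case analysis on the last non-contextual rule $\redlab{R}$ applied. Since $\recencodopenf{\cdot}$ extends homomorphically to expressions and is defined through the auxiliary translation $\recencodf{\cdot}$ on the closed form of terms, the two contextual rules $\redlab{R:TCont}$ and $\redlab{R:ECont}$ follow directly from the induction hypothesis together with the contextual rules $\redlab{RS{:}TCont}$ and $\redlab{RS{:}ECont}$ of $\lamrsharfail$; this reduces the essential work to the term-level rules. Throughout, I would first peel off the outer sharing constructs $[\widetilde{x_i}\leftarrow x_i]$ that $\recencodopenf{\cdot}$ introduces on the free variables of $\expr{N}$: these stay fixed along the simulating reductions because $\lfv{\cdot}$ is invariant under each $\lamrfail$ reduction (in particular, failure terms are designed to retain all free variables), so the actual redex always lies inside the closed part $\recencodf{\cdot}$ and is manipulated under a context.

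For the rule cases I would argue as follows. For $\redlab{R:Beta}$, where $(\lambda x.M)B \red M\esubst{B}{x}$, unfolding the definition of $\recencodf{\lambda x.M}$ and applying $\redlab{RS{:}Beta}$ yields $\recencodf{M\langle\widetilde{x}/x\rangle}[\widetilde{x}\leftarrow x]\esubst{\recencodf{B}}{x}$. When $\#(x,M)=\size{B}=k\ge 1$, this is not yet the target $\recencodopenf{M\esubst{B}{x}}$ --- which by \figref{fig:auxencfail} is a sum of explicit linear substitutions --- so a second step with $\redlab{RS{:}Ex \dash Sub}$ is needed; when instead $k=0$ or the sizes mismatch, the encoding keeps the explicit substitution and the single $\redlab{RS{:}Beta}$ step already produces $\recencodopenf{\expr{M}}$. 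This accounts precisely for the bound $\red^{\le 2}$. The cases $\redlab{R:Fail}$, $\redlab{R:Cons_1}$ and $\redlab{R:Cons_2}$ fall under item~(3): each matches a single application of $\redlab{RS{:}Fail}$, $\redlab{RS{:}Cons_1}$, or $\redlab{RS{:}Cons_2}$ (the latter combined with the precongruence when the bag is empty, as in Example~\ref{ex:precong_fail}), and I would check that the accumulated free variables on both sides agree, using that the encoding sends $\fail^{\widetilde{x}}$ to $\fail^{\widetilde{x}}$ and commutes with the relevant substitutions by Proposition~\ref{prop:linhed_encfail}.

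The crux is the $\redlab{R:Fetch}$ case, which is where $\red^+$ and the correspondence only up to $\pcong$ become necessary. Here $\expr{N} = M\esubst{B}{x}$ with $\#(x,M)=\size{B}=k\ge 1$, so $\recencodopenf{\expr{N}}$ is a sum, indexed by $\perm{\recencodf{B}}$, of chains of $k$ explicit linear substitutions $\recencodf{M\langle\widetilde{x}/x\rangle}\linexsub{B_i(1)/x_1}\cdots\linexsub{B_i(k)/x_k}$, while the source reduces to $\sum_{j} M\headlin{N_j/x}\esubst{(B\setminus N_j)}{x}$. I would fire $\redlab{RS{:}Lin \dash Fetch}$ repeatedly (hence $\red^+$) to turn each chain of linear substitutions into iterated linear head substitutions, appealing to Proposition~\ref{prop:linhed_encfail}(1) to commute $\recencodf{\cdot}$ with $\headlin{\cdot}$ at each step. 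The resulting $\lamrsharfail$ expression and the encoding $\recencodopenf{\expr{M}}$ of the $\lamrfail$ Fetch-reduct agree only after reorganizing residual linear and explicit substitutions, which is exactly what the precongruence of \figref{fig:rsPrecongruencefailure} provides; so I would exhibit the witness $\expr{M}'$ with $\expr{M}\pcong\expr{M}'$.

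The main obstacle is the bookkeeping of the bijection between the permutations $\perm{B}$ used by $\redlab{R:Fetch}$ and the permutations $\perm{\recencodf{B}}$ used by $\recencodf{\cdot}$ and $\redlab{RS{:}Ex \dash Sub}$, together with verifying that the per-summand reductions line up under $\pcong$. This is a careful but essentially routine induction on $k$ once the commutation property of Proposition~\ref{prop:linhed_encfail} and the precongruence rules are in place; the remaining rule cases reduce to unfolding the definition of $\recencodopenf{\cdot}$ and matching it against a single $\lamrsharfail$ reduction.
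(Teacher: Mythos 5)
Your proposal is correct and follows essentially the same route as the paper's proof: induction on the applied rule with the contextual cases discharged by the induction hypothesis, $\redlab{R:Beta}$ simulated by $\redlab{RS{:}Beta}$ plus possibly one $\redlab{RS{:}Ex \dash Sub}$ step (accounting for $\red^{\leq 2}$, with the mismatch case needing only one step), $\redlab{R:Fetch}$ simulated by iterated $\redlab{RS{:}Lin\dash Fetch}$ using Proposition~\ref{prop:linhed_encfail} and a witness $\expr{M}'$ with $\expr{M} \pcong \expr{M}'$, and the failure rules matched one-for-one. Two cosmetic slips: the Fetch witness comes from the \emph{source} precongruence of \figref{fig:rPrecongruencefail} (dropping the vacuous $\esubst{\oneb}{x}$ left by a singleton bag; for bags of size $n \geq 2$ the reducts match exactly, up to fresh renaming), not the \lamrsharfail precongruence of \figref{fig:rsPrecongruencefailure}; and in the $\redlab{R:Cons_2}$ case no precongruence is ever needed, since the source premise $\#(z,\widetilde{x}) + k \neq 0$ guarantees that the side condition of $\redlab{RS{:}Cons_2}$ holds, so the empty-bag scenario of Example~\ref{ex:precong_fail} cannot arise.
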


\begin{proof}[Proof (Sketch)]
By induction on the rules from Figure~\ref{fig:reductions_lamrfail} applied to infer $\expr{N}\red_{\redlab{R}} \expr{M}$. We analyse the reduction depending on whether $\redlab{R}$ is either $\redlab{R:Beta}$, or $\redlab{R:Fetch}$,  or neither. In the case the rule applied is $\redlab{Beta}$, then $\mathbb{N}=(\lambda x. M')\bag{B}$ and $\mathbb{M}=M'\esubst{B}{x}$.  When  applying the translation $\recencodopenf{\cdot}$ to $\mathbb{N}$ and $\mathbb{M}$ we obtain:
\begin{itemize} 
\item $\recencodopenf{\mathbb{N}}=((\lambda x.\recencodf{ M^{''}\langle{\widetilde{y}/x}\rangle}[\widetilde{y}\leftarrow x])\recencodf{B'})[\widetilde{x_1}\leftarrow x_1]\ldots [\widetilde{x_k}\leftarrow x_k]$
\item $\recencodopenf{\mathbb{M}}=\recencodf{M^{''}\esubst{B'}{x}} [\widetilde{x_1}\leftarrow x_1]\ldots [\widetilde{x_k}\leftarrow x_k]$
\end{itemize}
\revd{B17}{where $B'$ and $M^{''}$ stand for the renamings of $B$ and $M'$, respectively,} after sharing the multiple occurrences of their free/bound variables (\defref{def:enctolamrsharfail}). Note that 
\[\recencodopenf{\mathbb{N}}\red{}_{\redlab{RS:Beta}}(\recencodf{ M^{''} \langle{\widetilde{y}/x} \rangle} [\widetilde{y} \leftarrow x] \esubst{\recencodf{B'}}{x}) [\widetilde{x_1}\leftarrow x_1]\ldots [\widetilde{x_k}\leftarrow x_k]:=\mathbb{L},\] and according to rules in \figref{fig:share-reductfailure}, the remaining reduction depends upon the characteristics of the bag $\recencodf{B'}$:
\begin{enumerate}[(i)]
    \item $\size{\recencodf{B'}}=\#(x,M^{''})=k\geq 1$. 
    Then, 
    $\recencodopenf{\mathbb{N}}\red{}_{\redlab{RS:Beta}}\mathbb{L}\red_{\redlab{RS:ex\dash sub}}\recencodopenf{\mathbb{M}}$.
    \item Otherwise, $\mathbb{L}$ can be further expanded, the ``otherwise case'' of the translation of explicit substitutions,  such that 
    \[\recencodopenf{\mathbb{N}}\red{}_{\redlab{RS:Beta}}(\recencodf{ M^{''} \langle{\widetilde{y}/x} \rangle} [\widetilde{y} \leftarrow x] \esubst{\recencodf{B'}}{x}) [\widetilde{x_1}\leftarrow x_1]\ldots [\widetilde{x_k}\leftarrow x_k]=\mathbb{L}=\recencodopenf{\mathbb{M}}.\]
\end{enumerate}
In the case the rule applied is $\redlab{R:Fetch}$, the proof depends on the size $n$ of the bag. The interesting case is when the bag $B$  has only one component (i.e., $n=1$):  from $\mathbb{N}\red_{\redlab{F:Fetch}} \mathbb{N}$  we have that  $\mathbb{N}=M\esubst{\bag{N_1}}{x}$ and $\mathbb{M}=M\headlin{N_1/x}\esubst{1}{x}$. We need to use the congruence $\pequiv$ to obtain $\mathbb{M}=M\headlin{N_1/x}\esubst{1}{x}\pequiv M\headlin{N_1/x}:=\mathbb{M'}$ and then conclude that $\recencodopenf{\mathbb{N}}\red\recencodopenf{\mathbb{M'}}$. 
The analysis for the other cases is also done by inspecting the structure of expressions and bags. The full proof can be found in \appref{app:compandsoundone}.
\end{proof}

We establish soundness for a single reduction step. As  we discussed for completeness, the property generalizes to multiple steps.

\begin{restatable}[Operational Soundness]{thms}{soundnessone}
\label{l:soundnessone}
Let $\expr{N}$ be a well-formed $\lamrfail$ expression. 
Suppose $ \recencodopenf{\expr{N}}  \red \expr{L}$. Then, there exists $ \expr{N}' $ such that $ \expr{N}  \red_{\redlab{R}} \expr{N}'$ and 

\begin{enumerate}
 \item If $\redlab{R} = \redlab{R:Beta}$ then $\expr{ L } \red^{\leq 1} \recencodopenf{\expr{N}'}$;

   \item If $\redlab{R} \neq \redlab{R:Beta}$ then $\expr{ L } \red^*  \recencodopenf{\expr{N}''}$, for $ \expr{N}''$ such that  $\expr{N}' \pequiv \expr{N}''$.
\end{enumerate}
\end{restatable}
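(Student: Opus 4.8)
The plan is to argue by induction on the structure of the well-formed expression $\expr{N}$, performing a case analysis on the rule of \figref{fig:share-reductfailure} and on the position at which the step $\recencodopenf{\expr{N}} \red \expr{L}$ takes place. If $\expr{N} = \expr{N}_1 + \expr{N}_2$ is a proper sum, the step is justified by \redlab{RS{:}ECont} inside one summand, and the claim follows from the induction hypothesis together with \redlab{R:ECont}; so the heart of the argument is the case where $\expr{N}$ is a single term $M$. By \defref{def:enctolamrsharfail}, $\recencodopenf{M} = \recencodf{M'}[\widetilde{x_1}\leftarrow x_1]\cdots[\widetilde{x_k}\leftarrow x_k]$, where $M'$ atomizes the free variables of $M$. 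The outermost sharing constructs are not followed by any explicit substitution, so none of \redlab{RS{:}Ex \dash Sub}, \redlab{RS{:}Fail}, \redlab{RS{:}Cons_2} can fire at the top; hence the observed step lies within $\recencodf{M'}$, and I would classify it according to the spine head of $M$, using the term contexts of \defref{def:ctxt_lamsharfail} to delegate any non-head (spine) redex to the induction hypothesis via \redlab{R:TCont}.

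The head cases that do not involve fetching map directly onto the corresponding source rule. When $M = (\lambda y. M_0)\,B$ the only head redex is a \redlab{RS{:}Beta} step, and the source performs $M \red_{\redlab{R:Beta}} M_0\esubst{B}{y}$; here $\expr{L}$ is exactly the intermediate term (called $\expr{L}$ in the proof of Theorem~\ref{l:app_completenessone}), which reduces in at most one further \redlab{RS{:}Ex \dash Sub} step to $\recencodopenf{M_0\esubst{B}{y}}$, yielding clause (1). When the spine head of $M$ is a failure term, i.e. $M = \fail^{\widetilde{y}}\,B$ or $M = \fail^{\widetilde{y}}[\widetilde{z}\leftarrow z]\esubst{B}{z}$, the steps \redlab{RS{:}Cons_1} and \redlab{RS{:}Cons_2} correspond to \redlab{R:Cons_1} and \redlab{R:Cons_2}, and $\expr{L}$ already coincides with $\recencodopenf{\expr{N}'}$ up to the routine conversion between the multiset of variables carried by $\fail$ in \lamrfail and the set carried by $\fail$ in \lamrsharfail; similarly a mismatch $\#(x,M_0)\neq\size{B}$ in $M = M_0\esubst{B}{x}$ forces a \redlab{RS{:}Fail} step matching \redlab{R:Fail}. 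In all these cases clause (2) holds with zero or one further steps and $\expr{N}' \pcong \expr{N}''$ (reflexively).

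The main obstacle is the remaining head case $M = M_0\esubst{B}{x}$ with $\#(x,M_0) = \size{B} = k \geq 1$, which corresponds to \redlab{R:Fetch}. Here the definition of $\recencodf{\cdot}$ has \emph{already} performed the \redlab{RS{:}Ex \dash Sub} expansion, so $\recencodf{M'}$ is a sum of $k!$ terms, one per permutation of $\recencodf{B}$, each carrying $k$ nested explicit linear substitutions; the observed step is a single \redlab{RS{:}Lin\dash Fetch} inside one such summand. In the source I would take $\expr{N}' = M_0\esubst{B}{x} \red_{\redlab{R:Fetch}} \sum_{i=1}^{k} M_0\headlin{N_i/x}\esubst{(B\setminus N_i)}{x}$, which discharges the whole bag at once. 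The point to establish is $\expr{L} \red^* \recencodopenf{\expr{N}''}$ for some $\expr{N}'' \pcong \expr{N}'$: translating the fetch sum re-expands each of its $k$ summands (now with matching count $k-1$) into $(k-1)!$ terms, for a total of $k!$, so the $k!$ summands of $\expr{L}$ must be driven, by one further \redlab{RS{:}Lin\dash Fetch} each, into exactly these. The key ingredient is Proposition~\ref{prop:linhed_encfail}(1), $\recencodf{M\headlin{N/x}} = \recencodf{M}\headlin{\recencodf{N}/x}$, which guarantees that the linear head substitution produced by each \redlab{RS{:}Lin\dash Fetch} step agrees with the head substitution appearing under $\recencodf{\cdot}$ in the translated fetch sum, so that the two sides match summand by summand, up to reordering of the permutation sum and $\alpha$-renaming of the fresh shared variables. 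The precongruence enters precisely when some residual bag $B\setminus N_i$ is empty: there $\recencodopenf{\cdot}$ leaves a trailing $[\leftarrow x]\esubst{\oneb}{x}$ that is identified with the bare term by the first clause of \figref{fig:rsPrecongruencefailure}, which is exactly the source of the $\pcong$ in clause (2).
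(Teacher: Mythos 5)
Your overall strategy---structural induction on the source expression, locating the single target step via the term contexts, matching each target rule against a source rule, with Proposition~\ref{prop:linhed_encfail} doing the real work in the fetch case and the trailing $\esubst{\oneb}{x}$ absorbed by a precongruence---is essentially the paper's own proof. However, your case analysis of head redexes has a genuine gap: it never accounts for Rule~$\redlab{RS{:}Cons_3}$. Concretely, take $\expr{N}=\fail^{x}\esubst{\bag{N_1}}{x}$, which is well formed and has $\#(x,\fail^{x})=\size{\bag{N_1}}=1$. Since the counts match, the translation lies in the \emph{expanded} branch, $\recencodopenf{\expr{N}}=\fail^{y_1}\linexsub{\recencodf{N_1}}{y_1}$ (up to outer sharings), so the only step available in the target is $\redlab{RS{:}Cons_3}$, not $\redlab{RS{:}Lin\dash Fetch}$ (whose side condition $\headf{M}=y_1$ fails, the head being the failure term itself); correspondingly, the only applicable source rule is $\redlab{R:Cons_2}$, not $\redlab{R:Fetch}$ (which requires $\headf{M_0}=x$). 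Your ``main obstacle'' case asserts that the matched-count situation always produces an $\redlab{RS{:}Lin\dash Fetch}$ step answered by $\redlab{R:Fetch}$, and your failure cases only cover $\fail^{\widetilde{y}}\,B$ and the non-expanded sharing form (target rules $\redlab{RS{:}Cons_1}$ and $\redlab{RS{:}Cons_2}$), so this configuration falls through the cracks. The paper treats it as a separate sub-case of the explicit-substitution analysis (the body of the expanded translation is $\fail^{\widetilde{z'}}$), matching a chain of $\redlab{RS{:}Cons_3}$ steps against a single $\redlab{R:Cons_2}$ step; the repair in your framework is the same and routine, but as written your enumeration of which rules can fire is not exhaustive.

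Two further slips are cosmetic but should be fixed. First, the $\pcong$ in clause (2) of the statement relates \emph{source} expressions, so the clause you need is $M\esubst{\oneb}{x}\pcong M$ from \figref{fig:rPrecongruencefail} (the \lamrfail precongruence applied to $\expr{N}'=M_0\headlin{N_1/x}\esubst{\oneb}{x}$), not the first clause of \figref{fig:rsPrecongruencefailure}, which lives in \lamrsharfail. Second, the term you write as $\fail^{\widetilde{y}}[\widetilde{z}\leftarrow z]\esubst{B}{z}$ is \lamrsharfail syntax; the source term is $\fail^{\widetilde{y}}\esubst{B}{z}$, and its translation has that sharing shape only in the non-matching branch of $\recencodf{\cdot}$.
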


\begin{proof}[Proof (Sketch)]
By induction on the structure of $\mathbb{N}$ and inspecting the rules from \figref{fig:share-reductfailure} that can be applied in $\recencodopenf{\mathbb{N}}$. The interesting cases happen when $\mathbb{N}$ is either an application  $\mathbb{N}=(M\ B)$ or an explicit substitution $\mathbb{N}=M\esubst{B}{x})$. The former is reducible when $\mathbb{N}$ is an instance of $\redlab{R:Beta}$ or when $M=\fail^{\widetilde{x}}$ and $\mathbb{N}$ is an instance of $\redlab{R:Cons_1}$. The latter, for $\mathbb{N}=M\esubst{B}{x})$, the proof is split in several subcases depending whether: (i) size of the bag $\size{B}=\#(x,M)\geq 1$, and three possible reductions can take place $\redlab{RS:lin\dash fetch}$, $\redlab{RS:Cons_3}$ and $\redlab{RS:Cont}$, depending if $M$ is a failing term or not; (ii)~$\size{B}\neq \#(x,M)$ or $\size{B}=0$,  and the proof follows either applying Rule~$\redlab{RS:Fail}$ or  by induction hypothesis. The full proof can be found in \appref{app:compandsoundone}.
\end{proof}

\subsubsection{Success Sensitiveness}\hfill

We now consider success sensitiveness, a property that complements (and relies on) operational completeness and soundness. For the purposes of the proof, we consider the extension of $\lamrfail$ and $\lamrsharfail$ with dedicated constructs and predicates that specify success. 

\begin{defi}\label{def:ext_succ}
We extend the syntax of terms for $\lamrfail$ and $\lamrsharfail$ with the same $\checkmark$ construct. 
In both cases, we assume $\checkmark$ is well formed. 
Also, we define $\headf{\checkmark} = \checkmark$ and $\recencodf{\checkmark} = \checkmark$
\end{defi}

An expression $\expr{M}$ has success, denoted \succp{\expr{M}}{\checkmark}, when there is a sequence of reductions from \expr{M} that leads to an expression that includes a summand that contains an occurrence of $\checkmark$ in head position.

\begin{defi}[Success in \lamrfail and \lamrsharfail]
\label{def:app_Suc3}
In $\lamrfail$ and $\lamrsharfail $, we define 
\[\succp{\expr{M}}{\checkmark}\iff \exists M_1 , \cdots , M_k. ~\expr{M} \red^*  M_1 + \cdots + M_k \text{ and } \headf{M_j} = \checkmark,\]
for some  $j \in \{1, \ldots, k\}$.
\end{defi}

\begin{defi}[Head of an expression]
We extend \defref{d:headshar} from terms to expressions as follows:
\secondrev{
 $$
    \headfsum{\expr{M}} = 
        \begin{cases}
            \headf{M_i} & \text{if } \headf{M_i} = \headf{M_j} \text{ for all } M_i,M_j \in \expr{M}\\
            \text{undefined} & \text{otherwise}
        \end{cases}
 $$
 }
\end{defi}


\begin{restatable}[Preservation of head term]{propo}{checkpres}
\label{Prop:checkpres}
The head of a term is preserved when applying the translation $\recencodopenf{\cdot}$, i.e., 
$$\forall M \in \lamrfail. ~~ \headf{M} = \checkmark \iff \headfsum{\recencodopenf{M}} = \checkmark.$$
\end{restatable}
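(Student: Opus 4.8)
The plan is to prove the statement by first stripping away the outermost sharing constructs introduced by $\recencodopenf{\cdot}$ and then proceeding by structural induction on $M$. Recall that, writing $\lfv{M} = \{x_1,\ldots,x_k\}$, the translation has the shape $\recencodopenf{M} = \recencodf{M^{\sigma}}\shar{\widetilde{x_1}}{x_1}\cdots\shar{\widetilde{x_k}}{x_k}$, where $M^{\sigma} = M\linsub{\widetilde{x_1}}{x_1}\cdots\linsub{\widetilde{x_k}}{x_k}$ atomises the free variables. My first step is to observe that these outer sharing constructs never affect a $\checkmark$-head: by \defref{d:headshar}, $\headf{N\shar{\widetilde{x}}{x}}$ equals $x$ only when $\headf{N}$ is one of the shared variables in $\widetilde{x}$, and equals $\headf{N}$ otherwise; since $\checkmark$ is a dedicated construct and never a (shared) variable, a simple induction on the number of sharing constructs gives $\headf{N\shar{\widetilde{x_1}}{x_1}\cdots\shar{\widetilde{x_k}}{x_k}} = \checkmark \iff \headf{N} = \checkmark$. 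As sharing distributes over the summands of the expression $\recencodf{M^{\sigma}}$, this reduces the goal to showing $\headfsum{\recencodf{M^{\sigma}}} = \checkmark \iff \headf{M} = \checkmark$.

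Next I would dispatch the effect of atomisation. The linear substitution $\linsub{\widetilde{x}}{x}$ only renames free-variable occurrences and leaves every $\checkmark$, abstraction and failure untouched; moreover it does not change the counts $\#(y,\cdot)$ of bound variables nor the bag sizes that govern the head of an explicit substitution. Hence $\headf{M^{\sigma}} = \checkmark \iff \headf{M} = \checkmark$, and it suffices to prove the core lemma: for every $\lamrfail$-term $M$ (extended with $\checkmark$), $\headfsum{\recencodf{M}}$ is well defined and $\headfsum{\recencodf{M}} = \checkmark \iff \headf{M} = \checkmark$. This I establish by structural induction on $M$, following the clauses of \figref{fig:auxencfail}. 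The base cases ($x$, $\oneb$, $\fail^{\widetilde{x}}$, $\checkmark$) and the abstraction case are immediate, since the corresponding \lamrsharfail terms are a variable, a failure, a $\checkmark$, or an abstraction, none of whose heads are $\checkmark$ unless the source head was. For an application $M\ B$, the translation is a sum of terms $N_j\,\recencodf{B}$ with $N_j$ ranging over the summands of $\recencodf{M}$, and $\headf{N_j\,\recencodf{B}} = \headf{N_j}$; so $\headfsum{\recencodf{M\ B}} = \headfsum{\recencodf{M}}$, and the claim follows from the induction hypothesis together with $\headf{M\ B} = \headf{M}$. Throughout I keep track of the fact that $\recencodf{\cdot}$ yields an expression and that application, linear substitution and sharing all act only on the head-bearing summand, so that $\headfsum{\cdot}$ remains well defined at each step.

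The main work, and the expected obstacle, is the explicit-substitution case $M\esubst{B}{x}$, where the two clauses of $\recencodf{\cdot}$ and the two cases of the \lamrfail head function in \defref{def:headfailure} must be matched against the \lamrsharfail head of $(N\shar{\widetilde{x}}{x})\esubst{\recencodf{B}}{x}$ in \defref{d:headshar}. I distinguish the cases exactly as the translation does. When $\#(x,M) = \size{B} = k \geq 1$, the translation is a sum of terms $\recencodf{M\linsub{\widetilde{x}}{x}}\linexsub{B_i(1)/x_1}\cdots\linexsub{B_i(k)/x_k}$; since $\headf{\cdot\linexsub{N/x}} = \headf{\cdot}$ and the permutations $B_i$ do not change the functional position, $\headfsum{\recencodf{M\esubst{B}{x}}} = \headfsum{\recencodf{M\linsub{\widetilde{x}}{x}}}$, which by the induction hypothesis equals $\checkmark$ iff $\headf{M} = \checkmark$; and $\headf{M\esubst{B}{x}} = \headf{M}$ because $\#(x,M) = \size{B}$, closing the case. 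Otherwise the translation is $\recencodf{M\linsub{\widetilde{x}}{x}}\shar{\widetilde{x}}{x}\esubst{\recencodf{B}}{x}$ with $|\widetilde{x}| = \#(x,M)$ and $\size{\recencodf{B}} = \size{B}$. Here I split on whether $\#(x,M) = \size{B}$: if the sizes disagree, both head functions return $\fail^{\emptyset}\neq\checkmark$ (on the \lamrsharfail side by the first clause of the head of an explicit substitution, on the \lamrfail side by the ``otherwise'' clause of \defref{def:headfailure}); if they agree, then necessarily $\size{B} = 0$, the empty sharing and substitution are inert for the head, and the case reduces to the induction hypothesis for $M$. The delicate point is the bookkeeping: ensuring that $\size{\recencodf{B}} = \size{B}$ and that the summations introduced by permutations and by the distribution of $\recencodf{\cdot}$ over subterms all share a common head, so that the equalities between $\headfsum{\cdot}$ values are genuine rather than comparisons of ill-defined objects.
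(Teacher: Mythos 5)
Your proof is correct and follows essentially the same route as the paper's: structural induction on $M$, first observing that the outer sharing constructs and the atomising linear substitutions can neither create nor destroy a $\checkmark$-head, and then matching the clauses of $\recencodf{\cdot}$ against the two head functions, using that application, explicit linear substitution, and sharing all preserve heads. If anything, your treatment of the explicit-substitution case is more thorough than the paper's, which only spells out the case $\#(x,N)=\size{B}$ and leaves the mismatch case (where both heads are $\fail^{\emptyset}\neq\checkmark$) implicit.
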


 \begin{proof}[Proof (Sketch)]
By induction on the structure of $M$ considering the extension of the language established in~\defref{def:ext_succ}. See \appref{app:sucessone} for details.
 \end{proof}

\begin{restatable}[Success Sensitivity]{thms}{appsuccesssensce}
\label{proof:app_successsensce}
Let  \expr{M} be a well-formed $\lamrfail$-expression.
Then,
\[\expr{M} \Downarrow_{\checkmark}\iff \recencodopenf{\expr{M}} \Downarrow_{\checkmark}.\]
\end{restatable}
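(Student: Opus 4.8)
The plan is to prove the two implications separately, using the operational correspondence results (Theorems~\ref{l:app_completenessone} and~\ref{l:soundnessone}) together with the head‑preservation property of Proposition~\ref{Prop:checkpres}. Recall from Definition~\ref{def:app_Suc3} that $\expr{M}\Downarrow_\checkmark$ holds precisely when $\expr{M}\red^* M_1+\cdots+M_k$ with $\headf{M_j}=\checkmark$ for some $j$. Two auxiliary observations drive the argument. First, \emph{persistence}: once a sum contains a summand whose head is $\checkmark$, every further reduction preserves such a summand. Indeed, reductions fire according to the head of a term, and a $\checkmark$‑headed term admits no head reduction (its bags and pending substitutions may reduce internally, but this leaves the head untouched), while the non‑collapsing sums never discard summands; the same holds in both \lamrfail and \lamrsharfail. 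Second, the precongruence $\pcong$ of Figures~\ref{fig:rPrecongruencefail} and~\ref{fig:rsPrecongruencefailure} \emph{preserves the presence of a $\checkmark$‑headed summand}, since none of its rules alters the head of a term (e.g.\ $\headf{M\esubst{\oneb}{x}}=\headf{M}$ when $x\notin\lfv{M}$, and the commuting rules leave heads fixed). I also use that the translation of a single term is a sum whose summands differ only by permutations of bag contents placed in linear substitutions, so they all share a common head; this is exactly what the notation $\headfsum{\cdot}$ records and what makes Proposition~\ref{Prop:checkpres} applicable.

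For the forward direction, assume $\expr{M}\Downarrow_\checkmark$, so $\expr{M}\red^* \expr{P}$ with $\expr{P}=\sum_i M_i$ and $\headf{M_j}=\checkmark$ for some $j$. I first lift operational completeness (Theorem~\ref{l:app_completenessone}) from a single step to $\red^*$ by induction on the length of this reduction sequence, as justified by the diamond property (Proposition~\ref{prop:conf1_lamrsharfail}) in the remark preceding the completeness theorem. This yields $\recencodopenf{\expr{M}}\red^* \recencodopenf{\expr{P}'}$ for some $\expr{P}'$ with $\expr{P}\pcong \expr{P}'$, where the precongruence arises from the \redlab{R:Fetch} clause of completeness. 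By the second observation, $\expr{P}'$ still has a $\checkmark$‑headed summand, say $M_j'$ with $\headf{M_j'}=\checkmark$. Since $\recencodopenf{\cdot}$ is homomorphic on sums, $\recencodopenf{\expr{P}'}=\sum_i \recencodopenf{M_i'}$, and Proposition~\ref{Prop:checkpres} gives $\headfsum{\recencodopenf{M_j'}}=\checkmark$; hence $\recencodopenf{\expr{P}'}$ contains a $\checkmark$‑headed summand. Therefore $\recencodopenf{\expr{M}}\red^*\recencodopenf{\expr{P}'}\Downarrow_\checkmark$, which gives $\recencodopenf{\expr{M}}\Downarrow_\checkmark$.

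For the converse, assume $\recencodopenf{\expr{M}}\Downarrow_\checkmark$, so $\recencodopenf{\expr{M}}\red^* \expr{L}$ where $\expr{L}$ has a $\checkmark$‑headed summand. Lifting operational soundness (Theorem~\ref{l:soundnessone}) to $\red^*$, again by induction on the number of steps, yields $\expr{N}'$ with $\expr{M}\red^* \expr{N}'$ and $\expr{L}\red^*\recencodopenf{\expr{N}''}$ for some $\expr{N}''$ with $\expr{N}'\pcong \expr{N}''$. By the persistence observation, the $\checkmark$‑headed summand of $\expr{L}$ survives the residual reduction $\expr{L}\red^*\recencodopenf{\expr{N}''}$, so $\recencodopenf{\expr{N}''}$ contains a $\checkmark$‑headed summand; this summand lies in some $\recencodopenf{N_i''}$, and since all summands of $\recencodopenf{N_i''}$ share one head we get $\headfsum{\recencodopenf{N_i''}}=\checkmark$. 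Proposition~\ref{Prop:checkpres}, read from right to left, forces $\headf{N_i''}=\checkmark$, and preservation under $\pcong$ transports this back to a $\checkmark$‑headed summand of $\expr{N}'$. Hence $\expr{M}\red^*\expr{N}'\Downarrow_\checkmark$, i.e.\ $\expr{M}\Downarrow_\checkmark$.

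The main obstacle I anticipate is the bookkeeping of $\pcong$ across the two inductions: both completeness and soundness deliver their conclusions only up to the precongruence, so the whole argument hinges on the two auxiliary facts guaranteeing that the observable ``$\checkmark$ in head position'' is neither created nor destroyed by $\pcong$‑steps nor by the residual reductions $\expr{L}\red^*\recencodopenf{\expr{N}''}$. Making persistence precise---that $\checkmark$‑headed summands are stable under arbitrary reduction and are never collapsed by the non‑deterministic sum---is the delicate point; once it and the head‑preservation of $\pcong$ are settled, the two directions follow from the combination of operational correspondence and Proposition~\ref{Prop:checkpres} as sketched above.
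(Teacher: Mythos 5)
Your proposal is correct and takes essentially the same route as the paper's own proof: both directions iterate the one-step operational completeness and soundness theorems (justified by the diamond property), invoke Proposition~\ref{Prop:checkpres} to transfer $\checkmark$-heads across $\recencodopenf{\cdot}$, and rest on the homomorphic treatment of sums together with the fact that neither reduction nor $\pcong$ creates or destroys a $\checkmark$-headed summand. Your explicitly stated ``persistence'' and ``$\pcong$-preservation'' observations are precisely the facts the paper asserts in passing (``neither our reduction rules, or our precongruence, or our encoding create or destroy a $\checkmark$ occurring in the head of term''), so the two arguments coincide in substance, with yours merely making the bookkeeping more explicit.
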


 \begin{proof}[Proof (Sketch)]
By induction on the structure of $\lamrfail$ and $\lamrsharfail$ expressions. The if-case  follows from operational soundness (Thm.~\ref{l:soundnessone}) by analyzing a reductions starting from $\recencodopenf{\mathbb{M}}$. Reciprocally, the only-if-case  follows by operational completeness (Thm.~\ref{l:app_completenessone}),  analyzing  reductions starting from $\mathbb{M}$. See \appref{app:sucessone} for details.
\end{proof}
We have the corollary below, which follows from Theorems~\ref{thm:preservencintolamrfail2}, 
\ref{l:app_completenessone}, 
\ref{l:soundnessone}, and 
\ref{proof:app_successsensce}:

\begin{cor}
\label{cor:one}
Our translation  $ \recencodopenf{ \cdot } $ 
is a correct encoding, in the sense of \defref{d:encoding}.
\end{cor}

\subsection{From \texorpdfstring{$\lamrsharfail$}{λ̂^↯_⊕} to \texorpdfstring{$\spi$}{sπ}}
\label{ss:secondstep}\hfill

We now define our translation of $\lamrsharfail$ into $\spi$, denoted $\piencodf{\cdot}_u$, and establish its correctness. 
As usual in translations of $\lambda$ into $\pi$, we use a name $u$ to provide the behavior of the translated expression. 
In our case, $u$ is a non-deterministic session: the translated expression can be   available or not; this is signalled by prefixes `$u.\overline{\some}$'
and 
`$u.\overline{\none}$', respectively. 
Notice that every (free) variable $x$  in a $\lamrsharfail$-term $M$ becomes a name $x$ in its corresponding process $\piencodf{M}_u$ and is assigned an appropriate session type.

\subsubsection{An Auxiliary Translation}


Before introducing $\piencodf{\cdot}_u$, we first discuss the translation $\piencod{\cdot}_u: \lamrshar \rightarrow \spi$, i.e., the translation in which the source language does not include failures. 
This auxiliary translation, shown in \figref{fig:enc},  is given for pedagogical purposes: it  allows us to gradually discuss several key design decisions in $\piencodf{\cdot}_u$.

\begin{figure}[!t]
    \begin{align*}
    \piencod{ x }_u  &=   x.\overline{\some} ;[x \leftrightarrow u ]
    \\
    \piencod{ \lambda x . M[\widetilde{x} \leftarrow x]}_u    &=  u.\overline{\some};u(x).\piencod{ M[\widetilde{x} \leftarrow x] }_u 
    \\
        \piencod{ M\ B }_u   &=  \hspace*{-1em}\bigoplus_{\hspace*{1em}B_i \in \perm{B}} \hspace*{-1em}(\nu v)( \piencod{ M}_v \mid v.\some_{u , \lfv{B}} ; \outact{v}{x} . ( x.\some_{\lfv{B_i}}; \piencod{ B_i}_x \mid [v \leftrightarrow u] ) ) 
    \\
    \piencod{M [ x_1, \ldots, x_k \leftarrow x ] }_u   &=   x.\overline{\some};x(x_1). \cdots. x(x_k).x.\close; \encod{M}{u} 
    \\
    \piencod{ M[ \leftarrow x] }_u   &=  x.\overline{\some};x.\close; \encod{M}{u} 
    \\
    \piencod{\bag{M}\cdot B}_x   &=   \outact{x}{x_1}. (x_1.\some_{\lfv{B}};\encod{M}{x_1} \sep \encod{B}{x} )  
    \\
    \piencod{\oneb }_x   &=   x. \overline{\close} 
    \\
    \piencod{ M[\widetilde{x} \leftarrow x] \esubst{ B }{ x} }_u   &=  \bigoplus_{B_i \in \perm{B}} (\nu x)( \piencod{ M[\widetilde{x} \leftarrow x]}_u \mid x.\some_{\lfv{B_i}}; \piencod{ B_i}_x )  
    \\
    \piencod{ M \linexsub{N / x}  }_u   &=  (\nu x) ( \piencod{ M }_u \mid   x.\some_{\lfv{N}};\piencod{ N }_x  )  
    \\
    \piencod{\expr{M}+\expr{N} }_u    &=  \piencod{ \expr{M} }_u \oplus \piencod{ \expr{N} }_u  
\end{align*}
    \caption{An auxiliary translation of \lamrshar into \spi, without failures}
    \label{fig:enc}
\end{figure}
We describe each case of the translation $\piencod{\cdot}_u$, focusing on the r\^{o}le of non-deterministic sessions (expressed using prefixes `$x.\overline{\some}$' and `$x.\some_{(w_1, \cdots, w_n)}$' in \spi):
\begin{itemize}
    \item $\piencod{ x }_u$: Because sessions are non-deterministically available, the translation first confirms that the behavior along $x$ is available; subsequently, the forwarder process induces a substitution $\subst{x}{u}$.

    \item $\piencod{ \lambda x . M[\widetilde{x} \leftarrow x]}_u$: As in the case of variables, the translation first confirms the behavior along $u$ before receiving a name, which will be used in the translation of $M[\widetilde{x} \leftarrow x]$, discussed next.

        \item $\piencod{ M\ B }_u$: This process  models the application of $M$ to bag $B$ as a non-deterministic choice in the order in which the elements of $B$ are substituted into $M$. Substituting each $B_i$ involves a protocol in which the translation of a term  $\lambda x . M'[\widetilde{x} \leftarrow x]$ within $M$ confirms its own availability, before and after the exchange of the name $x$, on which the translation of $B_i$ is spawned. This protocol uses the fact that $M\ B$ does not reduce to failure, i.e., there is no lack or excess of resources in $B$.

    \item $\piencod{M [ x_1, \ldots, x_k \leftarrow x ] }_u$: The translation first confirms the availability of the behavior along $x$. 
    Then, it receives along $x$ a name for each $x_i$: these received names will be used to synchronize with the translation of bags (see below). Subsequently, the protocol on $x$ safely terminates and the translation of $M$ is executed.
        
    \item $ \piencod{ M[ \leftarrow x] }_u$: When there are no variables to be shared with $x$, the translation simply confirms the behavior on $x$, close the protocol immediately after, and executes the translation of $M$.

    \item $\piencod{\bag{M}\cdot B}_x$: The translation of a non-empty bag essentially makes each element available in its corresponding order. 
    This way, for the first element $M$ a name $x_1$ is sent over $x$; the translation of ${M[x_1,\cdots , x_n \leftarrow x]}$, discussed above, must send a confirmation on $x_1$ before the translation of $M$ is executed. After these exchanges, the translation of the rest of the bag is spawned.

    \item $\piencod{\oneb }_x$: In line with the previous item, the translation of the empty bag simply closes the name $x$; this signals that there are no (further) elements in the bag and that all synchronizations are complete.

    \item $\piencod{ M[\widetilde{x} \leftarrow x] \esubst{ B }{ x} }_u$: In this case, the translation is a sum involving the parallel composition of (i)~the translation of each element $B_i$ in the bag and (ii)~the translation of $M$. Observe that a fresh name $x$ is created to enable synchronization between these two processes. Also, as in previous cases, notice how the translation of $B_i$ must first confirm its availability along $x$.
    
    \item $\piencod{ M \linexsub{N / x}  }_u$: This translation essentially executes the translations of $M$ and $N$ in parallel, with a caveat: the translation of $N$ depends on the availability of a behavior along $x$, to be produced within the translation of $M$. 
    
    \item $\piencod{\expr{M}+\expr{N} }_u$: This translation  homomorphically preserves the non-determinism between $M$ and $N$. 
    
\end{itemize}

\begin{exa}
\label{ex:failfreered}
Consider the \lamrshar-term
$M_0 = ( \lambda x. M [x_1, x_2 \leftarrow x]) \bag{N_1,N_2}$. 
Writing $\lfv{B}$ to denote the free variables in $N_1$ and $N_2$, the process 
$\piencod{M_0}_u$  is as follows:
\begin{align*}
  \piencod{M_0}_u=  & \piencod{ ( \lambda x. M [x_1, x_2 \leftarrow x] ) \bag{N_1,N_2} }_u 
    \\
     = & (\nu v)( \piencod{ \lambda x. M [x_1, x_2 \leftarrow x] }_v \mid 
      \underbrace{v.\some_{u , \lfv{B}} ; \outact{v}{x} . ( x.\some_{\lfv{B}}; \piencod{ \bag{N_1,N_2}  }_x \mid [v \leftrightarrow u] )}_{P_1} ) 
     \\
    & \oplus
    \\ 
    & (\nu v)( \piencod{ \lambda x. M [x_1, x_2 \leftarrow x] }_v \mid 
    \underbrace{v.\some_{u , \lfv{B}} ; \outact{v}{x} . ( x.\some_{\lfv{B}}; \piencod{ \bag{N_2,N_1}  }_x \mid [v \leftrightarrow u] )}_{P_2} ) 
    \\
    =~ & (\nu v)( v.\overline{\some};v(x).x.\overline{\some};x(x_1). x(x_2).x.\close; \piencod{M}{v}  \mid P_1)
    \\
    & \oplus 
    \\
    & (\nu v)( v.\overline{\some};v(x).x.\overline{\some};x(x_1). x(x_2).x.\close; \piencod{M}{v}   \mid P_2)\\
\end{align*}
The translation immediately opens up a non-deterministic choice with two alternatives, corresponding to the bag of size 2.
Because of non-collapsing non-determinism, after some reductions, this amounts to accounting for the two different orders in which $N_1$ and $N_2$ can be extracted from the bag.

\[
\begin{aligned}
 \revd{B19}{
 \piencod{M_0}_u} \red^*~ & (\nu  x)( x(x_1). x(x_2).x.\close; \revd{B19}{\piencod{M}_{u}}  \mid   \piencod{ \bag{N_1,N_2}  }_x  ) 
    \\
    & \oplus \\
    & (\nu  x)( x(x_1). x(x_2).x.\close; \piencod{M}_{u}   \mid   \piencod{ \bag{N_2,N_1}  }_x  )
    \end{aligned}
\]
We show further reductions for one of the processes, which we will denote $R$,  for $R=(\nu  x)( x(x_1). x(x_2).x.\close; \piencod{M}_{u}  \mid   \piencod{ \bag{N_1,N_2}  }_x  )$, in the resulting sum (reductions for the other process are similar):
\begin{align*}
 R= &  (\nu  x)( x(x_1). x(x_2).x.\close; \piencod{M}_{u}  \mid   \piencod{ \bag{N_1,N_2}  }_x  )
  \\
    =~ & 
   (\nu  x)( x(x_1). x(x_2).x.\close; \piencod{M}_{u}  \mid   \outact{x}{x_1}. (x_1.\some_{\lfv{N_1}};\piencod{N_1}_{x_1} \sep 
   \\
   & \qquad \qquad \qquad \qquad \outact{x}{x_2}. (x_2.\some_{\lfv{N_2}};\piencod{N_2}_{x_2} \sep x. \overline{\close} ) )     )\\
   \red^*~ & (\nu  x_1,x_2)( \piencod{M}_{u}  \mid x_1.\some_{\lfv{N_1}};\piencod{N_1}_{x_1} \sep 
    x_2.\some_{\lfv{N_2}};\piencod{N_2}_{x_2}  )
\end{align*}
\end{exa}

\subsubsection{The Translation}\label{ss:second_trans}
\hfill 

The translation $\piencod{\cdot}_{x}$ leverages non-deterministic sessions in \spi to give a concurrent interpretation of \lamrshar, the non-deterministic (but fail-free) sub-calculus of \lamrsharfail. In a nutshell, non-deterministic sessions entail the explicit confirmation of the availability of  a name's behavior, via synchronizations of a prefix `$x.\some_{(w_1, \cdots, w_n)}$'   with a corresponding prefix `$x.\overline{\some}$'. Clearly, $\piencod{\cdot}_{x}$ under-utilizes the expressivity of \spi: in processes resulting from $\piencod{\cdot}_{x}$, no prefix `$x.\some_{(w_1, \cdots, w_n)}$' will ever synchronize with a prefix `$x.\overline{\none}$'. 
Indeed, because terms in \lamrshar never reduce to failure,  $\piencod{\cdot}_{x}$ should not account for such failures.

We may now introduce $\piencodf{\cdot }_u$, our translation of the fail-prone calculus $\lamrsharfail$ into $\spi$. It builds upon the structure of $\piencod{\cdot}_{x}$ to account for failures in expressions due to the lack or excess of resources. To this end, as we will see, $\piencodf{\cdot }_u$ does exploit prefixes `$x.\overline{\none}$' to signal failures.

\paragraph{Translating Expressions}
We introduce the translation  $\piencodf{\cdot }_u$, which will be shown to be a correct encoding, according to the criteria given in \secref{ss:criteria}.

\begin{defi}[From $\lamrsharfail$ into $\spi$: Expressions]
\label{def:enc_lamrsharpifail}
Let $u$ be a name.
The translation $\piencodf{\cdot }_u: \lamrsharfail \rightarrow \spi$ is defined in \figref{fig:encfail}.
\end{defi}


We discuss the most interesting aspects of the translation in \figref{fig:encfail}, in particular how the possibility of failure (lack or excess of resources in bags) induces differences with respect to the translation in \figref{fig:enc}.

\begin{figure}[t!]

{
\begin{align*}	
   \piencodf{x}_u & = x.\overline{\some} ; [x \leftrightarrow u]  
  \\[1mm] 
   \piencodf{\lambda x.M[\widetilde{x} \leftarrow x]}_u & = u.\overline{\some}; u(x).\piencodf{M[\widetilde{x} \leftarrow x]}_u
\\[1mm]
  \piencodf{M\, B}_u & = \bigoplus_{B_i \in \perm{B}} (\nu v)(\piencodf{M}_v \mid v.\some_{u , \lfv{B}} ; \outact{v}{x} . ([v \leftrightarrow u] \mid \piencodf{B_i}_x ) )  
     \\[1mm]
      \piencodf{ M[\widetilde{x} \leftarrow x] \esubst{ B }{ x} }_u & =  \bigoplus_{B_i \in \perm{B}} (\nu x)( \piencodf{ M[\widetilde{x} \leftarrow x]}_u \mid \piencodf{ B_i}_x )  
      \\[1mm]
    \piencodf{M[x_1, x_2 \leftarrow x]}_u & = 
       x.\overline{\some}. \outact{x}{y_1}. \Big(y_1 . \some_{\emptyset} ;y_{1}.\close 
       \!\mid\! x.\overline{\some};x.\some_{u, (\lfv{M} \setminus \{x_1 ,  x_2\} )};x(x_1) . \\
       & \hspace{2.0em} . x.\overline{\some}. \outact{x}{y_2} . \big(y_2 . \some_{\emptyset} ; y_{2}.\close  \!\mid\! x.\overline{\some};x.\some_{u,( \lfv{M} \setminus \{x_2\} ) };x(x_2)
      \\
       & \hspace{2.8em} . x.\overline{\some}; \outact{x}{y_{}}. ( y_{} . \some_{u, \lfv{M} } ;y_{}.\close; \piencodf{M}_u \mid x.\overline{\none} )~ \big)  \Big) \\[1mm]
    \piencodf{M[ \leftarrow x]}_u & = x. \overline{\some}. \outact{x}{y} . ( y . \some_{u,\lfv{M}} ;y_{}.\close; \piencodf{M}_u \mid x. \overline{\none}) \\[1mm]
    \piencodf{\bag{M} \cdot B}_x & =
    \begin{array}[t]{l}
       x.\some_{\lfv{\bag{M} \cdot B} } ; x(y_i). x.\some_{y_i, \lfv{\bag{M} \cdot B}};x.\overline{\some} ; \outact{x}{x_i}
       \\[1mm]
       \qquad . (x_i.\some_{\lfv{M}} ; \piencodf{M}_{x_i} \mid \piencodf{B}_x \mid y_i. \overline{\none})
   \end{array} 
   \\[3mm]    
      \piencodf{\oneb}_x & = x.\some_{\emptyset} ; x(y). (y.\overline{\some};y. \overline{\close} \mid x.\some_{\emptyset} ; x. \overline{\none}) 
      \\[1mm]
      \piencodf{\fail^{x_1 , \cdots , x_k}}_u & = u.\overline{\none} \mid x_1.\overline{\none} \mid \cdots \mid x_k.\overline{\none} 
   \\[1mm]
           \piencodf{ M \linexsub{N / x}  }_u   & =   (\nu x) ( \piencodf{ M }_u \mid   x.\some_{\lfv{N}};\piencodf{ N }_x  )  
        \\[1mm]
     \piencodf{\expr{M}+\expr{N} }_u    & =  \piencodf{ \expr{M} }_u \oplus \piencodf{ \expr{N} }_u   
\end{align*}
}
    \caption{Translating \lamrsharfail expressions into \spi processes.}
    \label{fig:encfail}
\end{figure}

Most salient differences can be explained by looking at the translation of the application $M \, B$. Indeed, the sources of failure in $\lamrsharfail$ concern a mismatch between the number of variable occurrences in $M$ and the number of resources present in $B$. Both $M$ and $B$ can fail on their own, and our translation into \spi must capture this mutual dependency. Let us recall the translation given in  \figref{fig:enc}:
$$
    \piencod{ M\ B }_u   =  \bigoplus_{B_i \in \perm{B}} (\nu v)( \piencod{ M}_v \mid v.\some_{u , \lfv{B}} ; \outact{v}{x} . ( x.\some_{\lfv{B_i}}; \piencod{ B_i}_x \mid [v \leftrightarrow u] ) ) 
$$
The corresponding translation in \figref{fig:encfail} is seemingly simpler:
$$
  \piencodf{M\, B}_u  = \bigoplus_{B_i \in \perm{B}} (\nu v)(\piencodf{M}_v \mid v.\some_{u , \lfv{B}} ; \outact{v}{x} . ([v \leftrightarrow u] \mid \piencodf{B_i}_x ) )  
$$
Indeed, the main difference is the prefix `$x.\some_{\lfv{B_i}}$', which is present in process $\piencod{ M\ B }_u$ but is not explicit in process $\piencodf{M\, B}_u$. Intuitively, such a prefix denotes the dependency of $B$ on $M$; because terms in $\lamrshar$ do not fail, we can be certain that a corresponding confirming prefix `$x.\overline{\some}$' will be available to spawn every $\piencod{B_i}_x$. When moving to $\lamrsharfail$, however, this is not the case:  $\piencodf{M}_v$ may fail to provide the expected number of corresponding confirmations. 
For this reason, the role of prefix `$x.\some_{\lfv{B_i}}$' in $\piencod{ M\ B }_u$ is implemented within process $\piencodf{B_i}_x$. As a consequence, the translations for sharing terms ($M[\widetilde{x} \leftarrow x]$ and $ M[ \leftarrow x]$) and for bags ($\bag{M}\cdot B$ and $\oneb$) are more involved in the case of failure.

With this motivation for $\piencodf{M\, B}_u$ in mind, we discuss the remaining entries in \figref{fig:encfail}:

\begin{itemize}
\item Translations for $x$ and $\lambda x.M[\widetilde{x} \leftarrow x]$ are exactly as in \figref{fig:enc}:
$$
    \piencodf{ x }_u  =   x.\overline{\some} ;[x \leftrightarrow u ]
    \qquad \qquad 
    \piencodf{ \lambda x . M[\widetilde{x} \leftarrow x]}_u    =  u.\overline{\some};u(x).\piencodf{ M[\widetilde{x} \leftarrow x] }_u 
$$

    \item Similarly as $\piencodf{M \, B}_u$, discussed above, the translation of $M[\widetilde{x} \leftarrow x] \esubst{ B }{ x}$ is more compact than the one in \figref{fig:enc}, because confirmations for each of the elements of the bag are handled within their respective translations:
    $$      \piencodf{ M[\widetilde{x} \leftarrow x] \esubst{ B }{ x} }_u  =  \bigoplus_{B_i \in \perm{B}} (\nu x)( \piencodf{ M[\widetilde{x} \leftarrow x]}_u \mid \piencodf{ B_i}_x )  
$$
    
    
       \item As anticipated, the translation of $M[x_1, \ldots, x_k \leftarrow x]$ is more involved than before. For simplicity, let us discuss the representative case when $k = 2$ (two shared variables):
       \begin{align*}
       	          \piencodf{M[x_1, x_2 \leftarrow x]}_u & = 
            x.\overline{\some}. \outact{x}{y_1}. \Big(y_1 . \some_{\emptyset} ;y_{1}.\close 
       \!\mid\! x.\overline{\some};x.\some_{u, (\lfv{M} \setminus \{x_1 ,  x_2\} )};x(x_1) . 
      \\
      & \hspace{2.0em}  x.\overline{\some}. \outact{x}{y_2} . \big(y_2 . \some_{\emptyset} ; y_{2}.\close  \!\mid\! x.\overline{\some};x.\some_{u,( \lfv{M} \setminus \{x_2\} ) };x(x_2).
      \\
      & \hspace{2.8em}  x.\overline{\some}; \outact{x}{y_{}}. ( y_{} . \some_{u, \lfv{M} } ;y_{}.\close; \piencodf{M}_u \mid x.\overline{\none} )~ \big)  \Big)
       \end{align*}
       This process is meant to synchronize with the translation of a bag. After confirming the presence of a behavior on name $x$, an auxiliary name $y_i$ is sent to signal that there are elements to be substituted. This name implements a short protocol that allows us to check for lack of resources in the bag. These steps on $y_i$ are followed by another confirmation and also a request for confirmation of behavior along $x$; this represents that the name can fail in one of two ways, capturing the mutual dependency between $M$ and the bag mentioned above. Once these two steps on $x$ have succeeded, it is finally safe for the process to receive a name $x_i$. This process is repeated for each shared variable to ensure safe communication of the elements of the bag. The last line shows the very final step: a name $y$ is communicated to ensure that there are no further elements in the bag; in such a case, $y$ fails and the failure is propagated to  $\piencodf{M}_u$. The prefix `$x.\overline{\none}$' signals the end of the shared variables, and is meant to synchronize with the translation of $\oneb$, the last element of the bag. If the bag has elements that still need to be synchronized then the failure along $x$ is propagated to the remaining resources within the translation of the  bag.
    
    \item The translation of $M[ \leftarrow x]$  corresponds to the final step in the translation  just discussed:
    $$        \piencodf{M[ \leftarrow x]}_u  = x. \overline{\some}. \outact{x}{y} . ( y . \some_{u,\lfv{M}} ;y_{}.\close; \piencodf{M}_u \mid x. \overline{\none})
$$


    \item The translation of the non-empty bag $\bag{M} \cdot B$ is as follows: 
    \begin{align*}
    	         \piencodf{\bag{M} \cdot B}_x & =
       x.\some_{\lfv{\bag{M} \cdot B} } ; x(y_i). x.\some_{y_i, \lfv{\bag{M} \cdot B}};x.\overline{\some} ; \outact{x}{x_i}
       \\[1mm]
       & \qquad . (x_i.\some_{\lfv{M}} ; \piencodf{M}_{x_i} \mid \piencodf{B}_x \mid y_i. \overline{\none})
    \end{align*}
    Notice how this process operates hand in hand with the translation of $M[x_1, \ldots, x_k \leftarrow x]$.
    The process first waits for its behavior to be confirmed; then, the auxiliary name $y_i$ is received from the translation of $M[x_1, \ldots, x_k \leftarrow x]$. The name $y_i$ fails immediately to signal that there are more resources in the bag. Name $x$ then confirms its behavior and awaits its behavior to be confirmed. Subsequently, a name $x_i$ is sent: this is the name on which the translation of $M$ will be made available to the application. After that, name $x$ is used in the translation of $B$, the rest of the bag.
    
    \item The translation of $\oneb$ operates aligned with the translations just discussed, exploiting the fact that in fail-free reductions the last element of the bag must be $\oneb$:
    $$
          \piencodf{\oneb}_x  = x.\some_{\emptyset} ; x(y). (y.\overline{\some};y. \overline{\close} \mid x.\some_{\emptyset} ; x. \overline{\none}) 
    $$
    This process relays the information that the translated empty bag is no longer able to provide resources for further substitutions. It first waits upon a correct behavior followed by the reception of a name $y$. The process then confirms its behavior along $y$: this signals that there are no further resources. Concurrently, name $x$ waits for a confirmation of a behavior and ends with `$x. \overline{\none}$', thus signaling the failure of producing further behaviors.
    
    \item The explicit failure term $\fail^{x_1 , \cdots , x_k}$ is not part of $\lamrshar$ and so it was not covered in  \figref{fig:enc}. Its translation is straightforward:
    $$      \piencodf{\fail^{x_1 , \cdots , x_k}}_u  = u.\overline{\none} \mid x_1.\overline{\none} \mid \cdots \mid x_k.\overline{\none} 
$$
The failure term is translated as the non-availability of a behavior along name $u$, composed with the non-availability of sessions along the names/variables $x_1, \ldots, x_n$ encapsulated by the source failure term.


\item The translations for $M \linexsub{N / x} $ and $\expr{M}+\expr{N}$ are exactly as before:
$$
           \piencodf{ M \linexsub{N / x}  }_u    =   (\nu x) ( \piencodf{ M }_u \mid   x.\some_{\lfv{N}};\piencodf{ N }_x  )  
        \qquad 
     \piencodf{\expr{M}+\expr{N} }_u     =  \piencodf{ \expr{M} }_u \oplus \piencodf{ \expr{N} }_u   
$$
\end{itemize}

\newcommand{\cnum}[1]{[\mathsf{#1}]}

\subsubsection{Examples} Before presenting the session types associated to our translation $ \piencodf{\cdot}_u$, we present a series of examples that illustrate different possibilities in a step-by-step fashion:
\begin{itemize}
    \item No failure: an explicit substitution that is provided an adequate amount of resources;
    \item Failure due to excess of resources in the bag;
    \item Failure due to lack of resources in the bag.
\end{itemize}
We first discuss the translation of a term in which there is no failure. In that follows, we refer to a specific reduction by adding a number as in, e.g., `$\red_{\cnum{3}}$'.

\begin{exa}[No Failure]\label{ex:encsucc}
Let us consider the well-formed $\lamrsharfail$-term $N [x_1 \leftarrow x] \esubst{ \bag{M} }{x}$, where, for simplicity, we assume that $\revo{A12}{\lfv{N} \setminus \{ x_1 \} = \lfv{M} = \emptyset}$. 
As we have seen, $N [x_1 \leftarrow x] \esubst{ \bag{M} }{x} \red N \linexsub{M / x}$. 
We discuss reduction steps for $ \piencodf{ N[ x_1 \leftarrow x] \esubst{ \bag{M} }{ x} }_u $, highlighting in \blue{blue} relevant prefixes.
First, we have:
  \[
  \small
  \begin{aligned} 
   \piencodf{ N[ x_1 \leftarrow x] \esubst{ \bag{M} }{ x} }_u 
               =~ &  (\nu x)( \piencodf{ N[x_1 \leftarrow x]}_u \mid \piencodf{\bag{M} }_x ) \\
       =~ &  (\nu x)\big( \blue{x.\overline{\some}}. \outact{x}{y_1}. (y_1 . \some_{\emptyset} ;y_{1}.\close 
       \mid x.\overline{\some};x.\some_{u}; 
       \\
       & \hspace{2.8em} . x(x_1).x.\overline{\some}; \outact{x}{y_{}}. ( y_{} . \some_{u, x_1 } ;y_{}.\close; \piencodf{N}_u \mid x.\overline{\none} )~ ) \\
       & \hspace{2em} \mid \blue{x.\some_{\emptyset}} ; x(y_1). x.\some_{y_1};x.\overline{\some} ; \outact{x}{x_1}\\
       & \hspace{2.8em} . (x_1.\some_{\emptyset} ; \piencodf{M}_{x_1} \mid y_1. \overline{\none} \mid x.\some_{\emptyset} ; x(y).    (y.\overline{\some};y. \overline{\close} \\
       &\hspace{2.8em} \mid x.\some_{\emptyset} ; x. \overline{\none}) ) \big) 
       \end{aligned}
       \]
A detailed description of the reduction steps follows:
\begin{itemize}
    \item Reduction $\red_{\cnum{1}}$ concerns the name $x$ confirming its behavior (see highlighted prefixes above), and reduction $\red_{\cnum{2}}$ concerns the communication of name $y_1$:
    \[
    \small
    \begin{aligned}
     \piencodf{ N[ x_1 \leftarrow x] \esubst{ \bag{M} }{ x} }_u\red_{\cnum{1}}~ &  (\nu x)( \blue{\outact{x}{y_1}}. \big(y_1 . \some_{\emptyset} ;y_{1}.\close 
       \mid x.\overline{\some};x.\some_{u};x(x_1) . 
       \\
       & \hspace{1em} . x.\overline{\some}; \outact{x}{y_{}}. ( y_{} . \some_{u,x_1} ;y_{}.\close; \piencodf{N}_u \mid x.\overline{\none} )~ \big) \\
       & \hspace{1em} \mid \blue{ x(y_1)}. x.\some_{y_1 };x.\overline{\some} ; \outact{x}{x_1}. (x_1.\some_{\emptyset} ; \piencodf{M}_{x_1}\\
       & \hspace{1em}  \mid y_1. \overline{\none} \mid x.\some_{\emptyset} ; x(y).  (y.\overline{\some};y. \overline{\close}\mid x.\some_{\emptyset} ; x. \overline{\none}) ) ) 
       \\
       \red_{\cnum{2}}~ &  (\nu x, y_1)( y_1 . \some_{\emptyset} ;y_{1}.\close 
       \mid \blue{x.\overline{\some}};x.\some_{u };x(x_1) . 
       \\
       & \hspace{1em} . x.\overline{\some}; \outact{x}{y_{}}. ( y_{} . \some_{u, x_1 } ;y_{}.\close; \piencodf{N}_u \mid x.\overline{\none} )~ \\
       & \hspace{1em} \mid \blue{ x.\some_{y_1} };x.\overline{\some} ; \outact{x}{x_1} . (x_1.\some_{\emptyset} ; \piencodf{M}_{x_1} \\
       & \hspace{1em}\mid\! y_1. \overline{\none} \!\mid\! x.\some_{\emptyset} ; x(y).    (y.\overline{\some};y. \overline{\close} \!\mid\! x.\some_{\emptyset} ; x. \overline{\none}) ) ) \ (:=P)
    \end{aligned}
    \]
    \item Reduction $\red_{\cnum{3}}$ concerns $x$ confirming its behavior, which signals that there are variables free for substitution in the translated term. In the opposite direction, 
    reduction $\red_{\cnum{4}}$ signals that there are elements in the bag which are available for substitution in the translated term.
    \[
    \small
    \begin{aligned}
     P \red_{\cnum{3}}~ &  (\nu x, y_1)( y_1 . \some_{\emptyset} ;y_{1}.\close 
       \blue{\mid x.\some_{u}};x(x_1) . 
       \\
       & \hspace{1em} . x.\overline{\some}; \outact{x}{y_{}}. ( y_{} . \some_{u, x_1 } ;y_{}.\close; \piencodf{N}_u \mid x.\overline{\none} )~ \\
       & \hspace{1em} \mid \blue{x.\overline{\some}} ; \outact{x}{x_1} . (x_1.\some_{\emptyset} ; \piencodf{M}_{x_1} \mid y_1. \overline{\none} \mid x.\some_{\emptyset} ; x(y). \\
       & \hspace{1em} (y.\overline{\some};y. \overline{\close} \mid x.\some_{\emptyset} ; x. \overline{\none}) ) ) 
       \\
     \red_{\cnum{4}}~ &  (\nu x, y_1)( y_1 . \some_{\emptyset} ;y_{1}.\close 
       \para  \blue{x(x_1)} .  x.\overline{\some}; \outact{x}{y_{}}. ( y_{} . \some_{u, x_1 } ;y_{}.\close; \\
       & \hspace{1em} \piencodf{N}_u \mid x.\overline{\none} ) \mid  \blue{\outact{x}{x_1}}. (x_1.\some_{\emptyset} ; \piencodf{M}_{x_1}\mid y_1. \overline{\none} \\
       & \hspace{2em}   \mid x.\some_{\emptyset} ; x(y).   (y.\overline{\some};y. \overline{\close} \mid x.\some_{\emptyset} ; x. \overline{\none}) ) ) \hspace{1cm}  (:= Q)
    \end{aligned}
    \]
    \item Given the confirmations in the previous two steps, reduction $\red_{\cnum{5}}$ can now safely communicate a name $x_1$. This reduction synchronizes the shared variable $x_1$ with the first element in the bag.
    \[
    \small
    \begin{aligned}
    Q   \red_{\cnum{5}}~ &  (\nu x, y_1, x_1)( y_1 . \some_{\emptyset} ;y_{1}.\close   \mid  \blue{ x.\overline{\some}}; \outact{x}{y_{}}. ( y_{} . \some_{u, x_1 } ; y_{}.\close;  \piencodf{N}_u \\
       & \hspace{2em}\mid x.\overline{\none} ) \mid  x_1.\some_{\emptyset} ; \piencodf{M}_{x_1} \mid y_1. \overline{\none} \mid \blue{x.\some_{\emptyset}} ; x(y).   (y.\overline{\some};y. \overline{\close} \\
       & \hspace{2.5em} \mid x.\some_{\emptyset} ; x. \overline{\none}) ) \hspace{1cm}  (:= R)
    \end{aligned}
    \]
    \item Reduction $\red_{\cnum{6}}$ concerns  $x$ confirming its behavior. At this point, we could have alternatively performed a reduction on name $y_1$. We chose to discuss all reductions on  $x$ first; thanks to confluence this choice has no effect on the overall behavior.  Reduction $\red_{\cnum{7}}$ communicates name $y$ along $x$. 
    \[\small
    \begin{aligned}
        R  \red_{\cnum{6}}~ &  (\nu x, y_1, x_1)( y_1 . \some_{\emptyset} ;y_{1}.\close 
       \mid \blue{ \outact{x}{y_{}}}. ( y_{} . \some_{u, x_1 } ;y_{}.\close; \piencodf{N}_u ~ \\
       & \hspace{2em} \mid x.\overline{\none} )\mid  x_1.\some_{\emptyset} ;\piencodf{M}_{x_1} \mid y_1. \overline{\none} \mid \blue{x(y)}. (y.\overline{\some};y. \overline{\close}  
       \\
       & \hspace{2em}\mid x.\some_{\emptyset} ; x. \overline{\none}) )\\
       \red_{\cnum{7}}~ &  (\nu x, y, y_1, x_1)( y_1 . \some_{\emptyset} ;y_{1}.\close 
       \mid  y_{} . \some_{u, x_1 } ;y_{}.\close; \piencodf{N}_u \mid \blue{x.\overline{\none}} ~ \\
       & \hspace{2em} \mid  x_1.\some_{\emptyset} ; \piencodf{M}_{x_1} \mid y_1. \overline{\none} \mid  y.\overline{\some};y. \overline{\close} \mid \blue{x.\some_{\emptyset}} ; x. \overline{\none} ) \hspace{.8cm} (:= S)
       \\
    \end{aligned}
    \]
    \item Reduction $\red_{\cnum{8}}$ cancels the behavior along $x$, meaning that there are no more free variables to synchronize with. Subsequently, reduction $\red_{\cnum{9}}$ cancels the behavior along $y_1$: \revd{B22}{at the beginning, when $y_1$ was received, the encoded bag had the element $M$ left to be synchronized; at this point, the failure on $y_1$ signals that the bag still has elements to be synchronized with}. 
    \[
    \small
    \begin{aligned}
    S %
       \red_{\cnum{8}}~ &   (\nu  y, y_1, x_1)(\blue{ y_1 . \some_{\emptyset}} ;y_{1}.\close 
       \mid  y_{} . \some_{u, x_1 } ;y_{}.\close; \piencodf{N}_u \mid  x_1.\some_{\emptyset} ; \piencodf{M}_{x_1}  \\
       & \hspace{2em}  \mid \blue{y_1. \overline{\none}} \mid  y.\overline{\some};y. \overline{\close} ) 
       \\
       \red_{\cnum{9}}~ & (\nu  y,x_1)(  \blue{y_{} . \some_{u, x_1 }} ;y_{}.\close; \piencodf{N}_u  \mid  x_1.\some_{\emptyset} ; \piencodf{M}_{x_1} x\mid  \blue{y.\overline{\some}};y. \overline{\close} ) \qquad (:=T)
       \\
    \end{aligned}
    \]
    \item Finally, reductions $\red_{\cnum{10}}$ and $\red_{\cnum{11}}$ concern name $y$: the former signals that the bag has no more elements to be synchronized for substitution; the latter closes the session, as it has served its purpose of correctly synchronizing the translated term. The resulting process corresponds to the translation of $N \linexsub{M / x}$.
    \[
    \begin{aligned}
    T  \red_{\cnum{10}}~ &  (\nu  y, x_1)(  \blue{y_{}.\close}; \piencodf{N}_u  \mid  x_1.\some_{\emptyset} ; \piencodf{M}_{x_1} \mid  \blue{y. \overline{\close}}  ) 
       \\
       \red_{\cnum{11}}~ &  (\nu   x_1)( \piencodf{N}_u \mid  x_1.\some_{\emptyset} ; \piencodf{M}_{x_1}  ) 
       =\piencodf{ N \linexsub{M / x}  }_u
    \end{aligned}
    \]
\end{itemize}
\end{exa}

We now discuss the translation of a term that fails due to an excess of resources.

\begin{exa}[Excess of Resources]\label{ex:encfailexcess}
Let us consider the well-formed $\lamrsharfail$-term that does not share occurrences of $x$, i.e., $N [ \leftarrow x] \esubst{ \bag{M} }{x}$, where $M, N$ are closed (i.e. $\lfv{N} = \lfv{M} = \emptyset$). This term's translation is:
	  \[\small
    \begin{aligned}
       \piencodf{ N[ \leftarrow x] \esubst{ \bag{M} }{ x} }_u & =   (\nu x)( \piencodf{ N[ \leftarrow x]}_u \mid \piencodf{\bag{M} }_x ) \\
       & = (\nu x)( x. \overline{\some}. \outact{x}{y_1} . ( y_1 . \some_{u } ;y_1.\close; \piencodf{N}_u \mid x. \overline{\none}) \mid 
       \\
       & \qquad x.\some_{\emptyset} ; x(y_1). x.\some_{y_1};x.\overline{\some} ; \outact{x}{x_i}
      . (x_i.\some_{\emptyset} ; \piencodf{M}_{x_i} \mid \piencodf{\oneb}_x \mid y_1. \overline{\none}) )
      \\
    \end{aligned}
    \]

      \begin{itemize}
        \item Reductions $\red_{\cnum{1}}$ and $\red_{\cnum{2}}$ follow as in  Example \ref{ex:encsucc}.
    	  \[\small
    \begin{aligned}
      \piencodf{ N[ \leftarrow x] \esubst{ \bag{M} }{ x} }_u & \red_{\cnum{1}} (\nu x)(  \outact{x}{y_1} . ( y_1 . \some_{u } ;y_1.\close; \piencodf{N}_u \mid x. \overline{\none}) \mid 
       \\
       & \qquad  x(y_1). x.\some_{y_1};x.\overline{\some} ; \outact{x}{x_i}
           . (x_i.\some_{\emptyset} ; \piencodf{M}_{x_i} \mid \piencodf{\oneb}_x \mid y_1. \overline{\none}) )\\
       & \red_{\cnum{2}} (\nu x, y_1)(   y_1 . \some_{u } ;y_1.\close; \piencodf{N}_u \mid {\cred{  x. \overline{\none}}} \mid 
       \\
       & \quad \qquad  { \cred{x.\some_{y_1};}}x.\overline{\some} ; \outact{x}{x_i} . (x_i.\some_{\emptyset} ; \piencodf{M}_{x_i} \mid \piencodf{\oneb}_x \mid y_1. \overline{\none}) )\qquad (:= P)
       \\
    \end{aligned}
    \]
  Notice how the translation of the term first triggers the failure: prefix $x.\overline{\none}$ (highlighted in \cred{red}) signals that there are no (more) occurrences of $x$ within the process; nevertheless, the translation of the bag is still trying to communicate the translation of $M$. This failure along $x$ causes the chain reaction of the failure along $y_1$, which eventually triggers across the translation of $N$.
  
        \item Reduction $\red_{\cnum{3}}$ differs from $\red_{\cnum{3}}$ in Example~\ref{ex:encsucc}, as the translation of the shared variable is empty, we abort along the name $x$;  as the translated bag still contains elements to synchronize, the abortion of the bag triggers that failure of the dependant name $y_1$.
        \[
        \begin{aligned}
            P& \red_{\cnum{3}} (\nu  y_1)(   {\cred{  y_1 . \some_{u } ;}}y_1.\close; \piencodf{N}_u \mid { \red{ y_1. \overline{\none}}}) 
       \\
       & \red_{\cnum{4}}   u. \overline{\none}  = \piencodf{ \fail^{\emptyset} }_u
        \end{aligned}
        \]
        \item Reduction $\red_{\cnum{4}}$ differs from that of $\red_{\cnum{9}}$ and $\red_{\cnum{10}}$ from Example \ref{ex:encsucc}: the name $y_1$ fails signaling that there was an element in the bag that was to be sent; as the translation of the term $N$ is guarded by the confirmation along $y_1$, it aborts.
    \end{itemize}
    
  \end{exa}

Finally, we illustrate how $ \piencodf{\cdot}_u$ acts on a term that fails due to lack of resources in a bag.

\begin{exa}[Lack of Resources]\label{ex:encfaillack}
Consider the well-formed $\lamrsharfail$-term $N [ x_1 \leftarrow x] \esubst{ \oneb }{x}$, where $N$ is a closed  term (i.e. $\lfv{N} =  \emptyset$). This term's translation is:
  \[\small
    \begin{aligned}
        \piencodf{ N [ x_1 \leftarrow x] \esubst{ \oneb }{x} }_u  
       =~ &   (\nu x)( \piencodf{ N [ x_1 \leftarrow x] }_u \mid \piencodf{ \oneb }_x ) \\
        =~ & (\nu x)( x.\overline{\some}. \outact{x}{y_1}. (y_1 . \some_{\emptyset} ;y_{1}.\close \para x.\overline{\some};x.\some_{u};
       \\
       & \quad  x(x_1) .  x.\overline{\some}; \outact{x}{y_{2}}. ( y_{2} . \some_{u, x_1 } ;y_{2}.\close; \piencodf{N}_u \mid x.\overline{\none} )~ )  \para \\
       & \quad  x.\some_{\emptyset} ; x(y_1). ( y_1.\overline{\some};y_1 . \overline{\close} \mid x.\some_{\emptyset} ; x. \overline{\none}) )\qquad (:= P) \\
       \end{aligned}
      \]
        Notice how the translation of the empty bag $\oneb$ triggers the failure: prefix `$x.\overline{\none}$' signals that there are no (more) elements in the bag; however, the translated term aims to synchronize, as it (still) requires resources.
    
    \begin{itemize}
        \item Reductions $\red_{\cnum{1}}$ and $\red_{\cnum{2}}$ follow from Example \ref{ex:encsucc}.
      \[\small
    \begin{aligned}
       P \red_{\cnum{1}} ~ & (\nu x)( \outact{x}{y_1}. (y_1 . \some_{\emptyset} ;y_{1}.\close \mid  x.\overline{\some};x.\some_{u};x(x_1) .
       \\
       &  \quad   x.\overline{\some}; \outact{x}{y_{2}}. ( y_{2} . \some_{u, x_1 } ;y_{2}.\close; \piencodf{N}_u \mid x.\overline{\none} )~ )  \para \\
       &  \quad  x(y_1). ( y_1.\overline{\some};y_1 . \overline{\close} \mid x.\some_{\emptyset} ; x. \overline{\none}) ) \\
       \red_{\cnum{2}} ~ & (\nu x, y_1)(  y_1 . \some_{\emptyset} ;y_{1}.\close \mid x.\overline{\some};x.\some_{u};x(x_1) . 
       \\
       &  \quad   x.\overline{\some}; \outact{x}{y_{2}}. ( y_{2} . \some_{u, x_1 } ;y_{2}.\close; \piencodf{N}_u \mid x.\overline{\none} )  \para \\
       &  \quad  y_1.\overline{\some};y_1 . \overline{\close} \mid x.\some_{\emptyset} ; x. \overline{\none} ) 
    \end{aligned}
    \]
  
        \item Reductions $\red_{\cnum{3}}$ and $\red_{\cnum{4}}$ follow from that of $\red_{\cnum{9}}$ and $\red_{\cnum{10}}$ in Example \ref{ex:encsucc}; as the term contains the element $x_1$ for synchronization, the encoding of $N$ is not guarded by $y_1$.
        \[\small
        \begin{aligned}
        Q    \red_{\cnum{3}} ~ & (\nu x, y_1)(  y_{1}.\close \mid x.\overline{\some};x.\some_{u};x(x_1) . 
       \\
       &  \quad  x.\overline{\some}; \outact{x}{y_{2}}. ( y_{2} . \some_{u, x_1 } ;y_{2}.\close; \piencodf{N}_u \mid x.\overline{\none} )  \para \\
       &  \quad  y_1 . \overline{\close} \mid x.\some_{\emptyset} ; x. \overline{\none} ) \\
       \red_{\cnum{4}} ~ & (\nu x )(  x.\overline{\some};x.\some_{u};x(x_1) .  x.\overline{\some}; \outact{x}{y_{2}}. ( y_{2} . \some_{u, x_1 } ;y_{2}.\close;  \\
       &  \quad  \piencodf{N}_u \mid x.\overline{\none} )  \mid x.\some_{\emptyset} ; x. \overline{\none} ) \\
       \red_{\cnum{5}} ~ & (\nu x )(  { \cred{ x.\some_{u}}};x(x_1) .  x.\overline{\some}; \outact{x}{y_{2}}. ( y_{2} . \some_{u, x_1 } ;y_{2}.\close; \piencodf{N}_u \mid x.\overline{\none} )  \mid  { \cred{  x. \overline{\none} }} ) \\
       \red_{\cnum{6}}~ &  u. \overline{\none}
       =~   \piencodf{ \fail^{\emptyset}  }_u
        \end{aligned}
        \]
        \item Reduction $\red_{\cnum{5}}$ follows from reduction $ \red_{\cnum{3}} $ in Example \ref{ex:encsucc}.
        \item Reduction $\red_{\cnum{6}}$ differs from that of $\red_{\cnum{4}}$ from Example \ref{ex:encsucc}: the bag contains no elements, and signals this by aborting along the name $x$; still, the term expects to receive an element of the bag, and prematurely aborts.
    \end{itemize}
    
  \end{exa}

\paragraph{Translating Types}
\srev{In describing our translation $\piencodf{\cdot}_{-}$ we have informally referred to (non-deterministic) session protocols in \spi that implement (non-deterministic) expressions in \lamrsharfail. We are actually able to make these intuitions precise and give a translation of intersection types (for \lamrsharfail, cf. \defref{d:typeslamrfail}) into session types (for \spi, cf. \defref{d:sts}). This provides the protocol-oriented interpretation of intersections mentioned earlier. Intuitively speaking, given an intersection type $\pi$, we will have a  corresponding session type $\piencodf{\pi}$ that determines a protocol tied to the evaluation of a (fail-prone, non-deterministic) expression with type $\pi$.}

\begin{defi}[From $\lamrsharfail$ into $\spi$: Types]
\label{def:enc_sestypfail}
The translation  $\piencodf{\cdot}$ on types is defined 
in \figref{fig:enc_sestypfail}.
 \revo{}{Let  
$\Gamma = x_1: \sigma_1, \cdots, x_m : \sigma_k, v_1: \pi_1 , \cdots , v_n: \pi_n$
be as in \defref{d:tcont}.}
            
For some strict types $\tau_1,\cdots,\tau_n$ and $i_1,\cdots,i_n \geq 0$ we define: 
$$\piencodf{\Gamma} = x_1: \with \overline{\piencodf{\sigma_1}} , \cdots ,  x_k : \with \overline{\piencodf{\sigma_k}} ,
 v_1: \with \overline{\piencodf{\pi_1}_{(\tau_1, i_1)}}, \cdots , v_n: \with \overline{\piencodf{\pi_n}_{(\tau_n, i_n)}}$$
\end{defi}

\begin{figure}[!t]
    \centering

\begin{align*}
 \piencodf{\unit} & = \with \onef 
 \\[1mm]
 \piencodf{\pi \rightarrow \tau}  & = \with((  \overline{\piencodf{\pi }_{(\sigma, i)}} ) \ampy \piencodf{\tau})  \quad \text{(for some strict type $\sigma$, with $i \geq 0$)}  
 \\[1mm]
   \piencodf{ \sigma \wedge \pi }_{(\tau, i)} &= \overline{   \with(( \oplus \bot) \otimes ( \with  \oplus (( \with  \overline{\piencodf{ \sigma }} )  \ampy (\overline{\piencodf{\pi}_{(\tau, i)}})))) } \\
     & = \oplus(( \with \onef) \ampy ( \oplus  \with (( \oplus \piencodf{\sigma} ) \otimes (\piencodf{\pi}_{(\tau, i)})))) 
     \\[1mm]
      \piencodf{\omega}_{(\sigma, i)} &=  
 \begin{cases}
     \overline{\with(( \oplus \bot )\otimes ( \with \oplus \bot )))} & \text{if $i = 0$}
     \\
\overline{   \with(( \oplus \bot) \otimes ( \with  \oplus (( \with  \overline{\piencodf{ \sigma }} )  \ampy (\overline{\piencodf{\omega}_{(\sigma, i - 1)}})))) } & \text{if $i > 0$}
\end{cases}
\end{align*}

    \caption{Translating intersection types as session types.}
    \label{fig:enc_sestypfail}
\end{figure}

As we will see, given a well-formedness judgement $\Gamma \wfdash \expr{M} : \tau$, with the translations on types and assignments defined above, we will have $\piencodf{\expr{M}}_u \vdash 
\piencodf{\Gamma}, 
u : \piencodf{\tau}$; this is the content of the \emph{type preservation} property (\thmref{t:preservationtwo}).

The translation of types in \figref{fig:enc_sestypfail} leverages non-deterministic session protocols (typed with `$\with$') to represent non-deterministic fetching and fail-prone evaluation in \lamrsharfail. 
Notice that the translation of the multiset type $\pi$ depends on two arguments (a strict type $\tau$ and a number $i \geq 0$) which are left unspecified above, \secondrev{but are appropriately specified in Proposition~\ref{prop:app_aux}}.
This is crucial to represent mismatches in \lamrsharfail (i.e., sources of failures) as typable processes in \spi. 
For instance, in \figref{fig:wfsh_rules}, Rule~\redlab{FS{:}app} admits a mismatch between 
$\sigma^{j} \rightarrow \tau$ and $ \sigma^{k}$, for it allows $j \neq k$.
In our proof of type preservation, these two arguments are instantiated appropriately, enabling typability as session-typed processes.

 We are now ready to consider correctness  for $\piencodf{\cdot}_u$, in the sense of  \defref{d:encoding}.
First, the compositionality property follows directly from   \figref{fig:encfail}. 
In the following sections, we state the remaining properties in \defref{d:encoding}: type preservation, operational correspondence, and success sensitiveness.

 \subsubsection{Type Preservation}
 We prove that our translation from $\lamrsharfail$ to $\spi$  maps well-formed $\lamrsharfail$ expressions to session-typed processes in $\spi$. \secondrev{First, we show that translated multiset types can be ``lengthened'' by setting appropriate parameters to the encoding. } 

        
        
    

\begin{restatable}{propo}{appaux}
\label{prop:app_aux}
 \revo{A27}{Suppose $\sigma^j$ and $\sigma^k$ are arbitrary strict types (\defref{d:typeslamrfail}), for some $j, k \geq 0$. 
Following \figref{fig:enc_sestypfail}, consider their encoding into session types $\piencodf{\sigma^{j}}_{(\tau_1, m)} $ and $\piencodf{\sigma^{k}}_{(\tau_2, n)}$, respectively, where  $\tau_1, \tau_2$ are strict types and $n, m \geq 0$. \\  We have 
 $\piencodf{\sigma^{j}}_{(\tau_1, m)} = \piencodf{\sigma^{k}}_{(\tau_2, n)}$
  under the following conditions:}
         \begin{enumerate}
         \item If $j > k$ then we take $\tau_1 $ to be an arbitrary strict type and $m = 0$; also, we take $\tau_2 $ to be $\sigma$ and $n = j-k$.
        
         \item If $j < k$ then we take $\tau_1 $ to be $\sigma$ and $m = k-j$; also,  we take $\tau_2 $ to be an arbitrary strict type and $n = 0$. 
        
         \item Otherwise, if $j = k$ then we take $m = n = 0$. Also, $\tau_1 , \tau_2 $ are arbitrary strict types.
     \end{enumerate}
\end{restatable}

\begin{proof}
\secondrev{
Immediate by unfolding the translation. The full analysis can be found in \appref{app:typeprestwo}.
}
\end{proof}

\secondrev{
Given Proposition \ref{prop:app_aux} we now show that the translation preserves types:
}

\begin{restatable}[Type Preservation for $\piencodf{\cdot}_u$]{thms}{preservationtwo}
\label{t:preservationtwo}
Let $B$ and $\expr{M}$ be a bag and an expression in $\lamrsharfail$, respectively.
\begin{enumerate}
\item If $\secondrev{\core{\Gamma}} \wfdash B : \pi$
then 
$\piencodf{B}_u \vdash  \piencodf{\secondrev{\core{\Gamma}}}, u : \piencodf{\pi}_{(\sigma, i)}$, \revo{A13}{for some strict type $\sigma$ and index $ i \geq 0 $.}

\item If $\secondrev{\core{\Gamma}} \wfdash \expr{M} : \tau$
then 
$\piencodf{\expr{M}}_u \vdash  \piencodf{\secondrev{\core{\Gamma}}}, u :\piencodf{\tau}$.
\end{enumerate}

\end{restatable}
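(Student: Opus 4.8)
The plan is to proceed by simultaneous induction on the well-formedness derivations $\Gamma \wfdash B : \pi$ and $\Gamma \wfdash \expr{M} : \tau$ of \figref{fig:wfsh_rules}, exhibiting for each rule a \spi typing derivation (\figref{fig:trulespifull}) for the translated process of \figref{fig:encfail}, with types and contexts translated as in \figref{fig:enc_sestypfail} and \defref{def:enc_sestypfail}. For the bag statement~(1) I would in fact prove the stronger claim that $\piencodf{B}_u \vdash \piencodf{\Gamma}, u : \piencodf{\pi}_{(\sigma,i)}$ holds for the element type $\sigma$ of the bag and for \emph{every} $i \geq 0$; this uniformity in $i$ is what later lets the index be fixed to the correct value when a bag translation is composed with a term translation. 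The two base cases $\redlab{FS{:}wf \dash expr}$ and $\redlab{FS{:}wf \dash bag}$, which lift a well-typed $\lamrshar$ judgement to a well-formed one, I would dispatch by observing that any such well-typed judgement can be re-derived using only the structural well-formedness rules (here all arity side-conditions match, so $j=k$ throughout); hence these leaves are subsumed by the inductive cases and need no separate type-preservation lemma for the fail-free encoding.

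The routine cases follow directly from the induction hypothesis together with a single \spi rule application. For $\piencodf{x}_u = x.\overline{\some};[x \leftrightarrow u]$ I would type the forwarder with $\redlab{Tid}$ and prefix it using $\redlab{T\with_d^x}$; for $\piencodf{\lambda x.M[\widetilde{x}\leftarrow x]}_u$ I would apply $\redlab{T\with_d^x}$ followed by the input rule $\redlab{T\ampy}$ to the translated body, reading off $\piencodf{\sigma^k \rightarrow \tau}$ from \figref{fig:enc_sestypfail}. The failure term $\piencodf{\fail^{x_1,\cdots,x_k}}_u = u.\overline{\none} \mid x_1.\overline{\none} \mid \cdots \mid x_k.\overline{\none}$ is typed by $k+1$ uses of $\redlab{T\with^x}$ combined with $\redlab{T\mid}$, which matches Rule~$\redlab{FS{:}fail}$ exactly (every name in the context carries a $\with$-type). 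Non-deterministic sums use $\redlab{T\with}$ (whose premises require $\with$-typed contexts, guaranteed because both summands are translated over the same $\Gamma$), and $\piencodf{M\linexsub{N/x}}_u$ uses $\redlab{Tcut}$ to connect $\piencodf{M}_u$ with the guarded $x.\some_{\lfv{N}};\piencodf{N}_x$, typed via $\redlab{T\oplus^x_{\widetilde{w}}}$.

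The delicate cases are the application $\redlab{FS{:}app}$, the explicit substitution $\redlab{FS{:}ex \dash sub}$, and the mutually dependent translations of the sharing constructs and non-empty bags, where the well-formedness rules deliberately permit a mismatch $j \neq k$ between the number of variable occurrences and the bag size. Here the main obstacle is bookkeeping on the parameter $i$ in the type translation: I must instantiate $i$ to exactly the discrepancy between the resources offered by the bag and those expected by the term, and then verify that the private channel $x$ (respectively $v$) shared between $\piencodf{M}$ and $\piencodf{B_i}$ receives \emph{dual} types on its two sides, so that $\redlab{Tcut}$ applies. This forces a careful auxiliary analysis of how $i$ threads through the recursion $\piencodf{\sigma \wedge \pi}_{(\sigma,i)}$ and the two-case clause for $\piencodf{\omega}_{(\sigma,i)}$ (the $i=0$ versus $i>0$ split), showing that each synchronization step in the protocols of \figref{fig:encfail}---the auxiliary names $y_i$, the confirmation/abort pairs on $x$, and the final $x.\overline{\none}$ meeting $\piencodf{\oneb}_x$---corresponds to peeling off one connective of the parametrized type and decrementing $i$ accordingly. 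Once this correspondence between the index arithmetic and the channel structure is established, each composite case closes by assembling the sub-derivations with $\redlab{T\with}$, $\redlab{Tcut}$, $\redlab{T\otimes}$, $\redlab{T\oplus^x_{\widetilde{w}}}$, and $\redlab{T\with_d^x}$; I expect the verification of duality under the correct instantiation of $(\sigma,i)$ to be the crux of the entire argument, with the remaining manipulations being routine.
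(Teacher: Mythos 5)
Your proposal is correct and follows essentially the same route as the paper's proof: a mutual induction on the well-formedness derivations, with routine cases discharged by one or two \spi rules and the crux lying in the application/explicit-substitution cases, where the padding index of the parametrized type translation must be instantiated so that the two sides of the cut carry dual types. The paper packages that index arithmetic as a standalone equality lemma (Lemma~\ref{prop:app_aux}, giving the three-way choice of parameters according to whether $j>k$, $j<k$, or $j=k$), whereas you thread the same freedom through a strengthened, uniform-in-$i$ induction hypothesis for bags together with the arbitrariness of the type assigned to the trailing $x.\overline{\none}$ prefixes on the term side; these are equivalent devices.
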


\begin{proof}
 By mutual induction on the typing derivation of $B$ and $\mathbb{M}$ , with an analysis of the last rule applied in $\Gamma\wfdash B:\pi$ and in $\Gamma\wfdash \mathbb{M}:\tau$. \revd{B26}{One key aspect of this proof is the application of Proposition~\ref{prop:app_aux} to ensure duality of types. Intuitively, the conditions given by Proposition~\ref{prop:app_aux} are used to instantiate the parameters in the encoding of intersection types, so as to ensure that when intersection types have different types the smaller type can be correctly ``padded'' to match the size of the larger type---Example~\ref{ex:padd}, given below, illustrates this padding. } The full proof can be found in \appref{app:typeprestwo}.
\end{proof}

\begin{exa}[\revo{A2}{Parameters in the encoding of types}]
\label{ex:padd}
We give the dual types when encoding intersection types, namely the case of $\piencodf{ \sigma \wedge \pi }_{(\sigma, i)}$, to express the encoding of intersection typed behavior into session typed behavior. 
The application of dual types is most evident in the application of a bag into an abstraction: the bag providing the intersection type and the abstraction consuming it. In session types the interaction between these is expressed by dual session types where one channel provides a behavior and and the dual channel provides the dual session type behavior via the cut rule.
Let us consider the term \( (  \lambda x . M [x_1, x_2 \leftarrow x]  ) B \) typed with the well-formedness rules by:
   \begin{prooftree}
        \AxiomC{\( \Gamma \wfdash \lambda x . M [x_1, x_2 \leftarrow x] : (\sigma \wedge \sigma) \rightarrow \tau \quad \Delta \wfdash B : \sigma^k \)}
            \LeftLabel{\redlab{FS{:}app}}
        \UnaryInfC{\( \Gamma, \Delta \wfdash (  \lambda x . M [x_1, x_2 \leftarrow x]  ) B : \tau\)}
    \end{prooftree}
When applying the translation of \figref{fig:encfail} to the term we obtain:
\[
\bigoplus_{B_i \in \perm{B}} (\nu v)(\piencodf{\lambda x . M [x_1, x_2 \leftarrow x]}_v \mid v.\some_{u , \lfv{B}} ; \outact{v}{x} . ([v \leftrightarrow u] \mid \piencodf{B_i}_x ) )  
\]
By appealing to Type Preservation (\thmref{t:preservationtwo}) we obtain both $ \piencodf{\lambda x . M [x_1, x_2 \leftarrow x]}_v \vdash  \piencodf{\Gamma}, v :\piencodf{(\sigma \wedge \sigma) \rightarrow \tau}  $ and $\piencodf{B}_x \vdash \piencodf{\Delta} , x: \piencodf{\sigma^k }_{(\delta_2, i_2)} $.
We give the typing for one non-deterministic branch where we take an arbitrary permutation of $B$ is as follows by applying the rules of \figref{fig:trulespifull} and that $\Pi_1$ is derived to be:
\begin{prooftree}
\AxiomC{}
\LeftLabel{\redlab{Tid}}
\UnaryInfC{$[v \leftrightarrow u] \vdash   v : \overline{\piencodf{\tau}} , u:\piencodf{\tau}  $}
\AxiomC{$\piencodf{B}_x \vdash \piencodf{\Delta} , x: \piencodf{\sigma^k }_{(\delta_2, i_2)} $}
\LeftLabel{\redlab{T\otimes}}
\BinaryInfC{$\outact{v}{x} . ([v \leftrightarrow u] \mid \piencodf{B}_x ) \vdash \piencodf{\Delta}, v :   \piencodf{\sigma^k }_{(\delta_2, i_2)} \ampy \overline{\piencodf{\tau}} , u:\piencodf{\tau}  $}
\LeftLabel{\redlab{T\oplus^x_{\widetilde{w}}}}
\UnaryInfC{$ v.\some_{u , \lfv{B}} ; \outact{v}{x} . ([v \leftrightarrow u] \mid \piencodf{B}_x ) \vdash \piencodf{\Delta}, v :\dual{\piencodf{(\sigma^k) \rightarrow \tau}} , u:\piencodf{\tau} $}
\end{prooftree}
Hence we obtain the derivation:
\begin{prooftree}
\AxiomC{$ \piencodf{\lambda x . M [x_1, x_2 \leftarrow x]}_v \vdash  \piencodf{\Gamma}, v :\piencodf{(\sigma \wedge \sigma) \rightarrow \tau}  $}
\AxiomC{$ \Pi_1 $}
\LeftLabel{\redlab{Tcut}}
\BinaryInfC{$(\nu v)(\piencodf{\lambda x . M [x_1, x_2 \leftarrow x]}_v \!\mid\! v.\some_{u , \lfv{B}} ; \outact{v}{x} . ([v \leftrightarrow u] \!\mid\! \piencodf{B}_x ) \vdash \piencodf{\Gamma}, \piencodf{\Delta}, u:\piencodf{\tau} $}
\end{prooftree}
Now we shall focus on the typing of the channel $v$ and $x$ in this process as these channel describes the behavior of the encoded intersection type which we are trying to match via duality. By the translation on types from \figref{fig:enc_sestypfail} we have that

\[
\piencodf{(\sigma \wedge \sigma) \rightarrow \tau} = \with((  \overline{\piencodf{(\sigma \wedge \sigma) }_{(\delta_1, i_1)}} ) \ampy \piencodf{\tau})
\]

\begin{itemize}
    \item When $B = \oneb$ we have derivation:
    \[
        \piencodf{ \oneb }_x \wfdash  \piencodf{\Delta}, x :\piencodf{\omega }_{(\delta_2, i_2)}
    \]
    To obtain duality from Rule~\redlab{Tcut}  we must have that $\piencodf{\sigma^2 }_{(\delta_1, i_1)} = \piencodf{\omega }_{(\delta_2, i_2)} $. By Proposition \ref{prop:app_aux} we can take $\delta_1$ to be an arbitrary strict type, $i_1 = 0$, $i_2=2$ , $\delta_2 = \sigma$. We have:
    
    \[
    \begin{aligned}
        \piencodf{\omega }_{(\sigma, 2)} & = \overline{   \with(( \oplus \bot) \otimes ( \with  \oplus (( \with  \overline{\piencodf{ \sigma }} )  \ampy (\overline{\piencodf{\omega}_{(\sigma, 1)}})))) }\\
        & = \overline{   \with(( \oplus \bot) \otimes ( \with  \oplus (( \with  \overline{\piencodf{ \sigma }} )  \ampy ( \with(( \oplus \bot) \otimes ( \with  \oplus (( \with  \overline{\piencodf{ \sigma }} )  \ampy (\overline{\piencodf{\omega}_{(\sigma, 0)}})))) )))) }\\
        & = \piencodf{\sigma^2 }_{(\delta_1, i_1)}
    \end{aligned}
    \]
    
    \item When $B = \bag{N_1,N_2}$ we have derivation:
    \[
        \piencodf{ \bag{N_1,N_2} }_x \wfdash  \piencodf{\Delta}, x :\piencodf{\sigma^2 }_{(\delta_2, i_2)}
    \]
   To obtain duality from Rule~\redlab{Tcut}  we must have that $\piencodf{\sigma^2 }_{(\delta_1, i_1)} = \piencodf{\sigma^2 }_{(\delta_2, i_2)} $. By Proposition \ref{prop:app_aux} we can take $\delta_1$ and $\delta_2$ to be an arbitrary strict type and  $i_1 = i_2 = 0$ . We then obtain $\piencodf{\sigma^2 }_{(\delta_1, 0)} = \piencodf{\sigma^2 }_{(\delta_2, 0)} $, as $\piencodf{\omega}_{(\delta_1, 0)} = \piencodf{\omega}_{(\delta_2, 0)} $ for any two strict types $\delta_1,\delta_2$.
    
    \item When $B = \bag{N_1,N_2,N_3}$ we have derivation:
    \[
        \piencodf{ \bag{N_1,N_2,N_3} }_x \wfdash  \piencodf{\Delta}, x :\piencodf{\sigma^3 }_{(\delta_2, i_2)}
    \]
    To obtain duality from Rule~\redlab{Tcut}  we must have that $\piencodf{\sigma^2 }_{(\delta_1, i_1)} = \piencodf{\sigma^3 }_{(\delta_2, i_2)} $. By Proposition \ref{prop:app_aux} we can take $\delta_2$ to be an arbitrary strict type, $i_2 = 0$, $i_1=2$ , $\delta_1 = \sigma$. Then the case proceeds similarly to when $B = \oneb$.
    \end{itemize}
\end{exa}

\subsubsection{Operational Correspondence: Completeness and Soundness}\label{app:s:compsound}

We now state our operational correspondence results (completeness and soundness, cf. Fig.~\ref{f:opcs}).

\paragraph*{A Congruence} 

We will identify some $\lamrsharfail$-terms such as $M\shar{}{x}\esubst{\oneb}{x}$ and $M$. The identification is natural, as the former is a term $M$ with no occurrences of $x$ in which $x$ is going to be replaced with $\oneb$, which clearly describes a substitution that ``does nothing'', and would result in $M$ itself.  With this intuition, other terms are identified via a {\em congruence} (denoted $\pequiv  
$) on terms and expressions that is formally defined in Fig.~\ref{fig:rsPrecongruencefailure}.

\begin{figure}[!t]
\[
\begin{array}{rll}
  M [ \leftarrow x] \esubst{\oneb}{x} \!\!\!\! &\revdaniele{\pequiv} M
&
  \\
  MB \linexsub{N/x}  \!\!\!\!&\pequiv (M\linexsub{N/x})B &  \hspace*{-3em}\text{with } x \not \in \lfv{B} 
  \\
     M \linexsub{N_2/y}\linexsub{N_1/x} 
         \!\!\!\!&\pequiv M\linexsub{N_1/x}\linexsub{N_2/y} &
         \hspace*{-3em}\text{with } x \not \in \lfv{N_2},\revdaniele{ y \notin \lfv{N_1}}
         \\
  MA[\widetilde{x} \leftarrow x]\esubst{B}{x} 
         \!\!\!\!&\pequiv (M[\widetilde{x} \leftarrow x]\esubst{B}{x})A
         &  \hspace*{-3em}\text{with } x_i \in \widetilde{x} \Rightarrow x_i \not \in \lfv{A}
         \\
M[\widetilde{y} \leftarrow y]\esubst{A}{y}[\widetilde{x} \leftarrow x]\esubst{B}{x} \!\!\!\! &\revdaniele{\pequiv} 
(M[\widetilde{x} \leftarrow x]\esubst{B}{x})[\widetilde{y} \leftarrow y]\esubst{A}{y} &   \text{with } x_i \in \widetilde{x} \Rightarrow x_i \not \in \lfv{A}
\\
     C[M] \!\!\!\! & \revdaniele{\pequiv} C[M'] 
     &
        \hspace*{-3em}\text{with } M \revdaniele{\pequiv} M'
    \\
        D[\expr{M}] 
        \!\!\!\! & \revdaniele{\pequiv} D[\expr{M}'] 
         & 
         \hspace*{-3em}\text{with } \expr{M} \revdaniele{\pequiv} \expr{M}'
\end{array}
\]

\caption{Congruence in \lamrsharfail.}
    \label{fig:rsPrecongruencefailure}
\end{figure}

  \begin{exa}[Cont. Example~\ref{ex:shar-wf}]\label{ex:precong_fail}
We illustrate the congruence in case of  failure:
\[
      \begin{aligned}
            (\lambda x . x_1 [x_1 \leftarrow x]) \bag{ \fail^{\emptyset}[ \leftarrow y] \esubst{ \oneb }{y} } & \red_{\redlab{RS{:}Beta}} x_1 [x_1 \leftarrow x]  \esubst{\bag{ \fail^{\emptyset}[ \leftarrow y] \esubst{\oneb}{y} }}{x} \\
           & \red_{\redlab{RS{:}Ex \dash Sub}} x_1  \linexsub{ \fail^{\emptyset}[ \leftarrow y] \esubst{\oneb}{y}  / x_1} \\
            & \red_{\redlab{RS{:}Lin \dash Fetch}} \fail^{\emptyset}[ \leftarrow y] \esubst{\oneb}{y}   \\
            & \pequiv \fail^{\emptyset} 
        \end{aligned}
  \]
  
  In the last step, Rule~$\redlab{RS{:}Cons_2}$ cannot be  applied:
  $y$ is sharing with no shared variables and the explicit substitution involves the bag $\oneb$.
  \end{exa}

\secondrev{
\begin{restatable}[Consistency Stability Under \(\equiv\)]{thms}{consistnequiv}
\label{thm:term_consistency}
    Let  ${\expr{M}}$ be a consistent $\lamrsharfail$-expression. If $\expr{M} \equiv \expr{M}'$ then ${\expr{M}'}$ is consistent.
\end{restatable}
}

\begin{proof}
\secondrev{
    By induction on the structure of $\expr{M}$; see \Cref{compandsucctwo} for details.
}
\end{proof}

\begin{figure}[t!]
\begin{tikzpicture}[scale=.9pt]
\node (opcom) at (4.3,7.0){Operational Completeness};
\draw[rounded corners, color=teal,fill=teal!20] (0,5.2) rectangle (14.5,6.6);
\draw[rounded corners, color=violet!80!black, fill=violet!10] (0,1.4) rectangle (14.5,2.8);
\node (lamrfail) at (1.2,6) {$\lamrsharfail$:};
\node (expr1) [right of=lamrfail, xshift=.3cm] {$\mathbb{N}$};
\node (expr2) [right of=expr1, xshift=2cm] {$\mathbb{M} \pequiv \mathbb{M}' $};
\draw[arrow] (expr1) --  (expr2);
\node (lamrsharfail) at (1.2,2) {$\spi$:};
\node (transl1) [right of=lamrsharfail, xshift=.3cm] {$\piencodf{\mathbb{N}}$};
\node (transl2) [right of=transl1, xshift=2cm] {$Q=\piencodf{\mathbb{M}'}$};
\draw[arrow, dotted] (transl1) -- node[anchor= south] {$*$ }(transl2);
\node (enc1) at (2,4) {$\piencodf{\cdot }$};
\node (refcomp) [right of=enc1, xshift=.9cm]{Thm~\ref{l:app_completenesstwo}};
\node  at (6.5,4) {$\piencodf{\cdot }$};
\draw[arrow, dotted] (expr1) -- (transl1);
\draw[arrow, dotted] (expr2) -- (transl2);
\node (opcom) at (11,7.0){Operational Soundness};
\node (expr1shar) [right of=expr2, xshift=1.8cm] {$\mathbb{N}$};
\node (expr2shar) [right of=expr1shar, xshift=2.8cm] {$\mathbb{N'}$};
\node (equivilencebit) [right of=expr1shar, xshift=2.4cm , yshift = -0.3cm] {$\pequiv$};
\draw[arrow, dotted] (expr1shar) -- node[anchor= south] {*} (expr2shar);
\node (transl1shar) [right of=transl2, xshift=1.8cm] {$\piencodf{\mathbb{N}}$};
\node (transl2shar) [right of=transl1shar, xshift=1cm] {$Q$};
\node (expr3shar) [right of=transl2shar, xshift=.8cm]{$Q'=\piencodf{\mathbb{N'}}$};
\draw[arrow,dotted] (expr2shar) --  (expr3shar);
\draw[arrow,dotted] (expr1shar) -- (transl1shar);
\node (enc2) at (8.5,4) {$\piencodf{\cdot }$};
\node (refsound) [right of=enc2, xshift=1.5cm]{Thm~\ref{l:app_soundnesstwo}};
\node  at (13.8,4) {$\piencodf{\cdot }$};
\draw[arrow] (transl1shar) -- node[anchor= south] {*} (transl2shar);
\draw[arrow,dotted] (transl2shar) -- node[anchor= south] {*}(expr3shar);
\end{tikzpicture}
\caption{Operational Correspondence for $\piencodf{\cdot }$\label{f:opcs}}	
\end{figure}

\begin{defi}[Partially Open Terms]
We say that a $\lamrsharfail$-term $M$ is \emph{partially open} if $\forall x \in \lfv{M}$ (cf. \defref{d:fvsh}) implies that $x$ is not a sharing variable.
\end{defi}

Notice that the class of open terms (no conditions on free variables) subsumes the class of partially open terms, which in turn  subsumes the class of closed terms. Consider the following example.

\begin{exa}[Partially Open Terms]
    We give three examples of well-formed $\lamrsharfail$-terms:
    \[
    \begin{aligned}
        M_1 &= \lambda x. x_1 [x_1 \leftarrow x] 
        &
        \quad M_2 &= \lambda x. (x_1 \bag{y}) [x_1 \leftarrow x] 
        &
        \quad M_3 &= (x_1 \bag{y}) [x_1 \leftarrow x]     
    \end{aligned}
   \]
    Here the only closed term is $M_1$ as $M_2$ has one free variable (i.e., $y$)  and $M_3$ has two free variables ($y$ and $x$). 
    While $M_2$ is partially open, $M_3$ is not because $x$ is a sharing variable. 
\end{exa}

The following proposition  will be used in the proof of operational completeness (Theorem~\ref{l:app_completenesstwo}) and operational soundness (Theorem~\ref{l:app_soundnesstwo}). The proposition relies on well-formed partially open terms; however, in the proof of operational correspondence we only consider closed terms rather then partially open terms.

\secondrev{
\begin{restatable}[]{propo}{encodingreduces}
\label{prop:NEEDTONAME}
    Suppose $N$ is a well-formed, partially open $\lamrsharfail$-term with $\headf{N} = x$.
     Then, there exist an index set $I$, names $\widetilde{y}$ and $n$, and processes $P_i$ such that the following four conditions hold:
    \begin{enumerate}
        \item $$\piencodf{ N }_{u} \red^* \bigoplus_{i \in I}(\nu \widetilde{y})(\piencodf{ x }_{n} \mid P_i) $$
    \end{enumerate}
    \begin{enumerate}[resume]
        \item There exists a \lamrsharfail-term $N'$ such that $N \pequiv N'$ and:
            $$\piencodf{ N' }_{u} = \bigoplus_{i \in I}(\nu \widetilde{y})(\piencodf{ x }_{n} \mid P_i) $$
    \end{enumerate}
    \begin{enumerate}[resume]
        \item For any well-formed and partially open \lamrsharfail-term $M$:
        $$\piencodf{ N\headlin{ M/x } }_{u} \red^* \bigoplus_{i \in I}(\nu \widetilde{y})(\piencodf{ M }_{n} \mid P_i) $$
    \end{enumerate}
    \begin{enumerate}[resume]
        \item There exists a \lamrsharfail-term $M'$ such that $M' \pequiv N\headlin{ M/x }$ and:
            $$\piencodf{ M' }_{u} = \bigoplus_{i \in I}(\nu \widetilde{y})(\piencodf{ M }_{n} \mid P_i)  $$
    \end{enumerate}
    %
    %
    %
    %
\end{restatable}
}

\secondrev{
\begin{proof}
    By induction on the structure of $N$. We briefly sketch the strategy for proving it case below, but 
     the  complete proof can be found in \appref{compandsucctwo}.
    \begin{enumerate}
        \item The interesting cases are for $N=M\linexsub{N'/x}$ and $N=M\shar{\widetilde{y}}{y}\esubst{B}{y}$, when   $\size{B}=\size{\widetilde{y}} = 0$ and  $\headf{M}=x$. 
            Notice that $N=M\shar{\widetilde{y}}{y}$ is not a case, because of the definition of partially open term: $y$ is a sharing variable in $N$ and $y \in \lfv{N}$.
            The other cases follow easily by the induction hypothesis.
        \item  Reductions are only introduced by explicit weakening, which can be eliminated via the precongruence.
        \item Follows from (1) and the fact that linear head substitution can be placed deeper within the term until it reaches the head variable.
        \item Follows from (2) and (3). \qedhere
    \end{enumerate}
\end{proof}
}




Because of the diamond property (Proposition \ref{prop:conf1_lamrsharfail}), it suffices to consider a completeness result based on a single reduction step in $\lamrsharfail$:

\begin{restatable}[Operational Completeness]{thms}{opcomplete}
\label{l:app_completenesstwo}
Let $\expr{N} $ and $ \expr{M} $ be well-formed, \srev{partially open} $\lamrsharfail $ expressions. If $ \expr{N}\red \expr{M} $ then there exist $Q$ and $\expr{M}'$
such that $\expr{M}' \pequiv \expr{M}$, $\piencodf{\expr{N}}_u  \red^* Q = \piencodf{\expr{M}'}_u$.
\end{restatable}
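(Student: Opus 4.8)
The plan is to proceed by induction on the derivation of $\expr{N} \red \expr{M}$, with a case analysis on the last reduction rule applied (cf.\ \figref{fig:share-reductfailure}). For each non-contextual rule I would unfold both $\piencodf{\expr{N}}_u$ and $\piencodf{\expr{M}}_u$ following \figref{fig:encfail} and exhibit an explicit reduction sequence $\piencodf{\expr{N}}_u \red^* \piencodf{\expr{M}}_u$ in \spi. The contextual rules $\redlab{RS{:}TCont}$ and $\redlab{RS{:}ECont}$ follow from the induction hypothesis together with the closure of $\red$ under parallel composition, restriction, and non-deterministic choice (Rules $\redlab{Par}$, $\redlab{Res}$, $\redlab{NChoice}$ in \figref{fig:redspi}), using that $\piencodf{\cdot}_u$ is compositional, that it commutes with $\oplus$, and that the structural congruence law distributing $\oplus$ under $(\nu x)(\cdot \mid \cdot)$ lets the resulting sum be pulled out of the translated context.

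First I would dispatch the simplest base case, $\redlab{RS{:}Beta}$, where $\expr{N} = (\lambda x.M[\widetilde{x}\leftarrow x])B$ and $\expr{M} = M[\widetilde{x}\leftarrow x]\esubst{B}{x}$. Unfolding $\piencodf{\expr{N}}_u$ exposes, in each summand of $\bigoplus_{B_i \in \perm{B}}$, the prefixes $v.\overline{\some}$ and $v(x)$ of $\piencodf{\lambda x.M[\widetilde{x}\leftarrow x]}_v$ against the complementary $v.\some_{u,\lfv{B}}$ and $\outact{v}{x}$. A synchronization via $\redlab{Some}$, then one via $\redlab{Comm}$, and finally a forwarder step via $\redlab{Forw}$ (which replaces $v$ by $u$) yield exactly $\bigoplus_{B_i}(\nu x)(\piencodf{M[\widetilde{x}\leftarrow x]}_u \mid \piencodf{B_i}_x) = \piencodf{\expr{M}}_u$, so $\piencodf{\expr{N}}_u \red^3 \piencodf{\expr{M}}_u$ per branch, hence $\red^*$ overall.

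The genuinely technical cases are $\redlab{RS{:}Ex \dash Sub}$ and $\redlab{RS{:}Fail}$, both concerning the interaction between the translation of a sharing construct $M[x_1,\ldots,x_k\leftarrow x]$ and that of a bag $B$. For $\redlab{RS{:}Ex \dash Sub}$ (where $k = \size{B}$) I would prove, by induction on $k$, an auxiliary lemma stating that the protocol on $x$ between $\piencodf{M[x_1,\ldots,x_k\leftarrow x]}_u$ and $\piencodf{B_i}_x$ reduces to the nested explicit linear substitutions, i.e.\ $(\nu x)(\piencodf{M[\widetilde{x}\leftarrow x]}_u \mid \piencodf{B_i}_x) \red^* \piencodf{M\linexsub{B_i(1)/x_1}\cdots\linexsub{B_i(k)/x_k}}_u$; the shape of this sequence is exactly the one displayed step by step in Example~\ref{ex:encsucc}. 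For $\redlab{RS{:}Fail}$ (where $k \neq \size{B}$), the same protocol instead leads the auxiliary name $y_i$ (lack of resources) or the prefix $x.\overline{\none}$ (excess of resources) to trigger, via Rule $\redlab{None}$, a cascade of $\overline{\none}$ signals that collapses each branch to $u.\overline{\none} \mid w_1.\overline{\none} \mid \cdots \mid w_n.\overline{\none} = \piencodf{\fail^{\widetilde{y}}}_u$, splitting into the two sub-cases witnessed by Examples~\ref{ex:encfailexcess} and~\ref{ex:encfaillack}. The main obstacle is managing these sequences uniformly in $k$ and $\size{B}$: the number of synchronizations grows with the shared variables and bag elements, and the lack/excess sub-cases propagate failure along different names, so I expect to isolate each as a dedicated lemma proved by induction.

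Finally, for $\redlab{RS{:}Lin \dash Fetch}$ (with $\headf{M} = x$) I would rely on Proposition~\ref{prop:correctformfail} to reduce $\piencodf{M}_u$ to a head-exposed form in which the prefix $x.\overline{\some};[x\leftrightarrow n]$ of $\piencodf{x}_n$ is available; this synchronizes with $x.\some_{\lfv{N}}$ in $\piencodf{M\linexsub{N/x}}_u = (\nu x)(\piencodf{M}_u \mid x.\some_{\lfv{N}};\piencodf{N}_x)$, and the ensuing forwarder step plants $\piencodf{N}_x$ at the head position, yielding $\piencodf{M\headlin{N/x}}_u$. The consumption rules $\redlab{RS{:}Cons_1}$, $\redlab{RS{:}Cons_2}$, $\redlab{RS{:}Cons_3}$ are handled analogously but more directly: in each, a $\overline{\none}$ prefix arising from $\piencodf{\fail^{\widetilde{x}}}_{\cdot}$ synchronizes with a matching $\some$ dependency through Rule $\redlab{None}$, propagating the accumulated free variables exactly as prescribed by the corresponding reduction rule. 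Throughout, all equalities of target processes are read up to structural congruence $\equiv$, and the per-branch reductions are lifted to the full $\bigoplus$ using $\redlab{NChoice}$ and commutativity of $\oplus$. See \appref{compandsucctwo} for full details.
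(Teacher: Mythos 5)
Your proposal is correct and follows essentially the same route as the paper's proof: induction on the last reduction rule, unfolding the translation of \figref{fig:encfail} to exhibit explicit \spi reduction sequences (Some/Comm/Forw for $\redlab{RS{:}Beta}$, the sharing/bag protocol for $\redlab{RS{:}Ex \dash Sub}$ and $\redlab{RS{:}Fail}$, $\overline{\none}$-propagation for the consumption rules), appealing to Proposition~\ref{prop:correctformfail} for $\redlab{RS{:}Lin \dash Fetch}$, and dispatching the contextual rules by the induction hypothesis. Your packaging of the $\redlab{RS{:}Ex \dash Sub}$ and $\redlab{RS{:}Fail}$ protocols as auxiliary lemmas proved by induction on $k$ is only an organizational refinement of the same reduction sequences the paper displays for representative cases.
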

\begin{proof}
By induction on the reduction rule applied to infer $\mathbb{N}\red \mathbb{M}$. The case in which
\(\mathbb{N}\red_{\redlab{RS:Lin\dash~Fetch}} \mathbb{M}\) happens for $\mathbb{N}=M\linexsub{N'/x}$ with $\headf{M}=x$, and $\mathbb{M}=M\headlin{N'/x}$. The translation of $\mathbb{N}$ is of the form (omitting details):
\[
\begin{aligned}
\piencodf{\revd{B29}{\expr{N}}}_u&= (\nu x) (\piencodf{M}_u\mid x.\some_{\lfv{N'}};\piencodf{N'}_x)\\
 &\red^* (\nu x) (\bigoplus_{i \in I}(\nu \widetilde{y})(\piencodf{ x }_{n} \mid P_i)\mid x.\some_{\lfv{N'}};\piencodf{N'}_x), ~\text{by Proposition~\ref{prop:NEEDTONAME}}\\
 &\red^* \revd{B29}{\bigoplus_{i \in I}(\nu \widetilde{y})(P_i\mid \piencodf{N'}_n )}= \piencodf{\mathbb{M}}_{u}
\end{aligned}
\]
The other cases follow by analyzing reductions from the translation of $\mathbb{N}$. The full proof can be found in \appref{compandsucctwo}.
\end{proof}

\revd{B41}{Notice how Proposition~\ref{prop:NEEDTONAME} requires a term to be partially open; however, we prove operational correspondence for closed terms. The reason for this is that we start from a source closed term in \lamrfail, which is translated by $\recencodopenf{\cdot}$ into a closed $\lamrsharfail$-term.
}

\begin{exa}[Cont. Example~\ref{ex:encfailexcess}]
\label{ex:enc2}
\revd{B29}{Recall that $M$ and $N$ are well-formed with $\lfv{N} = \lfv{M} = \emptyset$}, we can verify that  $ N [ \leftarrow x] \esubst{ \bag{M} }{x}$ and $\fail^{\lfv{N} \cup \lfv{M}}$  are also well-formed. We have
 $$
 N [ \leftarrow x] \esubst{ \bag{M} }{x} \red_{\redlab{RS{:}Fail}} \fail^{\lfv{N} \cup \lfv{M}}
 $$  
In \spi, this reduction is mimicked as $$\piencodf{ N[ \leftarrow x] \esubst{ \bag{M} }{ x} }_u \red^*  \piencodf{ \fail^{\lfv{N} \cup \lfv{M}} }_u.$$
In fact, 
\[\small
     \begin{aligned}
        \piencodf{ N[ \leftarrow x] \esubst{ \bag{M} }{ x} }_u 
        =~&   (\nu x)( \piencodf{ N[ \leftarrow x]}_u \mid \piencodf{\bag{M} }_x ) \\
        =~& (\nu x)( x. \overline{\some}. \outact{x}{y_i} . ( y_i . \some_{u} ;y_{i}.\close; \piencodf{N}_u \mid x. \overline{\none}) \mid 
        \\
        =~&  x.\some_{\emptyset } ; x(y_i). x.\some_{y_i};x.\overline{\some} ; \outact{x}{x_i}. (x_i.\some_{\emptyset} ; \piencodf{M}_{x_i} \mid \piencodf{\oneb}_x \mid y_i. \overline{\none}) )
       \\
         \red~& (\nu x)(  \outact{x}{y_i} . ( y_i . \some_{u} ;y_{i}.\close; \piencodf{N}_u \mid x. \overline{\none}) \mid \\
       & \hspace{.8cm}  x(y_i). x.\some_{y_i};x.\overline{\some} ; \outact{x}{x_i}. (x_i.\some_{\emptyset} ; \piencodf{M}_{x_i} \mid \piencodf{\oneb}_x \mid y_i. \overline{\none}) )
       \\
       \red~& (\nu x)(   y_i . \some_{u} ;y_{i}.\close; \piencodf{N}_u \mid x. \overline{\none} \mid \\
       & \hspace{.8cm}  x.\some_{y_i };x.\overline{\some} ; \outact{x}{x_i} . (x_i.\some_{\emptyset} ; \piencodf{M}_{x_i} \mid \piencodf{\oneb}_x \mid y_i. \overline{\none}) )
       \\
       \red~& (\nu x)(   y_i . \some_{u} ;y_{i}.\close; \piencodf{N}_u \mid   y_i.\overline{\none} )
       \\
       \red~& u. \overline{\none} \\
       =~&  \piencodf{ \fail^{\lfv{N} \cup \lfv{M}} }_u
     \end{aligned}
     \]    
    \end{exa}

To state soundness we rely on the congruence relation  $\pequiv$, given in \figref{fig:rsPrecongruencefailure}.

\begin{nota}
Recall the congruence $\pequiv$ for \lamrsharfail, given in Figure~\ref{fig:rsPrecongruencefailure}.
We write $N \red_{\pequiv} N'$ iff $N \pequiv N_1 \red N_2 \pequiv N' $, for some $N_1, N_2$. Then, $\red_{\pequiv}^*$ is the reflexive, transitive closure of $\red_{\pequiv}$. 
We use the notation $ M \red_{\pequiv}^i N$ to state that $M$ performs $i$ steps of $\red_{\pequiv}$ to $N$ in $i \geq 0$ steps. When $i = 0$ it refers to no reduction taking place. 
\end{nota}


\begin{restatable}[Operational Soundness]{thms}{opsound}
\label{l:app_soundnesstwo}
Let $\expr{N}$ be a 
well-formed, \secondrev{partially open}  $ \lamrsharfail$ expression. 
If $ \piencodf{\expr{N}}_u \red^* Q$
then there exist $Q'$  and $\expr{N}' $ such that 
$Q \red^* Q'$, $\expr{N}  \red^*_{\pequiv} \expr{N}'$ 
and 
$\piencodf{\expr{N}'}_u = Q'$.
\end{restatable}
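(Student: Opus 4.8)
The plan is to prove the statement by induction on the length $k$ of the reduction sequence $\piencodf{\expr{N}}_u \red^* Q$. The base case $k = 0$ is immediate: here $Q = \piencodf{\expr{N}}_u$, so I take $\expr{N}' = \expr{N}$ and $Q' = Q$, and the three requirements ($Q \red^* Q'$, $\expr{N} \red^*_{\pcong} \expr{N}'$, and $\piencodf{\expr{N}'}_u = Q'$) hold trivially. The whole weight of the argument therefore falls on the inductive step, where I would perform a case analysis on the structure of the well-formed, closed expression $\expr{N}$.

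For the inductive step the central observation is that the reductions available from $\piencodf{\expr{N}}_u$ are tightly constrained by the shape of the translation in \figref{fig:encfail}: they can only perform the administrative exchanges that simulate a single source reduction $\expr{N} \red_{\redlab{R}} \expr{N}_1$. I would first read off, from the head of $\expr{N}$, which source rule $\redlab{R}$ governs this step, and then invoke operational completeness (Theorem~\ref{l:app_completenesstwo}) to obtain the canonical target $\piencodf{\expr{N}_1}_u$ that the full simulation reaches. The interesting cases are the application $M\, B$, the explicit substitution $M[\widetilde{x}\leftarrow x]\esubst{B}{x}$, and the explicit linear substitution $M\linexsub{N'/x}$; the remaining forms (a variable, an abstraction, and $\fail^{\widetilde{x}}$) either admit no reduction or reduce only inside sub-terms, and are dispatched directly or by the induction hypothesis. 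For a sum $\expr{M}_1 + \expr{M}_2$ the translation is $\piencodf{\expr{M}_1}_u \oplus \piencodf{\expr{M}_2}_u$, and every target step occurs within one summand via $\redlab{NChoice}$ (using $\equiv$ to expose the left summand), so the claim follows by the induction hypothesis applied to the active summand.

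The argument then hinges on whether the observed sequence $\piencodf{\expr{N}}_u \red^* Q$ completes the simulation of this first source step. If it does not, then $Q$ is an intermediate process still engaged in the protocol; here I would use confluence of $\spi$ (a consequence of type preservation and of the confluence of cut-elimination) to drive $Q$ forward through the remaining administrative steps to $Q' := \piencodf{\expr{N}_1}_u$, and set $\expr{N}' := \expr{N}_1$, so that $\expr{N} \red_{\pcong} \expr{N}'$ and $Q \red^* Q'$ as required. If the sequence does complete the first step, then by confluence $Q$ is reachable from $\piencodf{\expr{N}_1}_u$, and I factor the run as $\piencodf{\expr{N}}_u \red^* \piencodf{\expr{N}_1}_u \red^* Q$, where the second segment is strictly shorter than $k$. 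Since $\expr{N}_1$ is again well-formed and closed---by subject reduction (Theorem~\ref{t:app_lamrsharfailsr}) together with preservation of closedness and of well-formedness under the type-preserving precongruence of \figref{fig:rsPrecongruencefailure}---the induction hypothesis applies to $\piencodf{\expr{N}_1}_u \red^* Q$, yielding $Q'$ and $\expr{N}'$ with $Q \red^* Q'$, $\expr{N}_1 \red^*_{\pcong} \expr{N}'$, and $\piencodf{\expr{N}'}_u = Q'$; prepending $\expr{N} \red_{\pcong} \expr{N}_1$ then delivers $\expr{N} \red^*_{\pcong} \expr{N}'$.

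The main obstacle I anticipate is the technical core just sketched: proving that the simulation of each source reduction constitutes a confluent fragment with a \emph{unique} exit point $\piencodf{\expr{N}_1}_u$, so that every partial run reconverges to it and no run escapes to an unintended normal form. This demands careful bookkeeping of the administrative exchanges in \figref{fig:encfail}---the confirmations `$x.\overline{\some}$'/`$x.\some_{\widetilde{w}}$', the auxiliary name passing on the $y_i$, and the session closures---and, most delicately, the propagation of the failure signals `$x.\overline{\none}$' in the rules $\redlab{RS{:}Fail}$, $\redlab{RS{:}Cons_1}$, $\redlab{RS{:}Cons_2}$, and $\redlab{RS{:}Cons_3}$, where one must verify that failure is relayed to exactly the free names recorded in the resulting $\fail^{\widetilde{y}}$. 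Here Proposition~\ref{prop:correctformfail} supplies the essential characterization for the non-failing head-variable case, while confluence of $\spi$ guarantees that the order of the independent administrative steps is immaterial, so that every intermediate $Q$ indeed reconverges to the intended translation.
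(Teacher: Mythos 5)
There is a genuine gap, and it sits exactly where you locate the ``main obstacle'': your argument asks confluence of \spi to deliver \emph{directed} reachability claims that confluence cannot provide. Confluence only gives joinability of co-initial runs. It does not justify (i) driving an incomplete run $\piencodf{\expr{N}}_u \red^* Q$ forward to \emph{exactly} $Q' := \piencodf{\expr{N}_1}_u$, nor (ii) factoring a completed run as $\piencodf{\expr{N}}_u \red^* \piencodf{\expr{N}_1}_u \red^* Q$. From confluence you obtain only a common reduct $R$ of $Q$ and $\piencodf{\expr{N}_1}_u$; since $\piencodf{\expr{N}_1}_u$ is in general not a normal form, $R$ may lie strictly beyond it, and relating $R$ to a translation image is precisely the theorem being proven---the appeal is circular. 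Worse, pinning $\expr{N}' := \expr{N}_1$ is not achievable in general: from $\piencodf{M\,B}_u$ the target may interleave steps internal to $\piencodf{M}_v$ (which, via Rule~$\redlab{RS{:}TCont}$, correspond to source reductions of $M$ \emph{inside} the application) with the outer protocol on $v$, so $Q$ need neither reach nor be reachable from $\piencodf{\expr{N}_1}_u$ for any single-step reduct $\expr{N}_1$; the witness $\expr{N}'$ must be allowed to lie several $\red_{\pcong}$ steps ahead. This also breaks your induction measure: the ``strictly shorter second segment'' exists only if the given run literally passes through the exit point, which is the very fact in question.

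The paper's proof is organized to avoid this. It proceeds by structural induction on $\expr{N}$ with a sub-induction on the length of the run, and it confronts interleaving directly: every run from, say, $\piencodf{M\,B}_u$ is decomposed into $m$ steps internal to $\piencodf{M}_v$ followed by $n$ steps whose first is the synchronization on $v.\some_{u,\lfv{B}}$ (type preservation guaranteeing the internal steps cannot discard that synchronization). The $m$-part is discharged by the \emph{structural} induction hypothesis on $M$; for the $n$-part the proof identifies the only two shapes of $M$ exposing an unguarded $v.\overline{\some}$ or $v.\overline{\none}$ (an abstraction under explicit linear substitutions, or $\fail^{\widetilde{z}}$), and then \emph{enumerates} the intermediate processes $Q_1, Q_2, \ldots$ of the protocol, checking one by one that each reduces forward to the translation of the corresponding source reduct; the same explicit-state bookkeeping (e.g., $Q_{6l+5}$, $Q_{7l+6}$ in the explicit-substitution and failure cases) is what certifies forward completion. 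This explicit completion argument is exactly the content your proposal delegates to ``confluence'' and cannot be recovered from it; to salvage a run-length induction you would instead need a standardization or commutation lemma (protocol steps permute before steps internal to persistent subterms), which amounts to the same case-by-case work the paper already does.
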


\begin{proof}[Proof (Sketch)]
By induction on the structure of $\mathbb{N}$ with sub-induction on the number of reduction steps in $ \piencodf{\expr{N}}_u \red^* Q$. The cases  in which $\mathbb{N}=x$, or $\mathbb{N}=\fail^{\widetilde{x}}$, or $\mathbb{N}=\lambda x. M\shar{\widetilde{x}}{x}$, are easy since there are no reductions starting from $\piencodf{\mathbb{N}}_u$, i.e., $ \piencodf{\expr{N}}_u \red^0 Q$  which implies $\piencodf{\mathbb{N}}_u=\piencodf{\mathbb{N'}}_u=Q=Q'$ and the result follows trivially. The analysis for some  cases are exhaustive, for instance, when   $\mathbb{N}=(M \ B)$ or $\mathbb{N}=M\shar{\widetilde{x}}{x} \esubst{B}{x}$, there are several sub-cases  to be considered: (i) $B$ being equal to  $\oneb$ or not; (ii) $\size{B}$   matching the number of occurrences of the variable in $M$ or not; (iii) $M$ being a failure term or not.

We now discuss one of these cases to illustrate the recurring idea used in the proof: let $\mathbb{N}=(M \ B)$ and suppose that we are able to perform $k> 1$ steps  to a process $Q$, i.e., 
 
 \begin{equation}\label{eq:enc_n}
\piencodf{\mathbb{N}}_{u}= \piencodf{(M \ B)}_{u}= \bigoplus_{B_i \in \perm{B}} (\nu v)(\piencodf{M}_v \mid v.\some_{u , \lfv{B}} ; \outact{v}{x} . ([v \leftrightarrow u] \mid \piencodf{B_i}_x ) )  \red^k Q
 \end{equation}
 
 Then there exist an  $\spi$ process $R$ and integers $n,m$ such that $k=m+n$ and
   \[
            \begin{aligned}
               \piencodf{\expr{N}}_u &\red^m  \bigoplus_{B_i \in \perm{B}} (\nu v)( R \mid v.\some_{u , \lfv{B}} ; \outact{v}{x} . ( \piencodf{ B_i}_x \mid [v \leftrightarrow u] ) ) \red^n  Q\\
            \end{aligned}
            \]
           where the first $m \geq 0$ reduction steps are  internal to $\piencodf{ M}_v$; type preservation in \spi ensures that, if they occur,  these reductions  do not discard the possibility of synchronizing with $v.\some$. Then, the first of the $n \geq 0$ reduction steps towards $Q$ is a synchronization between $R$ and $v.\some_{u, \lfv{B}}$. 
           
           We will  consider the case when $m = 0$ and $n \geq 1$. Then  $R = \piencodf{\expr{M}}_u \red^0 \piencodf{\expr{M}}_u$ and  there are two possibilities of having an unguarded $v.\overline{\some}$ or $v.\overline{\none}$ without internal reductions:
           \begin{enumerate}[(i)]
               \item $M = (\lambda x . M' [\widetilde{x} \leftarrow x]) \linexsub{N_1 / y_1} \cdots \linexsub{N_p / y_p} \qquad (p \geq 0)$
                        
            \item $M = \fail^{\widetilde{z}}$
           \end{enumerate}
           
    Firstly we use case (i) to express the need for the reduction $\expr{N}  \red^*_{\pequiv} \expr{N}'$. In this case $\mathbb{N}=((\lambda x . M' [\widetilde{x} \leftarrow x]) \linexsub{N_1 / y_1} \cdots \linexsub{N_p / y_p} \ B)$ and $\piencod{\mathbb{N}}_u$ may perform synchronizations where both $\piencod{\lambda x . M'}_v$ and $\piencod{B}_x$ synchronize across their shared channel. Here we use the congruence relation as follows:
    \[
    \begin{aligned}
        \mathbb{N} & = ((\lambda x . M' [\widetilde{x} \leftarrow x]) \linexsub{N_1 / y_1} \cdots \linexsub{N_p / y_p} \ B) \\
        & \pequiv ((\lambda x . M'  [\widetilde{x} \leftarrow x])\ B) \linexsub{N_1 / y_1} \cdots \linexsub{N_p / y_p} \\
    \end{aligned}
    \]
    This enables the abstraction $\lambda x . M'$ to synchronize with the bag $B$.
    
    Now we will develop case (ii):
     \[
    \begin{aligned}
    \piencodf{M}_v &= \piencodf{\fail^{\widetilde{z}}}_v= \piencodf{\fail^{\widetilde{z}}}_v= v.\overline{\none} \mid \widetilde{z}.\overline{\none} \\
    \end{aligned}
    \]
 With this shape for $M$, the translation and reductions from (\ref{eq:enc_n}) become
 \begin{equation}\label{eq:red_n}
 \begin{aligned}
 \piencodf{\expr{N}}_u  = & \bigoplus_{B_i \in \perm{B}} (\nu v)( \piencodf{ M}_v \mid v.\some_{u, \lfv{B}} ; \outact{v}{x} . (  \piencodf{ B_i}_x \mid [v \leftrightarrow u] ) )\\
  = & \bigoplus_{B_i \in \perm{B}} (\nu v)(  v.\overline{\none}\mid \widetilde{z}.\overline{\none} \mid v.\some_{u, \lfv{B}} ; \outact{v}{x} . (  \piencodf{ B_i}_x \mid [v \leftrightarrow u] ) )\\
  \red & \bigoplus_{B_i \in \perm{B}}   u.\overline{\none} \mid \widetilde{z}.\overline{\none}  \mid \lfv{B}.\overline{\none} 
 \end{aligned}
\end{equation}
   We also have that 
    $  \expr{N} = \fail^{\widetilde{z}} \ B  \red \sum_{\perm{B}} \fail^{\widetilde{z} \cup \lfv{B} }  = \expr{M}$. 
 Furthermore, we have:
     \begin{equation}\label{eq:red_m}
     \begin{aligned}
       \piencodf{\expr{M}}_u &= \piencodf{\sum_{\perm{B}} \fail^{\widetilde{z} \cup \lfv{B} }}_u \\
       &= \bigoplus_{\perm{B}}\piencodf{ \fail^{\widetilde{z} \cup \lfv{B} }}_u\\
     &  = \bigoplus_{\perm{B}}    u.\overline{\none} \mid \widetilde{z}.\overline{\none}  \mid \lfv{B}.\overline{\none}
        \end{aligned}
    \end{equation}
From reductions in (\ref{eq:red_n}) and (\ref{eq:red_m}) one has $\piencodf{\expr{N}}_u\red \piencodf{\expr{M}}_u$, and the result follows with $n=1$ and $\piencodf{\expr{M}}_u=Q=Q'$. The full proof can be found in \appref{compandsucctwo}.
\end{proof}

\subsubsection{Success Sensitiveness}

Finally, we consider success sensitiveness. This requires extending \lamrsharfail and \spi with success predicates. 


\begin{defi}
We extend the syntax of $\spi$ processes (Definition~\ref{d:spi}) 
with the $\checkmark$ construct, which we assume well typed. 
Also, we extend Definition~\ref{def:enc_lamrsharpifail} by 
defining $\piencodf{\checkmark}_u = \checkmark$
\end{defi}

\begin{defi}
We say that a process occurs \emph{guarded} when it occurs behind a prefix (input, output, closing of channels and non-deterministic session behavior). That is, $P$ is guarded if  $\alpha.P$ or $ \alpha;P$, where $ \alpha = \overline{x}(y), x(y), x.\overline{\close}, x.\close,$ $ x.\overline{\some}, x.\some_{(w_1, \cdots, w_n)} $. 
We say it occurs \emph{unguarded} if it is not guarded for any prefix.
\end{defi}

\begin{defi}[Success in \spi]
\label{def:Suc4}
We extend the syntax of $\spi$ processes  
with the $\checkmark$ construct, which we assume well-typed. 
We define 
$\succp{P}{\checkmark}$ to hold whenever there exists a $P'$
such that 
$P \red^* P'$
and $P'$ contains an unguarded occurrence of $\checkmark$.
\end{defi}


\begin{restatable}[Preservation of Success]{propo}{pressucctwo}
\label{Prop:checkprespi}
The $\checkmark$ at the head of a \revd{B31}{partially open} term is preserved to an unguarded occurrence of $\checkmark$ when applying the translation $\piencodf{\cdot}_u$ up to reductions and vice-versa. That is to say:
\begin{enumerate}
    \item $\forall M \in \lamrsharfail: \quad \headf{M} = \checkmark \implies \piencodf{M}_u \red^* (P \mid \checkmark) \oplus Q $
    \item $\forall  M \in \lamrsharfail: \quad \piencodf{M}_u =  (P \mid \checkmark) \oplus Q \implies \headf{M} = \checkmark$
\end{enumerate} 

\end{restatable}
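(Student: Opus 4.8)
The plan is to prove both implications by induction on the structure of the (closed) \lamrsharfail-term $M$, extended with the success constant $\checkmark$ (recall $\headf{\checkmark} = \checkmark$ and $\piencodf{\checkmark}_u = \checkmark$). The guiding observation is that $\headf{\cdot}$ (\defref{d:headshar}) is \emph{transparent} for exactly the constructs whose translation in \figref{fig:encfail} places the inner translation in an unguarded position (up to $\equiv$): application ($\headf{M\ B} = \headf{M}$), explicit linear substitution ($\headf{M\linexsub{N/x}} = \headf{M}$), non-deterministic sum, and—after the matching-resource protocol has run—explicit substitution. All remaining constructs (variable, abstraction, failure, empty and non-empty bag, and the leading prefixes of the sharing translation) begin with a \spi prefix that guards everything beneath it, and correspondingly their head is never $\checkmark$. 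I would record this correspondence once and use it in both directions.

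Part (2) is the syntactic direction and the easier one. Proceeding by case analysis on $M$, I would show that an unguarded occurrence of $\checkmark$ in $\piencodf{M}_u$ can only originate from the base case $M = \checkmark$, or from a head-transparent construct whose inner translation already exhibits an unguarded $\checkmark$; in the latter cases the enclosing context is a parallel composition under a restriction inside a $\bigoplus$, and using scope extrusion (legal since $\checkmark$ has no free names) together with the distribution law $(\nu x)(P \para (Q \oplus R)) \equiv (\nu x)(P \para Q)\oplus(\nu x)(P\para R)$, the unguarded $\checkmark$ is seen to persist. Since $\headf{\cdot}$ is transparent precisely for these constructs, the conclusion $\headf{M} = \checkmark$ follows by the induction hypothesis. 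For every other construct the top symbol of $\piencodf{M}_u$ is a guard, so the hypothesis of the implication is false and it holds vacuously.

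Part (1) is the substantive direction. The base case $M = \checkmark$ is immediate, since $\piencodf{\checkmark}_u = \checkmark$ is already (up to $\equiv$) of the required shape. For $M = M'\ B$ and $M = M'\linexsub{N/x}$ with $\headf{M'} = \checkmark$, the induction hypothesis gives $\piencodf{M'}_{w} \red^* (P'\mid\checkmark)\oplus Q'$ at the relevant name $w$, and pushing these reductions through the surrounding restriction, parallel composition, and non-deterministic sum—again via the distribution law above and scope extrusion past $\checkmark$—yields a reduct with an unguarded $\checkmark$. The delicate case is the explicit substitution $M = (M'[\widetilde{x}\leftarrow x])\esubst{B}{x}$ with $|\widetilde{x}| = \size{B}$ and $\headf{M'} = \checkmark$: here the translation of the body is \emph{guarded} behind the sharing-versus-bag synchronization protocol. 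I would first drive this protocol to completion—exactly the chain of reductions illustrated in Example~\ref{ex:encsucc}, which is enabled precisely because the number of shared variables matches the size of the bag (no failure)—so as to expose an unguarded copy of $\piencodf{M'}_u$, and only then apply the induction hypothesis.

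The main obstacle is this last case: one must argue that the head-exposing reductions are always available. This is where I expect to lean on the syntactic side-conditions of \lamrsharfail (a linear-substitution variable is never a sharing variable, and every shared variable occurs exactly once), which guarantee that along the head spine no sharing construct is blocked on a free name without a matching bag component; together with the shape/confluence result Proposition~\ref{prop:correctformfail}, this ensures that the protocol on the relevant name can be completed, so the guarded $\piencodf{M'}_u$ is reached before any further synchronization on $x$ is required. Once this is settled, both implications combine (with operational completeness and soundness) to deliver success sensitiveness.
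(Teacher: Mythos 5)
Your proposal is correct and follows essentially the same route as the paper: structural induction on $M$, with the observation that $\headf{\cdot}$ is transparent exactly for the constructs (application, explicit linear substitution, and size-matching explicit substitution) whose translation leaves the inner process unguarded, the explicit-substitution case handled by first running the sharing-versus-bag protocol to expose $\piencodf{M'}_u$ before applying the induction hypothesis, and the guarded constructs dispatched vacuously in part~(2). The only divergence is cosmetic: where you appeal to Proposition~\ref{prop:correctformfail} (which, strictly, assumes the head is a \emph{variable}, not $\checkmark$), the paper simply exhibits the protocol reductions directly, which suffices since the match $|\widetilde{x}| = \size{B}$ guarantees they are enabled.
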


\begin{proof}[Proof (Sketch)]
By induction on the structure of $M$.
For item (1),  consider the case $M=(N\ B)$ and  $\headf{N \ B} = \headf{N} = \checkmark$. This term's translation is
\[\piencodf{N \ B}_u = \bigoplus_{B_i \in \perm{B}} (\nu v)(\piencodf{N}_v \mid v.\some_{u, \lfv{B}} ; \outact{v}{x} . ([v \leftrightarrow u] \mid \piencodf{B_i}_x ) ).\]
By the induction hypothesis,  $\checkmark$ is unguarded in $\piencodf{N}_u$ after a sequence of reductions, i.e., $ \piencodf{N}_u\red^* (\checkmark \mid P')\oplus Q'$, for some $\spi$ processes $P'$ and $Q'$. Thus, 
\[
\begin{aligned} 
\piencodf{N \ B}_u & \red^* \bigoplus_{B_i \in \perm{B}} (\nu v)((\checkmark \mid P')\oplus Q' \mid v.\some_{u, \lfv{B}} ; \outact{v}{x} . ([v \leftrightarrow u] \mid \piencodf{B_i}_x ) )\\
& \equiv  \checkmark \mid (\nu v)(P'\oplus Q' \mid v.\some_{u, \lfv{B}} ; \outact{v}{x} . ([v \leftrightarrow u] \mid \piencodf{B_j}_x ) )\\ 
& \quad \oplus \Big( 
\bigoplus_{B_i \in (\perm{B} \linsetminus B_j )  } \checkmark \mid (\nu v)(P'\oplus Q' \mid v.\some_{u, \lfv{B}} ; \outact{v}{x} . ([v \leftrightarrow u] \mid \piencodf{B_i}_x ) ) 
\Big)  \\
& \equiv  (\checkmark \mid P) \oplus Q \\ 
\end{aligned}
\]
and the result follows by taking $P = (\nu v)(P'\oplus Q' \mid v.\some_{u, \lfv{B}} ; \outact{v}{x} . ([v \leftrightarrow u] \mid \piencodf{B_j}_x ) ) $ and 
$Q = \bigoplus_{B_i \in ({\perm{B}} \linsetminus B_j)} \checkmark \mid (\nu v)(P'\oplus Q' \mid v.\some_{u, \lfv{B}} ; \outact{v}{x} . ([v \leftrightarrow u] \mid \piencodf{B_i}_x ) )$.
The analysis for the other cases are similar; see \appref{app:succtwo} for details.
\end{proof}

The translation $\piencodf{\cdot}_u:\lamrsharfail \rightarrow \spi$ is success sensitive on well-formed closed expressions. 
\begin{restatable}[Success Sensitivity]{thms}{successsenscetwo}
\label{proof:successsenscetwo}
Let  \expr{M} be a closed well-formed $\lamrsharfail$-expression.
Then,
\[\expr{M} \Downarrow_{\checkmark}\iff \succp{\piencodf{\expr{M}}_u}{\checkmark} .\]
\end{restatable}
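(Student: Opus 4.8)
The plan is to prove the biconditional $\expr{M} \Downarrow_{\checkmark} \iff \succp{\piencodf{\expr{M}}_u}{\checkmark}$ by treating each direction separately, leaning on the operational correspondence results (Theorems~\ref{l:app_completenesstwo} and~\ref{l:app_soundnesstwo}) together with the head-preservation property established in Proposition~\ref{Prop:checkprespi}. The overall strategy mirrors the success-sensitiveness argument for the first encoding (Theorem~\ref{proof:app_successsensce}): success in the source is witnessed by a reduction to a sum containing a summand with $\checkmark$ in head position, and we must transport this witness across the translation in both directions using the appropriate operational correspondence.

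For the forward direction ($\Rightarrow$), I would assume $\expr{M} \Downarrow_{\checkmark}$, which by Definition~\ref{def:app_Suc3} means $\expr{M} \red^* M_1 + \cdots + M_k$ with $\headf{M_j} = \checkmark$ for some $j$. First I would apply operational completeness (Theorem~\ref{l:app_completenesstwo}), extended from a single step to multiple steps via the diamond property (Proposition~\ref{prop:conf1_lamrsharfail}), to obtain $\piencodf{\expr{M}}_u \red^* \piencodf{M_1 + \cdots + M_k}_u$. Since $\piencodf{\cdot}_u$ distributes over sums as a non-deterministic choice, this process equals $\bigoplus_{i} \piencodf{M_i}_u$. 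Then I would invoke Proposition~\ref{Prop:checkprespi}(1) on the summand $M_j$ with $\headf{M_j} = \checkmark$ to conclude $\piencodf{M_j}_u \red^* (P \mid \checkmark) \oplus Q$, exhibiting an unguarded $\checkmark$ after further reductions. Composing these reduction sequences and using that reduction is closed under $\oplus$ (Rule~\redlab{NChoice}) and structural congruence, we obtain $\succp{\piencodf{\expr{M}}_u}{\checkmark}$.

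For the backward direction ($\Leftarrow$), I would assume $\succp{\piencodf{\expr{M}}_u}{\checkmark}$, so $\piencodf{\expr{M}}_u \red^* P'$ for some $P'$ with an unguarded $\checkmark$. Here I would apply operational soundness (Theorem~\ref{l:app_soundnesstwo}) to the reduction $\piencodf{\expr{M}}_u \red^* P'$, obtaining $Q'$ and $\expr{N}'$ with $P' \red^* Q'$, $\expr{M} \red^*_{\pcong} \expr{N}'$, and $\piencodf{\expr{N}'}_u = Q'$. Because an unguarded $\checkmark$ in $P'$ is preserved under further reduction, $Q'$ also contains an unguarded $\checkmark$, so $\piencodf{\expr{N}'}_u = (P \mid \checkmark) \oplus Q$ up to $\equiv$. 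Then Proposition~\ref{Prop:checkprespi}(2) yields that some summand $N'_j$ of $\expr{N}'$ has $\headf{N'_j} = \checkmark$, which gives $\succp{\expr{N}'}{\checkmark}$; since $\expr{M} \red^*_{\pcong} \expr{N}'$ and the precongruence $\pcong$ preserves headedness and success, we conclude $\expr{M} \Downarrow_{\checkmark}$.

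The main obstacle I anticipate is the careful bookkeeping around non-deterministic sums and the role of the precongruence $\pcong$ in the soundness direction. Soundness only guarantees a reduction of the \emph{source} up to $\pcong$, so I must verify that $\pcong$ (defined in Figure~\ref{fig:rsPrecongruencefailure}) neither creates nor destroys a head occurrence of $\checkmark$ in any summand; this requires inspecting each precongruence rule to confirm it acts on explicit-substitution and sharing structure without disturbing the head of a term. A secondary subtlety is that Proposition~\ref{Prop:checkprespi} is stated for closed terms and for the translation producing a process of the exact form $(P \mid \checkmark) \oplus Q$; I would need to reconcile this syntactic shape with the structurally-congruent forms arising after arbitrary reduction sequences, using $\equiv$ to normalize the unguarded occurrence into the required shape before applying the proposition.
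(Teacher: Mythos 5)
Your proposal is correct and follows essentially the same route as the paper's own proof: operational completeness (Theorem~\ref{l:app_completenesstwo}) combined with Proposition~\ref{Prop:checkprespi}(1) for the forward direction, and operational soundness (Theorem~\ref{l:app_soundnesstwo}) combined with Proposition~\ref{Prop:checkprespi}(2) for the converse, with the same appeals to homomorphic distribution of $\piencodf{\cdot}_u$ over sums and to the fact that neither reduction nor $\pcong$ creates or destroys a head occurrence of $\checkmark$. The two subtleties you flag (persistence of unguarded $\checkmark$ under further \spi reductions, and normalizing via $\equiv$ to the shape $(P \mid \checkmark) \oplus Q$ before applying the proposition) are exactly the points the paper's proof glosses over, so your treatment is, if anything, slightly more careful.
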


\begin{proof}[Proof (Sketch)]
 Suppose  $\expr{M} \Downarrow_{\checkmark} $.
    By Definition \ref{def:app_Suc3} there exists $\mathbb{M}'=M_1 + \cdots + M_k$ such that $\expr{M} \red^* \expr{M}'$ and
    $\headf{M_j} = \checkmark$, for some  $j \in \{1, \ldots, k\}$ and  $M_j$. By operational completeness (Theorem~\ref{l:app_completenesstwo}),  there exists $ Q$ such that $\piencodf{\expr{M}}_u  \red^* Q = \piencodf{\expr{M}'}_u$.
   Due to compositionality of $\piencodf{\cdot}$ and the homomorphic preservation of non-determinism,  we have:
\begin{itemize}
\item \(Q = \piencodf{M_1}_u \oplus \cdots \oplus \piencodf{M_k}_u\)
\item \revt{B32}{\(\piencodf{M_j}_u  = C[ \piencodf{\checkmark}_v ]= C[ \checkmark]\)}
\end{itemize}
By Proposition \ref{Prop:checkprespi}, item  (1), since $\headf{M_j} = \checkmark$ it follows that $ \piencodf{M_j}_u \red^*  P \mid \checkmark \oplus Q'$. Hence $Q$ reduces to a process that has an unguarded occurrence of $\checkmark$. The proof of the converse  is similar and can be found in \appref{app:succtwo}.
\end{proof}
As main result of this sub-section, we have the corollary below, which follows from the previously stated Theorems~\ref{t:preservationtwo}, 
\ref{l:app_completenesstwo}, 
\ref{l:app_soundnesstwo}, and 
\ref{proof:successsenscetwo}:

\begin{cor}
\label{cor:two}
Our translation  $ \piencodf{ \cdot } $ is a correct encoding, in the sense of \defref{d:encoding}.
\end{cor}

\noindent
Together, Corollary~\ref{cor:one} and Corollary~\ref{cor:two} ensure that $\lamrfail$ can be correctly translated into \spi, using \lamrsharfail as a stepping stone.
\section{Related Work}
\label{s:rw}

Closely related works have been already discussed in the introduction and throughout the paper; here we mention other related literature.

\subsubsection*{Intersection Types}
\revt{}{The first works on intersection types date back to the late 70s~(see, e.g.,~\cite{DBLP:journals/aml/CoppoD78,Pottinger80}) and consider intersections with the  \emph{idempotence} property (i.e., $\sigma \land \sigma = \sigma$). 
This formulation enables the analysis of \emph{qualitative} properties of $\lambda$-calculi, such as (strong) normalization and solvability.
By dropping idempotence, intersection types can   characterize \emph{quantitative} properties, such as, e.g., bounds on the number of steps needed to reach a normal form.
Early works on non-idempotent intersection types include~\cite{DBLP:conf/tacs/Gardner94,DBLP:journals/logcom/Kfoury00,DBLP:journals/tcs/KfouryW04}. 
The paper~\cite{DBLP:conf/lics/BonoD20}  overviews the origins, development, and applications of intersection types.}

\revd{}{Our work formally connects non-idempotent intersection types and classical linear logic extended with the modalities $\with$ and $\oplus$, interpreted in~\cite{CairesP17} as session types for non-deterministically available protocols. 
\srev{To the best of our knowledge, this is an unexplored angle. Prior connections between (non-idempotent) intersection types and linear logic arise in very different settings (see~\cite{DBLP:journals/pacmpl/MazzaPV18} and references therein). They include~\cite{DBLP:conf/icfp/NeergaardM04}, which presents a connection based on a correspondence between normalization and type inference;
the work~\cite{DBLP:journals/corr/abs-0905-4251,DBLP:journals/mscs/Carvalho18}, which shows a correspondence between the \emph{relational model} of linear logic and an non-idempotent intersection type system;} 
and~\cite{DBLP:conf/fossacs/Ehrhard20}, which concerns \emph{indexed} linear logic (cf.~\cite{DBLP:journals/apal/BucciarelliE00,DBLP:journals/apal/BucciarelliE01}).
}

\srev{The work~\cite{DBLP:journals/pacmpl/LagoVMY19} develops a type system for the $\pi$-calculus based on non-idempotent intersections. The type system ensures that processes are ``well-behaved''---they never produce run-time errors, and can always reduce to an idle process. Remarkably, they show that their type system is \emph{complete}: every well-behaved process is typable. Although their type system does not consider session types,  it is related to our work for it builds upon Mazza et al.'s~correspondence between linear logic and intersection types, given in terms of \emph{polyadic approximations}~\cite{DBLP:journals/pacmpl/MazzaPV18}.}


\subsubsection*{Other Resource $\lambda$-calculi}
A fine-grained treatment of duplication and erasing---similar to our design for $\lamrsharfail$---is present in Kesner and Lengrand's  $\lambda${\tt lxr}-calculus~\cite{DBLP:journals/iandc/KesnerL07}, a 
simply-typed, deterministic $\lambda$-calculus that is in correspondence with proof nets.
The $\lambda${\tt lxr}-calculus includes operators called weakening \(\mathcal{W}_{\_}(\_)\) and contraction \(\mathcal{C}_{\_}^{\_|\_}(\_)\) to deal with empty and non-empty sharing, respectively. 
In this approach, our terms 
$\lambda x. x \bag{x}$  and $\lambda x. y \bag{z}$ 
would be expressed as \(\mathcal{C}_{x}^{x_1|x_2}(\lambda x. x_1\bag{x_2})\) and \(\mathcal{W}_x(\lambda x.y\bag{z})\), respectively.

Our approach is convenient when expressing the sharing of more than two occurrences of a variable in a term; as in, e.g., the $\lamrfail$-term $\lambda x.( x\bag{x,x})$ which would correspond to $\lambda x. (x_1\bag{x_2,x_3})\shar{x_1,x_2,x_3}{x}$ in $\lamrsharfail$.
In the $\lambda${\tt lxr}-calculus, contractions are binary, and so representing $\lambda x.( x\bag{x,x})$ requires the composition of two binary contractions.

More substantial differences appear at the level of types. 
As we have seen, in \lamrsharfail we use intersection types to define  well-typed and well-formed expressions (see~\figref{fig:typing_sharing} and~\figref{fig:wfsh_rules}, respectively).   In particular, recall the well-formedness rule for the sharing construct:
\begin{prooftree}
\AxiomC{\(\Gamma, x_1:\sigma, \ldots, x_k:\sigma \wfdash M:\tau \quad x \notin \dom{\Gamma} \quad  k\neq 0\)}
\LeftLabel{\redlab{FS:share}}
\UnaryInfC{\(\Gamma, x: \sigma^k \wfdash M\shar{x_1,\ldots, x_k}{x}:\tau\)}
\end{prooftree}
where, as mentioned above, $\sigma^k$ denotes the intersection type $\sigma\wedge \ldots \wedge \sigma$. 
Differently, the typing rule for contraction in the $\lambda${\tt lxr}-calculus involves an arbitrary (simple) type $A$:
\begin{prooftree}
\AxiomC{\(\Gamma, y:A, z:A \vdash M:B \)}
\LeftLabel{(Cont)}
\UnaryInfC{\(\Gamma, x: A \vdash \mathcal{C}_{x}^{y|z}(M):B\)}
\end{prooftree}
Our weakening rule $\redlab{FS:weak}$ types the empty sharing term $M\shar{}{x}$ as follows:
\begin{prooftree}
\AxiomC{\(\Gamma \wfdash M:\tau\)}
\LeftLabel{\redlab{FS:weak}}
\UnaryInfC{\(\Gamma, x:\omega \wfdash M\shar{}{x}:\tau\)}
\end{prooftree}
Hence, the context $\Gamma$ is weakened with a variable assignment $x:\omega$, where $\omega$ denotes the empty type.  In contrast, weakening in the $\lambda${\tt lxr}-calculus involves a  (simple) type $A$: 
\begin{prooftree}
\AxiomC{\(\Gamma \vdash M:A \)}
\LeftLabel{(Weak)}
\UnaryInfC{\(\Gamma, x:B \vdash \mathcal{W}_x(M): A\)}
\end{prooftree}
Hence, the context can be weakened with an assignment $x:B$, where $B$ is a simple type.

\smallskip
\srev{Inspired by the multiplicative exponential fragment of linear logic, Kesner and Renaud~\cite{DBLP:journals/tcs/KesnerR11} define the so-called \emph{prismoid of resources}, a parametric framework of simply-typed $\lambda$-calculi in which each language incorporates different choices for contraction, weakening, and substitution operations. 
The prismoid defines a uniform and general setting for establishing key properties of typed terms, 
including simulation of $\beta$-reduction, confluence, and strong normalization.
One of the languages included in the prismoid is a minor variant of the $\lambda${\tt lxr}-calculus, which we have just mentioned.}

\smallskip

There are some similarities between $\lamrfail$ and the differential $\lambda$-calculus, introduced in~\cite{DBLP:journals/tcs/EhrhardR03}. Both express non-deterministic choice via sums and use linear head reduction for evaluation. In particular, our fetch rule, which consumes non-deterministically elements from a bag, is related to the derivation (which has similarities with substitution) of a differential term. However, the focus of~\cite{DBLP:journals/tcs/EhrhardR03} is not on typability nor encodings to process calculi; instead they relate the Taylor series of analysis to the linear head reduction of $\lambda$-calculus.

\subsubsection*{Functions as Processes}
A source of inspiration for our developments is the work by Boudol and Laneve~\cite{DBLP:conf/birthday/BoudolL00}. As far as we know, this 
is the only prior study that connects $\lambda$ and $\pi$ from a resource-oriented perspective, via an encoding of a $\lambda$-calculus with multiplicities into a $\pi$-calculus without sums.
The goal of~\cite{DBLP:conf/birthday/BoudolL00} is different from ours, as they study the discriminating power of semantics for $\lambda$ as induced by encodings into~$\pi$. In contrast, we study how typability delineates the encodability of resource-awareness across sequential and concurrent realms.
The source and target calculi in~\cite{DBLP:conf/birthday/BoudolL00} are untyped, whereas  we consider typed calculi and our encodings preserve typability. 
As a result, the encoding in~\cite{DBLP:conf/birthday/BoudolL00} is conceptually different from ours; remarkably, our encoding respects linearity and homomorphically translates sums. 

Prior works have studied encodings of  typed $\lambda$-calculi into typed $\pi$-calculi; see, e.g.,~~\cite{DBLP:journals/mscs/Sangiorgi99,DBLP:conf/birthday/BoudolL00,DBLP:books/daglib/0004377,DBLP:conf/fossacs/BergerHY03, DBLP:conf/fossacs/ToninhoCP12,DBLP:conf/rta/HondaYB14,DBLP:conf/esop/ToninhoY18}. None of these works consider {non-determinism} and {failures}; the one exception is the encoding in~\cite{CairesP17}, which involves a $\lambda$-calculus with exceptions and failures (but without non-determinism due to bags, as in \lamrfail) for which no reduction semantics is given. As a result, the encoding in~\cite{CairesP17} is different from ours, and is only shown to preserve typability:   properties such as operational completeness, operational soundness, and success sensitivity---important in our developments---are not considered.

\section{Concluding Remarks}
\label{s:disc}

\subsubsection*{Summary}
We developed a correct encoding of \lamrfail, a new resource $\lambda$-calculus in which expressions feature non-determinism and explicit failure, into \spi, a session-typed $\pi$-calculus in which behavior is non-deterministically available: session protocols may perform as stipulated but also fail. 
Our encodability result is obtained by appealing to  \lamrsharfail, an intermediate language with a \emph{sharing construct} that simplifies the treatment of variables in expressions. 
To our knowledge, we are the first to relate typed $\lambda$-calculi and typed $\pi$-calculi encompassing non-determinism and   failures, while connecting intersection types and session types, two different mechanisms for resource-awareness in sequential and concurrent settings, respectively.  

 
\subsubsection*{Design of \lamrfail (and \lamrsharfail)}
\revd{}{The design of \lamrfail has been influenced by the logically justified treatment of non-determinism and explicit failure in \spi. Our correct encoding of \lamrfail into \spi makes this influence precise by connecting terms and processes but also their associated intersection types and linear logic propositions.}
We have also adopted features from previous resource $\lambda$-calculi, in particular those in~\cite{DBLP:conf/concur/Boudol93,DBLP:conf/birthday/BoudolL00,PaganiR10}.
Major similarities between \lamrfail and these calculi include: 
as in \cite{DBLP:conf/birthday/BoudolL00}, our semantics performs lazy evaluation and linear substitution on the head variable;  
as in~\cite{PaganiR10}, our reductions lead to non-deterministic sums.
A distinctive feature of \lamrfail    is its lazy treatment of failures via the   term $\fail^{\widetilde{x}}$. In contrast,  in~\cite{DBLP:conf/concur/Boudol93,DBLP:conf/birthday/BoudolL00} there is no dedicated term to represent failure. 
The non-collapsing semantics for non-de\-ter\-mi\-nism is another distinctive feature of \lamrfail. 

\srev{Our design for \lamrsharfail has been informed by the atomic $\lambda$-calculus introduced in~\cite{DBLP:conf/lics/GundersenHP13}. 
Also, our translation from \lamrfail into \lamrsharfail (\defref{def:enctolamrsharfail}) borrows insights from translations given in~\cite{DBLP:conf/lics/GundersenHP13}.}
The calculus \lamrsharfail is also loosely related to the $\lambda$-calculus with sharing in~\cite{GhilezanILL11}, which considers (idempotent) intersection types.
Notice that the calculi in~\cite{DBLP:conf/lics/GundersenHP13,GhilezanILL11} do not consider explicit failure nor non-determinism.
We  distinguish between \emph{well-typed} and \emph{well-formed} expressions: this  allows us to make fail-prone evaluation in \lamrfail explicit. It is interesting that  explicit failures can be elegantly encoded as protocols in \spi---this way, we make the most out of \spi's expressivity.

Bags in \lamrfail have \emph{linear} resources, which are used exactly once. 
In recent work, we have defined an extension of \lamrfail in which bags contain both linear and  \emph{unrestricted} resources, as in~\cite{PaganiR10}, and 
established that our approach to encodability into \spi extends to such an enriched language~\cite{DBLP:conf/types/PaulusN021}. This development requires  the full typed process framework in~\cite{CairesP17}, with replicated processes and labeled choices (not needed to encode \lamrfail). 

\subsubsection*{Future Work}

The approach and results developed here enable us to tackle open questions that go beyond the scope of this work. We comment on some of them:
\begin{itemize}
	\item 
It would be useful to investigate the \emph{relative expressiveness} of \lamrfail with respect to other resource calculi, such as those in~\cite{DBLP:conf/birthday/BoudolL00,PaganiR10}. Derived 
encodability (and non-encodability) results could potentially unlock transfer of reasoning techniques between different calculi.
\item \revd{}{Besides transfer of techniques, one application of encodings between sequential and concurrent calculi is in the design of functional concurrent languages with advanced features. In this respect, it should be feasible to develop a variant of Wadler's GV~\cite{DBLP:conf/icfp/Wadler12} with non-determinism, resources, explicit failure, and session communication by exploiting our correct encodings from \lamrfail to \spi.}

 \item \revdaniele{It would be relevant to investigate \emph{decidability properties} of the intersection type systems for $\lamrfail$ and  $ \lamrsharfail $. Our translation is proven correct under the assumption that we consider only well-formed $\lamrfail$-terms.  The type assignment problem for intersection type systems is, in general, undecidable~\cite{DBLP:conf/popl/Leivant83}; it would be interesting to consider decidable fragments of intersection type systems via, for instance,  ranking restrictions~\cite{DBLP:journals/tcs/Bakel95}.}
\item It would be insightful to establish \emph{full abstraction}   for our translation of  \lamrfail into \spi. 
We choose not to consider it because,  as argued in~\cite{DBLP:journals/mscs/GorlaN16}, full abstraction is not an informative criterion when it comes to an encoding's quality. Establishing full abstraction requires developing the behavioral theory of \lamrfail and \spi, which is relevant and challenging in itself.
\end{itemize}

\subsubsection*{Acknowledgements}
We are grateful to the anonymous reviewers for their detailed and helpful comments.
We gratefully acknowledge the support of the Dutch Research Council (NWO) under project No.\,016.Vidi.189.046 (Unifying Correctness for Communicating Software).
Daniele Nantes-Sobrinho has been partially funded by the EPSRC Fellowship `VeTSpec: Verified Trustworthy Software Specification' (EP/R034567/1) and Edital DPI/DPG n.~03/2020.

\bibliographystyle{alphaurl}
\bibliography{references}

\begin{thebibliography}{LdVMY19}

\bibitem[BD20]{DBLP:conf/lics/BonoD20}
Viviana Bono and Mariangiola Dezani{-}Ciancaglini.
\newblock A tale of intersection types.
\newblock In Holger Hermanns, Lijun Zhang, Naoki Kobayashi, and Dale Miller,
  editors, {\em {LICS} '20: 35th Annual {ACM/IEEE} Symposium on Logic in
  Computer Science, Saarbr{\"{u}}cken, Germany, July 8-11, 2020}, pages 7--20.
  {ACM}, 2020.
\newblock \href {https://doi.org/10.1145/3373718.3394733}
  {\path{doi:10.1145/3373718.3394733}}.

\bibitem[BE00]{DBLP:journals/apal/BucciarelliE00}
Antonio Bucciarelli and Thomas Ehrhard.
\newblock On phase semantics and denotational semantics in
  multiplicative-additive linear logic.
\newblock {\em Ann. Pure Appl. Log.}, 102(3):247--282, 2000.
\newblock \href {https://doi.org/10.1016/S0168-0072(99)00040-8}
  {\path{doi:10.1016/S0168-0072(99)00040-8}}.

\bibitem[BE01]{DBLP:journals/apal/BucciarelliE01}
Antonio Bucciarelli and Thomas Ehrhard.
\newblock On phase semantics and denotational semantics: the exponentials.
\newblock {\em Ann. Pure Appl. Log.}, 109(3):205--241, 2001.
\newblock \href {https://doi.org/10.1016/S0168-0072(00)00056-7}
  {\path{doi:10.1016/S0168-0072(00)00056-7}}.

\bibitem[BHY03]{DBLP:conf/fossacs/BergerHY03}
Martin Berger, Kohei Honda, and Nobuko Yoshida.
\newblock Genericity and the pi-calculus.
\newblock In Andrew~D. Gordon, editor, {\em Foundations of Software Science and
  Computational Structures, 6th International Conference, {FOSSACS} 2003 Held
  as Part of the Joint European Conference on Theory and Practice of Software,
  {ETAPS} 2003, Warsaw, Poland, April 7-11, 2003, Proceedings}, volume 2620 of
  {\em Lecture Notes in Computer Science}, pages 103--119. Springer, 2003.
\newblock \href {https://doi.org/10.1007/3-540-36576-1\_7}
  {\path{doi:10.1007/3-540-36576-1\_7}}.

\bibitem[BKV17]{BucciarelliKV17}
Antonio Bucciarelli, Delia Kesner, and Daniel Ventura.
\newblock Non-idempotent intersection types for the lambda-calculus.
\newblock {\em Logic Journal of the IGPL}, 25(4):431--464, 2017.

\bibitem[BL96]{BoudolL96}
G{\'{e}}rard Boudol and Cosimo Laneve.
\newblock The discriminating power of multiplicities in the lambda-calculus.
\newblock {\em Inf. Comput.}, 126(1):83--102, 1996.
\newblock \href {https://doi.org/10.1006/inco.1996.0037}
  {\path{doi:10.1006/inco.1996.0037}}.

\bibitem[BL00]{DBLP:conf/birthday/BoudolL00}
G{\'{e}}rard Boudol and Cosimo Laneve.
\newblock lambda-calculus, multiplicities, and the pi-calculus.
\newblock In {\em Proof, Language, and Interaction, Essays in Honour of Robin
  Milner}, pages 659--690, 2000.

\bibitem[Bou93]{DBLP:conf/concur/Boudol93}
G{\'{e}}rard Boudol.
\newblock The lambda-calculus with multiplicities (abstract).
\newblock In Eike Best, editor, {\em {CONCUR} '93, Hildesheim, Germany, August
  23-26, 1993, Proceedings}, volume 715 of {\em Lecture Notes in Computer
  Science}, pages 1--6. Springer, 1993.
\newblock \href {https://doi.org/10.1007/3-540-57208-2\_1}
  {\path{doi:10.1007/3-540-57208-2\_1}}.

\bibitem[CD78]{DBLP:journals/aml/CoppoD78}
Mario Coppo and Mariangiola Dezani{-}Ciancaglini.
\newblock A new type assignment for {\(\lambda\)}-terms.
\newblock {\em Arch. Math. Log.}, 19(1):139--156, 1978.
\newblock \href {https://doi.org/10.1007/BF02011875}
  {\path{doi:10.1007/BF02011875}}.

\bibitem[CP10]{CairesP10}
Lu{\'{\i}}s Caires and Frank Pfenning.
\newblock Session types as intuitionistic linear propositions.
\newblock In {\em {CONCUR} 2010 - Concurrency Theory, 21th International
  Conference, {CONCUR} 2010, Paris, France, August 31-September 3, 2010.
  Proceedings}, pages 222--236, 2010.
\newblock \href {https://doi.org/10.1007/978-3-642-15375-4\_16}
  {\path{doi:10.1007/978-3-642-15375-4\_16}}.

\bibitem[CP17]{CairesP17}
Lu{\'{\i}}s Caires and Jorge~A. P{\'{e}}rez.
\newblock Linearity, control effects, and behavioral types.
\newblock In Hongseok Yang, editor, {\em Programming Languages and Systems -
  26th European Symposium on Programming, {ESOP} 2017, Held as Part of the
  European Joint Conferences on Theory and Practice of Software, {ETAPS} 2017,
  Uppsala, Sweden, April 22-29, 2017, Proceedings}, volume 10201 of {\em
  Lecture Notes in Computer Science}, pages 229--259. Springer, 2017.
\newblock \href {https://doi.org/10.1007/978-3-662-54434-1\_9}
  {\path{doi:10.1007/978-3-662-54434-1\_9}}.

\bibitem[dC09]{DBLP:journals/corr/abs-0905-4251}
Daniel de~Carvalho.
\newblock Execution time of lambda-terms via denotational semantics and
  intersection types.
\newblock {\em CoRR}, abs/0905.4251, 2009.
\newblock URL: \url{http://arxiv.org/abs/0905.4251}, \href
  {http://arxiv.org/abs/0905.4251} {\path{arXiv:0905.4251}}.

\bibitem[dC18]{DBLP:journals/mscs/Carvalho18}
Daniel de~Carvalho.
\newblock Execution time of {\(\lambda\)}-terms via denotational semantics and
  intersection types.
\newblock {\em Math. Struct. Comput. Sci.}, 28(7):1169--1203, 2018.
\newblock \href {https://doi.org/10.1017/S0960129516000396}
  {\path{doi:10.1017/S0960129516000396}}.

\bibitem[DdP93]{DBLP:conf/mfcs/Dezani-CiancaglinidP93}
Mariangiola Dezani{-}Ciancaglini, Ugo de'Liguoro, and Adolfo Piperno.
\newblock Filter models for a parallel and non deterministic lambda-calculus.
\newblock In Andrzej~M. Borzyszkowski and Stefan Sokolowski, editors, {\em
  Mathematical Foundations of Computer Science 1993, 18th International
  Symposium, MFCS'93, Gdansk, Poland, August 30 - September 3, 1993,
  Proceedings}, volume 711 of {\em Lecture Notes in Computer Science}, pages
  403--412. Springer, 1993.
\newblock \href {https://doi.org/10.1007/3-540-57182-5\_32}
  {\path{doi:10.1007/3-540-57182-5\_32}}.

\bibitem[Ehr20]{DBLP:conf/fossacs/Ehrhard20}
Thomas Ehrhard.
\newblock Non-idempotent intersection types in logical form.
\newblock In Jean Goubault{-}Larrecq and Barbara K{\"{o}}nig, editors, {\em
  Foundations of Software Science and Computation Structures - 23rd
  International Conference, {FOSSACS} 2020, Held as Part of the European Joint
  Conferences on Theory and Practice of Software, {ETAPS} 2020, Dublin,
  Ireland, April 25-30, 2020, Proceedings}, volume 12077 of {\em Lecture Notes
  in Computer Science}, pages 198--216. Springer, 2020.
\newblock \href {https://doi.org/10.1007/978-3-030-45231-5\_11}
  {\path{doi:10.1007/978-3-030-45231-5\_11}}.

\bibitem[ER03]{DBLP:journals/tcs/EhrhardR03}
Thomas Ehrhard and Laurent Regnier.
\newblock The differential lambda-calculus.
\newblock {\em Theor. Comput. Sci.}, 309(1-3):1--41, 2003.
\newblock \href {https://doi.org/10.1016/S0304-3975(03)00392-X}
  {\path{doi:10.1016/S0304-3975(03)00392-X}}.

\bibitem[Gar94]{DBLP:conf/tacs/Gardner94}
Philippa Gardner.
\newblock Discovering needed reductions using type theory.
\newblock In Masami Hagiya and John~C. Mitchell, editors, {\em Theoretical
  Aspects of Computer Software, International Conference {TACS} '94, Sendai,
  Japan, April 19-22, 1994, Proceedings}, volume 789 of {\em Lecture Notes in
  Computer Science}, pages 555--574. Springer, 1994.
\newblock \href {https://doi.org/10.1007/3-540-57887-0\_115}
  {\path{doi:10.1007/3-540-57887-0\_115}}.

\bibitem[GHP13]{DBLP:conf/lics/GundersenHP13}
Tom Gundersen, Willem Heijltjes, and Michel Parigot.
\newblock Atomic lambda calculus: {A} typed lambda-calculus with explicit
  sharing.
\newblock In {\em 28th Annual {ACM/IEEE} Symposium on Logic in Computer
  Science, {LICS} 2013, New Orleans, LA, USA, June 25-28, 2013}, pages
  311--320, 2013.
\newblock \href {https://doi.org/10.1109/LICS.2013.37}
  {\path{doi:10.1109/LICS.2013.37}}.

\bibitem[GILL11]{GhilezanILL11}
Silvia Ghilezan, Jelena Ivetic, Pierre Lescanne, and Silvia Likavec.
\newblock Intersection types for the resource control lambda calculi.
\newblock In {\em Theoretical Aspects of Computing - {ICTAC} 2011 - 8th
  International Colloquium, Johannesburg, South Africa, August 31 - September
  2, 2011. Proceedings}, pages 116--134, 2011.
\newblock \href {https://doi.org/10.1007/978-3-642-23283-1\_10}
  {\path{doi:10.1007/978-3-642-23283-1\_10}}.

\bibitem[GMM03]{GUERRINI2003379}
Stefano Guerrini, Simone Martini, and Andrea Masini.
\newblock Coherence for sharing proof-nets.
\newblock {\em Theoretical Computer Science}, 294(3):379--409, 2003.
\newblock Linear Logic.
\newblock \href {https://doi.org/10.1016/S0304-3975(01)00162-1}
  {\path{doi:10.1016/S0304-3975(01)00162-1}}.

\bibitem[GN16]{DBLP:journals/mscs/GorlaN16}
Daniele Gorla and Uwe Nestmann.
\newblock Full abstraction for expressiveness: history, myths and facts.
\newblock {\em Math. Struct. Comput. Sci.}, 26(4):639--654, 2016.
\newblock \href {https://doi.org/10.1017/S0960129514000279}
  {\path{doi:10.1017/S0960129514000279}}.

\bibitem[Gor10]{DBLP:journals/iandc/Gorla10}
Daniele Gorla.
\newblock Towards a unified approach to encodability and separation results for
  process calculi.
\newblock {\em Inf. Comput.}, 208(9):1031--1053, 2010.
\newblock \href {https://doi.org/10.1016/j.ic.2010.05.002}
  {\path{doi:10.1016/j.ic.2010.05.002}}.

\bibitem[Gue99]{GUERRINI199999}
Stefano Guerrini.
\newblock A general theory of sharing graphs.
\newblock {\em Theoretical Computer Science}, 227(1):99--151, 1999.
\newblock \href {https://doi.org/10.1016/S0304-3975(99)00050-X}
  {\path{doi:10.1016/S0304-3975(99)00050-X}}.

\bibitem[Hon93]{DBLP:conf/concur/Honda93}
Kohei Honda.
\newblock Types for dyadic interaction.
\newblock In Eike Best, editor, {\em {CONCUR} '93, Hildesheim, Germany, August
  23-26, 1993, Proceedings}, volume 715 of {\em Lecture Notes in Computer
  Science}, pages 509--523. Springer, 1993.
\newblock \href {https://doi.org/10.1007/3-540-57208-2\_35}
  {\path{doi:10.1007/3-540-57208-2\_35}}.

\bibitem[HVK98]{DBLP:conf/esop/HondaVK98}
Kohei Honda, Vasco~Thudichum Vasconcelos, and Makoto Kubo.
\newblock Language primitives and type discipline for structured
  communication-based programming.
\newblock In Chris Hankin, editor, {\em Programming Languages and Systems -
  ESOP'98, 7th European Symposium on Programming, Held as Part of the European
  Joint Conferences on the Theory and Practice of Software, ETAPS'98, Lisbon,
  Portugal, March 28 - April 4, 1998, Proceedings}, volume 1381 of {\em Lecture
  Notes in Computer Science}, pages 122--138. Springer, 1998.
\newblock \href {https://doi.org/10.1007/BFb0053567}
  {\path{doi:10.1007/BFb0053567}}.

\bibitem[HYB14]{DBLP:conf/rta/HondaYB14}
Kohei Honda, Nobuko Yoshida, and Martin Berger.
\newblock Process types as a descriptive tool for interaction - control and the
  pi-calculus.
\newblock In Gilles Dowek, editor, {\em Rewriting and Typed Lambda Calculi -
  Joint International Conference, {RTA-TLCA} 2014, Held as Part of the Vienna
  Summer of Logic, {VSL} 2014, Vienna, Austria, July 14-17, 2014. Proceedings},
  volume 8560 of {\em Lecture Notes in Computer Science}, pages 1--20.
  Springer, 2014.
\newblock \href {https://doi.org/10.1007/978-3-319-08918-8\_1}
  {\path{doi:10.1007/978-3-319-08918-8\_1}}.

\bibitem[Kfo00]{DBLP:journals/logcom/Kfoury00}
A.~J. Kfoury.
\newblock A linearization of the lambda-calculus and consequences.
\newblock {\em J. Log. Comput.}, 10(3):411--436, 2000.
\newblock \href {https://doi.org/10.1093/logcom/10.3.411}
  {\path{doi:10.1093/logcom/10.3.411}}.

\bibitem[KL07]{DBLP:journals/iandc/KesnerL07}
Delia Kesner and St{\'{e}}phane Lengrand.
\newblock Resource operators for lambda-calculus.
\newblock {\em Inf. Comput.}, 205(4):419--473, 2007.
\newblock \href {https://doi.org/10.1016/j.ic.2006.08.008}
  {\path{doi:10.1016/j.ic.2006.08.008}}.

\bibitem[KPY19]{DBLP:journals/iandc/KouzapasPY19}
Dimitrios Kouzapas, Jorge~A. P{\'{e}}rez, and Nobuko Yoshida.
\newblock On the relative expressiveness of higher-order session processes.
\newblock {\em Inf. Comput.}, 268, 2019.
\newblock \href {https://doi.org/10.1016/j.ic.2019.06.002}
  {\path{doi:10.1016/j.ic.2019.06.002}}.

\bibitem[KR11]{DBLP:journals/tcs/KesnerR11}
Delia Kesner and Fabien Renaud.
\newblock A prismoid framework for languages with resources.
\newblock {\em Theor. Comput. Sci.}, 412(37):4867--4892, 2011.
\newblock \href {https://doi.org/10.1016/j.tcs.2011.01.026}
  {\path{doi:10.1016/j.tcs.2011.01.026}}.

\bibitem[KW04]{DBLP:journals/tcs/KfouryW04}
A.~J. Kfoury and J.~B. Wells.
\newblock Principality and type inference for intersection types using
  expansion variables.
\newblock {\em Theor. Comput. Sci.}, 311(1-3):1--70, 2004.
\newblock \href {https://doi.org/10.1016/j.tcs.2003.10.032}
  {\path{doi:10.1016/j.tcs.2003.10.032}}.

\bibitem[LdVMY19]{DBLP:journals/pacmpl/LagoVMY19}
Ugo~Dal Lago, Marc de~Visme, Damiano Mazza, and Akira Yoshimizu.
\newblock Intersection types and runtime errors in the pi-calculus.
\newblock {\em Proc. {ACM} Program. Lang.}, 3({POPL}):7:1--7:29, 2019.
\newblock \href {https://doi.org/10.1145/3290320} {\path{doi:10.1145/3290320}}.

\bibitem[Lei83]{DBLP:conf/popl/Leivant83}
Daniel Leivant.
\newblock Polymorphic type inference.
\newblock In John~R. Wright, Larry Landweber, Alan~J. Demers, and Tim
  Teitelbaum, editors, {\em Conference Record of the Tenth Annual {ACM}
  Symposium on Principles of Programming Languages, Austin, Texas, USA, January
  1983}, pages 88--98. {ACM} Press, 1983.
\newblock \href {https://doi.org/10.1145/567067.567077}
  {\path{doi:10.1145/567067.567077}}.

\bibitem[Mil92]{Milner92}
Robin Milner.
\newblock Functions as processes.
\newblock {\em Mathematical Structures in Computer Science}, 2(2):119--141,
  1992.
\newblock \href {https://doi.org/10.1017/S0960129500001407}
  {\path{doi:10.1017/S0960129500001407}}.

\bibitem[MPV18]{DBLP:journals/pacmpl/MazzaPV18}
Damiano Mazza, Luc Pellissier, and Pierre Vial.
\newblock Polyadic approximations, fibrations and intersection types.
\newblock {\em Proc. {ACM} Program. Lang.}, 2({POPL}):6:1--6:28, 2018.
\newblock \href {https://doi.org/10.1145/3158094} {\path{doi:10.1145/3158094}}.

\bibitem[MPW92]{DBLP:journals/iandc/MilnerPW92a}
Robin Milner, Joachim Parrow, and David Walker.
\newblock A calculus of mobile processes, {I}.
\newblock {\em Inf. Comput.}, 100(1):1--40, 1992.
\newblock \href {https://doi.org/10.1016/0890-5401(92)90008-4}
  {\path{doi:10.1016/0890-5401(92)90008-4}}.

\bibitem[NM04]{DBLP:conf/icfp/NeergaardM04}
Peter~M{\o}ller Neergaard and Harry~G. Mairson.
\newblock Types, potency, and idempotency: why nonlinearity and amnesia make a
  type system work.
\newblock In Chris Okasaki and Kathleen Fisher, editors, {\em Proceedings of
  the Ninth {ACM} {SIGPLAN} International Conference on Functional Programming,
  {ICFP} 2004, Snow Bird, UT, USA, September 19-21, 2004}, pages 138--149.
  {ACM}, 2004.
\newblock \href {https://doi.org/10.1145/1016850.1016871}
  {\path{doi:10.1145/1016850.1016871}}.

\bibitem[OY16]{DBLP:conf/popl/OrchardY16}
Dominic~A. Orchard and Nobuko Yoshida.
\newblock Effects as sessions, sessions as effects.
\newblock In {\em Proceedings of the 43rd Annual {ACM} {SIGPLAN-SIGACT}
  Symposium on Principles of Programming Languages, {POPL} 2016, St.
  Petersburg, FL, USA, January 20 - 22, 2016}, pages 568--581. {ACM}, 2016.
\newblock \href {https://doi.org/10.1145/2837614.2837634}
  {\path{doi:10.1145/2837614.2837634}}.

\bibitem[Par08]{DBLP:journals/entcs/Parrow08}
Joachim Parrow.
\newblock Expressiveness of process algebras.
\newblock {\em Electron. Notes Theor. Comput. Sci.}, 209:173--186, 2008.
\newblock \href {https://doi.org/10.1016/j.entcs.2008.04.011}
  {\path{doi:10.1016/j.entcs.2008.04.011}}.

\bibitem[PNP21a]{DBLP:conf/fscd/PaulusN021}
Joseph W.~N. Paulus, Daniele Nantes{-}Sobrinho, and Jorge~A. P{\'{e}}rez.
\newblock Non-deterministic functions as non-deterministic processes.
\newblock In Naoki Kobayashi, editor, {\em 6th International Conference on
  Formal Structures for Computation and Deduction, {FSCD} 2021, July 17-24,
  2021, Buenos Aires, Argentina (Virtual Conference)}, volume 195 of {\em
  LIPIcs}, pages 21:1--21:22. Schloss Dagstuhl - Leibniz-Zentrum f{\"{u}}r
  Informatik, 2021.
\newblock \href {https://doi.org/10.4230/LIPIcs.FSCD.2021.21}
  {\path{doi:10.4230/LIPIcs.FSCD.2021.21}}.

\bibitem[PNP21b]{DBLP:conf/types/PaulusN021}
Joseph W.~N. Paulus, Daniele Nantes{-}Sobrinho, and Jorge~A. P{\'{e}}rez.
\newblock Types and terms translated: Unrestricted resources in encoding
  functions as processes.
\newblock In Henning Basold, Jesper Cockx, and Silvia Ghilezan, editors, {\em
  27th International Conference on Types for Proofs and Programs, {TYPES} 2021,
  June 14-18, 2021, Leiden, The Netherlands (Virtual Conference)}, volume 239
  of {\em LIPIcs}, pages 11:1--11:24. Schloss Dagstuhl - Leibniz-Zentrum
  f{\"{u}}r Informatik, 2021.
\newblock \href {https://doi.org/10.4230/LIPIcs.TYPES.2021.11}
  {\path{doi:10.4230/LIPIcs.TYPES.2021.11}}.

\bibitem[Pot80]{Pottinger80}
Garrell Pottinger.
\newblock A type assignment for the strongly normalizable {\(\lambda\)}-terms.
\newblock In {\em To H. B. Curry: Essays on Combinatory Logic, Lambda Calculus,
  and Formalism}, pages 561--577. Academic Press, New York, 1980.

\bibitem[PR10]{PaganiR10}
Michele Pagani and Simona {Ronchi Della Rocca}.
\newblock Solvability in resource lambda-calculus.
\newblock In C.{-}H.~Luke Ong, editor, {\em Foundations of Software Science and
  Computational Structures, 13th International Conference, {FOSSACS} 2010, Held
  as Part of the Joint European Conferences on Theory and Practice of Software,
  {ETAPS} 2010, Paphos, Cyprus, March 20-28, 2010. Proceedings}, volume 6014 of
  {\em Lecture Notes in Computer Science}, pages 358--373. Springer, 2010.
\newblock \href {https://doi.org/10.1007/978-3-642-12032-9\_25}
  {\path{doi:10.1007/978-3-642-12032-9\_25}}.

\bibitem[San99]{DBLP:journals/mscs/Sangiorgi99}
Davide Sangiorgi.
\newblock From lambda to pi; or, rediscovering continuations.
\newblock {\em Math. Struct. Comput. Sci.}, 9(4):367--401, 1999.
\newblock URL:
  \url{http://journals.cambridge.org/action/displayAbstract?aid=44843}.

\bibitem[SW01]{DBLP:books/daglib/0004377}
Davide Sangiorgi and David Walker.
\newblock {\em The Pi-Calculus - a theory of mobile processes}.
\newblock Cambridge University Press, 2001.

\bibitem[TCP12]{DBLP:conf/fossacs/ToninhoCP12}
Bernardo Toninho, Lu{\'{\i}}s Caires, and Frank Pfenning.
\newblock Functions as session-typed processes.
\newblock In Lars Birkedal, editor, {\em Foundations of Software Science and
  Computational Structures - 15th International Conference, {FOSSACS} 2012,
  Held as Part of the European Joint Conferences on Theory and Practice of
  Software, {ETAPS} 2012, Tallinn, Estonia, March 24 - April 1, 2012.
  Proceedings}, volume 7213 of {\em Lecture Notes in Computer Science}, pages
  346--360. Springer, 2012.
\newblock \href {https://doi.org/10.1007/978-3-642-28729-9\_23}
  {\path{doi:10.1007/978-3-642-28729-9\_23}}.

\bibitem[TY18]{DBLP:conf/esop/ToninhoY18}
Bernardo Toninho and Nobuko Yoshida.
\newblock On polymorphic sessions and functions - {A} tale of two (fully
  abstract) encodings.
\newblock In Amal Ahmed, editor, {\em 27th European Symposium on Programming,
  {ESOP} 2018, Held as Part of the European Joint Conferences on Theory and
  Practice of Software, {ETAPS} 2018, Thessaloniki, Greece, April 14-20, 2018,
  Proceedings}, volume 10801 of {\em Lecture Notes in Computer Science}, pages
  827--855. Springer, 2018.
\newblock \href {https://doi.org/10.1007/978-3-319-89884-1\_29}
  {\path{doi:10.1007/978-3-319-89884-1\_29}}.

\bibitem[vB95]{DBLP:journals/tcs/Bakel95}
Steffen van Bakel.
\newblock Intersection type assignment systems.
\newblock {\em Theor. Comput. Sci.}, 151(2):385--435, 1995.
\newblock \href {https://doi.org/10.1016/0304-3975(95)00073-6}
  {\path{doi:10.1016/0304-3975(95)00073-6}}.

\bibitem[Wad12]{DBLP:conf/icfp/Wadler12}
Philip Wadler.
\newblock Propositions as sessions.
\newblock In Peter Thiemann and Robby~Bruce Findler, editors, {\em {ACM}
  {SIGPLAN} International Conference on Functional Programming, ICFP'12,
  Copenhagen, Denmark, September 9-15, 2012}, pages 273--286. {ACM}, 2012.
\newblock \href {https://doi.org/10.1145/2364527.2364568}
  {\path{doi:10.1145/2364527.2364568}}.

\end{thebibliography}

\newpage
\appendix
\tableofcontents

\newlist{myEnumerate}{enumerate}{9}
\setlist[myEnumerate,1]{label=(\arabic*)}
\setlist[myEnumerate,2]{label=(\Roman*)}
\setlist[myEnumerate,3]{label=(\Alph*)}
\setlist[myEnumerate,4]{label=(\roman*)}
\setlist[myEnumerate,5]{label=(\alph*)}
\setlist[myEnumerate,6]{label=(\arabic*)}
\setlist[myEnumerate,7]{label=(\Roman*)}
\setlist[myEnumerate,8]{label=(\Alph*)}
\setlist[myEnumerate,9]{label=(\roman*)}


\newpage

\section{Appendix to \texorpdfstring{\secref{sec:lamfailintertypes}}{§ 2.3}}
\label{app:lamfailintertypes}

\explemfailnofail*

\begin{proof}
By induction on the structure of $ M $:

    \begin{myEnumerate}
        
        \item When $M = x$ then we have $x \headlin{ N / x}   = N$ and may derive the derivation of $ \Gamma \vdash N: \tau $ with $x \not \in \dom{\Gamma}$. By taking $\Gamma_1 = \emptyset$ and $\Gamma_2 = \Gamma$ as $\Gamma = \emptyset \contexcat \Gamma$ the case follows as $ \Gamma \vdash N : \tau$ and
            
                \begin{prooftree}
                    \AxiomC{}
                    \LeftLabel{\redlab{T:var}}
                    \UnaryInfC{\( x: \sigma \vdash x : \sigma\)}
                \end{prooftree}
        
        \item When $M = (M\ B)$ then we have that $(M\ B)\headlin{ N/x}  = (M \headlin{ N/x })\ B$. Let us consider two cases:
            
            \begin{myEnumerate}
                
                \item When $x \in \lfv{M \headlin{ N/x }}$
                    \begin{prooftree}
                        \AxiomC{\( \Gamma', x:\sigma^{k-1}  \vdash M \headlin{ N/x } : \pi \rightarrow \tau' \)}
                        \AxiomC{\( \Delta \vdash B : \pi \)}
                            \LeftLabel{\redlab{T:app}}
                        \BinaryInfC{\( (\Gamma', x:\sigma^{k-1} ) \contexcat \Delta \vdash (M \headlin{ N/x })\ B : \tau'\)}
                    \end{prooftree}
                    
                    By the IH we have that $\Gamma', x:\sigma^{k-1} \vdash M \headlin{ N/x } : \pi \rightarrow \tau'$ implies that $\exists \  \Gamma_1', \Gamma_2$ such that  $\Gamma_1' , x:\sigma^k \vdash M: \tau$, and $\Gamma_2 \vdash N : \sigma$ with $\Gamma' = \Gamma_1' \contexcat \Gamma_2$.
                    
                    \begin{prooftree}
                        \AxiomC{\( \Gamma_1', x:\sigma^{k}  \vdash M  : \pi \rightarrow \tau' \)}
                        \AxiomC{\( \Delta \vdash B : \pi \)}
                            \LeftLabel{\redlab{T:app}}
                        \BinaryInfC{\( (\Gamma_1', x:\sigma^{k} ) \contexcat \Delta \vdash M\ B : \tau'\)}
                    \end{prooftree}
                    
                \item When $x \not \in \lfv{M \headlin{ N/x }}$
                    \begin{prooftree}
                      \AxiomC{\( \Gamma'  \vdash M \headlin{ N/x } : \pi \rightarrow \tau' \)}
                        \AxiomC{\( \Delta \vdash B : \pi \)}
                            \LeftLabel{\redlab{T:app}}
                        \BinaryInfC{\( \Gamma' \contexcat \Delta \vdash (M \headlin{ N/x })\ B : \tau'\)}
                    \end{prooftree}
                    
                    By the IH we have that $\Gamma' \vdash M \headlin{ N/x } : \pi \rightarrow \tau'$ implies that $\exists \  \Gamma_1', \Gamma_2$ such that  $\Gamma_1' , x:\sigma \vdash M: \tau$, and $\Gamma_2 \vdash N : \sigma$ with $\Gamma' = \Gamma_1' \contexcat \Gamma_2$.
                    
                    \begin{prooftree}
                        \AxiomC{\( \Gamma_1', x:\sigma  \vdash M \headlin{ N/x } : \pi \rightarrow \tau' \)}
                        \AxiomC{\( \Delta \vdash B : \pi \)}
                            \LeftLabel{\redlab{T:app}}
                        \BinaryInfC{\( (\Gamma_1', x:\sigma ) \contexcat \Delta \vdash M\ B : \tau'\)}
                    \end{prooftree}
                
            \end{myEnumerate}

        \item When $M = M \esubst{B}{y}$ then we have that $(M\ \esubst{B}{y})\headlin{ N/x } = (M\headlin{ N/x })\ \esubst{B}{y}$ where $x \not = y$
            
            \begin{myEnumerate}
                
                \item When $x \in \lfv{M \headlin{ N/x }}$
                    
                    \begin{prooftree}
                       \AxiomC{\( \Gamma', x:\sigma^{k-1} ,  {y}:\delta^{j} \vdash (M\headlin{ N/x }) : \tau \)}
                             \AxiomC{\( \Delta \vdash B : \delta^{j} \)}
                        \LeftLabel{\redlab{T:ex \dash sub}}    
                        \BinaryInfC{\( \Gamma',  {y}:\delta^{j} \contexcat \Delta \vdash (M\headlin{ N/x }) \esubst{ B }{ y } : \tau \)}
                    \end{prooftree}
                    
                    By the IH we have that $\Gamma', x:\sigma^{k-1} ,  {y}:\delta^{j} \vdash (M\headlin{ N/x }) : \tau$ implies that $\exists \  \Gamma_1', \Gamma_2$ such that  $\Gamma_1' , x:\sigma^k, {y}:\delta^{j} \vdash M: \tau$, and $\Gamma_2 \vdash N : \sigma$ with $\Gamma',  {y}:\delta^{j} = (\Gamma_1',  {y}:\delta^{j}) \contexcat \Gamma_2$.
                    
                    \begin{prooftree}
                       \AxiomC{\( \Gamma'_1, x:\sigma^{k} ,  {y}:\delta^{j} \vdash M : \tau \)}
                             \AxiomC{\( \Delta \vdash B : \delta^{j} \)}
                        \LeftLabel{\redlab{T:ex \dash sub}}    
                        \BinaryInfC{\( \Gamma'_1 \contexcat \Delta \vdash M \esubst{ B }{ y } : \tau \)}
                    \end{prooftree}
                    
                \item When $x \not \in \lfv{M \headlin{ N/x }}$
                    
                    \begin{prooftree}
                       \AxiomC{\( \Gamma' ,  {y}:\delta^{k} \vdash (M\headlin{ N/x }) : \tau \)}
                             \AxiomC{\( \Delta \vdash B : \delta^{k} \)}
                        \LeftLabel{\redlab{T:ex \dash sub}}    
                        \BinaryInfC{\( \Gamma' \contexcat \Delta \vdash (M\headlin{ N/x }) \esubst{ B }{ y } : \tau \)}
                    \end{prooftree}
                    
                    By the IH we have that $\Gamma' ,  {y}:\delta^{k} \vdash (M\headlin{ N/x }) : \tau$ implies that $\exists \  \Gamma_1', \Gamma_2$ such that  $\Gamma_1' , x:\sigma \vdash M: \tau$, and $\Gamma_2 \vdash N : \sigma$ with $\Gamma',  {y}:\delta^{k} = (\Gamma_1',  {y}:\delta^{k}) \contexcat \Gamma_2$.
                    
                    \begin{prooftree}
                       \AxiomC{\( \Gamma'_1, x:\sigma ,  {y}:\delta^{k} \vdash M : \tau \)}
                             \AxiomC{\( \Delta \vdash B : \delta^{k} \)}
                        \LeftLabel{\redlab{T:ex \dash sub}}    
                        \BinaryInfC{\( \Gamma'_1 \contexcat \Delta \vdash M \esubst{ B }{ y } : \tau \)}
                    \end{prooftree}
                    
            \end{myEnumerate}

        \item When $M = \lambda y . M$ then linear head substitution is undefined on this term as $\headf{M} \not = x$.

        \item When $M = \fail^{\widetilde{x}}$ then $M$ is not well typed. \qedhere

    \end{myEnumerate}

\end{proof}

\subexpone*

\begin{proof}
By induction on the reduction rule applied.
There are four possible cases. 
    
    \begin{myEnumerate}
    
        \item When $\expr{M}'$ is reduced to via the Rule~\redlab{R:Beta}
    
            \begin{prooftree}
                \AxiomC{}
                \LeftLabel{\redlab{R:Beta}}
                \UnaryInfC{\((\lambda x. M) B \red M\ \esubst{B}{x}\)}
            \end{prooftree}
            
            Then $\expr{M}' = M\ \esubst{B}{x} $ can be type as follows:
            
            \begin{prooftree}
                    \AxiomC{\( {\Gamma ,  {x}:\sigma^{k} \vdash M : \tau} \)}
                     \AxiomC{\( \Delta \vdash B : \sigma^{k} \)}
                \LeftLabel{\redlab{T:ex \dash sub}}    
                \BinaryInfC{\( {\Gamma \contexcat \Delta \vdash M \esubst{ B }{ x } : \tau} \)}
            \end{prooftree}
            
            From the typing of $\expr{M}'  $ we can deduce that $\expr{M} = (\lambda x. M) B $ may be typed by:
            
            \begin{prooftree}
                \AxiomC{\( {\Gamma , {x}: \sigma^k \vdash M : \tau} \)}
                \LeftLabel{\redlab{T:abs}}
                \UnaryInfC{\( \Gamma \vdash \lambda x . M :  \sigma^k  \rightarrow \tau \)}
                \AxiomC{\( \Delta \vdash B : \pi \)}
                    \LeftLabel{\redlab{T:app}}
                \BinaryInfC{\( {\Gamma \contexcat \Delta \vdash (\lambda x. M) B : \tau}\)}
            \end{prooftree}
        
        \item When $\expr{M}'$ is reduced to via the Rule~\redlab{R:Fetch}
            
            \begin{prooftree}
                \AxiomC{$\headf{M} = x$}
                \AxiomC{$B = \bag{N_1, \dots ,N_k} \ , \ k\geq 1 $}
                \AxiomC{$ \#(x,M) = k $}
                \LeftLabel{\redlab{R:Fetch}}
                \TrinaryInfC{\(
                M\ \esubst{ B}{x } \red M \headlin{ N_{1}/x } \esubst{ (B\linsetminus N_1)}{ x }  + \cdots + M \headlin{ N_{k}/x } \esubst{ (B\linsetminus N_k)}{x}
                \)}
            \end{prooftree}
        
        Let us consider two cases:
        
        \begin{myEnumerate}
            
            \item The bag $B$ has $k$ elements where $k > 1$, then we type $ M \headlin{ N_{i}/x } \esubst{ (B\linsetminus N_i)}{ x } $ with the derivation $\Pi_i$ to be:
            
                \begin{prooftree}
                        \AxiomC{\( \Gamma ,  {x}:\sigma^{k-1} \vdash M \headlin{ N_{1}/x } : \tau \)}
                         \AxiomC{\( \Delta \vdash (B\linsetminus N_1) : \sigma^{k-1} \)}
                    \LeftLabel{\redlab{T:ex \dash sub}}    
                    \BinaryInfC{\( \Gamma \contexcat \Delta \vdash M \headlin{ N_{1}/x } \esubst{ (B\linsetminus N_1)}{ x }  : \tau \)}
                \end{prooftree}
            
            We can type the sum with each derivation $ \Pi_i$ to be 
            
                \hspace*{-20pt}                
                \begin{minipage}{\linewidth}
                \begin{prooftree} \hskip 17pt
                        \AxiomC{$ \Pi_1$}
                        \UnaryInfC{$ \Gamma \contexcat \Delta \vdash M \headlin{ N_{1}/x } \esubst{ (B\linsetminus N_1)}{ x } : \tau$}
                        
                        \AxiomC{$ \Pi_k$}
                        \UnaryInfC{$ \Gamma \contexcat \Delta \vdash M \headlin{ N_{k}/x } \esubst{ (B\linsetminus N_k)}{x} : \tau$}
                        \UnaryInfC{$ \vdots $}
                    \BinaryInfC{$ \Gamma \contexcat \Delta \vdash M \headlin{ N_{1}/x } \esubst{ (B\linsetminus N_1)}{ x }  + \cdots + M \headlin{ N_{k}/x } \esubst{ (B\linsetminus N_k)}{x}: \tau$}
                \end{prooftree}            
                \end{minipage}             
             
                By the anti-substitution lemma (Lemma~\ref{lem:antisubt_lem}) we have that $\exists \ \Gamma_1, \Gamma_2$  such that  $\Gamma_1 , x:\sigma^k \vdash M: \tau$, and $\Gamma_2 \vdash N_i : \sigma$ with $\Gamma = \Gamma_1 \contexcat \Gamma_2$ and finally we have:
                
                \begin{prooftree}
                        \AxiomC{\( \Gamma_1 ,  {x}:\sigma^k \vdash M : \tau \)}
                         \AxiomC{\( \Delta \contexcat \Gamma_2 \vdash B : \sigma^{k} \)}
                    \LeftLabel{\redlab{T:ex \dash sub}}    
                    \BinaryInfC{\( \Gamma \contexcat \Delta \vdash M \esubst{ B }{ x }  : \tau \)}
                \end{prooftree}
            
            notice that we make use that $\Gamma_2 \vdash N_i : \sigma$ to ensure that the bag $B$ is well typed.
            
            \item The bag $B$ has one element, then we type $ M \headlin{ N_{i}/x } \esubst{ \oneb }{ x } $ with the derivation $\Pi$ to be:
            
                \begin{prooftree}
                        \AxiomC{\( \Gamma  \vdash M \headlin{ N_{1}/x } : \tau \)}
                         \AxiomC{\( \Delta \vdash \oneb : \omega \)}
                    \LeftLabel{\redlab{T:ex \dash sub}}    
                    \BinaryInfC{\( \Gamma \contexcat \Delta \vdash M \headlin{ N_{1}/x } \esubst{ \oneb}{ x }  : \tau \)}
                \end{prooftree}
            
        \end{myEnumerate}

        By the anti-substitution lemma (Lemma~\ref{lem:antisubt_lem}) we have that $\exists \ \Gamma_1, \Gamma_2$  such that  $\Gamma_1 , x:\sigma \vdash M: \tau$ , and $\Gamma_2 \vdash N_1 : \sigma$ with $\Gamma = \Gamma_1 \contexcat \Gamma_2$ and finally we have:
                
            \begin{prooftree}
                    \AxiomC{\( \Gamma_1 ,  {x}:\sigma \vdash M : \tau \)}
                     \AxiomC{\( \Delta \contexcat \Gamma_2 \vdash \bag{N_1} : \sigma \)}
                \LeftLabel{\redlab{T:ex \dash sub}}    
                \BinaryInfC{\( \Gamma \contexcat \Delta \vdash M \headlin{ N_{1}/x } \esubst{ N }{ x }  : \tau \)}
            \end{prooftree}

        \item When $\expr{M}'$ is reduced to via the Rule~\redlab{R:TCont}

            \begin{prooftree}
                    \AxiomC{$   M \red M'_{1} + \cdots + M'_{k} $}
                    \LeftLabel{\redlab{R:TCont}}
                    \UnaryInfC{$ C[M] \red  C[M'_{1}] + \cdots +  C[M'_{k}] $}
            \end{prooftree}
        
            Hence the proof follows by the IH on $M$.
        
        \item When $\expr{M}'$ is reduced to via the Rule~\redlab{R:ECont}
        
            \begin{prooftree}
                    \AxiomC{$ \expr{M}  \red \expr{M}'  $}
                    \LeftLabel{\redlab{R:ECont}}
                    \UnaryInfC{$D[\expr{M}]  \red D[\expr{M}']  $}
            \end{prooftree} 
            
            Hence the proof follows by the IH on $M$. \qedhere
        
        
        
            


    \end{myEnumerate}
    
\end{proof}

\subtlemfail*

\begin{proof}
By structural induction on $M$ with $\headf{M}=x$. There are three cases to be analyzed:

\begin{myEnumerate}
\item $M=x$.

This case follows trivially. First,  $x:\sigma \wfdash x:\sigma$ and $\Gamma=\emptyset$.  Second,  $x\headlin{N/x}=N$, by definition. Since $\Delta\wfdash N:\sigma$, by hypothesis, the result follows.



    \item $M = M'\ B$.
    
    In this case, $\headf{M'\ B} = \headf{M'} = x$, and by inversion of the typing derivation one has the following derivation:

    \begin{prooftree}
        \AxiomC{$\Gamma_1 , x:\sigma^m \wfdash M': \delta^{j}  \rightarrow \tau$}\
        \AxiomC{$\Gamma_2 \wfdash B :  \delta^{l} $}
    	\LeftLabel{\redlab{F{:}app}}
        \BinaryInfC{$ ( \Gamma_1 , x:\sigma^m) \contexcat \Gamma_2 \wfdash M'B:\tau $}    
    \end{prooftree}
 where $\Gamma,x:\sigma^k= ( \Gamma_1 , x:\sigma^m) \contexcat \Gamma_2$, $\delta$ is a strict type, and $j,l,m$ are non-negative  integers, possibly different with $m \geq 1$.
 
 By IH, we get $\Gamma_1 \contexcat \Delta , x:\sigma^{m-1} \wfdash M'\headlin{N/x}:\delta^{j} \rightarrow \tau $, which gives the following derivation: 
    \begin{prooftree}
        \AxiomC{$\Gamma_1 \contexcat \Delta , x:\sigma^{m-1} \wfdash M'\headlin{ N / x }:  \delta^{j}  \rightarrow \tau$}\
        \AxiomC{$\Gamma_2 \wfdash B :  \delta^{l} $}
    	\LeftLabel{\redlab{F{:}app}}
        \BinaryInfC{$(\Gamma_1 \contexcat \Delta , x:\sigma^{m-1}) \contexcat \Gamma_2  \wfdash ( M'\headlin{ N / x } ) B:\tau $}    
    \end{prooftree}
    Therefore, from \defref{def:linsubfail}, one has $\Gamma \contexcat \Delta , x:\sigma^{k-1} \wfdash ( M' B) \headlin{ N / x }  :\tau $, and the result follows.
\item $M = M'\esubst{B }{ y}$.

In this case,  $\headf{M'\esubst{B }{ y}} = \headf{M'} = x$, with $x \not = y$, and by  inversion of the typing derivation one has the following derivation:

\begin{prooftree}
    \AxiomC{\( \Gamma_1 , {y}:\delta^{l} , x:\sigma^m \wfdash M' : \tau \)}
    \AxiomC{\( \Gamma_2 \wfdash B :  \delta^{j} \)}
	\LeftLabel{\redlab{F{:}ex \dash sub}}
    \BinaryInfC{\( ( \Gamma_1  ,  x:\sigma^m ) \contexcat \Gamma_2 \wfdash M' \esubst{B }{ y} : \tau \)}
\end{prooftree}
 where $\Gamma , x:\sigma^k = ( \Gamma_1  ,  x:\sigma^m ) \contexcat \Gamma_2 $, $\delta$ is a strict type and $j,l,m$ are positive integers with $m \geq 1$.
By IH, we get $ ( \Gamma_1 , {y}:\delta^{l} , x:\sigma^{m-1}) \contexcat \Delta \wfdash M'\headlin{N/x}:\tau$ and 
\begin{prooftree}
    \AxiomC{\(  ( \Gamma_1 , {y}:\delta^{l} , x:\sigma^{m-1}) \contexcat \Delta \wfdash  M' \headlin{ N / x } : \tau \)}
    \AxiomC{\( \Gamma_2 \wfdash B : \delta^{j} \)}
	\LeftLabel{\redlab{F{:}ex \dash sub}}
    \BinaryInfC{\(  ( \Gamma_1 , {y}:\delta^{l} , x:\sigma^{m-1}) \contexcat \Delta \contexcat \Gamma_2  \wfdash M' \headlin{ N / x } \esubst{ B }{ y} : \tau \)}
\end{prooftree}
\end{myEnumerate}

From \defref{def:linsubfail}, $M' \esubst{ B }{ y} \headlin{ N / x } = M' \headlin{ N / x} \esubst{ B }{ y}$, therefore, $\Gamma \contexcat \Delta , x:\sigma^{k-1} \wfdash (M'\esubst{ B }{ y})\headlin{ N / x }:\tau$ and  the result follows.
\end{proof}

\applamrfailsr*

\begin{proof} By structural induction on the reduction rules. We proceed by analysing the rule applied in $\expr{M}$. There are seven cases:

\begin{myEnumerate}

	\item Rule~$\redlab{R:Beta}$.
	
	Then $\expr{M} = (\lambda x . M)B \red M\ \esubst{B}{x}=\expr{M}'$.

 	Since $\Gamma\wfdash \expr{M}:\tau$, by \revo{A15}{ inversion of the typing derivation} one has the following derivation:
	\begin{prooftree}
			\AxiomC{$ \Gamma' , {x}:\sigma^{j}  \wfdash  M: \tau $}
			\LeftLabel{\redlab{F{:}abs}}
            \UnaryInfC{$ \Gamma' \wfdash \lambda x. M: \sigma^{j} \rightarrow \tau $}
            \AxiomC{$\Delta \wfdash B: \sigma^{k} $}
			\LeftLabel{\redlab{F{:}app}}
		\BinaryInfC{$ \Gamma' \contexcat \Delta \wfdash (\lambda x. M) B:\tau $}
	\end{prooftree}
	for $\Gamma = \Gamma' \contexcat \Delta $. Notice that
    \begin{prooftree}
                \AxiomC{$ \Gamma' , {x}:\sigma^{j} \wfdash  M: \tau $}
                      \AxiomC{$\Delta \wfdash B:\sigma^{k}  $}
                \LeftLabel{\redlab{F{:}ex \dash sub}}
            \BinaryInfC{$ \Gamma' \contexcat \Delta \wfdash M \esubst{ B }{ x }:\tau $}
    \end{prooftree}
    
    Therefore, $ \Gamma\wfdash \expr{M}':\tau$ and the result follows.

	\item  Rule~$\redlab{R:Fetch}$.

	Then $ \expr{M} = M\ \esubst{B}{x}$, where $B=  \bag{N_1, \dots ,N_k}$ , $k\geq 1$, $ \#(x,M) = k $, and $\headf{M} = x$. The reduction is as follows: 
	
	\begin{prooftree}
    \AxiomC{$\headf{M} = x$}
    \AxiomC{$B = \bag{N_1, \dots ,N_k} \ , \ k\geq 1 $}
    \AxiomC{$ \#(x,M) = k $}
    \LeftLabel{\redlab{R:Fetch}}
    \TrinaryInfC{\(
    M\ \esubst{ B}{x } \rightarrow M \headlin{ N_{1}/x } \esubst{ (B\linsetminus N_1)}{ x }  + \cdots + M \headlin{ N_{k}/x } \esubst{ (B\linsetminus N_k)}{x}
    \)}
    \end{prooftree}

To simplify the proof we take $k=2$, as the case $k>2$ is similar. Therefore,  by inversion of the typing derivation and $B=\bag{N_1,N_2}$:
    
    \hspace*{-50pt}    
    \begin{minipage}{\linewidth}     
    \begin{prooftree}
    \small
             \AxiomC{\(\Gamma' , x:\sigma \wedge \sigma \wfdash  M: \tau\)}
    			\AxiomC{\(  \Delta_1 \wfdash N_1 : \sigma \)}
    				\AxiomC{\( \Delta_{2} \wfdash N_{2} : \sigma   \)}
    				\AxiomC{\(  \)}
                    \LeftLabel{\!\!\redlab{F{:}\oneb}}
                    \UnaryInfC{\( \wfdash \oneb : \omega \)}
    			\LeftLabel{\!\!\redlab{F{:}bag}}
    			\BinaryInfC{\( \Delta_2  \wfdash \bag{N_2}: \sigma   \)}
    			\LeftLabel{\!\!\redlab{F{:}bag}}
             \BinaryInfC{\(\Delta  \wfdash B: \sigma \wedge \sigma \) }
    	\LeftLabel{\redlab{F{:}ex \dash sub}}
    	\BinaryInfC{\(\Gamma' \contexcat \Delta  \wfdash   M \esubst{ B }{ x } : \tau\)}
    \end{prooftree}
    \end{minipage}
    
where $\Delta= \Delta_1 \contexcat \Delta_2$ and $\Gamma = \Gamma' \contexcat \Delta $. By the Substitution Lemma (Lemma~\ref{lem:subt_lem_fail}), there exists a derivation $\Pi_1$ of  $(\Gamma' , x:\sigma ) \contexcat \Delta_1 \wfdash   M \headlin{ N_{1}/x } : \tau $ and a derivation $\Pi_2$ of $(\Gamma' , x:\sigma ) \contexcat \Delta_2 \wfdash   M \headlin{ N_{2}/x } : \tau $. Therefore, one has the following derivation:

    \hspace*{-50pt}    
    \begin{minipage}{\linewidth}  
    \begin{prooftree}
    				\AxiomC{\( \Pi_1 \) }
                    \AxiomC{\( \Delta_2 \wfdash \bag{N_2} : \sigma \) }
                \LeftLabel{\redlab{F{:}ex \dash sub}}
    			\BinaryInfC{\( \Gamma' \contexcat \Delta  \wfdash M \headlin{ N_{1}/x } \esubst{ \bag{N_2}}{ x }  : \tau\ \) }
                    \AxiomC{\( \Pi_2 \) }
                    \AxiomC{\( \Delta_1 \wfdash \bag{N_1} : \sigma \) }
                \LeftLabel{\redlab{F{:}ex \dash sub}}
    			\BinaryInfC{\( \Gamma' \contexcat \Delta  \wfdash M \headlin{ N_2/x } \esubst{ \bag{N_1} }{x }  : \tau\  \) }
    	\LeftLabel{\redlab{F{:}sum}}
        \BinaryInfC{\(\Gamma' \contexcat \Delta  \wfdash     M \headlin{ N_{1}/x } \esubst{ \bag{N_2} }{ x } +  M \headlin{ N_{2}/x } \esubst{ \bag{N_1} }{x } : \tau\)}
    \end{prooftree}
    \end{minipage}

Assuming  $ \expr{M}'  =   \headlin{ N_{1}/x } \esubst{\bag{N_2}}{ x }  + M \headlin{ N_{2}/x } \esubst{  \bag{N_1}}{x }$, the result follows.

\item Rule~$ \redlab{R:Fail} $.

Then $\expr{M} =  M\ \esubst{ B}{x } $ where $B = \bag{N_1, \dots \cdot ,N_k} \ , \ k\geq 0 $ , $ \#(x,M) \not = k $ and we can perform the following reduction:

\begin{prooftree}
    \AxiomC{$\#(x,M) \neq \size{B}$}
    \AxiomC{\( \widetilde{y} = (\mfv{M}\setminus x )\uplus \mfv{B}\)}
    \LeftLabel{\redlab{R:Fail}}
    \BinaryInfC{\( M\ \esubst{ B}{x } \red \sum_{\perm{B}} \fail^{\widetilde{y}}\)}
\end{prooftree}   
with $\expr{M}'=\sum_{\perm{B}} \fail^{\widetilde{y}}$. By hypothesis, one has the derivation:

\begin{prooftree}
            \AxiomC{\( \Delta \wfdash B :  \sigma^{j} \)}
            \AxiomC{\( \Gamma' , {x}:\sigma^{k} \wfdash M : \tau \)}
        \LeftLabel{\redlab{F{:}ex \dash sub}}    
        \BinaryInfC{\( \Gamma' \contexcat \Delta \wfdash M \esubst{ B }{ x } : \tau \)}
    \end{prooftree}
Notice that we also have from $\#(x,M) \neq \size{B}$ that $j\neq k$.Hence $\Gamma = \Gamma' \contexcat \Delta $ and we may type the following:
    
    \begin{prooftree}
        \AxiomC{\( \)}
        \LeftLabel{\redlab{F{:}fail}}
        \UnaryInfC{$ \Gamma \wfdash \fail^{\widetilde{y}} : \tau$}
        \AxiomC{\( \cdots \)}
        \AxiomC{\( \)}
        \LeftLabel{\redlab{F{:}fail}}
        \UnaryInfC{$\Gamma \wfdash \fail^{\widetilde{y}} : \tau$}
        \LeftLabel{\redlab{F{:}sum}}
        \TrinaryInfC{$ \Gamma \wfdash \sum_{\perm{B}} \fail^{\widetilde{y}}: \tau$}
    \end{prooftree}

\item Rule~$\redlab{R:Cons_1}$.

Then $\expr{M} =   \fail^{\widetilde{x}} \ B $ where $B = \bag{N_1, \dots ,N_k} $ , $k \geq 0$ and we can perform the following reduction:

\begin{prooftree}
    \AxiomC{$\size{B} = k$}
    \AxiomC{\( \widetilde{y} = \mfv{B} \)}
    \LeftLabel{$\redlab{R:Cons_1}$}
    \BinaryInfC{\( \fail^{\widetilde{x}}\ B  \red \sum_{\perm{B}} \fail^{\widetilde{x} \uplus \widetilde{y}} \)}
\end{prooftree}
where $\expr{M}'=\sum_{\perm{B}} \fail^{\widetilde{x} \uplus \widetilde{y}}$. By hypothesis and inversion of the typing derivation, there exists the following derivation:

    \begin{prooftree}
        \AxiomC{\( \)}
        \LeftLabel{\redlab{F{:}fail}}
        \UnaryInfC{\( \Gamma' \wfdash \fail^{\widetilde{x}} : \pi' \rightarrow \tau \)}
        \AxiomC{\( \Delta \wfdash B : \pi \)}
            \LeftLabel{\redlab{F{:}app}}
        \BinaryInfC{\( \Gamma' \contexcat \Delta \wfdash \fail^{\widetilde{x}} \ B : \tau\)}
    \end{prooftree}

Hence $\Gamma = \Gamma' \contexcat \Delta $ and we may type the following:

    \begin{prooftree}
        \AxiomC{\( \)}
        \LeftLabel{\redlab{F{:}fail}}
        \UnaryInfC{$ \Gamma \wfdash \fail^{\widetilde{x} \uplus \widetilde{y}} : \tau$}
        \AxiomC{\( \cdots \)}
        \AxiomC{\( \)}
        \LeftLabel{\redlab{F{:}fail}}
        \UnaryInfC{$\Gamma \wfdash \fail^{\widetilde{x} \uplus \widetilde{y}} : \tau$}
        \LeftLabel{\redlab{F{:}sum}}
        \TrinaryInfC{$ \Gamma \wfdash \sum_{\perm{B}} \fail^{\widetilde{x} \uplus \widetilde{y}}: \tau$}
    \end{prooftree}

\item Rule~$\redlab{R:Cons_2}$.

Then $\expr{M} =   \fail^{\widetilde{z}}\ \esubst{B}{x} $ where $B = \bag{N_1, \dots ,N_k} $ , $k \geq 1$ and 
we can perform the following reduction:

\begin{prooftree}
    \AxiomC{$\size{B} = k$}
    \AxiomC{\(  \#(x , \widetilde{z}) + k  \not= 0 \)}
    \AxiomC{\( \widetilde{y} = \mfv{B} \)}
    \LeftLabel{$\redlab{R:Cons_2}$}
    \TrinaryInfC{\( \fail^{\widetilde{z}}\ \esubst{B}{x}  \red \sum_{\perm{B}} \fail^{(\widetilde{z} \setminus x) \uplus\widetilde{y}} \)}
\end{prooftree}

where $\expr{M}'=\sum_{\perm{B}} \fail^{(\widetilde{z} \setminus x) \uplus \widetilde{y}}$. By hypothesis and inversion of the typing derivation, there exists a derivation:

    \begin{prooftree}
            \AxiomC{\( \dom{\core{(\Gamma' , {x}:\sigma^{k})}}=\widetilde{z} \)}
            \LeftLabel{\redlab{F{:}fail}}
            \UnaryInfC{\( \Gamma' ,{x}:\sigma^{k}\wfdash  \fail^{\widetilde{z}} : \tau  \)}
            \AxiomC{\( \Delta \wfdash B : \sigma^{j} \)}
        \LeftLabel{\redlab{F{:}ex \dash sub}}    
        \BinaryInfC{\( \Gamma' \contexcat \Delta \wfdash \fail^{\widetilde{z}} \esubst{ B }{ x } : \tau \)}
    \end{prooftree}

Hence $\Gamma = \Gamma' \contexcat \Delta $ and we may type the following:

    \begin{prooftree}
        \AxiomC{\( \)}
        \LeftLabel{\redlab{F{:}fail}}
        \UnaryInfC{$ {\Gamma} \wfdash \fail^{(\widetilde{z} \setminus x) \uplus\widetilde{y}} : \tau$}
        \AxiomC{$ \cdots $}
         \AxiomC{\( \)}
        \LeftLabel{\redlab{F{:}fail}}
        \UnaryInfC{$ {\Gamma} \wfdash \fail^{(\widetilde{z} \setminus x) \uplus\widetilde{y}} : \tau$}
        \LeftLabel{\redlab{F{:}sum}}
        \TrinaryInfC{$ {\Gamma} \wfdash \sum_{\perm{B}} \fail^{(\widetilde{z} \setminus x) \uplus\widetilde{y}} : \tau$}
    \end{prooftree}
    
    \item Rule~$\redlab{R:TCont}$.

Then $\expr{M} = C[M]$ and the reduction is as follows:

\begin{prooftree}
        \AxiomC{$   M \red  M'_{1} + \cdots +  M'_{l} $}
        \LeftLabel{\redlab{R:TCont}}
        \UnaryInfC{$ C[M] \red  C[M'_{1}] + \cdots +  C[M'_{l}] $}
\end{prooftree}

where $\expr{M}'= C[M'_{1}] + \cdots +  C[M'_{l}]$. 
The proof proceeds by analysing the context $C$:

    \begin{enumerate}
    \item $C=[\cdot]\ B$.
    
    In this case $\expr{M}=M \ B$, for some $B$, and the following derivation holds:
\begin{prooftree}
    \AxiomC{\(  \Gamma' \wfdash  M:  \sigma^{j} \rightarrow \tau \)}
    \AxiomC{\( \Delta \wfdash  B : \sigma^{k} \)}
        \LeftLabel{\redlab{F{:}app}}
    \BinaryInfC{\( \Gamma' \contexcat \Delta \wfdash  M\ B : \tau\)}
\end{prooftree}

where $\Gamma = \Gamma' \contexcat \Delta $.

Since $\Gamma'\wfdash M: \sigma^j \rightarrow \tau$ and $M\red M_1'+\ldots + M_l'$, it follows by IH that $\Gamma'\wfdash M_1'+\ldots + M_l':\sigma^j \rightarrow \tau$. By applying \redlab{F{:}sum}, one has $\Gamma'\wfdash M_i' : \sigma^j \rightarrow \tau$, for $i=1,\ldots, l$.  Therefore, we may type the following:

\begin{prooftree}
\AxiomC{\(  \forall i \in {1 , \cdots , l} \)}
\AxiomC{\(  \Gamma' \wfdash  M'_{i}: \sigma^{j} \rightarrow \tau \)}
\AxiomC{\( \Delta \wfdash  B :  \sigma^{k} \)}
\LeftLabel{\redlab{F{:}app}}
\BinaryInfC{\(  \Gamma' \contexcat \Delta \wfdash (M'_{i}\ B):  \tau \)}
 \LeftLabel{\redlab{F{:}sum}}
    \BinaryInfC{\( \Gamma' \contexcat \Delta \wfdash (M'_{1}\ B) + \cdots +  (M'_{l} \ B) : \tau\)}
\end{prooftree}

Thus, $\Gamma\wfdash \expr{M'}: \tau$, and the result follows.

    \item $C=([\cdot])\esubst{B}{x}$.
    
    This case is similar to the previous one.
\end{enumerate}
	\item Rule~$ \redlab{R:ECont} $.
	
Then $\expr{M} = D[\expr{M}'']$ where $\expr{M}'' \rightarrow \expr{M}'''$ then we can perform the following reduction:
\begin{prooftree}
        \AxiomC{$ \expr{M}''  \red \expr{M}'''  $}
        \LeftLabel{$\redlab{R:ECont}$}
        \UnaryInfC{$D[\expr{M}'']  \red D[\expr{M}''']  $}
\end{prooftree}

Hence $\expr{M}' =  D[\expr{M}'''] $.
The proof proceeds by analysing the context $D$:

\begin{enumerate}
    \item $D= [\cdot] + \expr{N}$. In this case $\expr{M}= \expr{M}''+\expr{N}$ by inversion of the typing derivation:
\begin{prooftree}
    \AxiomC{$ \Gamma \wfdash  \expr{M}'' : \tau$}
    \AxiomC{$ \Gamma \wfdash  \expr{N} : \tau$}
    \LeftLabel{\redlab{F{:}sum}}
    \BinaryInfC{$ \Gamma \wfdash  \expr{M}''+\expr{N}: \tau$}
\end{prooftree}

Since $\Gamma\vdash \expr{M}^{''}:\tau $ and $\expr{M}^{''}\red \expr{M}^{'''}$, by IH, it follows that $\Gamma\vdash \expr{M}^{'''}:\tau$  and we may type the following:

\begin{prooftree}
    \AxiomC{$ \Gamma \wfdash  \expr{M}''' : \tau$}
    \AxiomC{$ \Gamma \wfdash  \expr{N} : \tau$}
    \LeftLabel{\redlab{F{:}sum}}
    \BinaryInfC{$ \Gamma \wfdash  \expr{M}'''+\expr{N}: \tau$}
\end{prooftree}
Therefore, $\Gamma\vdash \expr{M'}:\tau$ and the  result follows.

    \item $D= \expr{N} + [\cdot]$.
      This case is similar to the previous one. \qedhere
\end{enumerate}
\end{myEnumerate}
\end{proof}

\section{Appendix to \texorpdfstring{\secref{ss:typeshar}}{§ 3.3}}
\label{app:typeshar}

\Consistencyreductions*


\begin{proof}
    \secondrev{By structural induction on the reduction rules. We will consider two key reduction rules, the other cases follow analogously via  application of the IH.\\
\begin{enumerate}
 \item Rule \redlab{RS{:}Ex \dash Sub}. In this case, we have
    \begin{prooftree}
        \AxiomC{$B = \bag{M_1}
        \cdots  \bag{M_k} \qquad k \geq  1 $}
        \AxiomC{$ M \not= \fail^{\widetilde{y}} $}
        \LeftLabel{\redlab{RS{:}Ex \dash Sub}}
        \BinaryInfC{\( \!M[x_1,\ldots, x_k \leftarrow x]\esubst{ B }{ x } \red \sum_{B_i \in \perm{B}}M\linexsub{B_i(1)/x_1} \cdots \linexsub{B_i(k)/x_k}    \)}
     \end{prooftree}
    Notice that if a bag is consistent then each element in the bag is consistent, that is, for any permutation $B_i$ of the bag $B$ then each $B_i(n)$ is consistent. Then, the assumption of consistency for $(M[\widetilde{x} \leftarrow x])\esubst{ B }{ x } $,  along with each element of the bag being consistent implies consistency of $ \sum_{B_i \in \perm{B}}M\linexsub{B_i(1)/x_1} \cdots \linexsub{B_i(k)/x_k} $ for each permutation of~$B$.\\
  \item Rule \redlab{RS{:}Lin \dash Fetch}. 
  In this case, we have
    \begin{prooftree}
     \AxiomC{$ \headf{M} = x$}
         \LeftLabel{\redlab{RS{:}Lin\dash Fetch}}
         \UnaryInfC{\(  M \linexsub{N/x} \red  M \headlin{ N/x } \)}
    \end{prooftree}
    This case follows from the fact that  $M \headlin{ N/x }$  preserves consistency. The argument is by structural induction, with   base case of $M = x$ together with the fact that $N$ is consistent trivially implies that $x \headlin{ N/x }$ must also be consistent. As for the inductive step, notice that `adding' $N$ to the structure of $M$ does not break any of the consistency requirements: the consistency of $M \linexsub{N/x}$ implies that   the free variables of $M$ and $N$ are disjoint.
   \end{enumerate}
   }
\end{proof}


\explemfailnofailshar*

\begin{proof}
By induction on the structure of $ M $:

    \begin{myEnumerate}
        
        \item When $M = x$ then we have $x \headlin{ N / x}   = N$ and may derive the derivation of $ \Gamma \vdash N: \tau $ with $x \not \in \dom{\Gamma}$. By taking $\Gamma_1 = \emptyset$ and $\Gamma_2 = \Gamma$ as $\Gamma = \emptyset \contexcat \Gamma$ the case follows as $ \Gamma \vdash N : \tau$ and
            
                \begin{prooftree}
                    \AxiomC{}
                    \LeftLabel{\redlab{TS:var}}
                    \UnaryInfC{\( x: \sigma \vdash x : \sigma\)}
                \end{prooftree}
        
        \item When $M = (M\ B)$ then we have that $(M\ B)\headlin{ N/x}  = (M \headlin{ N/x })\ B$. Let us consider two cases:
            
            \begin{myEnumerate}
                
                \item When $x \in \lfv{M \headlin{ N/x }}$
                    \begin{prooftree}
                        \AxiomC{\( \Gamma', x:\sigma^{k-1}  \vdash M \headlin{ N/x } : \pi \rightarrow \tau' \)}
                        \AxiomC{\( \Delta \vdash B : \pi \)}
                            \LeftLabel{\redlab{TS:app}}
                        \BinaryInfC{\( (\Gamma', x:\sigma^{k-1} ) \contexcat \Delta \vdash (M \headlin{ N/x })\ B : \tau'\)}
                    \end{prooftree}
                    
                    By the IH we have that $\Gamma', x:\sigma^{k-1} \vdash M \headlin{ N/x } : \pi \rightarrow \tau'$ implies that $\exists \  \Gamma_1', \Gamma_2$ such that  $\Gamma_1' , x:\sigma^k \vdash M: \tau$, and $\Gamma_2 \vdash N : \sigma$ with $\Gamma' = \Gamma_1' \contexcat \Gamma_2$.
                    
                    \begin{prooftree}
                        \AxiomC{\( \Gamma_1', x:\sigma^{k}  \vdash M \headlin{ N/x } : \pi \rightarrow \tau' \)}
                        \AxiomC{\( \Delta \vdash B : \pi \)}
                            \LeftLabel{\redlab{TS:app}}
                        \BinaryInfC{\( (\Gamma_1', x:\sigma^{k} ) \contexcat \Delta \vdash M\ B : \tau'\)}
                    \end{prooftree}
                    
                \item When $x \not \in \lfv{M \headlin{ N/x }}$
                    \begin{prooftree}
                      \AxiomC{\( \Gamma'  \vdash M \headlin{ N/x } : \pi \rightarrow \tau' \)}
                        \AxiomC{\( \Delta \vdash B : \pi \)}
                            \LeftLabel{\redlab{TS:app}}
                        \BinaryInfC{\( \Gamma' \contexcat \Delta \vdash (M \headlin{ N/x })\ B : \tau'\)}
                    \end{prooftree}
                    
                    By the IH we have that $\Gamma' \vdash M \headlin{ N/x } [\widetilde{y} \leftarrow y]: \pi \rightarrow \tau'$ implies that $\exists \  \Gamma_1', \Gamma_2$ such that  $\Gamma_1' , x:\sigma \vdash M[\widetilde{y} \leftarrow y]: \tau$, and $\Gamma_2 \vdash N : \sigma$ with $\Gamma' = \Gamma_1' \contexcat \Gamma_2$.
                    
                    \begin{prooftree}
                        \AxiomC{\( \Gamma_1', x:\sigma  \vdash M : \pi \rightarrow \tau' \)}
                        \AxiomC{\( \Delta \vdash B : \pi \)}
                            \LeftLabel{\redlab{TS:app}}
                        \BinaryInfC{\( (\Gamma_1', x:\sigma ) \contexcat \Delta \vdash M\ B : \tau'\)}
                    \end{prooftree}
                
            \end{myEnumerate}

        \item When $M = M[\widetilde{y} \leftarrow y] \esubst{B}{y}$ then we have that $(M [\widetilde{y} \leftarrow y]\esubst{B}{y})\headlin{ N/x } = (M\headlin{ N/x })\ [\widetilde{y} \leftarrow y] \esubst{B}{y}$ where $x \not = y$.
            
            \begin{myEnumerate}
                
                \item When $x \in \lfv{M \headlin{ N/x }}$:
                    
                    \begin{prooftree}
                            \AxiomC{\( \Gamma', x:\sigma^{k-1} ,  {\widetilde{y}}:\delta^{j} \vdash (M\headlin{ N/x }) : \tau \)}
                            \LeftLabel{ \redlab{TS{:}share}}
                            \UnaryInfC{\( \Gamma', x:\sigma^{k-1} ,  {y}:\delta^{j} \vdash (M\headlin{ N/x })[\widetilde{y} \leftarrow y] : \tau \)}
                       
                             \AxiomC{\( \Delta \vdash B : \delta^{j} \)}
                        \LeftLabel{\redlab{TS:ex \dash sub}}    
                        \BinaryInfC{\( \Gamma' \contexcat \Delta \vdash (M\headlin{ N/x }) [\widetilde{y} \leftarrow y]\esubst{ B }{ y } : \tau \)}
                    \end{prooftree}
                    
                    By the IH we have that $\Gamma', x:\sigma^{k-1} ,  {\widetilde{y}}:\delta^{j} \vdash (M\headlin{ N/x }) : \tau$ implies that $\exists \  \Gamma_1', \Gamma_2$ such that  $\Gamma_1' , x:\sigma^k , {\widetilde{y}}:\delta^{j}\vdash M: \tau$, and $\Gamma_2 \vdash N : \sigma$ with $\Gamma',  {y}:\delta^{j} = (\Gamma_1',  {y}:\delta^{j}) \contexcat \Gamma_2$.
                    \begin{prooftree}
                            \AxiomC{\( \Gamma'_1, x:\sigma^{k} ,  {\widetilde{y}}:\delta^{j} \vdash M : \tau \)}
                            \LeftLabel{ \redlab{TS{:}share}}
                            \UnaryInfC{\( \Gamma'_1, x:\sigma^{k} ,  {y}:\delta^{j} \vdash M[\widetilde{y} \leftarrow y] : \tau \)}
                            
                             \AxiomC{\( \Delta \vdash B : \delta^{j} \)}
                        \LeftLabel{\redlab{TS:ex \dash sub}}    
                        \BinaryInfC{\( \Gamma'_1 \contexcat \Delta \vdash (M)[\widetilde{y} \leftarrow y] \esubst{ B }{ y } : \tau \)}
                    \end{prooftree}
                    
                \item When $x \not \in \lfv{M \headlin{ N/x }}$:
                                        \begin{prooftree}
                            \AxiomC{\( \Gamma',  {\widetilde{y}}:\delta^{k} \vdash (M\headlin{ N/x }) : \tau \)}
                            \LeftLabel{ \redlab{TS{:}share}}
                            \UnaryInfC{\( \Gamma' ,  {y}:\delta^{k} \vdash (M\headlin{ N/x })[\widetilde{y} \leftarrow y] : \tau \)}
                            
                             \AxiomC{\( \Delta \vdash B : \delta^{k} \)}
                        \LeftLabel{\redlab{TS:ex \dash sub}}    
                        \BinaryInfC{\( \Gamma' \contexcat \Delta \vdash (M\headlin{ N/x })[\widetilde{y} \leftarrow y] \esubst{ B }{ y } : \tau \)}
                    \end{prooftree}
                    
                    By the IH we have that $\Gamma' ,  {y}:\delta^{k} \vdash (M\headlin{ N/x }) : \tau$ implies that $\exists \  \Gamma_1', \Gamma_2$ such that  $\Gamma_1' , x:\sigma, {\widetilde{y}}:\delta^{k} \vdash M: \tau$, and $\Gamma_2 \vdash N : \sigma$ with $\Gamma',  {y}:\delta^{k} = (\Gamma_1',  {y}:\delta^{k}) \contexcat \Gamma_2$.
                    
                    \begin{prooftree}
                            \AxiomC{\( \Gamma', x:\sigma ,  {\widetilde{y}}:\delta^{k} \vdash M : \tau \)}
                            \LeftLabel{ \redlab{TS{:}share}}
                            \UnaryInfC{\( \Gamma', x:\sigma ,  {y}:\delta^{k} \vdash M[\widetilde{y} \leftarrow y] : \tau \)}
                            
                             \AxiomC{\( \Delta \vdash B : \delta^{k} \)}
                        \LeftLabel{\redlab{TS:ex \dash sub}}    
                        \BinaryInfC{\( \Gamma' \contexcat \Delta \vdash (M[\widetilde{y} \leftarrow y]\headlin{ N/x }) \esubst{ B }{ y } : \tau \)}
                    \end{prooftree}
                    
            \end{myEnumerate}

        \item When $M = M[\widetilde{y} \leftarrow y] $ then we have that $(M [\widetilde{y} \leftarrow y])\headlin{ N/x } = (M\headlin{ N/x })\ [\widetilde{y} \leftarrow y]$ where $x \not = y$.
            
            \begin{myEnumerate}
                
                \item When $x \in \lfv{M \headlin{ N/x }}$:
                    
                    \begin{prooftree}
                            \AxiomC{\( \Gamma', x:\sigma^{k-1} ,  {\widetilde{y}}:\delta^{j} \vdash (M\headlin{ N/x }) : \tau \)}
                            \LeftLabel{ \redlab{TS{:}share}}
                            \UnaryInfC{\( \Gamma', x:\sigma^{k-1} ,  {y}:\delta^{j} \vdash (M\headlin{ N/x })[\widetilde{y} \leftarrow y] : \tau \)}
                    \end{prooftree}
                    
                    By the IH we have that $\Gamma', x:\sigma^{k-1} ,  {\widetilde{y}}:\delta^{j} \vdash (M\headlin{ N/x }) : \tau$ implies that $\exists \  \Gamma_1', \Gamma_2$ such that  $\Gamma_1' , x:\sigma^k , {\widetilde{y}}:\delta^{j}\vdash M: \tau$, and $\Gamma_2 \vdash N : \sigma$ with $\Gamma',  {y}:\delta^{j} = (\Gamma_1',  {y}:\delta^{j}) \contexcat \Gamma_2$.
                    
                    \begin{prooftree}
                            \AxiomC{\( \Gamma'_1, x:\sigma^{k} ,  {\widetilde{y}}:\delta^{j} \vdash M : \tau \)}
                            \LeftLabel{ \redlab{TS{:}share}}
                            \UnaryInfC{\( \Gamma'_1, x:\sigma^{k} ,  {y}:\delta^{j} \vdash M[\widetilde{y} \leftarrow y] : \tau \)}
                    \end{prooftree}
                    
                \item When $x \not \in \lfv{M \headlin{ N/x }}$:
                    
                    \begin{prooftree}
                            \AxiomC{\( \Gamma',  {\widetilde{y}}:\delta^{k} \vdash (M\headlin{ N/x }) : \tau \)}
                            \LeftLabel{ \redlab{TS{:}share}}
                            \UnaryInfC{\( \Gamma' ,  {y}:\delta^{k} \vdash (M\headlin{ N/x })[\widetilde{y} \leftarrow y] : \tau \)}
                    \end{prooftree}
                    
                    By the IH we have that $\Gamma' ,  {y}:\delta^{k} \vdash (M\headlin{ N/x }) : \tau$ implies that $\exists \  \Gamma_1', \Gamma_2$ such that  $\Gamma_1' , x:\sigma, {\widetilde{y}}:\delta^{k} \vdash M: \tau$, and $\Gamma_2 \vdash N : \sigma$ with $\Gamma',  {y}:\delta^{k} = (\Gamma_1',  {y}:\delta^{k}) \contexcat \Gamma_2$.
                    
                    \begin{prooftree}
                            \AxiomC{\( \Gamma', x:\sigma ,  {\widetilde{y}}:\delta^{k} \vdash M : \tau \)}
                            \LeftLabel{ \redlab{TS{:}share}}
                            \UnaryInfC{\( \Gamma', x:\sigma ,  {y}:\delta^{k} \vdash M[\widetilde{y} \leftarrow y] : \tau \)}
                    \end{prooftree}
                    
            \end{myEnumerate}

        \item When $M = M \linexsub{N /y} $ then we have that $(M \linexsub{N /y})\headlin{ N/x } = (M\headlin{ N/x })\ \linexsub{N /y}$ where $x \not = y$.
            
            \begin{myEnumerate}
                
                \item When $x \in \lfv{M \headlin{ N/x }}$:
                    
                    \begin{prooftree}
                        \AxiomC{\( \Delta \vdash N : \delta \qquad \Gamma', x:\sigma^{k-1} ,  {y}:\delta \vdash (M\headlin{ N/x }) : \tau \)}
                        \LeftLabel{\redlab{TS\!:\!ex\dash lin\dash sub}}
                        \UnaryInfC{\( \Gamma', x:\sigma^{k-1} ,  {y}:\delta , \Delta \vdash (M\headlin{ N/x }) \linexsub{N / y} : \tau \)}
                    \end{prooftree}
                    
                    By the IH we have that $\Gamma', x:\sigma^{k-1} ,  {\widetilde{y}}:\delta^{j} \vdash (M\headlin{ N/x }) : \tau$ implies that $\exists \  \Gamma_1', \Gamma_2$ such that  $\Gamma_1' , x:\sigma^k , {\widetilde{y}}:\delta^{j}\vdash M: \tau$, and $\Gamma_2 \vdash N : \sigma$ with $\Gamma',  {y}:\delta^{j} = (\Gamma_1',  {y}:\delta^{j}) \contexcat \Gamma_2$.

                    \begin{prooftree}
                        \AxiomC{\( \Delta \vdash N : \delta \qquad \Gamma'_1, x:\sigma^{k} ,  {y}:\delta \vdash M : \tau \)}
                        \LeftLabel{\redlab{TS\!:\!ex\dash lin\dash sub}}
                        \UnaryInfC{\( \Gamma'_1, x:\sigma^{k} ,  {y}:\delta , \Delta \vdash M \linexsub{N / y} : \tau \)}
                    \end{prooftree}
                    
                \item When $x \not \in \lfv{M \headlin{ N/x }}$:
                    
                    \begin{prooftree}
                        \AxiomC{\( \Delta \vdash N : \delta \qquad \Gamma',  {y}:\delta \vdash (M\headlin{ N/x }) : \tau \)}
                        \LeftLabel{\redlab{TS\!:\!ex\dash lin\dash sub}}
                        \UnaryInfC{\( \Gamma' ,  {y}:\delta , \Delta \vdash (M\headlin{ N/x }) \linexsub{N / y} : \tau \)}
                    \end{prooftree}
                    
                    By the IH we have that $\Gamma' ,  {y}:\delta^{k} \vdash (M\headlin{ N/x }) : \tau$ implies that $\exists \  \Gamma_1', \Gamma_2$ such that  $\Gamma_1' , x:\sigma, {\widetilde{y}}:\delta^{k} \vdash M: \tau$, and $\Gamma_2 \vdash N : \sigma$ with $\Gamma',  {y}:\delta^{k} = (\Gamma_1',  {y}:\delta^{k}) \contexcat \Gamma_2$.
                    
                    \begin{prooftree}
                        \AxiomC{\( \Delta \vdash N : \delta \qquad \Gamma'_1, x:\sigma ,  {y}:\delta \vdash M : \tau \)}
                        \LeftLabel{\redlab{TS\!:\!ex\dash lin\dash sub}}
                        \UnaryInfC{\( \Gamma'_1, x:\sigma ,  {y}:\delta , \Delta \vdash M \linexsub{N / y} : \tau \)}
                    \end{prooftree}
                    
            \end{myEnumerate}

        \item When $M = \lambda y . M[\widetilde{y} \leftarrow y]$ then linear head substitution is undefined on this term as $\headf{M} \not = x$.

        \item When $M = \fail^{\widetilde{x}}$ then $M$ is not well typed. \qedhere

    \end{myEnumerate}

\end{proof}

\subexponeshar*

\begin{proof}
By induction on the reduction rule applied.
There are five possible cases. 
    
    \begin{myEnumerate}
    
        \item When $\expr{M}'$ is reduced to via the Rule~\redlab{RS:Beta}:
    
            \begin{prooftree}
                \AxiomC{}
                \LeftLabel{\redlab{RS:Beta}}
                \UnaryInfC{\((\lambda x. M[\widetilde{x} \leftarrow x ]) B \red M[\widetilde{x} \leftarrow x ]\ \esubst{B}{x}\)}
            \end{prooftree}
            
            Then $\expr{M}' = M[\widetilde{x} \leftarrow x ]\ \esubst{B}{x} $ can be type as followed:
            
            \begin{prooftree}
                    \AxiomC{\( {\Gamma ,  {x}:\sigma^{k} \vdash M[\widetilde{x} \leftarrow x ] : \tau} \)}
                     \AxiomC{\( \Delta \vdash B : \sigma^{k} \)}
                \LeftLabel{\redlab{TS:ex \dash sub}}    
                \BinaryInfC{\( {\Gamma \contexcat \Delta \vdash M[\widetilde{x} \leftarrow x ] \esubst{ B }{ x } : \tau} \)}
            \end{prooftree}
            
            From the typing of $\expr{M}'  $ we can deduce that $\expr{M} = (\lambda x. M[\widetilde{x} \leftarrow x ]) B $ may be typed by:
            
            \begin{prooftree}
                \AxiomC{\( {\Gamma , {x}: \sigma^k \vdash M[\widetilde{x} \leftarrow x ] : \tau} \)}
                \LeftLabel{\redlab{TS:abs}}
                \UnaryInfC{\( \Gamma \vdash \lambda x . M[\widetilde{x} \leftarrow x ] :  \sigma^k  \rightarrow \tau \)}
                \AxiomC{\( \Delta \vdash B :  \sigma^k \)}
                    \LeftLabel{\redlab{TS:app}}
                \BinaryInfC{\( {\Gamma \contexcat \Delta \vdash (\lambda x. M[\widetilde{x} \leftarrow x ]) B : \tau}\)}
            \end{prooftree}

        \item When $\expr{M}'$ is reduced to via the Rule~\redlab{RS{:}Ex \dash Sub}:

         \hspace*{-20pt}
         \begin{minipage}{\linewidth}
         \begin{prooftree}
            \AxiomC{$B = \bag{M_1}
            \cdots  \bag{M_k} \qquad k \geq  1 $}
            \AxiomC{$ M \not= \fail^{\widetilde{y}} $}
            \LeftLabel{\redlab{RS{:}Ex \dash Sub}}
            \BinaryInfC{\( \!M[x_1,\ldots, x_k \leftarrow x]\esubst{ B }{ x } \red \sum_{B_i \in \perm{B}}M\linexsub{B_i(1)/x_1} \cdots \linexsub{B_i(k)/x_k}    \)}
         \end{prooftree}
         \end{minipage}

        Then $\expr{M}' = \!M[x_1,\ldots, x_k \leftarrow x]\esubst{ B }{ x } $ can be type as followed:

         \hspace*{-50pt}
         \begin{minipage}{\linewidth}
         \begin{prooftree}
            \AxiomC{\( \Gamma  , x_1:\sigma, \cdots , x_k:\sigma \vdash M : \tau \)}
            \AxiomC{\( \Delta_1 \vdash B_i(1) : \sigma  \)}
            \BinaryInfC{\( \vdots \)}
            \AxiomC{\( \Delta_k \vdash B_i(k) : \sigma  \)}
            \LeftLabel{\redlab{TS\!:\!ex\dash lin\dash sub}}
            \BinaryInfC{\( \Gamma , \Delta_1 , \cdots , \Delta_k \vdash M\linexsub{B_i(1)/x_1} \cdots \linexsub{B_i(k)/x_k} : \tau \)}
            \AxiomC{$ \forall B_i \in \perm{B} $}
            \LeftLabel{\redlab{TS{:}sum}}
            \BinaryInfC{$ \Gamma , \Delta_1 , \cdots , \Delta_k  \vdash \sum_{B_i \in \perm{B}}M\linexsub{B_i(1)/x_1} \cdots \linexsub{B_i(k)/x_k}: \tau$}
        \end{prooftree}
        \end{minipage}

         From the typing of $\expr{M}'  $ we can deduce that $\expr{M} = (\lambda x. M[\widetilde{x} \leftarrow x ]) B $ may be typed by:

        \begin{prooftree}
            \AxiomC{\( \Delta_1 \vdash M_1 : \sigma\)}
            \AxiomC{\( \Delta_k \vdash M_k : \sigma\)}
            \UnaryInfC{\(\ \vdots \)}
            \LeftLabel{\redlab{TS{:}bag}}
            \BinaryInfC{\( \Delta_1 , \cdots , \Delta_k \vdash B : \sigma^k \)}
            \AxiomC{\( \Gamma  , x_1:\sigma, \cdots , x_k:\sigma \vdash M : \tau \)}
            \UnaryInfC{\( \qquad \Gamma , x:\sigma^k \vdash M [\widetilde{x} \leftarrow x]: \tau \)}
            \LeftLabel{\redlab{TS\!:ex \dash sub}} 
            \BinaryInfC{\( \Gamma , \Delta_1 , \cdots , \Delta_k  \vdash M[\widetilde{x} \leftarrow x] \esubst{ B }{ x } : \tau \)}
        \end{prooftree}

         \item When $\expr{M}'$ is reduced to via the Rule~\redlab{RS{:}Lin\dash Fetch}:
        
         \begin{prooftree}
            \AxiomC{$ \headf{M} = x$}
             \LeftLabel{\redlab{RS{:}Lin\dash Fetch}}
             \UnaryInfC{\(  M \linexsub{N/x} \red  M \headlin{ N/x } \)}
        \end{prooftree}
        
        The result follow from \Cref{lem:antisubt_lem_shar}.

        \item When $\expr{M}'$ is reduced to via the Rule~\redlab{RS:TCont}:
            \begin{prooftree}
                    \AxiomC{$   M \red M'_{1} + \cdots + M'_{k} $}
                    \LeftLabel{\redlab{RS:TCont}}
                    \UnaryInfC{$ C[M] \red  C[M'_{1}] + \cdots +  C[M'_{k}] $}
            \end{prooftree}
        
            Hence the proof follows by the IH on $M$.
        
        \item When $\expr{M}'$ is reduced to via the Rule~\redlab{RS:ECont}:
            \begin{prooftree}
                    \AxiomC{$ \expr{M}  \red \expr{M}'  $}
                    \LeftLabel{\redlab{RS:ECont}}
                    \UnaryInfC{$D[\expr{M}]  \red D[\expr{M}']  $}
            \end{prooftree} 
            
            Hence the proof follows by the IH on $M$. \qedhere
        
        
        
            

    \end{myEnumerate}
    
\end{proof}

\lamrsharfailsubs*

\begin{proof}
By structural induction on $M$ with $\headf{M}=x$.
There are six cases to be analyzed:
\begin{myEnumerate}
\item $M=x$.

In this case, $x:\sigma \wfdash x:\sigma$ and $\Gamma=\emptyset$.  Observe that $x\headlin{N/x}=N$, since $\Delta\wfdash N:\sigma$, by hypothesis, the result follows.

    \item $M = M'\ B$.
    
    Then $\headf{M'\ B} = \headf{M'} = x$, and the derivation is the following by inversion of the typing derivation:
    \begin{prooftree}
        \AxiomC{$\Gamma_1 , x:\sigma \wfdash M': \delta^{j}  \rightarrow \tau$}\
        \AxiomC{$\Gamma_2 \wfdash B : \delta^{k} $}
    	\LeftLabel{\redlab{FS{:}app}}
        \BinaryInfC{$\Gamma_1 , \Gamma_2 , x:\sigma \wfdash M'B:\tau $}    
    \end{prooftree}

    where $\Gamma=\Gamma_1,\Gamma_2$, and  $j,k$ are non-negative integers, possibly different.  Since $\Delta \vdash N : \sigma$, by IH, the result holds for $M'$, that is,
    $$\Gamma_1 , \Delta \wfdash M'\headlin{ N / x }: \delta^{j}  \rightarrow \tau$$
    which gives the  derivation:


    \begin{prooftree}
        \AxiomC{$\Gamma_1 , \Delta \wfdash M'\headlin{ N / x }: \delta^{j}  \rightarrow \tau$}\
        \AxiomC{$\Gamma_2 \wfdash B : \delta^{k} $}
    	\LeftLabel{\redlab{FS{:}app}}
        \BinaryInfC{$\Gamma_1 , \Gamma_2 , \Delta \wfdash ( M'\headlin{ N / x } ) B:\tau $}    
    \end{prooftree}
    
      From \defref{def:headlinfail},   $(M'B) \headlin{ N / x } = ( M'\headlin{ N / x } ) B$, therefore, $\Gamma, \Delta \wfdash  (M' B) \headlin{ N / x } :\tau $ and the result follows.
    
    \item $M = M'[\widetilde{y} \leftarrow y] $.
    
    Then $ \headf{M'[\widetilde{y} \leftarrow y]} = \headf{M'}=x$, for  $y\neq x$. Therefore by inversion of the typing derivation, 
    \begin{prooftree}
        \AxiomC{\( \Gamma_1 , y_1: \delta, \cdots, y_k: \delta , x: \sigma \wfdash M' : \tau \quad y\notin \Gamma_1 \quad k \not = 0\)}
        \LeftLabel{ \redlab{FS{:}share}}
        \UnaryInfC{\( \Gamma_1 , y: \delta^k, x: \sigma \wfdash M'[y_1 , \cdots , y_k \leftarrow y] : \tau \)}
    \end{prooftree}
    where $\Gamma=\Gamma_1 , y: \delta^k$. 
    By IH, the result follows for $M'$, that is, 
    $$\Gamma_1 , y_1: \delta, \cdots, y_k: \delta ,\Delta \wfdash M'\headlin{N/x} : \tau $$
    
    and we have the derivation:
    
    \begin{prooftree}
        \AxiomC{\( \Gamma_1 , y_1: \delta, \cdots, y_k: \delta , \Delta \wfdash  M' \headlin{ N / x} : \tau \quad y\notin \Gamma_1 \quad k \not = 0\)}
        \LeftLabel{ \redlab{FS{:}share} }
        \UnaryInfC{\( \Gamma_1 , y: \delta^k, \Delta \wfdash M' \headlin{ N / x} [\widetilde{y} \leftarrow y] : \tau \)}
    \end{prooftree}
    From \defref{def:headlinfail} one has  $M'[\widetilde{y} \leftarrow y] \headlin{ N / x } = M' \headlin{ N / x} [\widetilde{y} \leftarrow y]$. Therefore, $\Gamma,\Delta\wfdash M'[\widetilde{y} \leftarrow y] \headlin{ N / x }:\tau$ and the result follows.
    \item $M = M'[ \leftarrow y] $.
    
    Then $ \headf{M'[ \leftarrow y]} = \headf{M'}=x$ with  $x \not  = y $, 
    \begin{prooftree}
        \AxiomC{\( \Gamma  , x: \sigma  \wfdash M : \tau\)}
        \LeftLabel{ \redlab{FS{:}weak} }
        \UnaryInfC{\(  \Gamma  , y: \omega, x: \sigma  \wfdash M[\leftarrow y]: \tau \)}
    \end{prooftree}
     and $M'[ \leftarrow y] \headlin{ N / x } = M' \headlin{ N / x} [ \leftarrow y]$. Then by the IH:
    \begin{prooftree}
        \AxiomC{\(  \Gamma , \Delta  \wfdash M \headlin{ N / x}: \tau\)}
        \LeftLabel{ \redlab{FS{:}weak}}
        \UnaryInfC{\(  \Gamma  , y: \omega, \Delta \wfdash M\headlin{ N / x}[\leftarrow y]: \tau \)}
    \end{prooftree}

\item $M = M'[\widetilde{y} \leftarrow y]\esubst{B }{ y}$.

Then $\headf{M'[\widetilde{y} \leftarrow y]\esubst{B }{ y}} = \headf{M'[\widetilde{y} \leftarrow y]} = x \not = y$  by inversion of the typing derivation we have: 

\begin{prooftree}
    \AxiomC{\( \Gamma_1 , \hat{y}:\delta^{k} , x:\sigma \wfdash M'[\widetilde{y} \leftarrow y] : \tau \)}
    \AxiomC{\( \Gamma_2 \wfdash B : \delta^{j} \)}
	\LeftLabel{\redlab{FS{:}ex \dash sub}}
    \BinaryInfC{\( \Gamma_1 , \Gamma_2 ,  x:\sigma \wfdash M' [\widetilde{y} \leftarrow y] \esubst{B }{ y} : \tau \)}
\end{prooftree}
 and $M' [\widetilde{y} \leftarrow y] \esubst{ B}{ y} \headlin{ N / x } = M' [\widetilde{y} \leftarrow y] \headlin{ N / x} \esubst{ B}{ y}$. By IH:

\begin{prooftree}
    \AxiomC{\( \Gamma_1 , \hat{y}:\delta^{k}, \Delta \wfdash  M' [\widetilde{y} \leftarrow y] \headlin{ N / x } : \tau \)}
    \AxiomC{\( \Gamma_2 \wfdash B : \delta^{j} \)}
	\LeftLabel{\redlab{FS{:}ex \dash sub}}
    \BinaryInfC{\( \Gamma_1 , \Gamma_2 ,  \Delta \wfdash M' [\widetilde{y} \leftarrow y] \headlin{ N / x } \esubst{ B }{ y} : \tau \)}
\end{prooftree}

    \item $M =  M' \linexsub {M'' /y}$.
    
    Then $\headf{M' \linexsub {M'' /y}} = \headf{M'} = x \not = y$, by inversion of the typing derivation we have: 
    
    \begin{prooftree}
        \AxiomC{\( \Delta \wfdash M'' : \delta \)}
        \AxiomC{\( \Gamma  , y:\delta , x: \sigma \wfdash M : \tau \)}
        \LeftLabel{ \redlab{FS{:}ex \dash lin \dash sub} }
        \BinaryInfC{\( \Gamma_1, \Gamma_2 , x: \sigma \wfdash M' \linexsub {M'' /y} : \tau \)}
    \end{prooftree}
     and $M' \linexsub {M'' /y}  \headlin{ N / x } = M'  \headlin{N / x } \linexsub {M'' /y}$. Then by the IH:
    
    \begin{prooftree}
        \AxiomC{\( \Delta \wfdash M'' : \delta \)}
        \AxiomC{\( \Gamma  , y:\delta , \Delta  \wfdash M'  \headlin{N / x } : \tau \)}
        \LeftLabel{ \redlab{FS{:}ex \dash lin \dash sub} }
        \BinaryInfC{\( \Gamma_1, \Gamma_2 , \Delta  \wfdash M'  \headlin{N / x } \linexsub {M'' /y} : \tau \)}
    \end{prooftree} \vspace*{-\baselineskip} \qedhere
    \end{myEnumerate}
\end{proof}

\applamrsharfailsr*

\begin{proof} By structural induction on the reduction rule from \figref{fig:share-reductfailure} applied in $\expr{M}\red \expr{N}$. There are nine cases to be analyzed:

\begin{myEnumerate}

	\item Rule~$\redlab{RS{:}Beta}$.
	
	Then $\expr{M} = (\lambda x. M[\widetilde{x} \leftarrow x]) B $  and the reduction is:
	  \begin{prooftree}
        \AxiomC{}
        \LeftLabel{\redlab{RS{:}Beta}}
        \UnaryInfC{\((\lambda x. M[\widetilde{x} \leftarrow x]) B \red M[\widetilde{x} \leftarrow x]\ \esubst{ B }{ x }\)}
     \end{prooftree}

 	where $ \expr{M}'  =  M[\widetilde{x} \leftarrow x]\ \esubst{ B }{ x }$. Since $\Gamma\wfdash \expr{M}:\tau$ we get the following derivation by inversion of the typing derivation:
	\begin{prooftree}
			\AxiomC{$ \Gamma' , x_1:\sigma , \cdots , x_j:\sigma  \wfdash  M: \tau $}
			\LeftLabel{ \redlab{FS{:}share} }
			\UnaryInfC{$  \Gamma' , x:\sigma^{j}  \wfdash  M[\widetilde{x} \leftarrow x]: \tau $}
			\LeftLabel{ \redlab{FS{:}abs \dash sh} }
            \UnaryInfC{$ \Gamma' \wfdash \lambda x. M[\widetilde{x} \leftarrow x]: \sigma^{j} \rightarrow \tau $}
              \AxiomC{$\Delta \wfdash B: \sigma^{k} $}
			\LeftLabel{ \redlab{FS{:}app} }
		\BinaryInfC{$ \Gamma' , \Delta \wfdash (\lambda x. M[\widetilde{x} \leftarrow x]) B:\tau $}
	\end{prooftree}
	for $\Gamma = \Gamma' , \Delta $ and $x\notin \dom{\Gamma'}$. 
	Notice that: 

    \begin{prooftree}
                \AxiomC{$ \Gamma' , x_1:\sigma , \cdots , x_j:\sigma  \wfdash  M: \tau $}
			\LeftLabel{ \redlab{FS{:}share} }
			\UnaryInfC{$  \Gamma' , x:\sigma^{j}  \wfdash  M[\widetilde{x} \leftarrow x]: \tau $}
                \AxiomC{$\Delta \wfdash B:\sigma^{k}  $}
                \LeftLabel{ \redlab{FS{:}ex \dash sub} }
            \BinaryInfC{$ \Gamma' , \Delta \wfdash M[\widetilde{x} \leftarrow x]\ \esubst{ B }{ x }:\tau $}
    \end{prooftree}

    Therefore $ \Gamma',\Delta\wfdash\expr{M}' :\tau$ and the result follows.
    
    \item Rule~$ \redlab{RS{:}Ex \dash Sub}$.
    
    Then $ \expr{M} =  M[x_1, \cdots , x_k \leftarrow x]\ \esubst{ B }{ x }$ where $B=  \bag{N_1, \dots ,N_k} $. By inversion of the typing derivation the reduction is:
    
    \hspace*{-30pt}
    \begin{minipage}{\linewidth}    
    \begin{prooftree}
        \AxiomC{$B = \bag{N_1,
        \cdots ,N_k} \quad k \geq  1 $}
        \AxiomC{$ M \not= \fail^{\widetilde{y}} $}
        \LeftLabel{\redlab{RS{:}Ex \dash Sub}}
        \BinaryInfC{\( M[x_1, \cdots , x_k \leftarrow x]\ \esubst{ B }{ x } \red \sum_{B_i \in \perm{B}}M\ \linexsub{B_i(1)/x_1} \cdots \linexsub{B_i(k)/x_k}    \)}
    \end{prooftree}
    \end{minipage}    
    
    and $\expr{M'}= \sum_{B_i \in \perm{B}}M\ \linexsub{B_i(1)/x_1} \cdots \linexsub{B_i(k)/x_k}.$
    To simplify the proof we take $k=2$, as the case $k>2$ is similar. Therefore,
    
    \begin{itemize}
        \item $B=\bag{N_1,N_2}$; and
        \item $\perm{B}=\{\bag{N_1,N_2}, \bag{N_2,N_1}\}$
    \end{itemize} 
    
    Since $\Gamma\wfdash \expr{M}:\tau$ we get a derivation where we first type the bag $B$ with the derivation $\Pi$, given next:
    
    \begin{prooftree}
                    \AxiomC{\(  \Delta_1 \wfdash N_1 : \sigma \)}
                                    \AxiomC{\( \Delta_{2} \wfdash N_{2} : \sigma   \)}
        				\AxiomC{\(  \)}
                        \LeftLabel{ \redlab{FS{:}\oneb} }
                        \UnaryInfC{\( \wfdash \oneb : \omega \)}
        			\LeftLabel{ \redlab{FS{:}bag} }
        			\BinaryInfC{\( \Delta_2  \wfdash \bag{N_2}: \sigma   \)}
        			\LeftLabel{ \redlab{FS{:}bag} }
                 \BinaryInfC{\(\Delta  \wfdash B: \sigma \wedge \sigma \) }
    \end{prooftree}
The full derivation is as follows:
    \begin{prooftree}
                    \AxiomC{$ \Gamma' , x_1:\sigma ,  x_2:\sigma  \wfdash  M: \tau $}
			    \LeftLabel{ \redlab{FS{:}share} }
			    \UnaryInfC{$  \Gamma' , x:\sigma \wedge \sigma  \wfdash  M[\widetilde{x} \leftarrow x]: \tau $}
                                        \AxiomC{\(  \Pi \)}
                    \LeftLabel{ \redlab{FS{:}ex \dash sub} }
            \BinaryInfC{$ \Gamma' , \Delta \wfdash M[\widetilde{x} \leftarrow x]\ \esubst{ B }{ x }:\tau $}
    \end{prooftree}

    where $\Delta= \Delta_1,\Delta_2$ and $\Gamma = \Gamma' , \Delta $. We can build a derivation $\Pi_{1,2}$ of $ \Gamma' , \Delta \wfdash  M \linexsub{N_1/x_1} \linexsub{N_2/x_2}    : \tau$ as :
        \begin{prooftree}
        \AxiomC{\( \Gamma'  , x_1:\sigma, x_2:\sigma \wfdash M : \tau \)}
        \AxiomC{\( \Delta_1 \wfdash N_1 : \sigma \)}
        \LeftLabel{ \redlab{FS{:}ex \dash lin \dash sub} }
        \BinaryInfC{\( \Gamma , \Delta_1  , x_2:\sigma \wfdash M \linexsub{N_1 /x_1} : \tau \)}
        \AxiomC{\( \Delta_2 \wfdash N_2 : \sigma \)}
        \LeftLabel{ \redlab{FS{:}ex \dash lin \dash sub} }
        \BinaryInfC{$ \Gamma' , \Delta \wfdash  M \linexsub{N_1/x_1} \linexsub{N_2/x_2}    : \tau$}
    \end{prooftree}
    
    Similarly, we can obtain a derivation $\Pi_{2,1}$ of $ \Gamma' , \Delta \wfdash  M \linexsub{N_2/x_1} \linexsub{N_1/x_2}    : \tau$.   Finally, applying Rule~\redlab{FS{:}sum}:
    \begin{prooftree}
        \AxiomC{\( \Pi_{1,2} \)}
        \AxiomC{\( \Pi_{2,1} \)}
        \LeftLabel{ \redlab{FS{:}sum} }
        \BinaryInfC{$ \Gamma' , \Delta \wfdash  M \linexsub{N_1 /x_1} \linexsub{N_2/x_k} + M \linexsub{N_2/x_1} \linexsub{N_1/x_k}   : \tau $}
    \end{prooftree}
    
    and the result follows.
    
    \item Rule~$\redlab{RS{:}Lin \dash Fetch} $.
    
    Then $ \expr{M} =M\ \linexsub{N/x}  $ where  $\headf{M} = x$. The reduction is: 
    
    \begin{center}
     \AxiomC{$ \headf{M} = x$}
     \LeftLabel{\redlab{RS{:}Lin \dash Fetch}}
     \UnaryInfC{\(M\ \linexsub{N/x} \red  M \headlin{ N/x } \)}
     \DisplayProof
    \end{center}
    and $\expr{M'}=M\headlin{N/x}$.     Since $\Gamma\wfdash \expr{M}:\tau$ we get the following derivation by inversion of the typing derivation:
    
    \begin{prooftree}
        \AxiomC{\( \Delta \wfdash N : \sigma \)}
        \AxiomC{\( \Gamma'  , x:\sigma \wfdash M : \tau \)}
        \LeftLabel{ \redlab{FS{:}ex \dash lin \dash sub} }
        \BinaryInfC{\( \Gamma', \Delta \wfdash M \linexsub{N / x} : \tau \)}
    \end{prooftree}
        
    where $\Gamma = \Gamma' , \Delta $. By the Substitution Lemma (Lemma~\ref{l:lamrsharfailsubs}), we obtain a derivation $\Gamma' , \Delta \wfdash   M \headlin{ N/x } : \tau $, and the result follows.

\item Rule~$\redlab{RS{:}TCont}$.

Then $\expr{M} = C[M]$ and the reduction is as follows:
\begin{prooftree}
        \AxiomC{$   M \red M'_{1} + \cdots +  M'_{k} $}
        \LeftLabel{\redlab{RS{:}TCont}}
        \UnaryInfC{$ C[M] \red  C[M'_{1}] + \cdots +  C[M'_{k}] $}
\end{prooftree}
with $\expr{M'} =  C[M'_{1}] + \cdots +  C[M'_{k}] $. 
The proof proceeds by analysing the context $C$. \\
There are four cases:

\begin{enumerate}
    \item $C=[\cdot]\ B$.
    
    In this case $\expr{M}=M \ B$, for some $B$. Since $\Gamma\vdash \expr{M}:\tau$ by inversion of the typing derivation, one has the derivation:
\begin{prooftree}
    \AxiomC{\(  \Gamma' \wfdash  M: \sigma^{j} \rightarrow \tau \)}
    \AxiomC{\( \Delta \wfdash  B : \sigma^{k} \)}
        \LeftLabel{ \redlab{FS{:}app} }
    \BinaryInfC{\( \Gamma', \Delta \wfdash  M\ B : \tau\)}
\end{prooftree}

where $\Gamma = \Gamma' , \Delta $. From  $\Gamma'\wfdash M:\sigma^j\rightarrow\tau$ and the reduction $M \red M'_{1} + \cdots +  M'_{k} $, one has by IH that  $\Gamma'\wfdash M_1'+\ldots, M_k':\sigma^j\rightarrow\tau$, which entails $\Gamma'\wfdash M_i':\sigma^j\rightarrow\tau$, for $i=1,\ldots, k$, via Rule~\redlab{FS{:}sum}. Finally, we may type the following:
\begin{prooftree}
            \AxiomC{\(  \forall i \in {1 , \cdots , l} \)}
			    \AxiomC{\(  \Gamma' \wfdash  M'_{i}: \sigma^{j} \rightarrow \tau \)}
    				\AxiomC{\( \Delta \wfdash  B : \sigma^{k} \)}
        		\LeftLabel{ \redlab{FS{:}app} }
			\BinaryInfC{\(  \Gamma', \Delta \wfdash (M'_{i}\ B):  \tau \)}
			
        \LeftLabel{ \redlab{FS{:}sum} }
    \BinaryInfC{\( \Gamma', \Delta \wfdash (M'_{1}\ B) + \cdots +  (M'_{l} \ B) : \tau\)}
\end{prooftree}

Since $ \expr{M}'  =   (C[M'_{1}]) + \cdots +  (C[M'_{l}]) = M_1'B+\ldots+M_k'B$, the result follows.
\item  Cases $C=[\cdot]\linexsub{N/x} $ and $C=[\cdot][\widetilde{x} \leftarrow x]$ are similar to the previous one.
\item $C= [\cdot][ \leftarrow x]\esubst{\oneb}{ x}$

In this case $\expr{M}=C[M]=M[\leftarrow x] \esubst{\oneb}{x}$. Since $\Gamma\wfdash \expr{M}:\tau$ by inversion of the typing derivation, one has a derivation

\begin{prooftree}
    \AxiomC{$\Gamma\wfdash M:\tau$}
    \LeftLabel{\redlab{FS \dash weak}}
    \UnaryInfC{\(  \Gamma, x:\omega \wfdash  M[\leftarrow x] : \tau \)}
    \AxiomC{}
    \LeftLabel{\redlab{TS{:} \oneb}}
    \UnaryInfC{\(  \vdash  \oneb: \omega \)}
    \LeftLabel{\redlab{FS{:} wf \dash bag}}
    \UnaryInfC{\(  \wfdash  \oneb: \omega \)}
        \LeftLabel{ \redlab{FS{:}ex\dash sub} }
    \BinaryInfC{\(  \Gamma \wfdash  M[\leftarrow x] \esubst{\oneb}{x}:\tau \)}
\end{prooftree}

From $M\red M_1+\ldots+M_k$ and $\Gamma \wfdash M:\tau$, by the IH, it follows that $\Gamma\wfdash M_1+\ldots+ M_k:\tau$, and consequently, $\Gamma\wfdash M_i$, via application of \redlab{FS{:}sum}. Therefore, there exists a derivation

\begin{prooftree}
            \AxiomC{$\Gamma \wfdash M_i:\tau $}
            \UnaryInfC{$\Gamma, x:\omega\wfdash M_i[\leftarrow  x]:\tau $}
            \AxiomC{$\wfdash \oneb:\omega$}
            \BinaryInfC{$\Gamma \wfdash M_i[\leftarrow  x]\esubst{\oneb}{x}:\tau $}
\end{prooftree}
for each $i=1,\ldots,k$. By applying \redlab{FS{:}sum}, we obtain $\Gamma \wfdash M_1[\leftarrow  x]\esubst{\oneb}{x}+\ldots+ M_k[\leftarrow  x]\esubst{\oneb}{x}:\tau $, and the result follows.
\end{enumerate}

	\item Rule~$ \redlab{RS{:}ECont} $.
	
Then $\expr{M} = D[\expr{M}_1 ]$ where $\expr{M}_1  \red \expr{M}_2$ then we can perform the following reduction:

\begin{prooftree}
        \AxiomC{$ \expr{M}_1  \red \expr{M}_2 $}
        \LeftLabel{$\redlab{RS{:}ECont}$}
        \UnaryInfC{$D[\expr{M}_1]  \red D[\expr{M}_2]  $}
\end{prooftree}

and $\expr{M'}=D[\expr{M}_2]$. \\
The proof proceeds by analysing the context $D$. There are two cases: 
$D= [\cdot] + \expr{N}$ and $D= \expr{N} + [\cdot]$. 
We analyze only the first one:

    $D= [\cdot] + \expr{N}$. In this case $\expr{M}= \expr{M}_1+\expr{N}$ and by inversion of the typing derivation:
\begin{prooftree}
    \AxiomC{$ \Gamma \wfdash  \expr{M}_1 : \tau$}
    \AxiomC{$ \Gamma \wfdash  \expr{N} : \tau$}
    \LeftLabel{ \redlab{FS{:}sum} }
    \BinaryInfC{$ \Gamma \wfdash  \expr{M}_1+\expr{N}: \tau$}
\end{prooftree}

From $ \Gamma \wfdash \expr{M}_1 : \tau$ and $ \expr{M}_1  \red \expr{M}_2  $, by IH, one has that $\Gamma \wfdash  \expr{M}_2 : \tau$.
Hence we may type the following:
\begin{prooftree}
    \AxiomC{$ \Gamma \wfdash  \expr{M}_2 : \tau$}
    \AxiomC{$ \Gamma \wfdash  \expr{N} : \tau$}
    \LeftLabel{ \redlab{FS{:}sum} }
    \BinaryInfC{$ \Gamma \wfdash  \expr{M}_2+\expr{N}: \tau$}
\end{prooftree}
Since $\expr{M}'=D[\expr{M}_2]=\expr{M}_2+\expr{N}$,  the result follows.

\item Rule~$ \redlab{RS{:}Fail} $.

Then $\expr{M} =  M[x_1, \cdots , x_k \leftarrow x]\ \esubst{ B }{ x } $ where $B = \bag{N_1,\dots ,N_l}  $ and  the reduction is:
 \begin{prooftree}
    \AxiomC{$k \neq \size{B}$}
     \AxiomC{$ \widetilde{y} = (\lfv{M} \setminus \{  x_1, \cdots , x_k \} ) \cup \lfv{B} $}
    \LeftLabel{\redlab{RS{:}Fail}}
    \BinaryInfC{\( M[x_1, \cdots , x_k \leftarrow x]\ \esubst{ B }{ x } \red \sum_{B_i \in \perm{B}}  \fail^{\widetilde{y}} \)}
 \end{prooftree}
where $\expr{M'}=\sum_{B_i \in \perm{B}}  \fail^{\widetilde{y}}$. Since $\Gamma\wfdash \expr{M}$ and by inversion of the typing derivation, one has a derivation:
\begin{prooftree}
            \AxiomC{\( \Gamma' , x_1:\sigma,\ldots, x_k:\sigma \wfdash M: \tau \)}
            \LeftLabel{ \redlab{FS{:}ex \dash sub} }    
            \UnaryInfC{\( \Gamma' , x:\sigma^{k} \wfdash M[x_1, \cdots , x_k \leftarrow x] : \tau \)}
            \AxiomC{\( \Delta \wfdash B : \sigma^{j} \)}
        \LeftLabel{ \redlab{FS{:}ex \dash sub} }    
        \BinaryInfC{\( \Gamma', \Delta \wfdash M[x_1, \cdots , x_k \leftarrow x]\ \esubst{ B }{ x }  : \tau \)}
    \end{prooftree}

where $\Gamma = \Gamma' , \Delta $. We may type the following:
    \begin{prooftree}
        \AxiomC{\( \)}
        \LeftLabel{ \redlab{FS{:}fail}}
        \UnaryInfC{\( \Gamma' , \Delta \wfdash  \fail^{\widetilde{y}} : \tau  \)}
    \end{prooftree}
since $\Gamma',\Delta$ contain assignments on the free variables in $M$ and $B$.
Therefore, $\Gamma\wfdash \fail^{\widetilde{y}}:\tau$, by applying \redlab{FS{:}sum}, it follows that $\Gamma\wfdash \sum_{B_i\in \perm{B}}\fail^{\widetilde{y}}:\tau$, as required. 
\item Rule~$\redlab{RS{:}Cons_1}$.

Then $\expr{M} =   \fail^{\widetilde{x}}\ B $ where $B = \bag{N_1, \dots ,N_k} $  and  the  reduction is:

\begin{prooftree}
    \AxiomC{\( B = \bag{N_1, \dots ,N_k} \)}
    \AxiomC{\( \widetilde{y} = \lfv{B} \)}
    \LeftLabel{$\redlab{RS{:}Cons_1}$}
    \BinaryInfC{\( \fail^{\widetilde{x}} \ B  \red \sum_{\perm{B}} \fail^{\widetilde{x} \cup \widetilde{y}} \)}
\end{prooftree}
and $ \expr{M}'  =  \sum_{\perm{B}} \fail^{\widetilde{x} \cup \widetilde{y}} $.
Since $\Gamma\wfdash \expr{M}:\tau$ and by inversion  of the typing derivation, one has the derivation: 
    \begin{prooftree}
        \AxiomC{\( \)}
        \LeftLabel{\redlab{FS{:}fail}}
        \UnaryInfC{\( \Gamma' \wfdash \fail^{\widetilde{x}} : \omega \rightarrow \tau \)}
        \AxiomC{\( \Delta \wfdash B : \pi \)}
            \LeftLabel{\redlab{FS{:}app}}
        \BinaryInfC{\( \Gamma', \Delta \wfdash \fail^{\widetilde{x}}\ B : \tau\)}
    \end{prooftree}

where $\Gamma = \Gamma' , \Delta $. After $\perm{B}$ applications of \redlab{FS{:}sum}, we obtain $ \Gamma \wfdash$\linebreak[4]$\sum_{\perm{B}} \fail^{\widetilde{x} \cup \widetilde{y}}: \tau$, and the result follows.

\item Rule~$\redlab{RS{:}Cons_2}$.

Then $\expr{M} =   (\fail^{\widetilde{x}\cup \widetilde{y}} [ \widetilde{x} \leftarrow x])\esubst{ B }{ x } $for $B = \bag{N_1, \dots ,N_k} $  and the reduction is:

\begin{prooftree}
    \AxiomC{\(  B = \bag{N_1, \dots ,N_k} \quad k+|\widetilde{x}|\neq 0 \)}
    \AxiomC{\(  \widetilde{y} = \lfv{B} \)}
    \LeftLabel{$\redlab{RS{:}Cons_2}$}
    \BinaryInfC{\( (\fail^{\widetilde{x} \cup \widetilde{y}} [ \widetilde{x} \leftarrow x])\esubst{ B }{ x } \red \sum_{\perm{B}} \fail^{\widetilde{y} \cup \widetilde{z}} \)}
\end{prooftree}
with $\expr{M'}=\sum_{\perm{B}} \fail^{\widetilde{y}\cup \widetilde{z}}$. Since $\Gamma\wfdash \expr{M}:\tau$ and by inversion  of the typing derivation, one has the derivation:

    \begin{prooftree}
              \AxiomC{\( \)}
            \LeftLabel{\redlab{FS{:}fail}}
            \UnaryInfC{\( \Delta , x_1: \sigma, \cdots, x_j: \sigma \wfdash \fail^{\widetilde{x}\cup \widetilde{y}} : \tau \quad x\notin \Delta \quad k \not = 0\)}
            \LeftLabel{ \redlab{FS{:}share}}
            \UnaryInfC{\( \Delta , x: \sigma^j \wfdash \fail^{\widetilde{x}\cup \widetilde{y}}[x_1 , \cdots , x_j \leftarrow x] : \tau \)}
             \AxiomC{\( \Delta \wfdash B : \sigma^{k} \)}
        \LeftLabel{\redlab{FS{:}ex \dash sub}}    
        \BinaryInfC{\( \Gamma, \Delta \wfdash \fail^{\widetilde{x}\cup \widetilde{y}} [\widetilde{x} \leftarrow x]\esubst{ B }{ x } : \tau \)}
    \end{prooftree}

Hence $\Gamma = \Gamma' , \Delta $ and $ \expr{M}'  =  \sum_{\perm{B}} \fail^{\widetilde{y}\cup \widetilde{z}}$ and we may type the following:
    \begin{prooftree}
        \AxiomC{\( \)}
        \LeftLabel{\redlab{FS{:}fail}}
        \UnaryInfC{$ \Gamma \wfdash \fail^{\widetilde{y}\cup \widetilde{z}} : \tau$}
        \AxiomC{$ \cdots $}
        \LeftLabel{\redlab{FS{:}sum}}
        \BinaryInfC{$ \Gamma \wfdash \sum_{\perm{B}} \fail^{\widetilde{y}\cup \widetilde{z}}: \tau$}
    \end{prooftree}
    \item Rule~$\redlab{RS{:}Cons_3}$.
    
    Then $\expr{M}=\fail^{\widetilde{y}\cup x}$ and the reduction is 
    
    \begin{prooftree}
    \AxiomC{$\widetilde{z}=\lfv{N}$}
    \LeftLabel{\redlab{RS{:}Cons_3}}
    \UnaryInfC{$\fail^{\widetilde{y}\cup x}\linexsub{N/x}\red \fail^{\widetilde{y}\cup \widetilde{z}}$}
    \end{prooftree}
    with $\expr{M'}=\fail^{\widetilde{y}\cup \widetilde{z}}$. Since $\Gamma\wfdash \expr{M}$ and by inversion of the typing derivation, one has the derivation
    
    \begin{prooftree}
    \AxiomC{}
        \LeftLabel{\redlab{FS{:}fail}}
    \UnaryInfC{$\Gamma',x:\sigma \wfdash \fail^{\widetilde{y}\cup x}:\tau$}
    \AxiomC{$\Delta\wfdash N:\sigma$}
    \LeftLabel{\redlab{FS{:}ex\dash lin\dash sub}}
    \BinaryInfC{$\Gamma',\Delta \wfdash \fail^{\widetilde{y}\cup x}\linexsub{N/x}:\tau$}
    \end{prooftree}
    where $x\notin \dom{\Gamma'}$, $\dom{\Gamma'}=\widetilde{y}$ and $\dom{\Delta}=\widetilde{z}=\lfv{N}$.
    
    We can type the following:
    \begin{prooftree}
    \AxiomC{}
    \LeftLabel{\redlab{FS{:}fail}}
    \UnaryInfC{$\Gamma',\Delta\wfdash\fail^{\widetilde{y}\cup \widetilde{z}}:\tau $}
    \end{prooftree}
    and the result follows. \qedhere
    \end{myEnumerate}
\end{proof}

\consistencytype*

\begin{proof}
\secondrev{
By induction on the typing derivation, with a case analysis on the last applied rule (\Cref{fig:wfsh_rules}). We only consider the cases for the typing rules that relate to the sharing construct and the explicit substitution. 
    First, consider conditions~1(i)~to~1(iv), which are related to $M [ \widetilde{x} \leftarrow x ]$. The conditions are as follows (i) 
	$\widetilde{x}$ contains pairwise distinct variables; 
	(ii)~every $x_i \in \widetilde{x}$ must occur exactly once in $M$; (iii) $x_i$ is not a sharing variable;
	(iv)~$M$ is consistent. By considering rule~$\redlab{FS{:}share}$, we have:
    \begin{prooftree}
        \AxiomC{\( \Gamma , x_1: \sigma, \cdots, x_k: \sigma \wfdash M : \tau \quad x\notin \dom{\Gamma} \quad k \not = 0\)}
        \LeftLabel{ \redlab{FS{:}share}}
        \UnaryInfC{\( \Gamma , x: \sigma^{k} \wfdash M[x_1 , \cdots , x_k \leftarrow x] : \tau \)}  
    \end{prooftree}
    Condition~1(i) follows from uniquness of variables within the context. Condition~1(ii) follows from the premise, which ensures that $M$ is well-formed with a context including each $x_i$; linearity conditions imply that each $x_i$ must be consumed so it must occur in  $M$. Condition~1(iii) also follows directly from the well-formedness of $M$: each $x_i$ is typed with a strict type, and the rule ensures that the sharing variable $x$ is typed with the multiset type $ \sigma^{k}$. Finally condition~1(iv) is ensured by the IH.
    }
    
    \secondrev{
    For conditions~2(i)~to~2(iv) which are (i) the variable $x$ must occur exactly once in $M$;
	(ii) $x$ cannot be a sharing variable; 	(iii)~$M$ and $N$ are consistent; (iv)~$\lfv{M} \cap \lfv{N} = \emptyset$. Consider the case of rule $\redlab{FS{:}ex \dash lin \dash sub}$: 
    \begin{prooftree}
        \AxiomC{\( \Gamma  , x:\sigma \wfdash M : \tau \quad \Delta \wfdash N : \sigma \)}
        \LeftLabel{\redlab{FS{:}ex \dash lin \dash sub}}
        \UnaryInfC{\( \Gamma, \Delta \wfdash M \linexsub{N / x} : \tau \)}
    \end{prooftree}
    First, because $\Gamma$ and $\Delta$ are disjoint, $x$ cannot appear within $\Delta$ and $M$ must consume the type of $x:\sigma$; hence $x$ must occur in $M$, satisfying condition~2(i) and 2(iv). 
    Second, $\Gamma , x:\sigma$ ensures a strict type for $x$;       
    if $x$ were a sharing variable in $M$ then $x$ would have a multiset type $\pi$. Therefore, condition~2(ii) is satisfied. Finally, condition~(iii) is satisfied by induction on $M$ and $N$.
    }
\end{proof} 

\consistencyencode*
\begin{proof}
\secondrev{
  By induction on the structure of $\expr{M}$. 
  Notice that $\recencodf{\cdot}$ ensures consistency for bound variables: it replaces all occurrences of a bound variable (say $y$) with fresh bound variables (say, $y_1, \ldots, y_k$). Thus, the following hold for bound variables: (i) they occur once within a term and (ii) they are not shared themselves, as the sharing of variables only occurs when handling binders associated to explicit substitutions and abstractions. 
  As for free variables, the translation $\recencodopenf{\cdot}$ replaces each occurrence with a fresh variable, and does so before applying $\recencodf{\cdot}$; this ensures that free variables that are already shared are not shared again. Because of this design, the translations preserve consistency.
}
\end{proof}

\section{Appendix to  \texorpdfstring{\secref{ss:firststep}}{§ 5.2}}
\label{app:firststep}

\subsection{Encoding \texorpdfstring{$\recencodf{\cdot}$}{⦇⋅⦈^•}}
\label{apen:firstencod}

\subsubsection[Auxiliary Encoding]{ Auxiliary Encoding: From \texorpdfstring{$\lamr$}{} into \texorpdfstring{$\lamrshar$}{} }


\begin{restatable}[]{prop}{}
\label{prop:linhed_enc}
The encoding commutes with linear substitution: $ \recencodf{M\headlin{N/x}}=$\linebreak[4]$\recencodf{M}\headlin{\recencodf{N}/x}$

\end{restatable}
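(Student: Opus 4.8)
The plan is to proceed by structural induction on the term $M$, working under the standing precondition $\headf{M}=x$, which is exactly what makes the linear head substitution $M\headlin{N/x}$ defined (cf.\ \defref{def:linsubfail}). Inspecting \defref{def:linsubfail}, the only term shapes compatible with $\headf{M}=x$ are $M=x$, $M=M'\,B$ with $\headf{M'}=x$, and $M=M'\esubst{B}{y}$ with $y\neq x$ and $\headf{M'}=x$ (recall that in \lamr the head of an explicit substitution is defined only when $\#(y,M')=\size{B}$). Throughout I adopt the usual variable convention, so that the bound variable $y$ of an explicit substitution and the fresh shared variables $\widetilde{y}$ introduced by $\recencodf{\cdot}$ are distinct from $x$ and do not occur free in $N$. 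I also use the (easy) auxiliary fact that $\recencodf{\cdot}$ preserves the head variable, i.e.\ $\headf{\recencodf{M'}}=x$ whenever $\headf{M'}=x$, which keeps the right-hand side well defined after translation.

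The base case $M=x$ is immediate: both sides reduce to $\recencodf{N}$, since $x\headlin{N/x}=N$ and $\recencodf{x}=x$. The application case $M=M'\,B$ is a direct push of the definitions: using $(M'\,B)\headlin{N/x}=(M'\headlin{N/x})\,B$ on the left and the matching \lamrshar clause $(\recencodf{M'}\,\recencodf{B})\headlin{\recencodf{N}/x}=(\recencodf{M'}\headlin{\recencodf{N}/x})\,\recencodf{B}$ on the right (\defref{def:headlinfail}), the claim follows by applying the induction hypothesis to $M'$, whose head is $x$.

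The real work is the explicit-substitution case $M=M'\esubst{B}{y}$. On the left, \defref{def:linsubfail} gives $\recencodf{(M'\esubst{B}{y})\headlin{N/x}}=\recencodf{(M'\headlin{N/x})\esubst{B}{y}}$. Since $y\notin\lfv{N}$ by the variable convention, the head substitution does not change the number of free occurrences of $y$, so $(M'\headlin{N/x})\esubst{B}{y}$ falls under the \emph{same} clause of $\recencodf{\cdot}$ as $M'\esubst{B}{y}$; I treat the representative subcase $\#(y,M')=\size{B}=k\geq 1$, the remaining (empty/weakening) subcase being analogous. Expanding the translation, the left-hand side becomes a sum, over $B_i\in\perm{\recencodf{B}}$, of terms $\recencodf{(M'\headlin{N/x})\linsub{\widetilde{y}}{y}}\linexsub{B_i(1)/y_1}\cdots\linexsub{B_i(k)/y_k}$. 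Because $x\neq y$, each $y_j$ is fresh, and $y_j\notin\lfv{N}$, the renaming $\linsub{\widetilde{y}}{y}$ commutes with $\headlin{N/x}$, i.e.\ $(M'\headlin{N/x})\linsub{\widetilde{y}}{y}=(M'\linsub{\widetilde{y}}{y})\headlin{N/x}$; applying the induction hypothesis to $M'\linsub{\widetilde{y}}{y}$ (still headed by $x$) rewrites each summand as $(\recencodf{M'\linsub{\widetilde{y}}{y}}\headlin{\recencodf{N}/x})\linexsub{B_i(1)/y_1}\cdots\linexsub{B_i(k)/y_k}$.

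For the right-hand side, $\recencodf{M'\esubst{B}{y}}$ is already this sum, but without the outer $\headlin{\recencodf{N}/x}$. Extending $\headlin{\cdot/\cdot}$ to expressions by distributing over $+$ (legitimate since every summand has head $x$), and then repeatedly applying the \lamrshar clause $(M\linexsub{L/y_j})\headlin{N/x}=(M\headlin{N/x})\linexsub{L/y_j}$ of \defref{def:headlinfail} --- valid because each $y_j\neq x$ --- pushes $\headlin{\recencodf{N}/x}$ past all the explicit linear substitutions and onto $\recencodf{M'\linsub{\widetilde{y}}{y}}$, yielding exactly the sum computed for the left-hand side. Matching the two completes the case. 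The main obstacle is precisely this step: I must ensure that the head substitution slides past both the occurrence-renaming $\linsub{\widetilde{y}}{y}$ and the freshly spawned substitutions $\linexsub{B_i(j)/y_j}$ while distributing over the non-deterministic sum, and that both sides land in the same clause of $\recencodf{\cdot}$. All of this rests on freshness of $\widetilde{y}$ and the convention $y\notin\lfv{N}$; making these side conditions explicit --- rather than the routine equational pushing --- is the delicate part of the argument.
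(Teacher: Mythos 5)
Your proof is correct and takes essentially the same route as the paper, whose entire proof is the one-line remark ``by induction on the structure of $M$''; your elaboration (base case, application case, and the explicit-substitution case with the same-clause argument via $y\notin\lfv{N}$, distribution of $\headlin{\cdot/\cdot}$ over the sums produced by the translation, and pushing the head substitution past the $\linexsub{\cdot/\cdot}$ chain) supplies exactly the missing details. The only refinement worth making explicit is that applying the induction hypothesis to the renamed term $M'\linsub{\widetilde{y}}{y}$ is not licensed by strict structural induction (it is not a subterm of $M$), so the induction should formally be on the size of $M$, or stated up to injective renaming, under which the claim and the measure are invariant.
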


\begin{proof}
By induction of the structure of $M$ in $M\headlin{\recencodf{N}/x}$. 
\end{proof}

 \begin{restatable}[Well-typedness preservation for $\recencodf{-}$]{prop}{typeencintolamrfail}
\label{prop:typeencintolamrfail}
\revo{A17,A18,A19}{
Let $B$ and $\expr{M}$  be a  bag and an  expression in $\lamrfail$, respectively. 
\begin{enumerate}
\item
    If $\Gamma \vdash B:\sigma$ 
then $ \strcore{\Gamma} \vdash \recencodf{B}:\sigma$.
    \item 
    If $\Gamma \vdash \expr{M}:\sigma$  
then $ \strcore{\Gamma} \vdash \recencodf{\expr{M}}:\sigma$.
\end{enumerate}}
\end{restatable}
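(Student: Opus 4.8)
The plan is to prove items~(1) and~(2) simultaneously, by mutual induction on the typing derivations $\Gamma \vdash B : \sigma$ and $\Gamma \vdash \expr{M} : \sigma$ built from the fail-free rules of \figref{fig:app_typingrepeat}, maintaining throughout the induction the hypothesis $\dom{\Gamma}=\lfv{\cdot}$ read as an equality of the underlying sets of variables (the closed case of the statement is the instance $\Gamma=\emptyset$). The objects $B$ and $\expr{M}$ are genuine $\lamrfail$ bags and expressions, and I apply to them the full translation $\recencodf{\cdot}$ of \figref{fig:auxencfail}. The key observation is that the two features separating $\lamrfail$ from its fail-free behaviour cannot intervene under the hypothesis $\Gamma\vdash\cdot$: the system $\vdash$ has no rule introducing $\fail^{\widetilde{x}}$, so no failure subterm occurs in a typed position; and Rule~\redlab{T:ex\dash sub} forces $\#(x,M)=\size{B}$, so the clause of $\recencodf{\cdot}$ for $M\esubst{B}{x}$ always takes its \emph{matching} branch (the permutation-indexed sum of explicit linear substitutions when $k\geq 1$, and $\recencodf{M}[\leftarrow x]\esubst{\oneb}{x}$ when $k=0$). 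Moreover, the invariant $\dom{\Gamma}=\lfv{\cdot}$ forbids $\omega$-typed entries in $\Gamma$, so \redlab{T:weak} is never the last rule applied; and the context translation of \defref{d:enclamcontfail} sends $\Gamma,\hat{x}{:}\sigma^{k}$ to $\recencodopenf{\Gamma},x{:}\sigma^{k}$.

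For item~(1) the cases are light: $\recencodf{\oneb}=\oneb$ is typed by \redlab{TS:\oneb}, and for $\bag{M}\cdot B'$ I combine the induction hypothesis on $M$ (via item~(2)) and on $B'$ (via item~(1)) through \redlab{TS:bag}. For item~(2), the variable, sum, and application cases follow directly from the induction hypotheses and the rules \redlab{TS:var}, \redlab{TS:sum}, and \redlab{TS:app}, since on these constructors the translation is homomorphic and the session-free rules of \figref{fig:typing_sharing} mirror those of \figref{fig:app_typingrepeat}.

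The substance of the argument lies in the abstraction and explicit-substitution cases, where $\recencodf{\cdot}$ introduces the sharing construct and atomises the bound variable. For $\lambda x.M$ typed by \redlab{T:abs} from $\Gamma,\hat{x}{:}\sigma^{k}\vdash M:\tau$ with $\#(x,M)=k\geq 1$, the induction hypothesis yields $\recencodopenf{\Gamma},x{:}\sigma^{k}\vdash\recencodf{M}:\tau$; then $k$ applications of Lemma~\ref{lem:preser_linsub}(1) replace each occurrence of $x$ by a fresh $x_i$, giving $\recencodopenf{\Gamma},x_1{:}\sigma,\dots,x_k{:}\sigma\vdash\recencodf{M}\linsub{\widetilde{x}}{x}:\tau$, which by Proposition~\ref{prop:linhed_encfail}(2) equals $\recencodf{M\linsub{\widetilde{x}}{x}}$; finally \redlab{TS:share} followed by \redlab{TS:abs\dash sh} rebuild $\lambda x.(\recencodf{M\linsub{\widetilde{x}}{x}}[\widetilde{x}\leftarrow x])$ at type $\sigma^{k}\to\tau$. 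The case $M\esubst{B}{x}$ with $k\geq 1$ is analogous, but closes the fresh atoms with explicit \emph{linear} substitutions: after atomising $x$ as above, I type each summand $\recencodf{M\linsub{\widetilde{x}}{x}}\linexsub{B_i(1)/x_1}\cdots\linexsub{B_i(k)/x_k}$ by $k$ applications of \redlab{TS:ex\dash lin\dash sub} (feeding in the component typings of $\recencodf{B}$ extracted from item~(1)), and collect the permutations with \redlab{TS:sum}. The two $k=0$ sub-cases (a vacuous abstraction, and $M\esubst{\oneb}{x}$) are exactly where weakening surfaces: the relevant premise carries a spurious assignment $x{:}\omega$ with $x\notin\lfv{M}$, so, since $x$ is not free in $M$, I remove it (strengthening) before invoking the induction hypothesis on $\Gamma\vdash M:\tau$, and re-introduce it on the $\lamrshar$ side via \redlab{TS:weak}, wrapping the translated body in the empty sharing $[\leftarrow x]$ precisely as the definition of $\recencodf{\cdot}$ prescribes.

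I expect the main obstacle to be the bookkeeping in these two binder cases: keeping the multiplicative splitting of $\Gamma,\Delta$ aligned with the fresh-atom substitutions, checking the side conditions of \redlab{TS:share} ($x\notin\dom{\recencodopenf{\Gamma}}$ and $k\neq 0$) after atomisation, and confirming that every summand produced by the permutation-indexed sum is typed under the same context and type so that \redlab{TS:sum} applies uniformly. The two load-bearing facts that make this bookkeeping go through are the commutation of $\recencodf{\cdot}$ with the atomising substitution (Proposition~\ref{prop:linhed_encfail}(2)) and the preservation of typing under linear substitution in $\lamrshar$ (Lemma~\ref{lem:preser_linsub}(1)).
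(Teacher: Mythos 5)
Your proposal is correct and follows essentially the same route as the paper's proof: mutual induction on the typing derivations, with the binder cases discharged by the linear-substitution lemma (Lemma~\ref{lem:preser_linsub}) together with the commutation of $\recencodf{\cdot}$ with linear substitution (Proposition~\ref{prop:linhed_encfail}), vacuous binders handled via \redlab{TS{:}weak} and the empty sharing $[\leftarrow x]$, and the invariant $\dom{\Gamma}=\lfv{\cdot}$ ruling out \redlab{T{:}weak} as a last rule --- exactly as in the paper. The only (harmless) difference is bookkeeping order: the paper atomises the bound variable on the \emph{source} judgment before invoking the induction hypothesis, whereas you invoke the induction hypothesis first and then atomise on the \emph{translated} side; both variants rest on the same two lemmas, and your explicit treatment of the $k=0$ abstraction case is, if anything, slightly more careful than the paper's.
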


\begin{proof}

 By mutual induction on the typing derivations for $B$ and $\expr{M}$, with an analysis of the last rule applied. 

\noindent Part~(1) includes two cases:
    \begin{enumerate}[i)]
    \item Rule~$\redlab{T:\oneb}$: Then $B = \oneb$ and the thesis follows trivially, 
    because the encoding of terms/bags  
        (cf.  Figure~\ref{fig:auxencfail})
        ensures that $\recencodf{\oneb}=\oneb$.
                
        \item Rule~$\redlab{T:bag}$. Then $B = \bag{M}\cdot A$, where $M$ is a term and $A$ is a bag,
        and 
        \begin{prooftree}
                    \AxiomC{\( \Gamma \vdash M : \sigma\)}
                \AxiomC{\( \Delta \vdash A : \pi\)}
                \LeftLabel{$\redlab{T:bag}$}
            \BinaryInfC{\( \Gamma \contexcat \Delta \vdash \bag{M}\cdot A:\sigma \wedge \pi\)}
            \end{prooftree}
By the IHs, we have both $\strcore{\Gamma} \vdash \recencodf{M} : \sigma$ and $\strcore{\Delta} \vdash \recencodf{A} : \pi$. The thesis then  follows by applying Rule~$\redlab{TS:bag}$ in \lamrsharfail:
        \begin{prooftree}
                \AxiomC{\( \strcore{\Gamma}   \vdash \recencodf{M} : \sigma\)}
                \AxiomC{\( \strcore{\Delta}  \vdash \recencodf{A} : \pi\)}
                \LeftLabel{$\redlab{TS:bag}$}
            \BinaryInfC{\( \strcore{\Gamma} , \strcore{\Delta}  \vdash \bag{\recencodf{M}}\cdot \recencodf{A}:\sigma \wedge \pi\)}
        \end{prooftree}
        \end{enumerate}
    
    \noindent Part~(2) considers six cases:
    
    \begin{enumerate}[i)]

                \item Rule~$\redlab{T:var}$: Then $\expr{M} = x$ 
        and 
                \begin{prooftree}
            \AxiomC{}
            \LeftLabel{$\redlab{T:var}$}
            \UnaryInfC{\( x: \sigma \vdash x : \sigma\)}    
        \end{prooftree}
        By the encoding of terms
        (cf. \figref{fig:auxencfail}
        ), we infer $x: \sigma \vdash x : \sigma$ and so 
        the thesis holds immediately. 
        
        
        \item Rule~$\redlab{T:abs}$: Then  $\expr{M} = \lambda x . M $ 
        and 
          \begin{prooftree}
    \AxiomC{\( \Gamma , x: \sigma^n \vdash M : \tau \)}
    \LeftLabel{$\redlab{T:abs}$}
    \UnaryInfC{\( \Gamma \vdash \lambda x . M :  \sigma^n  \rightarrow \tau \)}
            \end{prooftree}
            
                 By the encoding of terms (cf. \figref{fig:auxencfail}), we have 
        $\recencodf{  \expr{M}  }  =   \lambda x . \recencodf{M\linsub{x_1,\cdots, x_n}{x} } [\widetilde{x}\leftarrow x]$, where  $\#(x,M) = n$ and each $x_i$ is fresh.
        
            We work on the premise $\Gamma , {x}: \sigma^n \vdash M : \tau$ before appealing to the IH.
            
            Then, by  $n$ applications of Lemma~\ref{lem:preser_linsub} to this judgment, we obtain 
            \begin{equation}\label{eq:tp1}
             {\Gamma}, {x_1: \sigma, \cdots,  x_n: \sigma} \vdash {M}\linsub{x_1,\cdots, x_n}{x} : \tau
            \end{equation}
            By IH on \eqref{eq:tp1} we have
              \begin{equation}\label{eq:tp2}
             \strcore{\Gamma}, {x_1: \sigma, \cdots,  x_n: \sigma} \vdash \recencodf{M\linsub{x_1,\cdots, x_n}{x}} : \tau
            \end{equation}
            
              Starting from \eqref{eq:tp2}, we then have the following type derivation for  $\recencodf{\expr{M}}$, which concludes the proof for this case:
        

        \begin{prooftree}
         \LeftLabel{$(*)$}
                    \AxiomC{\( \strcore{\Gamma} , x_1: \sigma, \cdots, x_n: \sigma \vdash \recencodf{M\linsub{x_1,\cdots, x_n}{x}} : \tau \)}
            \LeftLabel{$\redlab{TS:share}$}
            \UnaryInfC{\( \strcore{\Gamma} , x: \sigma^n  \vdash \recencodf{M\linsub{x_1,\cdots, x_n}{x}}[x_1 , \cdots , x_n \leftarrow x] : \tau \)}
            \LeftLabel{$\redlab{TS:abs \dash sh}$}
            \UnaryInfC{\( \strcore{\Gamma} \vdash \lambda x . (\recencodf{M\linsub{x_1,\cdots, x_n}{x}}[x_1 , \cdots , x_n \leftarrow x]) : \sigma^n \rightarrow \tau \)}
        \end{prooftree}

        
        
        \item Rule~$\redlab{T:app}$: Then $\expr{M} = M\ B$ and
         \begin{prooftree}
    \AxiomC{\( \Gamma \vdash M : \pi \rightarrow \tau \)}
    \AxiomC{\( \Delta \vdash B : \pi \)}
        \LeftLabel{$\redlab{T:app}$}
    \BinaryInfC{\( \Gamma \contexcat \Delta \vdash M\ B : \tau\)}
            \end{prooftree}
            By IH we have both 
            $ \strcore{\Gamma} \vdash \recencodf{M} : \pi \rightarrow \tau$
            and 
            $ \strcore{\Delta} \vdash \recencodf{B} : \pi$,
            and the thesis follows easily by Rule~$\redlab{TS:app}$ in \lamrsharfail:
              \begin{prooftree}
                \AxiomC{\( \strcore{\Gamma} \vdash \recencodf{M} : \pi \rightarrow \tau \)}
                \AxiomC{\( \strcore{\Delta} \vdash \recencodf{B} : \pi \)}
                \LeftLabel{$\redlab{TS:app}$}
            \BinaryInfC{\( \strcore{\Gamma}, \strcore{\Delta} \vdash \recencodf{M}\ \recencodf{B} : \tau\)}
        \end{prooftree}
        
            
        
        
        \item Rule~$\redlab{T:ex \dash sub}$: Then $\expr{M} = M \esubst{ B }{ x }$
        and the proof is split in two cases, depending on the shape of $B$:
        
        \begin{enumerate}
            \item $B=\oneb$. In this case, $\expr{M}=M\esubst{\oneb}{x}$ and we obtain the following type derivation:
            
            \begin{prooftree}
            \LeftLabel{$\oneb$}
        \AxiomC{$\vdash \oneb :\omega$}
        \AxiomC{$ \Gamma\vdash M:\tau$}
        \LeftLabel{$\redlab{T:weak}$}
        \UnaryInfC{$ \Gamma, x:\omega \vdash M:\tau $}
        \LeftLabel{$\redlab{T:ex \dash sub}$}
        \BinaryInfC{$\Gamma  \vdash M\esubst{\oneb}{x}:\tau$}
        \end{prooftree}
        
        By IH we have both $\vdash \oneb : \omega$ and $\strcore{\Gamma} \vdash \recencodf{M}:\tau$. 
        By the encoding of terms (Figure~\ref{fig:auxencfail}), $\recencodf{M\esubst{\oneb}{x}}=\recencodf{M}[\leftarrow x]\esubst{\oneb}{x}$, and the result holds
               by the following type derivation:
        \begin{prooftree}
            \AxiomC{$\vdash \oneb :\omega$}
            \AxiomC{$\strcore{\Gamma} \vdash\recencodf{M}:\tau$}
            \LeftLabel{$\redlab{TS:weak}$}
            \UnaryInfC{ $ \strcore{\Gamma}, x:\omega \vdash\recencodf{M}[\leftarrow x]:\tau $}
            \LeftLabel{$\redlab{TS:ex \dash sub}$}
            \BinaryInfC{$ \strcore{\Gamma}  \vdash \recencodf{M}[\leftarrow x]\esubst{\oneb}{x}:\tau$}
        \end{prooftree}

            \item $B= \bag{N_1, \ldots ,N_n}$, $n\geq 1$.
            Suppose w.l.o.g. that $n=2$, then $B= \bag{N_1,N_2}$ and
        \begin{prooftree}
        \AxiomC{$\Delta_1\vdash N_1:\sigma$}
        \AxiomC{$\Delta_2\vdash N_2:\sigma$}
        \LeftLabel{$\redlab{T:bag}$}
        \BinaryInfC{$\Delta_1 \contexcat \Delta_2\vdash \bag{N_1}\cdot \bag{N_2}:\sigma^2$}
        \AxiomC{$\Gamma, {x}:\sigma^2 \vdash M:\tau $}
        \LeftLabel{$\redlab{T:ex \dash sub}$}
        \BinaryInfC{$\Gamma \contexcat \Delta_1 \contexcat \Delta_2\vdash M\esubst{B}{x}:\tau$}
        \end{prooftree}

        By IH we have 
        $ \strcore{\Delta_1} \vdash \recencodf{N_1}:\sigma$
        and
        $ \strcore{\Delta_2} \vdash \recencodf{N_2}:\sigma$.
        We can expand 
        $\Gamma, {x}:\sigma^2 \vdash M:\tau $
        into
       $\Gamma, x:\sigma \wedge \sigma \vdash M:\tau$.
       By Lemma \ref{lem:preser_linsub} and the IH on this last sequent we obtain
       $$ \strcore{\Gamma}, y_1:\sigma, y_2:\sigma \vdash \recencodf{M\linsub{y_1,y_2}{x}}:\tau$$
       where $\#(x,M)=2$ and $y_1,y_2$ are fresh variables with the same type as $x$.
       
    Now, by the encoding of terms (Figure~\ref{fig:auxencfail}), we have
    
        \[
        \begin{aligned}
             \recencodf{M\esubst{\bag{N_1,N_2}}{x}}
                = & ~\recencodf{M\linsub{y_1,y_2}{x}}\linexsub{\recencodf{N_1}/y_1
                }\linexsub{\recencodf{N_2}/y_2} + \\
                & \recencodf{M\linsub{y_1,y_2}{x}}\linexsub{\recencodf{N_1}/y_2
                }\linexsub{\recencodf{N_2}/y_1}\\
                 = & ~\expr{M}'
        \end{aligned}
        \]
        
        We give typing derivations in $\lamrsharfail$ for each summand.     First, let $\Pi_1$ be the following derivation:
       
         \begin{prooftree}
            \AxiomC{$ \strcore{\Delta_2}\vdash \recencodf{N_2}:\sigma$}
            \AxiomC{$ \strcore{\Delta_1}\vdash \recencodf{N_1}:\sigma$}
              \AxiomC{$ \strcore{\Gamma}, y_1:\sigma, y_2:\sigma \vdash \recencodf{M\linsub{y_1,y_2}{x}}:\tau$}     
              \BinaryInfC{$ \strcore{\Gamma}, y_2:\sigma, \strcore{\Delta_1}\vdash \recencodf{M\linsub{y_1,y_2}{x}}\linexsub{\recencodf{N_1}/y_1}:\tau$}
              \BinaryInfC{$ \strcore{\Gamma},  \strcore{\Delta_1},\strcore{\Delta_2}\vdash \recencodf{M\linsub{y_1,y_2}{x}}\linexsub{\recencodf{N_1}/y_1}\linexsub{\recencodf{N_2}/y_2}:\tau$}
         \end{prooftree}

          Similarly, we can obtain a derivation $\Pi_2$ for: 
          \[ \strcore{\Gamma},  \strcore{\Delta_1}, \strcore{\Delta_2} \vdash \recencodf{M\linsub{y_1,y_2}{x}}\linexsub{\recencodf{N_1}/y_2}\linexsub{\recencodf{N_2}/y_1}:\tau\]  
          
          From $\Pi_1$, $\Pi_2$, and Rule~$\redlab{TS:sum}$, the thesis follows:

          \begin{prooftree}
        \AxiomC{$\Pi_1 \qquad \Pi_2$}
         \LeftLabel{$\redlab{TS:sum}$}
          \UnaryInfC{$ \strcore{\Gamma},  \strcore{\Delta_1}, \strcore{\Delta_2} \vdash \expr{M}' :\tau$}
          \end{prooftree}
       
          \end{enumerate}
        \item Rule~$\redlab{T:weak}$: Then  $\expr{M}=M$ and
        \begin{prooftree}
        \AxiomC{$\Gamma\vdash M:\sigma \qquad x\notin \dom{\Gamma}$}
        \LeftLabel{$\redlab{T:weak}$}
        \UnaryInfC{$\Gamma, x:\omega \vdash M:\sigma $}
        \end{prooftree}
        Because $\redlab{TS:weak}$ is a silent typing rule in \lamrfail, we have that $ x \not \in \lfv{M}$ and so this case does not apply.

        

        \item Rule~$\redlab{T:sum}$:
        
        This case follows easily by IH. \qedhere
    \end{enumerate}

\end{proof}

\subsubsection{Properties}
\label{app:encodingprop}

We divide the proof of well-formedness preservation: we first prove it for $\recencodf{-}$, then we extend it to $\recencodopenf{-}$.

\preservencintolamrfail*

\begin{proof}
By mutual induction on the typing derivations for $B$ and $\expr{M}$ , with an analysis of the last rule (from \figref{fig:app_wf_rules}) applied. We proceed with the following nine cases:
\begin{enumerate}
\item This case includes two subcases:
    \begin{enumerate}
    \item Rule~\redlab{F:wf \dash bag}. 
    
    Then by inversion of the typing derivation,
    \begin{prooftree}
    \AxiomC{$\Gamma\vdash B:\sigma$}     
    \LeftLabel{$\redlab{F{:}wf-bag}$}
    \UnaryInfC{$\Gamma\wfdash  B:\sigma$}
    \end{prooftree}
    
   By Propostion \ref{prop:typeencintolamrfail} we have $\Gamma\vdash B:\sigma$ implies $ \strcore{\Gamma}\vdash \recencodf{B}:\sigma$. Notice that the encoding $\recencodf{\cdot}$  given in 
   \figref{fig:auxencfail}, is a restriction of $\recencodf{\cdot}$ to $\lamr$. Therefore, $ \strcore{\Gamma} \vdash \recencodf{B}:\sigma$, and the result follows after an application of \redlab{FS{:}wf-bag}.
    
    
                
        \item Rule~\redlab{F:bag}. 
        
        In this case $B = \bag{M}\cdot A$, where $M$ is a term and $A$ is a bag, and we have the following derivation by inversion of the typing derivation:
        \begin{prooftree}
                    \AxiomC{\( \Gamma \wfdash M : \sigma\)}
                \AxiomC{\( \Delta \wfdash A : \sigma^k  \)}
                \LeftLabel{\redlab{F:bag}}
            \BinaryInfC{\( \Gamma \contexcat \Delta \wfdash \bag{M}\cdot A:\sigma^{k+1}  \)}
            \end{prooftree}
with $\dom{\Gamma}=\lfv{M}$ and $\dom{\Delta}=\lfv{A}$. By the IHs, we have both
\begin{itemize}
    \item $ \strcore{\Gamma} \vdash \recencodf{M} : \sigma$; and
    \item $ \strcore{\Delta} \vdash \recencodf{A} : \sigma^k$.
\end{itemize}

By applying Rule~$\redlab{FS{:}bag}$ from~\figref{fig:wfsh_rules}, for $\lamrsharfail$, we obtain the following derivation:
\begin{prooftree}
\AxiomC{\( \strcore{\Gamma}  \wfdash \recencodf{M} : \sigma\)}
\AxiomC{\( \strcore{\Delta} \wfdash \recencodf{A} : \sigma^k \)}
\LeftLabel{\redlab{FS{:}bag}}
\BinaryInfC{\( \strcore{\Gamma}, \strcore{\Delta} \wfdash \bag{\recencodf{M}}\cdot \recencodf{A}:\sigma^{k+1}\)}
        \end{prooftree}
        
        Since $\recencodf{\bag{M}\cdot A}= \bag{\recencodf{M}}\cdot \recencodf{A}, $ one has $ \strcore{\Gamma}, \strcore{\Delta} \vdash \recencodf{\bag{M}\cdot A}:\sigma^{k+1}$, and the result follows.
        \end{enumerate}
    
\item    This case is divided in seven subcases:
    
    \begin{enumerate}

        \item Rule~\redlab{F:wf \dash expr}.
        
        
        Then the thesis follows trivially from type preservation in $\recencodopenf{-}$ of Proposition \ref{prop:typeencintolamrfail}.
        
        \item Rule~\redlab{F:weak}.
        
        In this case, $\expr{M}=M$ and by inversion of the typing derivation we have the derivation
        
        \begin{prooftree}
                    \AxiomC{$\Gamma_1\wfdash M:\tau $}
                    \LeftLabel{\redlab{F:weak}}
                    \UnaryInfC{$\Gamma_1,x:\omega\wfdash M:\tau $}
        \end{prooftree}
        
        Because $\redlab{weak}$ is a silent well-formed rule in \lamrfail, we have that $ x \not \in \lfv{M}$ and so this case does not apply.
        
        \item Rule~\redlab{F:abs}.
        
        In this case $\expr{M} = \lambda x . M $ by inversion of the typing derivation
        and we have the derivation:
        \begin{prooftree}
            \AxiomC{\( \Gamma , {x}: \sigma^n \wfdash M : \tau \)}
            \LeftLabel{\redlab{F:abs}}
            \UnaryInfC{\( \Gamma \wfdash \lambda x . M :  \sigma^n  \rightarrow \tau \)}
        \end{prooftree}
            
        By the encoding given in~\figref{fig:auxencfail}, we have 
        $\recencodf{  \expr{M}  }  =   \lambda x . \recencodf{M\linsub{x_1,\cdots, x_n}{x} } [x_1 , \cdots , x_n \leftarrow x]$, where  $\#(x,M) = n$ and each $x_i$ is fresh and has the same type as $x$.
            From ${\Gamma}, {x}: \sigma^{n} \wfdash {M} : \tau$, we obtain after $n$ applications of Proposition~\ref{prop:linhed_encfail} and Lemma~\ref{lem:preser_linsub}: 
            \begin{equation*}\label{eq:tp1fail}
                {\Gamma}, {x_1: \sigma, \cdots,  x_n: \sigma} \wfdash {M}\linsub{x_1,\cdots, x_n}{x} : \tau
            \end{equation*}
            By IH  we have:
              \begin{equation*}\label{eq:tp2fail}
             \strcore{\Gamma}, {x_1: \sigma, \cdots,  x_n: \sigma} \wfdash \recencodf{M\linsub{x_1,\cdots, x_n}{x}} : \tau
            \end{equation*}
            
        which gives us the following derivation:
        
        \begin{prooftree}
         \LeftLabel{$(*)$}
                    \AxiomC{\( \strcore{\Gamma} , x_1: \sigma, \cdots, x_n: \sigma \wfdash \recencodf{M\linsub{x_1,\cdots, x_n}{x}} : \tau \)}
            \LeftLabel{\redlab{FS{:}share}}
            \UnaryInfC{\( \strcore{\Gamma} , x : \sigma^n  \wfdash \recencodf{M\linsub{x_1,\cdots, x_n}{x}}[x_1 , \cdots , x_n \leftarrow x] : \tau \)}
            \LeftLabel{\redlab{FS{:}abs \dash sh}}
            \UnaryInfC{\( \strcore{\Gamma} \wfdash \lambda x . (\recencodf{M\linsub{x_1,\cdots, x_n}{x}}[x_1 , \cdots , x_n \leftarrow x]) : \sigma^n \rightarrow \tau \)}
        \end{prooftree}
        and the result follows.

        \item Rule~\redlab{F:app}.

        In this case $\expr{M} = M\ B$, and by inversion of the typing derivation we have the derivation:
         \begin{prooftree}
            \AxiomC{\( \Gamma \wfdash M : \sigma^{j} \rightarrow \tau \)}
            \AxiomC{\( \Delta \wfdash B : \sigma^{k} \)}
                \LeftLabel{\redlab{F:app}}
            \BinaryInfC{\( \Gamma \contexcat \Delta \wfdash M\ B : \tau\)}
        \end{prooftree}
            By IH we have both 
            $ \strcore{\Gamma} \wfdash \recencodf{M} : \sigma^{j} \rightarrow \tau$
            and 
            $ \strcore{\Delta} \wfdash \recencodf{B} : \sigma^{k}$,
            and the result follows easily by Rule~$\redlab{FS{:}app}$ in \lamrsharfail:
              \begin{prooftree}
                            \AxiomC{\( \strcore{\Gamma} \wfdash \recencodf{M} : \sigma^{j} \rightarrow \tau \)}
                \AxiomC{\( \strcore{\Delta} \wfdash \recencodf{B} : \sigma^{k} \)}
                \LeftLabel{\redlab{FS{:}app}}
            \BinaryInfC{\( \strcore{\Gamma}, \strcore{\Delta} \wfdash \recencodf{M}\ \recencodf{B} : \tau\)}
        \end{prooftree}

        \item Rule~\redlab{F:ex \dash sub}.


        Then $\expr{M} = M \esubst{ B }{ x }$
        and the proof is split in two cases, depending on the shape of~$B$:
        
  \begin{enumerate}
  \item When $\#(x,M) = \size{B} = k \geq 1 $.
  
Then we have $B= \bag{N_1, \ldots ,N_n}$, $n\geq 1$.
Suppose w.l.o.g. that $n=2$, then $B= \bag{N_1,N_2}$ and by inversion  of the typing derivation we have the following derivation:
            
\begin{prooftree}
 \AxiomC{$\Delta_1\wfdash N_1:\sigma$}
 \AxiomC{$\Delta_2\wfdash N_2:\sigma$}
 \LeftLabel{\redlab{F:bag}}      
 \BinaryInfC{$\Delta_1 \contexcat \Delta_2\wfdash \bag{N_1}\cdot \bag{N_2}:\sigma^2$}
 \AxiomC{$\Gamma_1, {x}:\sigma^2 \wfdash M:\tau $}
 \LeftLabel{\redlab{F:ex \dash sub}}
 \BinaryInfC{$\Gamma_1 \contexcat \Delta_1 \contexcat \Delta_2 \wfdash M\esubst{B}{x}:\tau$}
 \end{prooftree}     
 where $\Gamma=\Gamma_1,\Delta_1,\Delta_2$.
            By IH we have both
            \begin{itemize}
                \item $ \strcore{\Delta_1} \vdash \recencodf{N_1}:\sigma$; and 
            \item $ \strcore{\Delta_2} \vdash \recencodf{N_2}:\sigma$; and
                \item $ \strcore{\Gamma_1} , {x}:\sigma^2 \wfdash \recencodf{M}:\tau $
            \end{itemize}

   We can expand 
    $ \strcore{\Gamma_1}, {x}:\sigma^2 \wfdash \recencodf{M}:\tau $ into $ \strcore{\Gamma_1}, x:\sigma \wedge \sigma \wfdash \recencodf{M}:\tau$, which gives $\strcore{\Gamma_1}, y_1:\sigma, y_2:\sigma\wfdash \recencodf{M}\linsub{y_1,y_2}{x}:\tau$, after two applications of Proposition~\ref{prop:linhed_encfail} along with the application of Lemma \ref{lem:preser_linsub}, with  $y_1,y_2$  fresh variables of the same type as $x$. Since the encoding $\recencodf{\cdot }$ commutes with the linear substitution $\linsub{\cdot }{\cdot }$ (Proposition~\ref{prop:linhed_encfail}), if follows that, $\strcore{\Gamma_1}, y_1:\sigma, y_2:\sigma\wfdash \recencodf{M\linsub{y_1,y_2}{x}}:\tau$.
    
    

           Let $\Pi_1$ be the derivation obtained after  two consecutive applications of Rule~$\redlab{FS{:} ex \dash lin \dash sub}$:    
           
\hspace*{-25pt}
\begin{minipage}{\linewidth}          
\begin{prooftree}
 \AxiomC{$\strcore{\Gamma_1}, y_1:\sigma, y_2:\sigma \wfdash \recencodf{M\linsub{y_1,y_2}{x}}:\tau$}
  \AxiomC{$\strcore{\Delta_1} \wfdash \recencodf{N_1}:\sigma$}                  
  \BinaryInfC{$ \strcore{\Gamma_1}, y_2:\sigma, \strcore{\Delta_1} \wfdash \recencodf{M\linsub{y_1,y_2}{x}}\linexsub{\recencodf{N_1}/y_1}:\tau$} 
  \AxiomC{$ \strcore{\Delta_2}\wfdash \recencodf{N_2}:\sigma$}
  \BinaryInfC{$ \strcore{\Gamma_1}, \strcore{\Delta_1}, \strcore{\Delta_2} \wfdash \recencodf{M\linsub{y_1,y_2}{x}} \linexsub{\recencodf{N_1}/y_1}\linexsub{\recencodf{N_2}/y_2}:\tau$}
\end{prooftree}
\end{minipage}

 Similarly, we can obtain a derivation $\Pi_2$ for:
 $$ \strcore{\Gamma_1},  \strcore{\Delta_1}, \strcore{\Delta_2} \wfdash \recencodf{M\linsub{y_1,y_2}{x}}\linexsub{\recencodf{N_1}/y_2}\linexsub{\recencodf{N_2}/y_1}:\tau$$

 By the encoding given in Figure~\ref{fig:auxencfail}, we have
            \[
            \begin{aligned}
                \recencodf{M\esubst{\bag{N_1,N_2}}{x}}
                 = & \recencodf{M\linsub{y_1,y_2}{x}}\linexsub{\recencodf{N_1}/y_1
                }\linexsub{\recencodf{N_2}/y_2} +\\
                & \recencodf{M\linsub{y_1,y_2}{x}}\linexsub{\recencodf{N_1}/y_2
                }\linexsub{\recencodf{N_2}/y_1}
            \end{aligned}
            \]

 Therefore, 
         
  \begin{prooftree}
 \AxiomC{$\Pi_1$}
  \AxiomC{$\Pi_2$}
  \LeftLabel{$\redlab{FS{:}sum}$}
  \BinaryInfC{$\strcore{\Gamma_1}, \strcore{\Delta_1}, \strcore{\Delta_2} \wfdash \recencodf{M\esubst{\bag{N_1,N_2}}{x}}:\tau$}
 \end{prooftree}
            
      and the result follows.

            \item  $ \#(x,M) = k \neq \size{B}$.
            
           In this case, $\size{B}=j$ for some $j\neq k$, and by inversion of the typing derivation we have the following derivation:
            
            \begin{prooftree}
                    \AxiomC{\( \Delta \wfdash B : \sigma^{j} \)}
                    \AxiomC{\( \Gamma_1 , {x}:\sigma^{k} \wfdash M : \tau \)}
                \LeftLabel{\redlab{F:ex \dash sub}}    
                \BinaryInfC{\( \Gamma_1 \contexcat \Delta \wfdash M \esubst{ B }{ x } : \tau \)}
            \end{prooftree}
 where $\Gamma=\Gamma_1 \contexcat \Delta$. By IH we have both
            
  \begin{itemize}
     \item $ \strcore{\Delta} \wfdash \recencodf{B} : \sigma^{j} $; and 
                \item  $ \strcore{\Gamma_1}, \hat{x}:\sigma^{k} \wfdash \recencodf{M}:\tau$. 
            \end{itemize}

We analyse two cases, depending on the number $k$ of occurrences of $x$ in $M$:
\begin{enumerate} 

\item $k=0$.
 
 From $\Gamma_1, {x:\omega} \wfdash M:\tau$, which we get $\Gamma_1 \wfdash M:\tau$, via Rule~$\redlab{F:weak}$. The IH gives $ \strcore{\Gamma_1} \wfdash \recencodf{M}:\tau$, which entails:
\begin{prooftree}
                \AxiomC{\( \strcore{\Gamma_1}  \wfdash \recencodf{M} : \tau\)}
                \LeftLabel{ \redlab{FS{:}weak}}
                \UnaryInfC{\( \strcore{\Gamma_1}, x: \omega \wfdash \recencodf{M}[\leftarrow x]: \tau \)}
                \AxiomC{$ \strcore{\Delta} \wfdash \recencodf{B} : \sigma^{j} $}
                \LeftLabel{ \redlab{FS{:}ex \dash sub}}
                \BinaryInfC{$ \strcore{\Gamma_1} ,\strcore{\Delta} \wfdash \recencodf{M}[\leftarrow x]\esubst{B}{x}: \tau$}
\end{prooftree}

By the encoding given in Figure~\ref{fig:auxencfail},  $\recencodf{M\esubst{B}{x}}=\recencodf{M}[\leftarrow x]\esubst{\recencodf{B}}{x}$, and the result follows.
  \item  $k> 0$.
  
   By applying Proposition~\ref{prop:linhed_encfail} in \( \strcore{\Gamma_1} , {x}:\sigma^{k} \wfdash \recencodf{ M} : \tau \), we obtain 
  \[ \strcore{\Gamma_1} , x_1:\sigma,\ldots,  x_k:\sigma \wfdash \recencodf{M}\linsub{x_1,\ldots, x_k}{x} : \tau \]
 From Proposition~\ref{prop:linhed_encfail} and Lemma \ref{lem:preser_linsub}, it follows that  $ \strcore{ \Gamma_1} , x_1:\sigma,\ldots,  x_k:\sigma \wfdash \recencodf{M\linsub{x_1,\ldots, x_k}{x}} : \tau $, which entails $x\notin \dom{\Gamma_1}$ since $k\neq 0$. First we give $\Pi$:
 
\begin{prooftree}
    \AxiomC{$ \strcore{\Gamma_1}\ , x_1:\sigma,\ldots,  x_k:\sigma \wfdash \recencodf{M\linsub{x_1,\ldots, x_k}{x}} : \tau $}
    \LeftLabel{\redlab{FS{:}share}}    
    \UnaryInfC{\( \strcore{\Gamma_1} , x:\sigma^{k} \wfdash \recencodf{M\linsub{x_1. \cdots , x_k}{x}} [x_1. \cdots , x_k \leftarrow x]  : \tau \)}
\end{prooftree}

finally we give the full derivation:

\begin{prooftree}
    \AxiomC{\( \strcore{\Delta} \wfdash \recencodf{B} : \sigma^{j} \)}
    \AxiomC{$ \Pi $}
    \LeftLabel{\redlab{FS{:}ex \dash sub}}    
    \BinaryInfC{\( \strcore{\Gamma}, \strcore{\Delta} \wfdash \recencodf{M\langle x_1. \cdots , x_k / x  \rangle} [x_1. \cdots , x_k \leftarrow x] \esubst{ \recencodf{B} }{ x } : \tau \)}
\end{prooftree}
            
%
    \end{enumerate}         
            
       
          \end{enumerate}

        \item Rule~\redlab{F:fail}.

        The result follows trivially, because the encoding of failure in~\figref{fig:auxencfail} is such that $\recencodf{\fail^{\widetilde{x}}}=\fail^{\widetilde{x}} $.
                
        \item Rule~\redlab{F:sum}.
        
        This case follows easily by IH. \qedhere
    \end{enumerate}
    \end{enumerate}

\end{proof}

\preservencintolamrfailtwo*

\begin{proof}
By mutual induction on the typing derivations $\Gamma\wfdash B:\sigma$ and $\Gamma\wfdash \expr{M}:\sigma$, exploiting both Proposition~\ref{prop:linhed_encfail} and Lemma~\ref{lem:preser_linsub}. The analysis for bags (Part 1) follows directly from the IHs and will be omitted.

As for Part 2, there are two main cases to consider:
\begin{enumerate}[i)]
    \item $\expr{M} = M$. 
    
Without loss of generality, assume $\lfv{M} = \{x,y\}$. Then, 
\begin{equation}\label{eq:thmpres1fail}
{x}:\sigma_1^j, {y}:\sigma_2^k \wfdash M : \tau    
\end{equation}
where  $\#(x,M)=j$ and 
$\#(y,M)=k$, for some positive integers $j$ and $k$.

After $j+k$ applications of Lemma~\ref{lem:preser_linsub}
we obtain:
\begin{equation*}\label{eq:thmpres2fail}
x_1:\sigma_1, \cdots, x_j:\sigma_1, y_1:\sigma_2, \cdots, y_k:\sigma_2 \wfdash M\linsub{\widetilde{x}}{x}\linsub{\widetilde{y}}{y} : \tau    
\end{equation*}
 where  $\widetilde{x}=x_{1},\ldots, x_{j}$
   and $\widetilde{y}=y_{1},\ldots, y_{k}$. 
From Proposition~\ref{prop:linhed_encfail} and Lemma~\ref{lem:preser_linsub}
one has
\begin{equation*}\label{eq:thmpres3fail}
{x_1:\sigma_1, \ldots, x_j:\sigma_1, y_1:\sigma_2, \ldots, y_k:\sigma_2} \wfdash \recencodf{M\linsub{\widetilde{x}}{x}\linsub{\widetilde{y}}{y}} : \tau  
\end{equation*}
Since ${x_1:\sigma_1, \ldots, x_j:\sigma_1, y_1:\sigma_2, \ldots, y_k:\sigma_2}= x_1:\sigma_1, \ldots, x_j:\sigma_1, y_1:\sigma_2, \ldots, y_k:\sigma_2$, 
we have the following derivation:
\begin{prooftree}
    \AxiomC{${x_1:\sigma_1, \cdots, x_j:\sigma_1, y_1:\sigma_2, \cdots, y_k:\sigma_2} \wfdash \recencodf{M\linsub{\widetilde{x}}{x}\linsub{\widetilde{y}}{y}} : \tau$}
    \LeftLabel{$\redlab{FS{:}share}$}
    \UnaryInfC{$x:\sigma_1^j, y_1:\sigma_2, \cdots, y_k:\sigma_2 \wfdash \recencodf{M\linsub{\widetilde{x}}{x}\linsub{\widetilde{y}}{y}}[\widetilde{x}\leftarrow x] : \tau$}
\LeftLabel{$\redlab{FS{:}share}$}
    \UnaryInfC{$x:\sigma_1^j, y:\sigma_2^k \wfdash \recencodf{M\linsub{\widetilde{x}}{x}\linsub{\widetilde{y}}{y}}[\widetilde{x}\leftarrow x][\widetilde{y}\leftarrow y] : \tau$}
    \end{prooftree}
    
    By expanding~\defref{def:enctolamrsharfail}, we have 
$$
 \recencodopenf{M} = 
\recencodf{M\linsub{\widetilde{x}}{x}\linsub{\widetilde{y}}{y}}[\widetilde{x}\leftarrow x][\widetilde{y}\leftarrow y], 
$$

    which completes the proof for this case. 
    
    \item $\expr{M} = M_1 + \cdots + M_n$.
    
    This case proceeds easily by IH, using Rule~$\redlab{FS{:}sum}$. \qedhere
    \end{enumerate}
\end{proof}

\subsection{Completeness and Soundness}
\label{app:compandsoundone}

\appcompletenessone*

\begin{proof}
By induction on the rule from~\figref{fig:reductions_lamrfail} applied to infer $\expr{N}\red \expr{M}$, distinguishing  three cases. Below $\widetilde{[x_{1k}\leftarrow x_{1k}]}$ abbreviates $[\widetilde{x_1}\leftarrow x_1]\ldots [\widetilde{x_k}\leftarrow x_k]$:

\begin{myEnumerate}

    \item The rule applied is $\redlab{R}=\redlab{R:Beta}$. 
    
    In this case, 
        $\expr{N}= (\lambda x. M) B$, the reduction is 
    \begin{prooftree}
        \AxiomC{}
        \LeftLabel{\redlab{R:Beta}}
        \UnaryInfC{\((\lambda x. M) B \red M\ \esubst{B}{x}\)}
    \end{prooftree}

          and $\expr{M}= M\esubst{B}{x}$. Below we assume $\lfv{\expr{N}}=\{x_1,\ldots, x_k\}$ and $\widetilde{x_i}=x_{i_1},\ldots, x_{i_{j_i}}$, where $j_i= \#(x_i, N)$, for $1\leq i\leq k$.
        On the one hand, we have:
        \begin{equation}\label{eq:beta1fail}
            \begin{aligned}
            \recencodopenf{\expr{N}}&= \recencodopenf{(\lambda x. M)B}\\
            &=  \recencodf{((\lambda x. M)B) \langle{\widetilde{x_1}/x_1}\rangle\cdots \langle{\widetilde{x_k}/x_k}\rangle}\widetilde{[x_{1k}\leftarrow x_{1k}]}
            \\
            &=  \recencodf{(\lambda x. M^{'})B'}\widetilde{[x_{1k}\leftarrow x_{1k}]}\\
            &=  (\recencodf{\lambda x. M^{'}}\recencodf{B'})\widetilde{[x_{1k}\leftarrow x_{1k}]} \\
            &=  ((\lambda x.\recencodf{ M^{'}\langle{\widetilde{y}/x}\rangle}[\widetilde{y}\leftarrow x])\recencodf{B'})\widetilde{[x_{1k}\leftarrow x_{1k}]} \\
            &\red_{\redlab{RS{:}Beta}} (\recencodf{ M^{'} \langle{\widetilde{y}/x} \rangle} [\widetilde{y} \leftarrow x] \esubst{\recencodf{B'}}{x}) \widetilde{[x_{1k}\leftarrow x_{1k}]}=\expr{L}
            \end{aligned}
        \end{equation}
      
        \revo{A24}{where we define $M'$ and $B'$ to be $M$ and $B$ after the substitutions of $\langle \widetilde{x_1}/x_1\rangle\cdots$\linebreak[4]$\langle \widetilde{x_k}/x_k\rangle$.} On the other hand, we have:
        \begin{equation}\label{eq:beta2fail}
            \begin{aligned}
               \recencodopenf{\expr{M}}&=\recencodopenf{M\esubst{B}{x}}\\
               &=\recencodf{M\esubst{B}{x}\langle{\widetilde{x_1}/x_1}\rangle\cdots \langle{\widetilde{x_k}/x_k}\rangle} \widetilde{[x_{1k}\leftarrow x_{1k}]}\\
               &=\recencodf{M^{'}\esubst{B'}{x}} \widetilde{[x_{1k}\leftarrow x_{1k}]}
            \end{aligned}
        \end{equation}

        We need to analyse two sub-cases: either $\#(x,M') = \size{B} = k \geq 1 $ or $\#(x,M') = k$ and our first sub-case is not met.
        \begin{enumerate}[i)]
            \item If $\#(x,M') = \size{B} = k \geq 1$ then we can reduce $\expr{L}$ using Rule~\redlab{RS:Ex-sub}:
            \begin{equation*}
                \begin{aligned}
            \expr{L}\red \sum_{B_i\in \perm{\recencodf{B}}}\recencodf{M^{'}\linsub{\widetilde{y}}{x}}\linexsub{B_i(1)/y_1}\cdots \linexsub{B_i(n)/y_n} \widetilde{[x_{1k}\leftarrow x_{1k}]} =\recencodopenf{\expr{M}}
                \end{aligned}
            \end{equation*}
            
            From \eqref{eq:beta1fail} and \eqref{eq:beta2fail} and $\widetilde{y}=y_1\ldots y_n$, one has the desired result.

            \item Otherwise,  $\#(x,M) = k$ (either $k=0$ or $k\neq \size{B}$).
            


            Expanding  the encoding in \eqref{eq:beta2fail} :
            \begin{align*}
                \recencodopenf{M}&= \recencodf{M^{'}\esubst{B'}{x}} \widetilde{[x_{1k}\leftarrow x_{1k}]}\\
                & = (\recencodf{ M^{'} \langle{\widetilde{y}/x} \rangle} [\widetilde{y} \leftarrow x] \esubst{\recencodf{B'}}{x}) \widetilde{[x_{1k}\leftarrow x_{1k}]}
            \end{align*}
           Therefore  $\recencodopenf{M} =\expr{L}$ and $\recencodopenf{\expr{N}}\red \recencodopenf{\expr{M}}$.
        
        \end{enumerate}
        

    \item The rule applied is $\redlab{R}=\redlab{R:Fetch}$.
    
Then $\expr{N}=M\esubst{B}{x}$ and  the reduction is 
            \begin{prooftree}
     \AxiomC{$\headf{M} = x \quad B = \bag{N_1, \dots ,N_n}, \ n\geq 1 \quad  \#(x,M) = n $}
      \LeftLabel{\redlab{R:Fetch}}                        \UnaryInfC{$M\esubst{B}{x} \red \sum_{i=1}^n M\headlin{N_i/x}\esubst{B\linsetminus N_i}{x} $}
            \end{prooftree}
            
            with $\expr{M}= \sum_{i=1}^n M\headlin{N_i/x}\esubst{B\linsetminus N_i}{x}$.

            Below we assume $\lfv{\expr{N}}=\lfv{M\esubst{\bag{N1}}{x}}=\{x_1,\ldots, x_k\}$. We distinguish two cases:

            \begin{enumerate}
                    \item $n = 1$.
                    
                   Then $B = \bag{N_1}$ and  $\expr{N}=M\esubst{\bag{N_1}}{x}\red  M\headlin{N_1/x} \esubst{\oneb}{x} = \expr{M} $.

                On the one hand, we have:
\begin{equation*}\label{eq:fetch3fail}
  \begin{aligned}
  \recencodopenf{\expr{N}}&= \recencodopenf{M\esubst{\bag{N_1}}{x}}\\                      &=    \recencodf{(M\esubst{\bag{N_1}}{x})\langle \widetilde{x_1}/x_1\rangle\cdots \langle \widetilde{x_k}/x_k\rangle }\widetilde{[x_{1k}\leftarrow x_{1k}]}\\
  &=  \recencodf{M'\esubst{\bag{N_1'}}{x}}\widetilde{[x_{1k}\leftarrow x_{1k}]}\\
& = \recencodf{M'\langle y_1 / x \rangle}\linexsub{\recencodf{N_1'}/y_1} \widetilde{[x_{1k}\leftarrow x_{1k}]}, \text{ notice that }\headf{M'}=y_1\\
& = \recencodf{M^{''}}\linexsub{\recencodf{N_1'}/y_1} \widetilde{[x_{1k}\leftarrow x_{1k}]}\\
    &\red_{\redlab{RS{:}Lin\dash Fetch}} \recencodf{M''}\headlin{\recencodf{N_1'}/y_1}\widetilde{[x_{1k}\leftarrow x_{1k}]}\\
  \end{aligned}
\end{equation*}
\revo{A24}{where we define $M'$ and $N_1'$ to be $M$ and $N_1$ after the substitutions of $\langle \widetilde{x_1}/x_1\rangle\cdots$\linebreak[4]$\langle \widetilde{x_k}/x_k\rangle$; similarly, we define $M''$ to be $M'$after the substitution of $y_1$ for $x$}. On the other hand,                    
\begin{equation*}\label{eq:fetch4fail}
 \begin{aligned}
 \recencodopenf{\expr{M}}&= \recencodopenf{M\headlin{N_1/x} \esubst{\oneb}{x}}\\
  &= \recencodf{M\headlin{N_1/x} \esubst{\oneb}{x} \langle \widetilde{x_1}/x_1\rangle\cdots \langle \widetilde{x_k}/x_k\rangle }\widetilde{[x_{1k}\leftarrow x_{1k}]}\\
  &= \recencodf{M'\headlin{N_1'/x} \esubst{\oneb}{x}} \widetilde{[x_{1k}\leftarrow x_{1k}]}\\
  &= \recencodf{M'\headlin{N_1'/x}} [\leftarrow x]  \esubst{\oneb}{x}\widetilde{[x_{1k}\leftarrow x_{1k}]}\\
                     \end{aligned}
                    \end{equation*}
            
            By the congruence defined in~\figref{fig:rPrecongruencefail} for $\lamrfail$, one has  $M\esubst{\oneb}{x}\pequiv M$.
            
            Therefore, $\expr{M} = M\headlin{N_1/x}  \esubst{\oneb}{x} \pequiv M\headlin{N_1/x} =  \expr{M}'$. Expanding $\recencodopenf{\expr{M}'}$ we have:
 \begin{equation*}\label{eq:fetch5fail}                   
 \begin{aligned}
 \recencodopenf{\expr{M}'}&=\recencodopenf{M\headlin{N_1/x}} \\
 &=    \recencodf{M\headlin{N_1/x} \langle \widetilde{x_1}/x_1\rangle\cdots \langle \widetilde{x_j}/x_j\rangle } \widetilde{[x_{1k}\leftarrow x_{1k}]}\\
 &=  \recencodf{M' \headlin{N_1'/x}  } \widetilde{[x_{1k}\leftarrow x_{1k}]}\\
 &=  \recencodf{M' }\headlin{\recencodf{N_1'}/x}   \widetilde{[x_{1k}\leftarrow x_{1k}]}\\
 \end{aligned}
  \end{equation*}
                Hence, $\recencodopenf{\expr{N}}\red\recencodopenf{\expr{M'}}$ and the result follows.
        
                    \item   $n> 1$ 
                    
                     To simplify the proof, we take $n=2$ (the analysis when $n>2$ is similar). Then $B=\bag{N_1,N_2}$ and the reduction is
                     
                $$\expr{N}=M\esubst{B}{x}\red  M\headlin{N_1/x} \esubst{\bag{N_2}}{x}+M\headlin{N_2/x} \esubst{\bag{N_1}}{x}=\expr{M}$$
                    Notice that $\#(x,M)=2$, we take   $y_1,y_2$ fresh variables.
            On the one hand, we have:  
  \begin{equation}\label{eq:fetch1fail}
 \begin{aligned}
\recencodopenf{\expr{N}}&=\recencodopenf{M\esubst{B}{x}}
=    \recencodf{M\esubst{B}{x}\langle \widetilde{x_1}/x_1\rangle\cdots \langle \widetilde{x_k}/x_k\rangle } \widetilde{[x_{1k}\leftarrow x_{1k}]}\\
 &=  \recencodf{M'\esubst{B'}{x}}\widetilde{[x_{1k}\leftarrow x_{1k}]}\\
 & = ( \recencodf{M'\langle y_1, y_2 / x \rangle}\linexsub{\recencodf{N_1'}/y_1} \linexsub{\recencodf{N_2'}/y_2}\\
 &  \qquad +
 \recencodf{M'\langle y_1, y_2 / x \rangle}\linexsub{\recencodf{N_2'}/y_1} \linexsub{\recencodf{N_1'}/y_2})\widetilde{[x_{1k}\leftarrow x_{1k}]}\\
& =( \recencodf{M^{''}}\linexsub{\recencodf{N_1'}/y_1} \linexsub{\recencodf{N_2'}/y_2}\\
 & \qquad + \recencodf{M^{''}}\linexsub{\recencodf{N_2'}/y_1} \linexsub{\recencodf{N_1'}/y_2})\widetilde{[x_{1k}\leftarrow x_{1k}]}\\
&\red^2_{\redlab{RS{:}Lin\dash Fetch}}
     (\recencodf{M''}\headlin{\recencodf{N_1'}/y_1}\linexsub{\recencodf{N_2'}/y_2} \\
     & \qquad + \recencodf{M''}\headlin{\recencodf{N_2'}/y_1}\linexsub{\recencodf{N_1'}/y_2})\widetilde{[x_{1k}\leftarrow x_{1k}]}\\
& = \expr{L}.
\end{aligned}
\end{equation}
                    
\revo{A24}{where we define $M'$ and $B'$ to be $M$ and $B$ after the substitutions of $\langle \widetilde{x_1}/x_1\rangle\cdots$\linebreak[4]$\langle \widetilde{x_k}/x_k\rangle$ and $N_1' , N_2'$ are the the elements of the bag $B'$. 
Similarly, we define $M''$ to be $M'$ after the substitution $\langle y_1, y_2 / x \rangle$}.
On the other hand, we have:
  \begin{equation}\label{eq:fetch2fail}
  \begin{aligned}
  \recencodopenf{\expr{M}}&=\recencodopenf{M\headlin{N_1/x} \esubst{\bag{N_2}}{x}+M\headlin{N_2/x} \esubst{\bag{N_1}}{x}}\\
  &=\recencodopenf{M\headlin{N_1/x} \esubst{\bag{N_2}}{x}}+\recencodopenf{M\headlin{N_2/x} \esubst{\bag{N_1}}{x}}\\
  &=\recencodf{M\headlin{N_1/x} \esubst{N_2}{x}}\langle \widetilde{x_1}/x_1\rangle\cdots \langle \widetilde{x_k}/x_k\rangle\widetilde{[x_{1k}\leftarrow x_{1k}]}\\
 & \qquad \qquad + \recencodf{M\headlin{N_2/x} \esubst{N_1}{x}\langle \widetilde{x_1}/x_1\rangle\cdots \langle \widetilde{x_k}/x_k\rangle }\widetilde{[x_{1k}\leftarrow x_{1k}]}\\
 &=\recencodf{M'\headlin{N_1'/x} \esubst{N_2'}{x}}\widetilde{[x_{1k}\leftarrow x_{1k}]}\\
 & \qquad  \qquad + \recencodf{M'\headlin{N_2'/x} \esubst{N_1'}{x} }\widetilde{[x_{1k}\leftarrow x_{1k}]}\\
   & =\recencodf{M'\headlin{N_1'/x}} \linexsub{\recencodf{N_2'}/y_2}\widetilde{[x_{1k}\leftarrow x_{1k}]} \\
   & \qquad \qquad + \recencodf{M'\headlin{N_2'/x}} \linexsub{\recencodf{N_1'}/y_2} \widetilde{[x_{1k}\leftarrow x_{1k}]}
   \end{aligned}        
 \end{equation}
                    The reductions in \eqref{eq:fetch1fail} and \eqref{eq:fetch2fail} lead to identical expressions, up to renaming of shared variables, which are taken to be fresh by definition. In both cases, we have taken the same  fresh variables.
                    
            \end{enumerate}
         
             \item The rule applied is $ \redlab{R}\neq \redlab{R:Beta}$ and $ \redlab{R}\neq \redlab{R:Fetch}$. There are two possible cases. Below $\widetilde{[x_{1n}\leftarrow x_{1n}]}$  abbreviates $[\widetilde{x_1}\leftarrow x_1]\cdots [\widetilde{x_n}\leftarrow x_n]$:
             \begin{myEnumerate}
                 
            \item  $\redlab{R}=\redlab{R:Fail}$
            
            Then $\expr{N}=M\esubst{B}{x}$ and the reduction is 
            
            \begin{prooftree}
            \AxiomC{$\#(x,M)\neq \size{B}  \qquad  \widetilde{y} = (\mfv{M}\setminus x)\uplus \mfv{B} $}
            \LeftLabel{\redlab{R:Fail}}
            \UnaryInfC{$M\ \esubst{ B}{x } \red \sum_{\perm{B}} \fail^{\widetilde{y}}$}
            \end{prooftree}
            where $\expr{M}= \sum_{\perm{B}} \fail^{\widetilde{y}}$. Below assume $\lfv{\expr{N}}=\{x_1,\ldots, x_n\}$.

                On the one hand, we have:
            \begin{equation*}\label{eq:fail1fail}
                \begin{aligned}
                \recencodopenf{\expr{N}} &= \recencodopenf{M\esubst{B}{x}}
                = \recencodf{M\esubst{B}{x}\langle \widetilde{x_1}/x_1\rangle\cdots \langle \widetilde{x_n}/x_n\rangle } \widetilde{[x_{1n}\leftarrow x_{1n}]} \\
                 &= \recencodf{M'\esubst{B'}{x} } \widetilde{[x_{1n}\leftarrow x_{1n}]} \\
                &=  \recencodf{M'\langle y_1, \cdots , y_k / x  \rangle} [y_1, \cdots , y_k \leftarrow x] \esubst{ \recencodf{B'} }{ x } \widetilde{[x_{1n}\leftarrow x_{1n}]}\\
                & \red_{\redlab{RS{:}Fail}} \sum_{\perm{B}} \fail^{\widetilde{y}} \  \widetilde{[x_{1n}\leftarrow x_{1n}]}=\expr{L}\\
                \end{aligned}
            \end{equation*}
            
            \revo{A24}{where we define $M'$ and $B'$ to be $M$ and $B$ after the substitutions of $\langle \widetilde{x_1}/x_1\rangle\cdots$\linebreak[4]$\langle \widetilde{x_k}/x_k\rangle$}. On the other hand, we have:
            \begin{equation*}\label{eq:fail2fail}
                \begin{aligned}
                \recencodopenf{\expr{M}} &= \recencodopenf{\sum_{\perm{B}} \fail^{\widetilde{y}}}= \sum_{\perm{B}} \recencodopenf{\fail^{\widetilde{y}}}\\
                &= \sum_{\perm{B}} \recencodf{\fail^{\widetilde{y}}} \widetilde{[x_{1n}\leftarrow x_{1n}]} 
                = \sum_{\perm{B}} \fail^{\widetilde{y}} \widetilde{[x_{1n}\leftarrow x_{1n}]} = \expr{L}\\
                \end{aligned}
            \end{equation*}
            
            Therefore, $\recencodopenf{\expr{N}} \red \recencodopenf{\expr{M}} $ and the result follows.

            \item  $\redlab{R}= \redlab{R:Cons_1}$.
            
            Then $\expr{N}= \fail^{\widetilde{y}}\ B$ and the reduction is

            \begin{prooftree}
            \AxiomC{$\size{B}=k \qquad  \widetilde{z} = \mfv{B} $}        
            \LeftLabel{\redlab{R:Cons_1}}
            \UnaryInfC{$ \fail^{\widetilde{y}}\ B  \red \sum_{\perm{B}} \fail^{\widetilde{y} \uplus \widetilde{z}} $}
            \end{prooftree}
            

           and $\expr{M}'= \sum_{\perm{B}} \fail^{\widetilde{y} \uplus \widetilde{z}}$. Below we assume $\lfv{\expr{N}}=\{x_1,\ldots, x_n\}$.
           
                On the one hand, we have:
            \begin{equation*}\label{eq:consume1fail}
                \begin{aligned}
                \recencodopenf{N} &= \recencodopenf{\fail^{\widetilde{y}}\ B}
                = \recencodf{ \fail^{\widetilde{y}}\ B \langle \widetilde{x_1}/x_1\rangle\cdots \langle \widetilde{x_n}/x_n\rangle } \widetilde{[x_{1n}\leftarrow x_{1n}]}\\
                 &= \recencodf{ \fail^{\widetilde{y'}}\ B'  } \widetilde{[x_{1n}\leftarrow x_{1n}]}= \recencodf{ \fail^{\widetilde{y'}}} \ \recencodf{ B' } \widetilde{[x_{1n}\leftarrow x_{1n}]}\\
                &= \fail^{\widetilde{y'}} \ \recencodf{ B' } \widetilde{[x_{1n}\leftarrow x_{1n}]}\\
                & \red_{\redlab{RS{:}Cons_1}} \sum_{\perm{B}} \fail^{\widetilde{y'} \cup \widetilde{z'}}  \widetilde{[x_{1n}\leftarrow x_{1n}]}=\expr{L}\\
                \end{aligned}
            \end{equation*}
            
            \revo{A24}{where we define $B'$ to be $B$ after the substitutions of $\langle \widetilde{x_1}/x_1\rangle\cdots \langle \widetilde{x_k}/x_k\rangle$. Similarly, $\widetilde{y'}$ and $\widetilde{z'}$ are $\widetilde{y}$ and $\widetilde{z}$ after the substitution $\langle \widetilde{x_1}/x_1\rangle\cdots \langle \widetilde{x_k}/x_k\rangle$}. 
            On the other hand, we have:
            \begin{equation*}\label{eq:consume2fail}
                \begin{aligned}
                \recencodopenf{M} &= \recencodopenf{\sum_{\perm{B}} \fail^{\widetilde{y} \uplus \widetilde{z}}}
                = \sum_{\perm{B}} \recencodopenf{\fail^{\widetilde{y} \uplus \widetilde{z}}}\\
                &= \sum_{\perm{B}} \recencodf{\fail^{\widetilde{y'} \uplus \widetilde{z'}}}\widetilde{[x_{1n}\leftarrow x_{1n}]}
                = \sum_{\perm{B}} \fail^{\widetilde{y'} \cup \widetilde{z'}}\widetilde{[x_{1n}\leftarrow x_{1n}]}=\expr{L}\\
                \end{aligned}
            \end{equation*}
            
            Therefore, $\recencodopenf{\expr{N}}\red \expr{L}= \recencodopenf{ \expr{M}}$, and the result follows.

            \item $\redlab{R}= \redlab{R:Cons_2}$
            
            Then $\expr{N}= \fail^{\widetilde{y}}\ \esubst{B}{x}$ and the reduction is 
        \begin{prooftree}
            \AxiomC{$\size{B} = k$}
            \AxiomC{\(  \#(x , \widetilde{y}) + k  \not= 0 \)}
            \AxiomC{\( \widetilde{z} = \mfv{B} \)}
            \LeftLabel{$\redlab{R:Cons_2}$}
            \TrinaryInfC{\( \fail^{\widetilde{y}}\ \esubst{B}{x}  \red \sum_{\perm{B}} \fail^{(\widetilde{y} \setminus x) \uplus\widetilde{z}} \)}
        \end{prooftree}
            
      and $\expr{M}=\sum_{\perm{B}} \fail^{(\widetilde{y} \setminus x) \uplus\widetilde{z}}$. 
            Below we assume $\lfv{\expr{N}}=\{x_1,\ldots, x_n\}$.

            On the one hand, we have: (below $\widetilde{y}=y_1,\ldots, y_m$)
            \begin{equation}\label{eq:consume3fail}
                \begin{aligned}
                \recencodopenf{\expr{N}} &= \recencodopenf{\fail^{\widetilde{y}}\ \esubst{B}{x}}\\
                &= \recencodf{ \fail^{\widetilde{y}}\ \esubst{B}{x} \langle \widetilde{x_1}/x_1\rangle\cdots \langle \widetilde{x_n}/x_n\rangle } \widetilde{[x_{1n}\leftarrow x_{1n}]}\\
                &= \recencodf{ \fail^{\widetilde{y'}}\langle y_1 / x  \rangle \cdots \langle y_m / x  \rangle}[\widetilde{y}\leftarrow x] \ \esubst{ \recencodf{ B' } }{x} \widetilde{[x_{1n}\leftarrow x_{1n}]}\\
                &= \fail^{\widetilde{y''}} [\widetilde{y}\leftarrow x] \ \esubst{ \recencodf{ B' } }{x} \widetilde{[x_{1n}\leftarrow x_{1n}]}\\
                & \red_{\redlab{RS{:}Cons_2}} \fail^{(\widetilde{y'} \setminus x) \cup\widetilde{z'}} \widetilde{[x_{1n}\leftarrow x_{1n}]}\\
                \end{aligned}
            \end{equation}
            
            As $\widetilde{y}$ consists of free variables,  in $\fail^{\widetilde{y}}\ \esubst{B}{x} \langle \widetilde{x_1}/x_1\rangle\cdots \langle \widetilde{x_n}/x_n\rangle$ the substitutions also occur on $ \widetilde{y}$ resulting in a new $\widetilde{y'}$ where all $x_i$'s are replaced with their fresh components in $\widetilde{x_i}$. Similarly for \revo{A24}{$z'$ and $B'$ as well as } $\widetilde{y''}$ being $\widetilde{y'}$ with each $x$ replaced with a fresh $y_i$. On the other hand, we have:
            \begin{equation}\label{eq:consume4fail}
                \begin{aligned}
                \recencodopenf{M} &= \recencodopenf{\sum_{\perm{B}} \fail^{(\widetilde{y} \setminus x) \uplus\widetilde{z}}}
                = \sum_{\perm{B}} \recencodopenf{\fail^{(\widetilde{y} \setminus x) \uplus\widetilde{z}}}\\
                &= \sum_{\perm{B}} \recencodf{\fail^{(\widetilde{y'} \setminus x) \uplus\widetilde{z}}} \widetilde{[x_{1n}\leftarrow x_{1n}]}
                 = \fail^{(\widetilde{y'} \setminus x) \cup \widetilde{z'}} \widetilde{[x_{1n}\leftarrow x_{1n}]}\\
                \end{aligned}
            \end{equation}
            
            The reductions in \eqref{eq:consume3fail} and \eqref{eq:consume4fail} lead to identical expressions.
        \end{myEnumerate}
    \end{myEnumerate}
    
     As before, the reduction via Rule~$\redlab{R} $ could occur inside a context (cf. Rules $\redlab{R:TCont}$ and $\redlab{R:ECont}$). We consider only the case when the contextual rule used is $\redlab{R:TCont}$. We have $\expr{N} = C[N]$. When we have $C[N] \red_{\redlab{R}} C[M] $ such that $N \red_{\redlab{R}} M$ we need to show that $\recencodopenf{ C[N]} \red^j \recencodopenf{ C[M] }$for some $j$ dependent on ${\redlab{R}}$. 
     Firstly, let us assume ${\redlab{R}} = \redlab{R:Cons_2}$  then we take $j = 1$. Let us take $C[\cdot]$ to be $[\cdot]B$ and $\lfv{NB} = \{ x_1, \cdots , x_k  \}$ then 
     \[
        \begin{aligned}
            \recencodopenf{ N B}  & = \recencodf{NB\linsub{\widetilde{x_{1}}}{x_1}\cdots \linsub{\widetilde{x_k}}{x_k}}\widetilde{[x_{1k}\leftarrow x_{1k}]} \\
             & = \recencodf{N' B'}\widetilde{[x_{1k}\leftarrow x_{1k}]} = \recencodf{N'}\recencodf{ B'}\widetilde{[x_{1k}\leftarrow x_{1k}]} \\
        \end{aligned}
     \]
    We take $N'B'= NB\linsub{\widetilde{x_{1}}}{x_1}\cdots \linsub{\widetilde{x_k}}{x_k}$, and by the IH that $ \recencodf{N}\red \recencodf{M}$ and hence we can deduce that $\recencodf{N'}\red \recencodf{M'}$ where $M'B'= MB\linsub{\widetilde{x_{1}}}{x_1}\cdots \linsub{\widetilde{x_k}}{x_k}$. Finally,
    \[
        \begin{aligned}
            \recencodf{N'}\recencodf{ B'}\widetilde{[x_{1k}\leftarrow x_{1k}]}\red \recencodf{M'}\recencodf{ B'}\widetilde{[x_{1k}\leftarrow x_{1k}]}
        \end{aligned}
    \]
     and hence $ \recencodopenf{C[N]} \red \recencodopenf{C[M]} $.
\end{proof}

\soundnessone*

\begin{proof}
By induction on the structure of $\expr{N}$ with the following six cases given below, where $\widetilde{[x_{1k}\leftarrow x_{1k}]}$ abbreviates $[\widetilde{x_1}\leftarrow x_1]\ldots [\widetilde{x_k}\leftarrow x_k]$:

\begin{enumerate}[i)]
    \item $\expr{N} = x$:
    
    Then $\recencodopenf{x} = x_1 [x_1 \leftarrow x]$, and no reductions can be performed.
    
    \item $\expr{N} = \lambda x. N$:

    Suppose $\lfv{N} = \{ x_1, \cdots , x_k\}$. Then,
    \[
    \begin{aligned}
        \recencodopenf{\lambda x. N} &= \recencodf{\lambda x. N\langle \widetilde{x_1} / x_1 \rangle \cdots \langle \widetilde{x_k} / x_k \rangle} \widetilde{[x_{1k}\leftarrow x_{1k}]}\\
        &= \recencodf{\lambda x. N'}\widetilde{[x_{1k}\leftarrow x_{1k}]}= \lambda x. \recencodf{N'\langle \widetilde{y} / x \rangle} [\widetilde{y} \leftarrow x]  \widetilde{[x_{1k}\leftarrow x_{1k}]},
    \end{aligned}
    \]
    \revo{A25}{where $N'$ is $N$ after the substitutions $\langle \widetilde{x_1} / x_1 \rangle \cdots \langle \widetilde{x_k} / x_k \rangle$} and no reductions can be performed.

    \item $\expr{N} = N B$:
    
    Suppose  $\lfv{NB} = \{ x_1, \cdots , x_n\}$. Then
    \begin{equation}\label{eq:app_npfail}
    \begin{aligned}
        \recencodopenf{\expr{N}}&=\recencodopenf{NB}  = \recencodf{NB\langle \widetilde{x_1} / x_1 \rangle \cdots \langle \widetilde{x_n} / x_n \rangle} \widetilde{[x_{1n}\leftarrow x_{1n}]}\\
        & = \recencodf{N' B'} \widetilde{[x_{1n}\leftarrow x_{1n}]}
         = \recencodf{N'} \recencodf{B'} \widetilde{[x_{1n}\leftarrow x_{1n}]}
    \end{aligned}
    \end{equation}
    where $\widetilde{x_i}=x_{i1},\ldots, x_{ij_i}$, for $1\leq i \leq n$ and \revo{A25}{ $N', B'$ are $N$ and $B$ after performing the substitutions $\langle \widetilde{x_1} / x_1 \rangle \cdots \langle \widetilde{x_k} / x_k \rangle$ }.
    By the reduction rules in~\figref{fig:share-reductfailure} there are three possible reductions starting in $\expr{N}$:
  
    \begin{enumerate}
        \item $\recencodf{N'}\recencodf{B'}\widetilde{[x_{1n}\leftarrow x_{1n}]}$ reduces via rule $\redlab{RS{:}Beta}$.
        
        In this case  $N=\lambda x. N_1$, and the encoding in (\ref{eq:app_npfail}) gives $N'= N\langle \widetilde{x_1} / x_1 \rangle \cdots \langle \widetilde{x_n} / x_n \rangle$, which  implies $N' =\lambda x. N_1^{'}$ and the following holds:
        \begin{equation*}
        \begin{aligned}
            \recencodf{N'}=\recencodf{(\lambda x. N'_1)} &= (\lambda x. \recencodf{N'_1 \langle \widetilde{y} / x \rangle} [\widetilde{y} \leftarrow x])
             = (\lambda x. \recencodf{N^{''}} [\widetilde{y} \leftarrow x])
        \end{aligned}
        \end{equation*}
        
        Thus, we have the following $\redlab{RS{:}Beta}$ reduction from \eqref{eq:app_npfail}:
        \begin{equation}\label{eq:sound.appfail}
            \begin{aligned}
                \recencodopenf{\expr{N}} &= \recencodf{N'} \recencodf{B'}\widetilde{[x_{1n}\leftarrow x_{1n}]}=(\lambda x. \recencodf{N''} [\widetilde{y} \leftarrow x] \recencodf{B'}) \widetilde{[x_{1n}\leftarrow x_{1n}]}\\
                &\red_{\redlab{RS{:}Beta}}  \recencodf{N^{''}} [\widetilde{y} \leftarrow x] \esubst{\recencodf{B'}}{x}  \widetilde{[x_{1n}\leftarrow x_{1n}]} =\expr{L}
            \end{aligned}
        \end{equation}

        \revo{A25}{where $ N'' $ is $N'$ after the substitutions $\langle \widetilde{y} / x \rangle$}. Notice that the expression $\expr{N}$ can perform the following $\redlab{R:Beta}$-reduction:
        \[\expr{N}=(\lambda x. N_1) B\red_{\redlab{R:Beta}} N_1 \esubst{B}{x} \]

        Assuming $\expr{N'}=N_1 \esubst{B}{x}$, there are two cases:
        
        \begin{enumerate}
            
            \item$\#(x,M) = \size{B} = k \geq 1$.

            On the one hand:
            \begin{equation*}\label{eq:sound_appn1fail}
            \begin{aligned}
                \recencodopenf{\expr{N'}}&=\recencodopenf{N_1 \esubst{B}{x}}\\
                &= \recencodf{N_1 \esubst{B}{x}\langle \widetilde{x_1} / x_1 \rangle \cdots \langle \widetilde{x_n} / x_n \rangle} \widetilde{[x_{1n}\leftarrow x_{1n}]}\\
                & = \recencodf{N_1' \esubst{B'}{x}}\widetilde{[x_{1n}\leftarrow x_{1n}]}\\
                & = \sum_{B_i \in \perm{\recencodf{ B }}}\recencodf{ N_1' \langle y_1 , \cdots , y_k / x  \rangle } \linexsub{B_i(1)/y_1} \cdots \linexsub{B_i(k)/y_k} \widetilde{[x_{1n}\leftarrow x_{1n}]}\\
                & = \sum_{B_i \in \perm{\recencodf{ B }}}\recencodf{ N_1''} \linexsub{B_i(1)/y_1} \cdots \linexsub{B_i(k)/y_k} \widetilde{[x_{1n}\leftarrow x_{1n}]}\\
            \end{aligned}
            \end{equation*}
            \revo{A25}{where $ N_1'' $ is $N_1'$ after the substitution $\langle \widetilde{y} / x \rangle$}. 
            
            On the other hand, after an application of Rule~\redlab{RS:Ex-Sub}:
            \begin{equation*}\label{eq:sound_appn2fail}
            \begin{aligned}
                \expr{L} &= \recencodf{N''} [\widetilde{y} \leftarrow x] \esubst{\recencodf{B'}}{x}  \widetilde{[x_{1n}\leftarrow x_{1n}]} \\
                &\red\sum_{B_i \in \perm{\recencodf{ B }}}\recencodf{ N_1''} \linexsub{B_i(1)/y_1} \cdots \linexsub{B_i(k)/y_k} \widetilde{[x_{1n}\leftarrow x_{1n}]}\\
                &= \recencodopenf{\expr{N}'}
            \end{aligned}
            \end{equation*}
            
         and the result follows.
            
            \item Otherwise, either $\#(x,N_1) = k=0$ or $\#(x,N_1)\neq \size{B}$.
            In this case:
            \begin{equation*}\label{eq:sound_appn3fail}
            \begin{aligned}
                \recencodopenf{\expr{N'}}&=\recencodopenf{N_1 \esubst{B}{x}}\\
                & = \recencodf{N_1 \esubst{B}{x}\langle \widetilde{x_1} / x_1 \rangle \cdots \langle \widetilde{x_n} / x_n \rangle}\widetilde{[x_{1n}\leftarrow x_{1n}]}\\
                & = \recencodf{N_1' \esubst{B'}{x}}\widetilde{[x_{1n}\leftarrow x_{1n}]}\\
                & =  \recencodf{N^{''}} [\widetilde{y} \leftarrow x] \esubst{\recencodf{B'}}{x} \widetilde{[x_{1n}\leftarrow x_{1n}]}=\expr{L} \\
            \end{aligned}
            \end{equation*}
            
            From (\ref{eq:sound.appfail}):  $\recencodopenf{\expr{N}}\red \expr{L}=\recencodopenf{\expr{N'}}$ and the result follows.

        \end{enumerate}
        
        \item $\recencodf{N'}\recencodf{B'}\widetilde{[x_{1n}\leftarrow x_{1n}]}$ reduces via rule $\redlab{RS{:} Cons_1}$.

        In this case we would have  $N=\fail^{\widetilde{y}}$, and the encoding in (\ref{eq:app_npfail}) gives $N'= N\linsub{\widetilde{x_1}}{x_1}\ldots \linsub{\widetilde{x_n}}{x_n}$, which implies $N'
        =\fail^{\widetilde{y'}} $, we let $\size{B} = k $ and the following:

        \begin{equation}\label{eq:sound.consumfail}
            \begin{aligned}
                \recencodopenf{\expr{N}} &= \recencodf{N'} \recencodf{B'}\widetilde{[x_{1n}\leftarrow x_{1n}]}
                = \recencodf{\fail^{\widetilde{y'}}} \recencodf{B'}\widetilde{[x_{1n}\leftarrow x_{1n}]}\\
                &= \fail^{\widetilde{y'}} \recencodf{B'}\widetilde{[x_{1n}\leftarrow x_{1n}]}\\
                & \red \sum_{\perm{B}} \fail^{\widetilde{y'} \uplus \widetilde{z}} \widetilde{[x_{1n}\leftarrow x_{1n}]}, \text{ where } \widetilde{z} = \lfv{B'}\\
            \end{aligned}
        \end{equation}
        
        The expression $\expr{N}$ can perform the following $\redlab{R}=\redlab{R:Cons_1}$-reduction:
        
        \begin{equation}\label{eq:sound.2consumfail}
            \expr{N}=\fail^{\widetilde{y}} \  B\red_{\redlab{R}} \sum_{\perm{B}} \fail^{\widetilde{y}\uplus \widetilde{z}} \text{  where } \widetilde{z} = \mfv{B}
        \end{equation}
        
        From (\ref{eq:sound.consumfail}) and (\ref{eq:sound.2consumfail}), we infer  that  $\expr{L}=\recencodopenf{\expr{N'}}$ and so the result follows.
        
        \item Suppose that $\recencodf{N'} \red \recencodf{N''}$. \\
        This case follows from the IH.
    \end{enumerate}

    \item  $\expr{N} = N \esubst{B}{x}$:
    
    Suppose  $\lfv{N \esubst{B}{x}} = \{ x_1, \cdots , x_k\}$. Then,

    \begin{equation}\label{eq:sound_expsubfail}
    \begin{aligned}
    \recencodopenf{\expr{N}}= \recencodopenf{N \esubst{B}{x}}
         &= \recencodf{N \esubst{B}{x}\langle \widetilde{x_1} / x_1 \rangle \cdots \langle \widetilde{x_k} / x_k \rangle} \widetilde{[x_{1k}\leftarrow x_{1k}]}\\
        &= \recencodf{N' \esubst{B'}{x}} \widetilde{[x_{1k}\leftarrow x_{1k}]}
        \end{aligned}
    \end{equation}
    
    \revo{A25}{where $N', B'$ are $N$ and $B$ after performing the substitutions $\langle \widetilde{x_1} / x_1 \rangle \cdots \langle \widetilde{x_k} / x_k \rangle$ }. Let us consider the two possibilities of the encoding:
    
    \begin{myEnumerate}
        
        \item $ \#(x,M) = \size{B} = k \geq 1 $.
        
        Then we continue equation \eqref{eq:sound_expsubfail} as follows:
        \begin{equation}\label{eq:sound_expsub2fail}
            \begin{aligned}
                \recencodopenf{\expr{N}} &= \recencodf{N' \esubst{B'}{x}} \widetilde{[x_{1k}\leftarrow x_{1k}]} \\
                &=  \sum_{B_i \in \perm{\recencodf{ B' }}}\recencodf{ N' \langle y_1 , \cdots , y_n / x  \rangle } \linexsub{B_i(1)/y_1} \cdots \linexsub{B_i(n)/y_n}\widetilde{[x_{1k}\leftarrow x_{1k}]} \\
                &=  \sum_{B_i \in \perm{\recencodf{ B' }}}\recencodf{ N'' } \linexsub{B_i(1)/y_1} \cdots \linexsub{B_i(n)/y_n}\widetilde{[x_{1k}\leftarrow x_{1k}]} \\
            \end{aligned}
        \end{equation}
        
        \revo{A25}{where $N''$ is $N'$ after performing the substitutions $ \langle y_1 , \cdots , y_n / x  \rangle $ }. There are three possible reductions, these being from rules \redlab{RS{:}Lin \dash Fetch}, $\redlab{RS{:}Cons_3}$, and  \redlab{RS{:}Cont}.
        
        \begin{myEnumerate}
            
            \item Suppose that $\headf{N''} = y_1$.
            
             Then one has to consider the shape of the bag $B'$:
                \begin{myEnumerate}

                    \item When $B'$ has only one element $N_1$ then from (\ref{eq:sound_expsub2fail}) and by letting $B = \bag{N_1}$ and $B' = \bag{N'_1}$ we have
                    \begin{equation}\label{eq:sound_expsub3fail}
                        \begin{aligned}
                            \recencodopenf{\expr{N}} 
                            & = \recencodf{N^{''}}\linexsub{\recencodf{N_1'}/y_1} \widetilde{[x_{1k}\leftarrow x_{1k}]}, \text{since }\headf{M'}=y_1\\
                            &\red \recencodf{N^{''}}\headlin{\recencodf{N_1'}/y_1}\widetilde{[x_{1k}\leftarrow x_{1k}]} = \expr{L}
                        \end{aligned}
                    \end{equation}
                    
                    We also have:
                    \begin{equation}\label{eq:sound_expsub4fail}
                        \begin{aligned}
                            \expr{N} 
                            & = N\esubst{\bag{N_1}}{x}\\
                            &\red N\headlin{N_1/x}\esubst{\oneb}{x} = \expr{N}' 
                           \pequiv N\headlin{N_1/x}
                             = \expr{N}''
                            \\
                        \end{aligned}
                    \end{equation}
                    
                    From (\ref{eq:sound_expsub3fail}) and (\ref{eq:sound_expsub4fail}), we infer  that  $\expr{L}'=\recencodopenf{\expr{N'}}$ and so the result follows.

                    \item When $B'$ has more then one element. Let us say that $B =  \bag{N_1,N_2}$ and $B' =  \bag{N'_1,N'_2}$ and cases for larger bags proceed similarly then from (\ref{eq:sound_expsub2fail}). (Below we use the fact that $\headf{M'}=y_1$)
\begin{equation}\label{eq:sound_expsub33fail}
\begin{aligned}
\recencodopenf{\expr{N}} 
& = \recencodf{N''}\linexsub{\recencodf{N_1'}/y_1}\linexsub{\recencodf{N_2'}/y_2} \widetilde{[x_{1k}\leftarrow x_{1k}]} \\
& + \recencodf{N''}\linexsub{\recencodf{N_2'}/y_1}\linexsub{\recencodf{N_1'}/y_2} \widetilde{[x_{1k}\leftarrow x_{1k}]}, \\
&\red
\recencodf{N''} \headlin{\recencodf{N_1'}/y_1} \linexsub{\recencodf{N_2'}/y_2} [\widetilde{[x_{1k}\leftarrow x_{1k}]}\\
 & + \recencodf{N''} \headlin{\recencodf{N_2'}/y_1} \linexsub{\recencodf{N_1'}/y_2} \widetilde{[x_{1k}\leftarrow x_{1k}]} = \expr{L}
   \end{aligned}
\end{equation}
                    
     We also have:
\begin{equation}\label{eq:sound_expsub44fail}
 \begin{aligned}
 \expr{N} 
  & = N\esubst{\bag{N_1,N_2}}{x}\\
& \red N\headlin{N_1/x}\esubst{\bag{N_2}}{x} + N\headlin{N_2/x}\esubst{\bag{N_1}}{x} 
                             = \expr{N}' 
 \end{aligned}
\end{equation}
                    
                    From (\ref{eq:sound_expsub33fail}) and (\ref{eq:sound_expsub44fail}), we infer  that  $\expr{L}'=\recencodopenf{\expr{N'}}$ and so the result follows.
                    
                \end{myEnumerate}

            \item Suppose that $N'' = \fail^{\widetilde{z'}}$. Then we proceed similarly as from (\ref{eq:sound_expsub2fail}):
                    \begin{equation}\label{eq:sound_expsub99fail}
                        \begin{aligned}
                            \recencodopenf{\expr{N}} 
                            & = \sum_{B_i \in \perm{\recencodf{ B' }}}\fail^{\widetilde{z'}} \linexsub{B_i(1)/y_1} \cdots \linexsub{B_i(n)/y_n}\\
                            &\red^* \sum_{B_i \in \perm{\recencodf{ B' }}}\fail^{(\widetilde{z'} \setminus y_1, \cdots , y_n) \uplus\widetilde{y}}, \text{ since }\headf{M'}=y_1\\
                           &= \expr{L}'
                        \end{aligned}
                    \end{equation}
                    
         where $\widetilde{y} = \lfv{B_i(1)} \uplus \cdots \uplus \lfv{B_i(n)}$.       We also have that
                    
                    \begin{equation}\label{eq:sound_expsub11fail}
                        \begin{aligned}
                            \expr{N} 
                            & = \fail^{\widetilde{z}} \esubst{B}{x} 
                             \red \fail^{(\widetilde{z} \setminus x) \uplus\widetilde{y}} 
                            = \expr{N}' 
                        \end{aligned}
                    \end{equation}
            
            \text{ where }  $\widetilde{y} = \mfv{B}$. 
                From (\ref{eq:sound_expsub99fail}) and (\ref{eq:sound_expsub11fail}), we infer  that  $\expr{L}'=\recencodopenf{\expr{N'}}$ and so the result follows.
            
            \item  Suppose that $N'' \red N'''$
                
                This case follows by the IH.
            
        \end{myEnumerate}

        \item Otherwise we continue equation (\ref{eq:sound_expsubfail}) as follows where $\#(x,M) = k$
            \begin{equation}
            \begin{aligned}
                \recencodopenf{\expr{N}} &= \recencodf{N' \esubst{B'}{x}}  \widetilde{[x_{1k}\leftarrow x_{1k}]} \\
                &=  \recencodf{N'\langle y_1. \cdots , y_k / x  \rangle} [y_1. \cdots , y_k \leftarrow x] \esubst{ \recencodf{B'} }{ x }  \widetilde{[x_{1k}\leftarrow x_{1k}]}\\
                &=  \recencodf{N''} [y_1. \cdots , y_k \leftarrow x] \esubst{ \recencodf{B'} }{ x }  \widetilde{[x_{1k}\leftarrow x_{1k}]} \\
            \end{aligned}
            \end{equation}
            
            Let us consider the two possible cases:
            
            \begin{myEnumerate}
                
                \item $ \#(x,M) = \size{B} = k = 0 $.
                
                Then we have:
                                    \begin{equation}\label{eq:sound_expsubotherwise1}
                    \begin{aligned}
                        \recencodopenf{\expr{N}} &=  \recencodf{N'}  \esubst{ 1 }{ x }  \widetilde{[x_{1k}\leftarrow x_{1k}]} \\
                    \end{aligned}
                    \end{equation}
                    
                    Reductions can only appear in $\recencodf{N'}$ and the case follows by the IH.
                    
                \item Otherwise we can perform the reduction:
                    \begin{equation}\label{eq:sound_expsubotherwise2}
                    \begin{aligned}
                        \recencodopenf{\expr{N}} &= \recencodf{N''} [y_1. \cdots , y_k \leftarrow x] \esubst{ \recencodf{B'} }{ x }  \widetilde{[x_{1k}\leftarrow x_{1k}]}\\
                        &\red \sum_{B_i \in \perm{B}}  \fail^{\widetilde{z'}} \widetilde{[x_{1k}\leftarrow x_{1k}]}
                        = \expr{L}'
                    \end{aligned}
                    \end{equation}
                 \text{ where } $\widetilde{z'} = \lfv{N''} \uplus \lfv{B'}$.
                We also have that
                    
                    \begin{equation}\label{eq:sound_expsubotherwise3}
                        \begin{aligned}
                            \expr{N} 
                            & = N \esubst{B}{x}  \red \sum_{\perm{B}} \fail^{\widetilde{z}}  = \expr{N}' 
                        \end{aligned}
                    \end{equation}
                    
                  \text{ where } $\widetilde{z} = \mfv{M} \uplus \mfv{B}$.   \\ 
                From (\ref{eq:sound_expsubotherwise2}) and (\ref{eq:sound_expsubotherwise3}), we infer  that  $\expr{L}'=\recencodopenf{\expr{N'}}$ and so the result follows.
                
            \end{myEnumerate}
        
    \end{myEnumerate}

    \item $\expr{N} = \fail^{\widetilde{y}}$
        \\
        Then $\recencodopenf{\fail^{\widetilde{y}}} = \fail^{\widetilde{y}}$, and no reductions can be performed.

    \item $\expr{N} = \expr{N}_1 + \expr{N}_2$: \\ This case holds by the IH. \qedhere
\end{enumerate}
\end{proof}

\subsection{Success Sensitiveness}
\label{app:sucessone}

\checkpres*

\begin{proof}
By induction on the structure of $M$. We only need to consider terms of the following form.

\begin{myEnumerate}

    \item When $ M = \checkmark $ the case is immediate.
    
    \item When $ M = NB $ with $\lfv{NB} = \{x_1,\cdots,x_k\}$ and  $\#(x_i,M)=j_i$ we have that: 

        \[ 
            \begin{aligned}
                \headfsum{\recencodopenf{NB}} &= \headfsum{\recencodf{NB\linsub{\widetilde{x_{1}}}{x_1}\cdots \linsub{\widetilde{x_k}}{x_k}}[\widetilde{x_1}\leftarrow x_1]\cdots [\widetilde{x_k}\leftarrow x_k]}\\
                &= \headfsum{\recencodf{NB}}
                 = \headfsum{\recencodf{N}} 
            \end{aligned}
        \]
         and $\headf{NB}= \headf{N} $, by the IH we have $\headf{N} = \checkmark \iff \headfsum{\recencodf{N}} = \checkmark$.
         
    \item When $M = N \esubst{B}{x}$, we must have that $\#(x,M) = \size{B}$ for the head of this term to be $\checkmark$. Let $\lfv{N \esubst{B}{x}} = \{x_1,\cdots,x_k\}$ and  $\#(x_i,M)=j_i$. We have that: 
        \[
            \begin{aligned}
                \headfsum{\recencodopenf{N \esubst{B}{x}}} &= \headfsum{\recencodf{N \esubst{B}{x}\linsub{\widetilde{x_{1}}}{x_1}\cdots \linsub{\widetilde{x_k}}{x_k}}[\widetilde{x_1}\leftarrow x_1]\cdots [\widetilde{x_k}\leftarrow x_k]}\\
                &= \headfsum{\recencodf{N \esubst{B}{x}}}\\
                &= \headfsum{\sum_{B_i \in \perm{\recencodf{ B }}}\recencodf{ N \langle x_1 , \cdots , x_k / x  \rangle } \linexsub{B_i(1)/x_1} \cdots \linexsub{B_i(k)/x_k}}\\
                &= \headfsum{\recencodf{ N \langle x_1 , \cdots , x_k / x  \rangle } \linexsub{B_i(1)/x_1} \cdots \linexsub{B_i(k)/x_k}}\\
                &= \headfsum{\recencodf{ N \langle x_1 , \cdots , x_k / x  \rangle } } 
            \end{aligned}
        \]
        
        and $\headf{N \esubst{B}{x}} = \headf{N}$, by the IH we have
\begin{align*}
\headf{N} = \checkmark \iff \headfsum{\recencodf{N}} = \checkmark \tag*{\qedhere}
\end{align*}        
        
\end{myEnumerate}
\end{proof}

\appsuccesssensce*

\begin{proof}
By induction on the structure of expressions $\lamrfail$ and $\lamrsharfail$. We proceed with the proof in two parts.

\begin{myEnumerate}
    
    \item Suppose that  $\expr{M} \Downarrow_{\checkmark} $. We will prove that $\recencodopenf{\expr{M}} \Downarrow_{\checkmark}$.

    By operational completeness (\thmref{l:app_completenessone}) we have that if $\expr{M}\red_{\redlab{R}} \expr{M'}$ then

    \begin{enumerate}
        \item If $\redlab{R} =  \redlab{R:Beta}$  then $ \recencodopenf{\expr{M}}  \red^{\leq 2}\recencodopenf{\expr{M}'}$;

        \item If $\redlab{R} =\redlab{R:Fetch}$   then   $ \recencodopenf{\expr{M}}  \red^+ \recencodopenf{\expr{M}''}$, for some $ \expr{M}''$ such that  $\expr{M}' \pequiv \expr{M}''$. 
        \item If $\redlab{R} \neq  \redlab{R:Beta}$ and $\redlab{R}\neq \redlab{R:Fetch}$  then $ \recencodopenf{\expr{M}}  \red\recencodopenf{\expr{M}'}$;
    \end{enumerate}
    
    Notice that  neither our  reduction rules  (in Figure ~\ref{fig:share-reductfailure}), or our congruence $\pequiv$ (in Figure~\ref{fig:rsPrecongruencefailure}),  or  our encoding ($\recencodopenf{\checkmark }=\checkmark$)  create or destroy a $\checkmark$ occurring in the head of term. By Proposition \ref{Prop:checkpres} the encoding preserves the head of a term being $\checkmark$. The encoding acts homomorphically over sums, therefore, if a $\checkmark$ appears as the head of a term in a sum, it will stay in the encoded sum. We can iterate the operational completeness lemma and obtain the result.

    \item Suppose that $\recencodopenf{\expr{M}} \Downarrow_{\checkmark}$. We will prove that $ \expr{M} \Downarrow_{\checkmark}$. 
    
   From \defref{def:app_Suc3} we have that  $\succp{\recencodopenf{\expr{M}}}{\checkmark}\implies \exists M_1 , \cdots , M_k. ~\expr{M} \red^*  M_1 + \cdots + M_k \text{ and } \headf{M_j} = \checkmark,$
    for some  $j \in \{1, \ldots, k\}$.
    
   Notice that if $\recencodopenf{\expr{M}}$ is itself a term headed with $\checkmark$, say $\headf{\recencodopenf{\expr{M}}}=\checkmark$, then $\expr{M}$ is itself headed with $\checkmark$, from Proposition \ref{Prop:checkpres}.
   
   Based on the shape of $\recencodopenf{\expr{M}}$, we consider two cases.
   The first case, when $\recencodopenf{\expr{M}} = M_1+\ldots+M_k$, $k\geq 2$, and $\checkmark$ occurs in the head of an $M_j$, follows a similar reasoning.  Then $\expr{M}$ has one of the forms:
   \begin{enumerate}
       \item   $\expr{M}= N_1$, then $N_1$ must contain the subterm $ M\esubst{B}{x}$ and $\size{B}=\#(x,M)$. 
       Since, 
       
        $\recencodopenf{M\esubst{B}{x}}=\displaystyle\sum_{B_i \in \perm{\recencodf{ B }}}\recencodf{M\linsub{\widetilde{x}}{x}}\linexsub{B_i(1)/x_i}\ldots \linexsub{B_i(k)/x_i}$,  we can apply Proposition \ref{Prop:checkpres} as we may apply $ \headfsum{\recencodopenf{M\esubst{B}{x}}} $.
       
       \item $\expr{M}=N_1+\ldots+N_l$ for $l \geq 2$.
       
       The reasoning is similar and uses the fact  that the encoding distributes homomorphically over sums.
   \end{enumerate}

   \revo{A26}{The second case is when $\recencodopenf{\expr{M}}\red^* M_1+\ldots+M_k$, and $\headf{M_j}=\checkmark$, for some $j$ and $M_j$. By operational soundness (\thmref{l:soundnessone}) we have that if $ \recencodopenf{\expr{M}}  \red \expr{L}$ then there exist $ \expr{M}' $ such that $ \expr{M}  \red_{\redlab{R}} \expr{M}'$ and 
    \begin{enumerate}
        \item If $\redlab{R} = \redlab{R:Beta}$ then $\expr{ L } \red^{\leq 1} \recencodopenf{\expr{M}'}$;
        \item If $\redlab{R} \neq \redlab{R:Beta}$ then $\expr{ L } \red^*  \recencodopenf{\expr{M}^{''}} $, for $ \expr{M}''$ such that  $\expr{M}' \pequiv \expr{M}''$.
    \end{enumerate}
     The reasoning is similar to the previous case, since our reduction rules do not introduce/eliminate $\checkmark$ occurring in the head of terms and by taking $\expr{L}$ to be $M_1+\ldots+M_k$ with $\headf{M_j}=\checkmark$, for some $j$ and $M_j$ the result follows}. \qedhere
\end{myEnumerate}
\end{proof}

\section{Appendix to \texorpdfstring{\secref{ss:secondstep}}{§ 5.3}}

\subsection{Type Preservation}
\label{app:typeprestwo}

\appaux*

\begin{proof}

    We shall prove the case of $(1)$ and the case of $(2)$ follows immediately. The case of $(3)$ is immediate by the encoding on types defined in Definition \ref{def:enc_sestypfail}. Hence we take $j > k$, $\tau_1 $ to be an arbitrary type and $m = 0$; also, we take $\tau_2 $ to be $\sigma$ and $n = j-k$. Hence we want to show that $ \piencodf{\sigma^{j}}_{(\tau_1, 0)} = \piencodf{\sigma^{k}}_{(\sigma, n)} $. We have the following
    \[
        \begin{aligned}
            \piencodf{\sigma^{k}}_{(\sigma, n)} &= \oplus(( \with \onef) \ampy ( \oplus  \with (( \oplus \piencodf{\sigma} ) \otimes (\piencodf{\sigma^{k-1}}_{(\sigma, n)}))))\\
            \piencodf{\sigma^{k-1}}_{(\sigma, n)} &= \oplus(( \with \onef) \ampy ( \oplus  \with (( \oplus \piencodf{\sigma} ) \otimes (\piencodf{\sigma^{k-2}}_{(\sigma, n)}))))\\
            \vdots\\
            \piencodf{\sigma^{1}}_{(\sigma, n)} &= \oplus(( \with \onef) \ampy ( \oplus  \with (( \oplus \piencodf{\sigma} ) \otimes (\piencodf{\omega}_{(\sigma, n)}))))
        \end{aligned}
    \]
    and
    \[
        \begin{aligned}
            \piencodf{\sigma^{j}}_{(\tau_1, 0)} &= \oplus(( \with \onef) \ampy ( \oplus  \with (( \oplus \piencodf{\sigma} ) \otimes (\piencodf{\sigma^{j-1}}_{(\tau_1, 0)}))))\\
            \piencodf{\sigma^{j-1}}_{(\tau_1, 0)} &= \oplus(( \with \onef) \ampy ( \oplus  \with (( \oplus \piencodf{\sigma} ) \otimes (\piencodf{\sigma^{j-2}}_{(\tau_1, 0)}))))\\
            \vdots\\
            \piencodf{\sigma^{j-k + 1}}_{(\tau_1, 0)} &= \oplus(( \with \onef) \ampy ( \oplus  \with (( \oplus \piencodf{\sigma} ) \otimes (\piencodf{\sigma^{j-k}}_{(\tau_1, 0)}))))
        \end{aligned}
    \]
    Notice that $n = j-k$, hence we wish to show that $ \piencodf{\sigma^{n}}_{(\tau_1, 0)} = \piencodf{\omega}_{(\sigma, n)} $.  Finally we have that:
    \[
        \begin{aligned}
            \piencodf{\omega}_{(\sigma, n)} & = \oplus(( \with \onef) \ampy ( \oplus  \with (( \oplus \piencodf{\sigma} ) \otimes (\piencodf{\omega}_{(\sigma, n-1)})))) \\
            \piencodf{\omega}_{(\sigma, n-1)} & = \oplus(( \with \onef) \ampy ( \oplus  \with (( \oplus \piencodf{\sigma} ) \otimes (\piencodf{\omega}_{(\sigma, n-2)})))) \\
            \vdots\\
            \piencodf{\omega}_{(\sigma, 1)} & = \oplus(( \with \onef) \ampy ( \oplus  \with (( \oplus \piencodf{\sigma} ) \otimes (\piencodf{\omega}_{(\sigma, 0)})))) \\
            \piencodf{\omega}_{(\sigma, 0)} &= \oplus(( \with \onef) \ampy ( \oplus  \with \onef ) \\
        \end{aligned}
    \]
    and 
    \begin{align*}
            \piencodf{\sigma^{n}}_{(\tau_1, 0)} &= \oplus(( \with \onef) \ampy ( \oplus  \with (( \oplus \piencodf{\sigma} ) \otimes (\piencodf{\sigma^{n-1}}_{(\tau_1, 0)}))))\\
            \piencodf{\sigma^{n-1}}_{(\tau_1, 0)} &= \oplus(( \with \onef) \ampy ( \oplus  \with (( \oplus \piencodf{\sigma} ) \otimes (\piencodf{\sigma^{n-2}}_{(\tau_1, 0)}))))\\
            \vdots\\
            \piencodf{\sigma^{1}}_{(\tau_1, 0)} &= \oplus(( \with \onef) \ampy ( \oplus  \with (( \oplus \piencodf{\sigma} ) \otimes (\piencodf{\omega}_{(\tau_1, 0)})))) \\
            \piencodf{\omega}_{(\tau_1, 0)} &= \oplus(( \with \onef) \ampy ( \oplus  \with \onef )
    \tag*{\qedhere}
    \end{align*}
\end{proof}

\preservationtwo*

\begin{proof}
By mutual induction on the typing derivation of $B$ and $\expr{M}$, with an analysis for the last rule applied.
Recall that the encoding of types ($\piencodf{-}$) has been given in 
Definition~\ref{def:enc_sestypfail}.
    
    \begin{enumerate}
        \item We consider two cases:
        
        \begin{enumerate}
        
        \item Rule~$\redlab{FS{:}wf \dash bag}$:
        
        In this case we have the following derivation:
        
            \begin{prooftree}
                    \AxiomC{\( \core{\Gamma} \vdash B : \pi \)}
                    \LeftLabel{\redlab{FS{:}wf \dash bag}}
                    \UnaryInfC{\( \core{\Gamma} \wfdash  B : \pi \)}
              \end{prooftree}
        
        There are two cases to be analyzed:

        \begin{enumerate}[i)]
            
            \item We may type bags with the $\redlab{TS{:}bag}$ Rule. 
            
            This case is similar to that of $\redlab{FS{:}bag}$
            
            \item We may type bags with the $\redlab{TS{:}\oneb}$ Rule.

            That is, 
            
                \begin{prooftree}
                    \AxiomC{\(  \)}
                    \RightLabel{\(\)}
                    \LeftLabel{\redlab{TS{:}\oneb}}
                    \UnaryInfC{\( \vdash \oneb : \omega \)}
                \end{prooftree}
            Our encoding gives us:
            $$\piencodf{\oneb}_x = x.\some_{\emptyset} ; x(y_n). ( y_n.\overline{\some};y_n . \overline{\close} \mid x.\some_{\emptyset} ; x. \overline{\none})$$
            
            and  the encoding of $\omega$ can be either:
            \begin{enumerate}
            \item  $\piencodf{\omega}_{(\sigma,0)} =  \overline{\with(( \oplus \bot )\otimes ( \with \oplus \bot ))}$; or
            \item $\piencodf{\omega}_{(\sigma, i)} =  \overline{   \with(( \oplus \bot) \otimes ( \with  \oplus (( \with  \overline{\piencodf{ \sigma }} )  \ampy (\overline{\piencodf{\omega}_{(\sigma, i - 1)}})))) }$
            \end{enumerate}
and one can build the following type derivation (rules from Figure~\ref{fig:trulespifull}):
        
         \hspace*{-40pt}
         \begin{minipage}{\linewidth}
            \begin{prooftree}
                    \AxiomC{\mbox{\ }}
                    \LeftLabel{\redlab{T\onef}}
                    \UnaryInfC{$y_n . \overline{\close} \vdash y_n: \onef$}
                    \LeftLabel{\redlab{T\with_d^x}}
                    \UnaryInfC{$y_n.\overline{\some};y_n . \overline{\close} \vdash  y_n :\with \onef$}

                    \AxiomC{}
                    \LeftLabel{\redlab{T\with^x}}
                    \UnaryInfC{$x.\dual{\none} \vdash x :\with A$}
                    \LeftLabel{\redlab{T\oplus^x_{\widetilde{w}}}}
                    \UnaryInfC{$x.\some_{\emptyset} ; x. \overline{\none} \vdash  x{:}\oplus \with A$}
                
                \LeftLabel{\redlab{T\mid}}
                \BinaryInfC{$( y_n.\overline{\some};y_n . \overline{\close} \mid x.\some_{\emptyset} ; x. \overline{\none}) \vdash y_n :\with \onef, x{:}\oplus \with A$}
                \LeftLabel{\redlab{T\ampy}}
                \UnaryInfC{$x(y_n). ( y_n.\overline{\some};y_n . \overline{\close} \mid x.\some_{\emptyset} ; x. \overline{\none}) \vdash  x: (\with \onef) \ampy (\oplus \with A) $}
                \LeftLabel{\redlab{T\oplus^x_{\widetilde{w}}}}
                \UnaryInfC{$x.\some_{\emptyset} ; x(y_n). ( y_n.\overline{\some};y_n . \overline{\close} \mid x.\some_{\emptyset} ; x. \overline{\none}) \vdash  x{:}\oplus ((\with \onef) \ampy (\oplus \with A))$}
            \end{prooftree}
            \end{minipage}
            
            Since $A$ is arbitrary,  we can take $A=\oneb$ for $\piencodf{\omega}_{(\sigma,0)} $ and  $A=$\linebreak[4]$ \overline{(( \with  \overline{\piencodf{ \sigma }} )  \ampy (\overline{\piencodf{\omega}_{(\sigma, i - 1)}}))}$  for $\piencodf{\omega}_{(\sigma,i)} $, in both cases, the result follows.

        \end{enumerate}

        \item Rule $\redlab{FS{:}bag}$:
    
        Then $B = \bag{M}\cdot A$ and we have the following derivation:
        
        \begin{prooftree}
            \AxiomC{\( \core{\Gamma} \wfdash M : \sigma\)}
            \AxiomC{\( \core{\Delta} \wfdash A : \sigma^{k} \)}
            \LeftLabel{\redlab{FS{:}bag}}
            \BinaryInfC{\( \core{\Gamma}, \core{\Delta} \wfdash \bag{M}\cdot A:\sigma^{k+1} \)}
        \end{prooftree}

To simplify the proof, we will consider $k=2$ (the case  $k> 2$ follows analogously).

By IH we have
\begin{align*}
    \piencodf{M}_{x_i} & \vdash \piencodf{\core{\Gamma}}, x_i: \piencodf{\sigma}
    \\
    \piencodf{A}_x & \vdash \piencodf{\core{\Delta}}, x: \piencodf{\sigma\wedge \sigma}_{(\tau, j)}
\end{align*}
By Definition~\ref{def:enc_lamrsharpifail},

\begin{equation}
\begin{aligned}
    \piencodf{\bag{M} \cdot A}_{x} =& x.\some_{\lfv{\bag{M} \cdot A} } ; x(y_i). x.\some_{y_i, \lfv{\bag{M} \cdot A}};x.\overline{\some} ; \outact{x}{x_i}.\\
    &(x_i.\some_{\lfv{M}} ; \piencodf{M}_{x_i} \mid \piencodf{A}_{x} \mid y_i. \overline{\none})
    \end{aligned}
\end{equation} 

Let $\Pi_1$ be the derivation:

\begin{prooftree}
            \AxiomC{$\piencodf{M}_{x_i} \;{ \vdash} \piencodf{\core{\Gamma}}, x_i: \piencodf{\sigma} $}
            \LeftLabel{\redlab{T\oplus^x_{\widetilde{w}}}}
            \UnaryInfC{$x_i.\some_{\lfv{M}} ; \piencodf{M}_{x_i} \vdash \piencodf{\core{\Gamma}} ,x_i: \oplus \piencodf{\sigma} $}
            
            \AxiomC{}
            \LeftLabel{\redlab{T\with^x}}
            \UnaryInfC{$ y_i. \overline{\none} \vdash y_i :\with \onef$}
            
        \LeftLabel{\redlab{T\mid}}
        \BinaryInfC{$\underbrace{x_i.\some_{\lfv{M}} ; \piencodf{M}_{x_i} \mid y_i. \overline{\none}}_{P_1} \vdash \piencodf{\core{\Gamma}} ,x_i: \oplus \piencodf{\sigma}, y_i :\with \onef $}
\end{prooftree}

Let $ P_1 = (x_i.\some_{\lfv{M}} ; \piencodf{M}_{x_i} \mid y_i. \overline{\none})$ in the the derivation $\Pi_2$ below:

\hspace*{-30pt}
\begin{minipage}{\linewidth}
\begin{prooftree}
        \AxiomC{$ \Pi_1$} 
        
        \AxiomC{$ \piencodf{A}_{x}  \vdash  \piencodf{\core{\Delta}}, x: \piencodf{\sigma\wedge \sigma}_{(\tau, j)} $}
        
        \LeftLabel{\redlab{T\otimes}}
    \BinaryInfC{$ \outact{x}{x_i}. (P_1 \mid \piencodf{A}_{x}) \vdash  \piencodf{\core{\Gamma}}  ,  \piencodf{\core{\Delta}}, y_i :\with \onef, x: (\oplus \piencodf{\sigma})  \otimes (\piencodf{\sigma\wedge \sigma}_{(\tau, j)}) $}
    \LeftLabel{\redlab{T\with_d^x}}
    \UnaryInfC{$\underbrace{x.\overline{\some} ; \outact{x}{x_i}. (P_1 \mid \piencodf{A}_{x}  )}_{P_2} \vdash \piencodf{\core{\Gamma}}  ,  \piencodf{\core{\Delta}}, y_i :\with \onef, x: \with (( \oplus \piencodf{\sigma} ) \otimes (\piencodf{\sigma\wedge \sigma}_{(\tau, j)}))  $}
\end{prooftree}
\end{minipage}

Let $P_2 = (x.\overline{\some} ; \outact{x}{x_i}. (P_1 \mid \piencodf{A}_{x} ))$ in the derivation below
(the last two rules that were applied  are \redlab{T\oplus^x_{\widetilde{w}}} and \redlab{T\ampy}):
         
\hspace*{-60pt}
\begin{minipage}{\linewidth}
\begin{prooftree}
\small
    \AxiomC{$ \Pi_2$} 
    \noLine
    \UnaryInfC{$\vdots$}
    \noLine
    \UnaryInfC{$P_2\vdash \piencodf{\core{\Gamma}}  ,  \piencodf{\core{\Delta}}, y_i :\with \onef, x: \with (( \oplus \piencodf{\sigma} ) \otimes (\piencodf{\sigma\wedge \sigma}_{(\tau, j)}))  $}
    \LeftLabel{\redlab{T\oplus^x_{\widetilde{w}}}}
    \UnaryInfC{$x.\some_{y_i, \lfv{\bag{M} \cdot A}};P_2  \vdash \piencodf{\core{\Gamma}}  ,  \piencodf{\core{\Delta}}, y_i :\with \onef, x:\oplus  \with (( \oplus \piencodf{\sigma} ) \otimes (\piencodf{\sigma\wedge \sigma}_{(\tau, j)}))$}
    \UnaryInfC{$x(y_i). x.\some_{y_i, \lfv{\bag{M} \cdot A}};P_2  \vdash \piencodf{\core{\Gamma}}  ,  \piencodf{\core{\Delta}}, x: ( \with \onef) \ampy ( \oplus  \with (( \oplus \piencodf{\sigma} ) \otimes (\piencodf{\sigma\wedge \sigma}_{(\tau, j)}))) $}
    \UnaryInfC{$\underbrace{x.\some_{\lfv{\bag{M} \cdot A} } ; x(y_i). x.\some_{y_i, \lfv{\bag{M} \cdot A}};P_2 }_{\piencodf{\bag{M}\cdot A}_x }\vdash   \piencodf{\core{\Gamma}}  ,  \piencodf{\core{\Delta}}, x: \oplus(( \with \onef) \ampy ( \oplus  \with (( \oplus \piencodf{\sigma} ) \otimes (\piencodf{\sigma\wedge \sigma}_{(\tau, j)})))) $}
\end{prooftree}
\end{minipage}

   From Definitions~\ref{def:duality} (duality) and \ref{def:enc_sestypfail}, we infer:
\begin{equation*}
    \begin{aligned}
         \oplus(( \with \onef) \ampy ( \oplus  \with (( \oplus \piencodf{\sigma} ) \otimes (\piencodf{\sigma\wedge \sigma}_{(\tau, j)})))) &=\piencodf{\sigma\wedge \sigma \wedge \sigma}_{(\tau, j)}
    \end{aligned}
\end{equation*}
Therefore, $\piencodf{\bag{M}\cdot A}_x \vdash \piencodf{\core{\Gamma},\core{\Delta}}, x: \piencodf{\sigma\wedge \sigma \wedge \sigma}_{(\tau, j)} $ and the result follows.

        \end{enumerate}
        
    \item  The proof of type preservation for expressions, relies on the analysis of nine cases:
    \begin{enumerate}
            
        \item Rule \redlab{FS{:}wf \dash expr}:

        Then we have the following derivation:
        
            \begin{prooftree}
                    \AxiomC{\( \core{\Gamma} \vdash \expr{M} : \tau \)}
                    \LeftLabel{\redlab{FS{:}wf \dash expr}}
                    \UnaryInfC{\( \core{\Gamma} \wfdash  \expr{M} : \tau \)}
            \end{prooftree}
        
            Cases follow from their corresponding case from $\redlab{FS{:}\dash}$. In the case of $ \redlab{TS{:}var} $ we have:
            
            \begin{prooftree}
                \AxiomC{}
                \LeftLabel{$\redlab{TS{:}var}$}
                \UnaryInfC{\( x: \tau\vdash x : \tau  \)}
            \end{prooftree}
            
        By Definition~\ref{def:enc_sestypfail},  $\piencodf{x:\tau}= x:\with \overline{\piencodf{\tau }}$, and by Figure~\ref{fig:encfail},  $\piencodf{x}_u=x.\overline{\some};[x\leftrightarrow u]$. The thesis holds thanks to the following derivation:
        
            \begin{prooftree}
                \AxiomC{}
                \LeftLabel{$\redlab{ (Tid)}$}
                \UnaryInfC{$ [x \leftrightarrow u ] \vdash x:  \overline{\piencodf{\tau }}  , u :  \piencodf{ \tau } $}
                \LeftLabel{$\redlab{T\with^{x}_{d})}$}
                \UnaryInfC{$ x.\overline{\some} ;[x \leftrightarrow u ] \vdash x: \with  \overline{\piencodf{ \tau }} , u :  \piencodf{ \tau } $}
            \end{prooftree}

        \item Rule $\redlab{FS{:}abs \dash sh}$:
        
        Then $\expr{M} = \lambda x . (M[\widetilde{x} \leftarrow x])$, and the derivation is:
        
        \begin{prooftree}
            \AxiomC{\( \core{\Delta} , x_1: \sigma, \cdots, x_k: \sigma \wfdash M : \tau  \)}
            \LeftLabel{ \redlab{FS{:}share}}
            \UnaryInfC{\( \core{\Delta} , x: \sigma \wedge \cdots \wedge \sigma \wfdash M[x_1 , \cdots , x_k \leftarrow x] : \tau \quad x\notin \core{\Delta}\)}
            \LeftLabel{\redlab{FS{:}abs \dash sh}}
            \UnaryInfC{\( \core{\Delta} \wfdash \lambda x . (M[\widetilde{x} \leftarrow x]) : \sigma^k  \rightarrow \tau \)}
        \end{prooftree}

\noindent To simplify the proof we will consider $k=2$ ( $k>2$ follows similarly). 

        By the IH, we have 
        \[\piencodf{M}_u\vdash  \piencodf{\core{\Delta} , x_1:\sigma, x_2:\sigma }, u:\piencodf{\tau}.\]
        
        From 
        \defref{def:enc_lamrsharpifail} and \defref{def:enc_sestypfail}, it follows that 

        \[
            \begin{aligned}
            \piencodf{ \core{\Delta} , x_1: \sigma, x_2: \sigma } &= \piencodf{\core{\Delta}}, x_1:\with\overline{\piencodf{\sigma}},  x_2:\with\overline{\piencodf{\sigma}}\\[3mm]
                \piencodf{\lambda x.M[x_1, x_2 \leftarrow x]}_u &= u.\overline{\some}; u(x).\piencodf{M[x_1, x_2 \leftarrow x]}_u \\
                & =                     u.\overline{\some}; u(x). x.\overline{\some}. \outact{x}{y_1}. (y_1 . \some_{\emptyset} ;y_{1}.\close;\zero\\
                &\quad \mid x.\overline{\some};x.\some_{u , (\lfv{M} \setminus x_1 , x_2 )};x(x_1). \textcolor{red}{x.\overline{\some}.}\\
                & \quad \textcolor{red}{\outact{x}{y_2} . (y_2 . \some_{\emptyset} ; y_{2}.\close;\zero \mid x.\overline{\some};x.\some_{u, (\lfv{M} \setminus x_2 )};} \\
                &\quad\textcolor{red}{ x(x_2).}  \textcolor{blue}{x.\overline{\some}; \outact{x}{y_{3}}. ( y_{3} . \some_{u,  \lfv{M} } ;y_{3}.\close; \piencodf{M}_u} \\
                &\quad \textcolor{blue}{\mid x.\overline{\none} ) ) )}
            \end{aligned}
            \]

We shall split the expression into three parts:
\[
\begin{aligned}
   \textcolor{blue}{ N_1} &= x.\overline{\some}; \outact{x}{y_{3}}. ( y_{3} . \some_{u,  \lfv{M} } ;y_{3}.\close; \piencodf{M}_u \mid x.\overline{\none} )\\
    \textcolor{red}{N_2} &= x.\overline{\some}. \outact{x}{y_2} . (y_2 . \some_{\emptyset} ; y_{2}.\close;\zero \mid x.\overline{\some};x.\some_{u, (\lfv{M} \setminus x_2 )};\\
    & \qquad x(x_2) . N_1)\\
    N_3 &= u.\overline{\some}; u(x). x.\overline{\some}. \outact{x}{y_1}. (y_1 . \some_{\emptyset} ;y_{1}.\close;\zero \mid x.\overline{\some};\\
    &\qquad x.\some_{u , (\lfv{M} \setminus x_1 , x_2 )};x(x_1) .N_2)
\end{aligned}
\]

and we obtain the  derivation for term $N_1$ as follows:

\hspace*{-50pt}
\begin{minipage}{\linewidth}
\begin{prooftree}
\small
        \AxiomC{$\piencodf{M}_u \vdash \piencodf{ \core{\Delta} , x_1: \sigma, x_2: \sigma }, u:\piencodf{\tau} $}
        \LeftLabel{\redlab{T\bot}}
        \UnaryInfC{$y_{3}.\close; \piencodf{M}_u \vdash \piencodf{ \core{\Delta} , x_1: \sigma, x_2: \sigma }, u:\piencodf{f\tau}, y_{3}{:}\bot$}
        \LeftLabel{\redlab{T\oplus^x_{\widetilde{w}}}}
        \UnaryInfC{$y_{3} . \some_{u,  \lfv{M} } ;y_{3}.\close; \piencodf{M}_u \vdash \piencodf{ \core{\Delta} , x_1: \sigma, x_2: \sigma }, u:\piencodf{\tau}, y_{3}{:}\oplus \bot $}
        \AxiomC{}
        \LeftLabel{\redlab{T\with^x}}
        \UnaryInfC{$x.\dual{\none} \vdash x :\with A$}
    \LeftLabel{\redlab{T\otimes}}
    \BinaryInfC{$ \outact{x}{y_{3}}. ( y_{3} . \some_{u,  \lfv{M} } ;y_{3}.\close; \piencodf{M}_u \mid x.\overline{\none} ) \vdash \piencodf{ \core{\Delta} , x_1: \sigma, x_2: \sigma }, u:\piencodf{\tau} , x: ( \oplus \bot )\otimes ( \with A ) $}
    \LeftLabel{\redlab{T\with_d^x}}
    \UnaryInfC{$\underbrace{x.\dual{\some}; \outact{x}{y_{3}}. ( y_{3} . \some_{u,  \lfv{M} } ;y_{3}.\close; \piencodf{M}_u \mid x.\overline{\none} )}_{N_1} \vdash \piencodf{ \core{\Delta} , x_1: \sigma, x_2: \sigma } , u:\piencodf{\tau} , x: \overline{\piencodf{\omega}_{(\sigma, i)}} $}
\end{prooftree}
\end{minipage}

Notice that the last rule applied \redlab{T\with_d^x} assigns $x: \with ((\oplus \bot) \otimes (\with A))$. Again, since $A$ is arbitrary,  take $A= \oplus (( \with  \overline{\piencodf{ \sigma }} )  \ampy (\overline{\piencodf{\omega}_{(\sigma, i - 1)}}))$, obtaining $x:\overline{\piencodf{\omega}_{(\sigma,i)}}$.
In order to obtain a type derivation for $N_2$, consider the derivation $\Pi_1$:

\hspace*{-50pt}
\begin{minipage}{\linewidth}
\begin{prooftree}
\small
        \AxiomC{$N_1 \vdash \piencodf{\core{\Delta}}, x_1:\with\overline{\piencodf{\sigma}},  x_2:\with\overline{\piencodf{\sigma}} , u:\piencodf{\tau}, x: \overline{\piencodf{\omega}_{(\sigma, i)}} $}
        \LeftLabel{\redlab{T\ampy}}
        \UnaryInfC{$x(x_2) . N_1  \vdash \piencodf{\core{\Delta}}, x_1:\with\overline{\piencodf{\sigma}},  u:\piencodf{\tau}, x: ( \with\overline{\piencodf{\sigma}} ) \ampy (\overline{\piencodf{\omega}_{(\sigma, i)}}) $}
        \LeftLabel{\redlab{T\oplus^x_{\widetilde{w}}}}
        \UnaryInfC{$ x.\some_{u, (\lfv{M} \setminus x_2 )};x(x_2) . N_1 \vdash \piencodf{\core{\Delta}}, x_1:\with\overline{\piencodf{\sigma}},  u:\piencodf{\tau}, x{:}\oplus (( \with\overline{\piencodf{\sigma}} ) \ampy (\overline{\piencodf{\omega}_{(\sigma, i)}}))$}
        \LeftLabel{\redlab{T\with_d^x}}
        \UnaryInfC{$ x.\overline{\some};x.\some_{u, (\lfv{M} \setminus x_2 )};x(x_2) . N_1  \vdash \piencodf{\core{\Delta}}, x_1:\with\overline{\piencodf{\sigma}},  u:\piencodf{\tau} , x :\with \oplus (( \with\overline{\piencodf{\sigma}} ) \ampy ( \overline{\piencodf{\omega}_{(\sigma, i)}} ))$}
\end{prooftree}
\end{minipage}

We take $ P_1 = x.\overline{\some};x.\some_{u, (\lfv{M} \setminus x_2 )};x(x_2) . N_1$ and $\core{\Gamma_1} =   \piencodf{\core{\Delta}}, x_1:\with\overline{\piencodf{\sigma}},  u:\piencodf{\tau} $ and continue the derivation of $ N_2 $

{\small 
\begin{prooftree}
        \AxiomC{}
        \LeftLabel{\redlab{T\cdot}}
        \UnaryInfC{$\zero\vdash $}
        \LeftLabel{\redlab{T\bot}}
        \UnaryInfC{$ y_{2}. \close;\zero \vdash y_{2} : \bot  $}
        \LeftLabel{\redlab{T\oplus^x_{\widetilde{w}}}}
        \UnaryInfC{$ y_2 . \some_{\emptyset} ; y_{2}.\close;\zero \vdash  y_2{:}\oplus \bot$}
        
        \AxiomC{$\Pi_1 $}
        \noLine
        \UnaryInfC{$\vdots$}
        \noLine
        \UnaryInfC{$P_1\vdash \core{\Gamma_1}, x:\with \oplus (( \with\overline{\piencodf{\sigma}} ) \ampy ( \overline{\piencodf{\omega}_{(\sigma, i)}} ))$}
    \LeftLabel{\redlab{T\otimes}}
    \BinaryInfC{$\outact{x}{y_2} . (y_2 . \some_{\emptyset} ; y_{2}.\close;\zero \mid P_1) \vdash \core{\Gamma_1} ,  x: (\oplus \bot)\otimes (\with \oplus (( \with\overline{\piencodf{\sigma}} ) \ampy ( \overline{\piencodf{\omega}_{(\sigma, i)}} )) ) $}
    \LeftLabel{\redlab{T\with_d^x}}
    \UnaryInfC{$\underbrace{x.\overline{\some}. \outact{x}{y_2} . (y_2 . \some_{\emptyset} ; y_{2}.\close;\zero \mid P_1)}_{N_2} \vdash \core{\Gamma_1} , x : \overline{ \piencodf{\sigma \wedge \omega}_{(\sigma, i)}} $}
\end{prooftree}
}

Finally, we type $N_3$ by first having the derivation $\Pi_2 $:

{\small 
\begin{prooftree} 
        \AxiomC{$ N_2 \vdash \piencodf{\core{\Delta}}, x_1:\with\overline{\piencodf{\sigma}},  u:\piencodf{\tau} , x : \overline{ \piencodf{\sigma \wedge \omega}_{(\sigma, i)}} $}
        \LeftLabel{\redlab{T\ampy}}
        \UnaryInfC{$x(x_1) . N_2  \vdash \piencodf{\core{\Delta}},   u:\piencodf{\tau} , x: ( \with\overline{\piencodf{\sigma}} ) \ampy \overline{\piencodf{\sigma \wedge \omega}_{(\sigma, i)}} $}
        \LeftLabel{\redlab{T\oplus^x_{\widetilde{w}}}}
        \UnaryInfC{$ x.\some_{u , (\lfv{M} \setminus x_1 , x_2 )};x(x_1) . N_2 \vdash \piencodf{\core{\Delta}}, u:\piencodf{\tau} , x{:}\oplus ( ( \with\overline{\piencodf{\sigma}} ) \ampy \overline{\piencodf{\sigma \wedge \omega}_{(\sigma, i)}} ) $}
        \LeftLabel{\redlab{T\with_d^x}}
        \UnaryInfC{${P_2} \vdash \piencodf{\core{\Delta}}, u:\piencodf{\tau} , x :\with \oplus ( ( \with\overline{\piencodf{\sigma}} ) \ampy \overline{\piencodf{\sigma \wedge \omega}_{(\sigma, i)}} )$}
\end{prooftree}
}

We let $ P_2 = x.\overline{\some};x.\some_{u , (\lfv{M} \setminus x_1 , x_2 )};x(x_1) .N_2$ and $\core{\Gamma_2} =   \piencodf{\core{\Delta}}, u:\piencodf{\tau}  $. We continue the derivation of $N_3={u.\overline{\some}; u(x). x.\overline{\some}. \outact{x}{y_1}. (y_1 . \some_{\emptyset} ;y_{1}.\close;\zero \mid P_2 )}$:

{\small 
\begin{prooftree}
        \AxiomC{}
        \LeftLabel{\redlab{T\cdot}}
        \UnaryInfC{$\zero\vdash $}
        \LeftLabel{\redlab{T\bot}}
        \UnaryInfC{$ y_{1}. \close;\zero \vdash y_{1} : \bot  $}
        \LeftLabel{\redlab{T\oplus^x_{\widetilde{w}}}}
        \UnaryInfC{$ y_1 . \some_{\emptyset} ; y_{1}.\close;\zero \vdash  y_1{:}\oplus \bot$}
        \AxiomC{$ \Pi_2 $}
    \LeftLabel{\redlab{T\otimes}}
    \BinaryInfC{$\outact{x}{y_1}. (y_1 . \some_{\emptyset} ;y_{1}.\close;\zero \mid P_2 ) \vdash  \core{\Gamma_2}, x: (\oplus \bot) \otimes ( \with \oplus ( ( \with\overline{\piencodf{\sigma}} ) \ampy \overline{\piencodf{\sigma \wedge \omega}_{(\sigma, i)}} ) ) $}
    \LeftLabel{\redlab{T\with_d^x}}
    \UnaryInfC{$x.\overline{\some}. \outact{x}{y_1}. (y_1 . \some_{\emptyset} ;y_{1}.\close;\zero \mid P_2 ) \vdash \piencodf{\core{\Delta}}, u:\piencodf{\tau} , x : \overline{\piencodf{\sigma \wedge \sigma}_{(\sigma, i)}} $}
    \LeftLabel{\redlab{T\ampy}}
    \UnaryInfC{$u(x). x.\overline{\some}. \outact{x}{y_1}. (y_1 . \some_{\emptyset} ;y_{1}.\close;\zero \mid P_2 ) \vdash \piencodf{\core{\Delta}} , u: ( \overline{\piencodf{\sigma \wedge \sigma}_{(\sigma, i)}} ) \ampy ( \piencodf{\tau} ) $}
    \LeftLabel{\redlab{T\with_d^x}}
    \UnaryInfC{${N_3} \vdash \piencodf{\core{\Delta}} , u :\with (( \overline{ \piencodf{\sigma \wedge \sigma}_{(\sigma, i)}} ) \ampy ( \piencodf{\tau} ))$}
\end{prooftree}
}

        Since $\piencodf{\sigma\wedge \sigma\rightarrow \tau}= \with(  \overline{\piencodf{\sigma \wedge \sigma}_{(\sigma, i)}}  \ampy  \piencodf{ \tau }) $, we have proven that $\piencodf{\lambda x. M[\widetilde{x}\leftarrow x]}_u\vdash \piencodf{\core{\Delta}}, u:\piencodf{\sigma\wedge \sigma \rightarrow \tau}$  and the result follows.
        \item Rule $\redlab{FS{:}app}$:

        Then $\expr{M} = M\ B$, and the derivation is

        \begin{prooftree}
            \AxiomC{\( \core{\Gamma} \wfdash M : \sigma^{j} \rightarrow \tau \)}
            \AxiomC{\( \core{\Delta} \wfdash B : \sigma^{k} \)}
                \LeftLabel{\redlab{FS{:}app}}
            \BinaryInfC{\( \core{\Gamma}, \core{\Delta} \wfdash M\ B : \tau\)}
        \end{prooftree}

    By IH, we have both

    \begin{itemize}
        \item  $\piencodf{M}_u\vdash \piencodf{\core{\Gamma}}, u:\piencodf{\sigma^{j} \rightarrow \tau}$;
        \item and $\piencodf{B}_u\vdash \piencodf{\core{\Delta}}, u:\overline{\piencodf{\sigma^{k}}_{(\tau_2, n)}}$, for some $\tau_2$ and some $n$.
    \end{itemize}
    From the fact that $\expr{M}$ is well-formed and \defref{def:enc_lamrsharpifail} and \defref{def:enc_sestypfail}, we  have:
    
    \begin{itemize}
        \item $B=\bag{N_1 , \cdots , N_k}$;
        \item $\displaystyle{\piencodf{ M\ B }_u =  \bigoplus_{B_i \in \perm{B}} (\nu v)(\piencodf{M}_v \mid v.\some_{u , \lfv{B}} ; \outact{v}{x} . ([v \leftrightarrow u] \mid \piencodf{B_i}_x ) )} $;
        \item $\piencodf{\sigma^{j} \rightarrow \tau}=\with( \overline{\piencodf{\sigma^{j}}_{(\tau_1, m)}} \ampy \piencodf{\tau})$, for some $\tau_1$ and some $m$.
    \end{itemize}
    
    Also, since $\piencodf{B}_u\vdash \piencodf{\core{\Delta}}, u:\piencodf{\sigma^{k}}_{(\tau_2, n)}$, we have the following derivation $\Pi_i$:

\hspace*{-10pt}
\begin{minipage}{\linewidth}
  \begin{prooftree}
            \AxiomC{$\piencodf{ B_i}_x\vdash \piencodf{ \core{\Delta} }, x:{\piencodf{ \sigma^{k} }_{(\tau_2, n)}} $ }
                
            \AxiomC{\(\)}                                   
            \LeftLabel{$ \redlab{Tid}$}
            \UnaryInfC{$ [v \leftrightarrow u]                   \vdash v:  \overline{\piencodf{ \tau }} , u: \piencodf{ \tau }$}
        \LeftLabel{$\redlab{T \otimes}$}
        \BinaryInfC{$\outact{v}{x} . ([v \leftrightarrow u] \mid \piencodf{B_i}_x ) \vdash \piencodf{ \core{\Delta }}, v:\piencodf{ \sigma^{k} }_{(\tau_2, n)} \otimes \overline{ \piencodf{ \tau }} , u:\piencodf{ \tau } $}
        \LeftLabel{$\redlab{T\oplus^{v}_{w}}$}
        \UnaryInfC{$  v.\some_{u , \lfv{B}} ; \outact{v}{x} . ([v \leftrightarrow u] \mid \piencodf{B_i}_x )  \vdash\piencodf{ \core{\Delta} }, v:\oplus (\piencodf{ \sigma^{k} }_{(\tau_2, n)} \otimes  \overline{\piencodf{ \tau }}), u:\piencodf{ \tau }$} 
    \end{prooftree}
\end{minipage} 
    
Notice that 
\begin{equation*}
\begin{aligned}
    \oplus (\piencodf{ \sigma^{k} }_{(\tau_2, n)} \otimes  \overline{\piencodf{ \tau }}) &=  \overline{\piencodf{\sigma^{k} \rightarrow \tau} }
\end{aligned}
\end{equation*}

    Therefore, by one application of $\redlab{Tcut}$ we obtain the derivations $\nabla_i$, for each $B_i \in \perm{B}$:

        \begin{prooftree}
        \AxiomC{\( \piencodf{M}_{v} \vdash  \piencodf{ \core{\Gamma} }, v: \with (\overline{\piencodf{ \sigma^{j}} }_{(\tau_1,m)} \ampy ( \piencodf{ \tau }))\)}
        \AxiomC{$\Pi_i$}
        \LeftLabel{\( \redlab{Tcut} \)}    
        \BinaryInfC{$ (\nu v)( \piencodf{ M}_v \mid v.\some_{u , \lfv{B}} ; \outact{v}{x} . ([v \leftrightarrow u] \mid \piencodf{B_i}_x ) ) \vdash \piencodf{ \core{\Gamma} } ,\piencodf{ \core{\Delta} } , u: \piencodf{ \tau }$}
        \end{prooftree}
        
       In order to apply \redlab{Tcut}, we must have that $\piencodf{\sigma^{j}}_{(\tau_1, m)} = \piencodf{\sigma^{k}}_{(\tau_2, n)}$, therefore, the choice of $\tau_1,\tau_2,n$ and $m$, will consider the different possibilities for $j$ and $k$, as in Proposition~\ref{prop:app_aux}.

        We can then conclude that $\piencodf{M B}_u \vdash \piencodf{ \core{\Gamma}}, \piencodf{ \core{\Delta} }, u:\piencodf{ \tau }$:
        
\hspace*{-30pt}
\begin{minipage}{\linewidth}
        \begin{prooftree}
        \AxiomC{For each $B_i \in \perm{B} \qquad  \nabla_i$}
        \LeftLabel{$\redlab{T\with}$}
        \UnaryInfC{$\displaystyle{\bigoplus_{B_i \in \perm{B}} (\nu v)( \piencodf{ M}_v \mid v.\some_{u , \lfv{B}} ; \outact{v}{x} . ([v \leftrightarrow u] \mid \piencodf{B_i}_x ) )  } \vdash \piencodf{ \core{\Gamma}}, \piencodf{ \core{\Delta} }, u:\piencodf{ \tau }$}
        \end{prooftree}
\end{minipage}
        
        and the result follows.
        
        \item Rule $\redlab{FS{:}share}$:
        
        Then $\expr{M} = M [ x_1, \dots x_k \leftarrow x ]$ and 
        
        \begin{prooftree}
            \AxiomC{\( \core{\Delta} , x_1: \sigma, \cdots, x_k: \sigma \wfdash M : \tau \quad x\notin \core{\Delta} \quad k \not = 0\)}
            \LeftLabel{ \redlab{FS{:}share}}
            \UnaryInfC{\( \core{\Delta} , x: \sigma_{k} \wfdash M[x_1 , \cdots , x_k \leftarrow x] : \tau \)}
        \end{prooftree}
            
         The proof for this case is contained within 2(b).  
        
        \item Rule $\redlab{FS{:}weak}$:
        
        Then $\expr{M} = M[ \leftarrow x]$ and
        
        \begin{prooftree}
            \AxiomC{\( \core{\Gamma}  \wfdash M : \tau\)}
            \LeftLabel{ \redlab{FS{:}weak}}
            \UnaryInfC{\( \core{\Gamma} , x: \omega \wfdash M[\leftarrow x]: \tau \)}
        \end{prooftree}

        However $\core{\Gamma} , x: \omega$ is not a core context hence we disallow the case.


                
                

       
        \item Rule $\redlab{FS{:}ex \dash sub}$:
        
        Then $\expr{M} = M[x_1, \cdots , x_k \leftarrow x]\ \esubst{ B }{ x }$ and
        
        \begin{prooftree}
                \AxiomC{\( \core{\Delta} \wfdash B : \sigma^{j} \)}
                
                \AxiomC{\( \core{\Gamma} , x:\sigma^{k} \wfdash  M[x_1, \cdots , x_k \leftarrow x] : \tau \)}
            \LeftLabel{\redlab{FS{:}ex \dash sub}}    
            \BinaryInfC{\( \core{\Gamma}, \core{\Delta} \wfdash M[x_1, \cdots , x_k \leftarrow x]\ \esubst{ B }{ x } : \tau \)}
        \end{prooftree}
        
        
    
    By Proposition~\ref{prop:app_aux} and IH we have both 
    \[
    \begin{aligned} 
    \piencodf{ M[x_1, \cdots , x_k \leftarrow x]}_u&\vdash \piencodf{\core{\Gamma}}, x: \overline{ \piencodf{ \sigma_k }_{(\tau, n)}} , u:\piencodf{\tau}\\
    \piencodf{B}_x&\vdash \piencodf{\core{\Delta}}, x:\piencodf{ \sigma_j }_{(\tau, m)}
    \end{aligned}
    \]
    From \defref{def:enc_lamrsharpifail}, we have 
    \begin{equation*}
        \begin{aligned}
       \piencodf{ M[\widetilde{x} \leftarrow x]\ \esubst{ B }{ x }}_u&= \bigoplus_{B_i \in \perm{B}} (\nu x)( \piencodf{ M[\widetilde{x} \leftarrow x]}_u \mid \piencodf{ B_i}_x )  \\
        \end{aligned}
    \end{equation*}

Therefore, for each $B_i \in \perm{B} $, we obtain the following derivation $\Pi_i$:
        
\begin{prooftree}
            \AxiomC{$\piencodf{ M[\widetilde{x} \leftarrow x]}_u  
            \vdash \piencodf{ \core{\Gamma} }  , x:  \overline{ \piencodf{ \sigma_k }_{(\tau, n)}}, u:  \piencodf{ \tau }$}
            \AxiomC{$    \piencodf{ B_i}_x \vdash \piencodf{ \core{\Delta} }, x: \piencodf{ \sigma_j }_{(\tau, m)}$}
        \LeftLabel{$\redlab{Tcut}$}
        \BinaryInfC{$ (\nu x)( \piencodf{ M[\widetilde{x} \leftarrow x]}_u \mid \piencodf{ B_i}_x )  \vdash \piencodf{ \core{\Gamma}} , \piencodf{ \core{\Delta} } , u: \piencodf{ \tau } $}               
        \end{prooftree}
        We must have that $\piencodf{\sigma^{j}}_{(\tau, m)} = \piencodf{\sigma^{k}}_{(\tau, n)}$ which holds by the conditions in Proposition \ref{prop:app_aux}.
        Therefore, from $\Pi_i$ and multiple applications of $\redlab{T\with}$ it follows that
        
        \begin{prooftree}
                    \AxiomC{$\forall \bigoplus_{B_i \in \perm{B}} \hspace{1cm} \Pi_i$}
                    \LeftLabel{$\redlab{T\with}$}
        \UnaryInfC{$ \bigoplus_{B_i \in \perm{B}} (\nu x)( \piencodf{ M[\widetilde{x} \leftarrow x]}_u \mid x.\some_w; \piencodf{ B_i}_x )  \vdash\piencodf{ \core{\Gamma} } , \piencodf{ \core{\Delta} } , u: \piencodf{ \tau }$}
        \end{prooftree}
        that is, $\piencodf{M[x_1,x_2\leftarrow x]\esubst{B}{x}}\vdash \piencodf{\core{\Gamma}, \core{\Delta}}, u:\piencodf{\tau}$ and the result follows.
        
        \item Rule $\redlab{FS{:}ex \dash lin \dash sub}$:

        Then $\expr{M} = M \linexsub{N / x}$ and
        
        \begin{prooftree}
        \AxiomC{\( \core{\Delta} \wfdash N : \sigma \)}
        \AxiomC{\( \core{\Gamma}  , x:\sigma \wfdash M : \tau \)}
            \LeftLabel{\redlab{FS{:}ex \dash lin \dash sub}}
        \BinaryInfC{\( \core{\Gamma}, \core{\Delta} \wfdash M \linexsub{N / x} : \tau \)}
        \end{prooftree}
    
    By IH we have both 
    \[
    \begin{aligned}
        \piencodf{N}_x&\vdash \piencodf{\core{\Delta}}, x: \piencodf{\sigma}\\
        \piencodf{M}_x&\vdash \piencodf{\core{\Gamma}}, x: \with\overline{\piencodf{\sigma}}, u:\piencodf{\tau}.
    \end{aligned}
    \]
    
    From \defref{def:enc_lamrsharpifail}, $\piencodf{M \linexsub{N / x} }_u=(\nu x) ( \piencodf{ M }_u \mid   x.\some_{\lfv{N}};\piencodf{ N }_x  )$ and 
    
\hspace*{-30pt}
\begin{minipage}{\linewidth}
    \begin{prooftree}
        \AxiomC{\( \piencodf{ M }_u \vdash  \piencodf{ \core{\Gamma} } , u :  \piencodf{ \tau } , x : \with \overline{\piencodf{ \sigma }}\)}
        \AxiomC{\( \piencodf{ N }_x  \vdash  \piencodf{ \core{\Delta} } , x : \piencodf{ \sigma } \)}
                \LeftLabel{$\redlab{T\oplus^x}$}
        \UnaryInfC{\( x.\some_{\lfv{N}};\piencodf{ N }_x \vdash  \piencodf{ \core{\Delta} } , x : \oplus \piencodf{ \sigma }\)}
        \LeftLabel{$\redlab{TCut}$}
        \BinaryInfC{\((\nu x) ( \piencodf{ M }_u \mid   x.\some_{\lfv{N}};\piencodf{ N }_x  )  \vdash  \piencodf{ \core{\Gamma}} , \piencodf{ \core{\Delta} } , u : \piencodf{ \tau }  \)}
        \end{prooftree}
\end{minipage}

        Observe that for the application of Rule~$\redlab{TCut}$ we used the fact that $\overline{\oplus\piencodf{\sigma}}=\with \overline{\piencodf{\sigma}}$. Therefore, $\piencodf{M \linexsub{N / x} }_u\vdash \piencodf{ \core{\Gamma}} , \piencodf{ \core{\Delta} } , u : \piencodf{ \tau } $ and the result follows.

        \item Rule $\redlab{FS{:}fail}$:
        
        Then $\expr{M} = M \linexsub{N / x}$ and
        
            \begin{prooftree}
                \AxiomC{\( \core{(x_1:\sigma_1, \cdots , x_n:\sigma_n)} = x_1:\sigma_1, \cdots , x_n:\sigma_n  \)}
                \LeftLabel{\redlab{FS{:}fail}}
                \UnaryInfC{\( x_1:\sigma_1, \cdots , x_n:\sigma_n  \wfdash  \fail^{x_1, \cdots , x_n} : \tau \)}
            \end{prooftree}
        
        From Definition \ref{def:enc_lamrsharpifail}, $\piencodf{\fail^{x_1, \cdots , x_n} }_u= u.\overline{\none} \mid x_1.\overline{\none} \mid \cdots \mid x_k.\overline{\none} $ and 
        
    {\small 
        \begin{prooftree}
                \AxiomC{}
                \LeftLabel{\redlab{T\with^u}}
                \UnaryInfC{$u.\overline{\none} \vdash u : \piencodf{ \tau } $}

                    \AxiomC{}
                    \LeftLabel{\redlab{T\with^{x_1}}}
                    \UnaryInfC{$x_1.\overline{\none} \vdash_1 : \with \overline{\piencodf{\sigma_1}} $}
                    
                    \AxiomC{}
                    \LeftLabel{\redlab{T\with^{x_n}}}
                    \UnaryInfC{$x_n.\overline{\none} \vdash x_n : \with \overline{\piencodf{\sigma_n}} $}
                    \UnaryInfC{$\vdots$}
                \BinaryInfC{$x_1.\overline{\none} \mid \cdots \mid x_k.\overline{\none} \vdash  x_1 : \with \overline{\piencodf{\sigma_1}}, \cdots  ,x_n : \with \overline{\piencodf{\sigma_n}}$}
            \LeftLabel{\redlab{T\mid}}
            \BinaryInfC{$u.\overline{\none} \mid x_1.\overline{\none} \mid \cdots \mid x_k.\overline{\none} \vdash x_1 : \with \overline{\piencodf{\sigma_1}}, \cdots  ,x_n : \with \overline{\piencodf{\sigma_n}}, u : \piencodf{ \tau }$}
        \end{prooftree}
        }
        Therefore, $\piencodf{\fail^{x_1, \cdots , x_n} }_u\vdash  x_1 : \with \overline{\piencodf{\sigma_1}}, \cdots  ,x_n : \with \overline{\piencodf{\sigma_n}}, u : \piencodf{ \tau } $ and the result follows.

        \item Rule $\redlab{FS{:}sum}$: 
        
        This case follows easily by IH. \qedhere
    \end{enumerate}
    \end{enumerate}
\end{proof}

\subsection{Completeness and Soundness}
\label{compandsucctwo}


\consistnequiv*

\begin{proof}
By induction on the structure of $\expr{M}$. Let us consider first two conditions 1 and 2 as other conditions are analogous. The congruence  rules that concern the sharing construct of condition 1 are:

        \[
            \begin{array}{rll}
                \!\!\!M [ \leftarrow x] \esubst{\oneb}{x} \!\!\!\! & \pequiv M &
                \\
                \!\!\!MA[\widetilde{x} \leftarrow x]\esubst{B}{x} 
                         \!\!\!\!&\pequiv (M[\widetilde{x} \leftarrow x]\esubst{B}{x})A
                         &  \!\text{with } x_i \in \widetilde{x} \Rightarrow x_i \not \in \lfv{A}
                \\
                \!\!\!M[\widetilde{y} \leftarrow y]\esubst{A}{y}[\widetilde{x} \leftarrow x]\esubst{B}{x} \!\!\!\! & \pequiv
                (M[\widetilde{x} \leftarrow x]\esubst{B}{x})[\widetilde{y} \leftarrow y]\esubst{A}{y} &   \!\text{with } x_i \in \widetilde{x} \Rightarrow x_i \not \in \lfv{A}
                \\
            \end{array}
        \]
        Notice that these rules neither add or remove occurrences of shared variables neither do they allow shared variables to be extruded from their bindings by their side conditions. Also, they do not introduce new sharing on already shared variables. 
        Hence, conditions 1(i) to 1(iv) are preserved by these rules.
        
        Now consider the congruence rules concerning the explicit substitution of condition 2:
        \[
            \begin{array}{rll}
            MB \linexsub{N/x}  \!\!\!\!&\pequiv (M\linexsub{N/x})B &  \text{with } x \not \in \lfv{B} 
            \\
            M \linexsub{N_2/y}\linexsub{N_1/x} 
                     \!\!\!\!&\pequiv M\linexsub{N_1/x}\linexsub{N_2/y} &
                     \text{with } x \not \in \lfv{N_2},\revdaniele{ y \notin \lfv{N_1}}
            \\
            \end{array}
        \]
        As before, variables are not duplicated or eliminated from terms and by the side conditions of the rules they cannot extrude bound variables. Similarly, the rules do not introduce any sharing or new free variables.
        Hence conditions 2(i) to 2(iv) are satisfied.
\end{proof}

\encodingreduces*

\begin{proof}
Let us consider each part:
\begin{myEnumerate}

    \item We proceed by induction on the structure of $N$.

    \begin{myEnumerate}
    
        \item  $N = x$.
        
        Then $\piencodf{x}_u$. Hence $ I = \emptyset$ and $ \widetilde{y} = \emptyset$.
    
        \item $N = (M\ B)$. 
        
        Then $\headf{M\ B} = \headf{M} = x$ and 
        \[ \piencodf{N}_u = \piencodf{M\ B}_u  = \bigoplus_{B_i \in \perm{B}} (\nu v)(\piencodf{M}_v \mid v.\some_{u, \lfv{B}} ; \outact{v}{x} . ([v \leftrightarrow u] \mid \piencodf{B_i}_x ) ) \]
        and the result follows by induction on $\piencodf{M}_u$.
        
        \item $N = M[\widetilde{y} \leftarrow y]$. Not possible due to the assumption of partially open terms.
        
        \revd{B29}{ 
        \item $N = (M[\widetilde{y} \leftarrow y])\esubst{ B }{ y }$. \\
        Then $\headf{(M[\widetilde{y} \leftarrow y])\esubst{ B }{ y }} = \headf{(M[\widetilde{y} \leftarrow y])} = x$ when $\widetilde{y} = \emptyset,\ B = \oneb $ and $\headf{M} = x$ .
        \[
        \begin{aligned}
           \piencodf{N}_u &= \piencodf{(M[ \leftarrow y])\esubst{ \oneb }{ y }}_u = (\nu y)( \piencodf{ M[ \leftarrow y]}_u \mid  \piencodf{\oneb}_y) 
           \\
           &=(\nu y)( y. \overline{\some}. \outact{y}{z} . ( z . \some_{u,\lfv{M}} ;z_{}.\close; \piencodf{M}_u \mid y. \overline{\none})  \mid \\
           & \qquad y.\some_{\emptyset} ; y(z). (z.\overline{\some};z. \overline{\close} \mid y.\some_{\emptyset} ; y. \overline{\none})) \\
           & \red^*  \piencodf{M}_u \\
        \end{aligned}
        \]
        Then the result follows by induction on $\piencodf{ M }_u $.
        }

        \item When $N = M \linexsub {N' /y}$, then $\headf{M \linexsub {N' /y}} = \headf{M } = x$ and 
        \[
        \begin{aligned}
           \piencodf{N}_u & = \piencodf{M \linexsub {N' /y}}_u 
            =  (\nu y) ( \piencodf{ M }_u \mid   x.\some_{\lfv{N'}};\piencodf{ N' }_x  ) 
        \end{aligned}
        \]
        Then true by induction on $\piencodf{ M }_u $

    \end{myEnumerate}

    \item In this case, notice how reductions are only introduced when $N$ has sub-term $(M[ \leftarrow y])\esubst{ \oneb }{ y }$ from case 1(IV), however from the congruence of \figref{fig:rsPrecongruencefailure} we may rewrite this sub-term to be $M$ which eliminates the need for reductions. Inductively, performing this application of $\pequiv$ provides the result.

    \item This case is similar to the first, with the clear difference that linear head substitution must also be used. However, we can inductively push the linear head substitution inside the term to reach the head variable. Consider the base case when $N = x$ and we have some well-formed partially open term $M$. Then $\piencodf{N\headlin{M/x}}_u = \piencodf{x\headlin{M/x}}_u = \piencodf{M}_u$. Hence $ I = \emptyset$ and $ \widetilde{y} = \emptyset$ matching that of case 1(i).

    Next, let us consider the case of $N\headlin{M/x} = M' \linexsub {N' /y}\headlin{M/x} = M'\headlin{M/x} \linexsub {N' /y}$. By considering 1(V) we can see the evaluating the translation of creates the same process shape up to linear head substitution. Other cases follow analogously.

    \item This is a consequence of both (2) and (3). \qedhere
    
\end{myEnumerate}

\end{proof}

\revd{B46}{
\begin{nota}
    We use the notation $\lfv{M}.\overline{\none}$ and $\widetilde{x}.\overline{\none}$ where $\lfv{M}$ or $\widetilde{x}$ are equal to $ x_1 , \cdots , x_k$ to describe a process of the form $x_1.\overline{\none} \mid \cdots \mid x_k.\overline{\none} $
\end{nota}
}

\opcomplete*

\begin{proof}
By induction on the reduction rule applied to infer $\expr{N}\red \expr{M}$.  
We have five cases.

    \begin{enumerate}
        \item  Case $\redlab{RS{:}Beta}$: 
              
               Then  $ \expr{N}= (\lambda x. M[\widetilde{x} \leftarrow x]) B \red M[\widetilde{x} \leftarrow x]\ \esubst{ B }{ x }=\expr{M}$.
               \\
        On the one hand, we have:
                \begin{equation}\label{eq:compl_lsbeta1fail}
        \begin{aligned}
        \piencodf{\revd{B43}{\expr{N}}}_u &= \piencodf{(\lambda x. M[\widetilde{x} \leftarrow x]) B}_u\\
        &=  \bigoplus_{B_i \in \perm{B}} (\nu v)( \piencodf{ \lambda x. M[\widetilde{x} \leftarrow x]}_v \mid v.\some_{u,\lfv{B}} ; \outact{v}{x} . ( \piencodf{ B_i}_x \mid [v \leftrightarrow u] ) )\\
        &=  \bigoplus_{B_i \in \perm{B}} (\nu v)( v.\overline{\some}; v(x).\piencodf{M[\widetilde{x} \leftarrow x]}_v \mid v.\some_{u,\lfv{B}} ; \outact{v}{x} . ( \piencodf{ B_i}_x \mid [v \leftrightarrow u] ) )\\
        & \red  \bigoplus_{B_i \in \perm{B}} (\nu v)(  v(x).\piencodf{M[\widetilde{x} \leftarrow x]}_v \mid  \outact{v}{x} . ( \piencodf{ B_i}_x \mid [v \leftrightarrow u] ) )\\
        & \red  \bigoplus_{B_i \in \perm{B}} (\nu v, x)(  \piencodf{M[\widetilde{x} \leftarrow x]}_v \mid   \piencodf{ B_i}_x \mid [v \leftrightarrow u] ) \\
        & \red  \bigoplus_{B_i \in \perm{B}} (\nu x)(  \piencodf{M[\widetilde{x} \leftarrow x]}_u \mid   \piencodf{ B_i}_x  ) \\
        \end{aligned}
        \end{equation}
        
        On the other hand, we have:
        \begin{equation}\label{eq:compl_lsbeta2fail}
            \begin{aligned}
               \piencodf{\expr{M}}_u &= \piencodf{M[\widetilde{x} \leftarrow x]\ \esubst{ B }{ x }}_u = \bigoplus_{B_i \in \perm{B}} (\nu x) (\piencodf{ M[\widetilde{x} \leftarrow x] }_u \mid  \piencodf{ B_i}_x ) \\
            \end{aligned}
        \end{equation}
        Therefore, by \eqref{eq:compl_lsbeta1fail} and \eqref{eq:compl_lsbeta2fail} the result follows.
        
        \item Case $ \redlab{RS{:}Ex \dash Sub}$: 
        
        Then $ N=M[x_1, \cdots , x_k \leftarrow x]\ \esubst{ B }{ x }$, with $B=\bag{N_1 , \ldots , N_k}$, $k\geq 1$ and $M \not= \fail^{\widetilde{y}}$. 
        The reduction is $$\expr{N} = M[x_1, \cdots , x_k \leftarrow x]\ \esubst{ B }{ x } \red \sum_{B_i \in \perm{B}}M\ \linexsub{B_i(1)/x_1} \cdots \linexsub{B_i(k)/x_k} = \expr{M}.$$
        
        We detail the encodings of $\piencodf{\expr{N}}_u$ and $\piencodf{\expr{M}}_u$. To simplify the proof, we will consider $k=1$ (the case   $k> 1$ follows analogously). 
        \\
        On the one hand, we have:
        \begin{equation}\label{eq:compl_lsbeta3fail}
        \begin{aligned}
        \piencodf{\expr{N}}_u &= \piencodf{M[x_1 \leftarrow x]\ \esubst{ B }{ x }}_u =  \bigoplus_{B_i \in \perm{B}} (\nu x)( \piencodf{ M[x_1\leftarrow x]}_u \mid  \piencodf{ B_i}_x ) 
        \\
        &=  \bigoplus_{B_i \in \perm{B}} (\nu x)( x.\overline{\some}. \outact{x}{y_1}. (y_1 . \some_{\emptyset} ;y_{1}.\close;\zero \mid x.\overline{\some};x.\some_{u, (\lfv{M} \setminus x_1 )}; \\
                & \hspace{1cm} x(x_1). x.\overline{\some}; \outact{x}{y_{2}}. ( y_{2} . \some_{u,\lfv{M}} ;y_{2}.\close; \piencodf{M}_u \mid x.\overline{\none} ) ) \mid \\
                &\hspace{1cm} x.\some_{\lfv{B_i(1)}} ; x(y_1). x.\some_{y_1,\lfv{B_i(1)}};x.\overline{\some} ; \outact{x}{x_1}. (x_1.\some_{\lfv{B_i(1)}};\\
                &\hspace{1cm}\piencodf{B_i(1)}_{x_1} \mid y_1. \overline{\none} \mid x.\some_{\emptyset} ; x(y_2). ( y_2.\overline{\some};y_2 . \overline{\close} \mid x.\some_{\emptyset} ; x. \overline{\none}) ) )
        \\
        & \red^* \bigoplus_{B_i \in \perm{B}} (\nu x, y_1,x_1,y_2)( 
                  y_1 . \some_{\emptyset} ;y_{1}.\close;\zero \mid y_1. \overline{\none} \mid  y_{2} . \some_{u,\lfv{M}};y_{2}.\close;\\
                & \hspace{1cm}  \piencodf{M}_u \mid y_2.\overline{\some};y_2 . \overline{\close} \mid x.\overline{\none}   \mid  x.\some_{\emptyset} ; x. \overline{\none} \mid  x_1.\some_{\lfv{B_i(1)}} ; \piencodf{B_i(1)}_{x_1} )
        \\
        & \red^* \bigoplus_{B_i \in \perm{B}} (\nu x_1 )(  \piencodf{M}_u \mid  x_1.\some_{\lfv{B_i(1)}} ; \piencodf{B_i(1)}_{x_1} ) \\
        \end{aligned}
        \end{equation}
        
        On the other hand, we have:
        \begin{equation}\label{eq:compl_lsbeta4fail}
        \begin{aligned}
            \piencodf{\expr{M}}_u &= \piencodf{\sum_{B_i \in \perm{B}}M\ \linexsub{B_i(1)/x_1}}_u = \bigoplus_{B_i \in \perm{B}} \piencodf{M\ \linexsub{B_i(1)/x_1} }_u\\
            &= \bigoplus_{B_i \in \perm{B}} (\nu x_1 )( \piencodf{M}_{u} \mid x_1.\some_{\lfv{B_i(1)}};\piencodf{B_i(1)}_{x_1} )   \\
        \end{aligned}
        \end{equation}
        Therefore, by \eqref{eq:compl_lsbeta3fail}
        and  \eqref{eq:compl_lsbeta4fail} the result follows.
        
        \item Case $ \redlab{RS{:}Lin \dash Fetch}$: 
       
        Then we have 
        $\expr{N} = M\ \linexsub{N'/x}$ with $\headf{M} = x$ and $\expr{N} \red  M \headlin{ N' /x } = \expr{M}$.
        
       On the one hand, we have:
        \begin{equation}\label{eq:compl_lsbeta5fail}
        \begin{aligned}
        \piencodf{N}_u = \piencodf{M\ \linexsub{N'/x}}_u&= (\nu x) ( \piencodf{ M }_u \mid   x.\some_{\lfv{N'}};\piencodf{ N' }_x  )\\
       & \red^* (\nu x) ( \bigoplus_{i \in I}(\nu \widetilde{y})(\piencodf{ x }_{j} \mid P_i) \mid   x.\some_{\lfv{N'}};\piencodf{ N' }_x  ) \qquad (*)   
        \\
        &= (\nu x) ( \bigoplus_{i \in I}(\nu \widetilde{y})(\piencodf{ x }_{j} \mid P_i)  \mid   x.\some;\piencodf{ N' }_x  )  \\
        & \red (\nu x) ( \bigoplus_{i \in I}(\nu \widetilde{y})([x \leftrightarrow j ] \mid P_i) \mid   \piencodf{ N' }_x  ) \\
        & \red  \bigoplus_{i \in I}(\nu \widetilde{y}) ( P_i \mid   \piencodf{ N' }_j  )     =Q
        \end{aligned}
        \end{equation}
        
       where the reductions denoted by $(*)$ are inferred via Proposition~\ref{prop:NEEDTONAME}.
       
       \revd{B29}{On the other hand, we have by Proposition \ref{prop:NEEDTONAME} :
        \begin{equation}\label{eq:compl_lsbeta6fail}
        \begin{aligned}
        \piencodf{\expr{M}}_u &= \piencodf{M \headlin{ N'/x }}_u \red^*  \bigoplus_{i \in I}(\nu \widetilde{y}) ( P_i \mid   \piencodf{ N' }_j  )
        \end{aligned}
        \end{equation}
        We also have by Proposition \ref{prop:NEEDTONAME} and \eqref{eq:compl_lsbeta6fail}
        that there exists $M'$ such that $M' \pequiv M \headlin{ N'/x }$ with:
        \begin{equation}\label{eq:compl_lsbeta7fail}
            \piencodf{ M' }_{u} = \bigoplus_{i \in I}(\nu \widetilde{y}) ( P_i \mid   \piencodf{ N' }_j  )
        \end{equation}
        }
        
                Therefore, by \eqref{eq:compl_lsbeta5fail}
        and  \eqref{eq:compl_lsbeta7fail} the result follows.
        
        \item Case $\redlab{RS{:}TCont}$ and $\redlab{RS{:}ECont}$:
         These cases follow by IH.
         
        
        \item Case $\redlab{RS{:}Fail}$:
        
        Then, 
        $\expr{N} = M[x_1, \cdots , x_k \leftarrow x]\ \esubst{ B }{ x }$ with $k \neq \size{B}$ and
        
        $$ \expr{N} \red  \sum_{B_i \in \perm{B}}  \fail^{\widetilde{y} } = \expr{M},$$ where $\widetilde{y} = (\lfv{M} \setminus \{  x_1, \cdots , x_k \} ) \cup \lfv{B}$. 
        
        Let us assume that $k > l$ and we proceed similarly for $k > l$. Hence $k = l + m$ for some $m \geq 1$. \revd{B44}{On the one hand, we have \eqref{eq:compl_fail1-fail}, this can be seen in \figref{fig:proofreductions1}}.
        
        \begin{figure}[!t]
        \hrule 
            {\small
            \begin{equation}\label{eq:compl_fail1-fail}
            \begin{aligned}
                \piencodf{\revd{B45}{\expr{N}}}_u &= \piencodf{M[x_1, \cdots , x_k \leftarrow x]\ \esubst{ B }{ x }}_u\\
                &=  \bigoplus_{B_i \in \perm{B}} (\nu x)( \piencodf{ M[x_1, \cdots , x_k \leftarrow x]}_u \mid  \piencodf{ B_i}_x )  
                \\
                &=  \bigoplus_{B_i \in \perm{B}} (\nu x)( x.\overline{\some}. \outact{x}{y_1}. (y_1 . \some_{\emptyset} ;y_{1}.\close;\zero \mid x.\overline{\some};x.\some_{u,(\lfv{M} \setminus x_1 , \cdots , x_k )};\\
       & \hspace{1cm} x(x_1) . \cdots  x.\overline{\some}. \outact{x}{y_k} . (y_k . \some_{\emptyset} ; y_{k}.\close;\zero \mid x.\overline{\some};x.\some_{u,(\lfv{M} \setminus  x_k )}; \\
       & \hspace{1cm} x(x_k) . x.\overline{\some}; \outact{x}{y_{k+1}}. ( y_{k+1} . \some_{u,\lfv{M} } ;y_{k+1}.\close; \piencodf{M}_u \mid x.\overline{\none} )) \cdots )  \\
      &\hspace{1cm} \mid x.\some_{\lfv{B}} ; x(y_1). x.\some_{y_1,\lfv{B}};x.\overline{\some} ; \outact{x}{x_1}. (x_1.\some_{\lfv{B_i(1)}} ; \piencodf{B_i(1)}_{x_1} \\
       & \hspace{1cm} \mid y_1. \overline{\none} \mid \cdots  x.\some_{\lfv{B_i(l)}} ; x(y_l). x.\some_{y_l ,\lfv{B_i(l)}};x.\overline{\some} ; \outact{x}{x_l}. (x_l.\some_{\lfv{B_i(l)}} ;  \\
       &\hspace{1cm}  \piencodf{B_i(l)}_{x_l} \mid y_l. \overline{\none} \mid x.\some_{\emptyset} ; x(y_{l+1}). ( y_{l+1}.\overline{\some};y_{l+1} . \overline{\close} \mid x.\some_{\emptyset} ; x. \overline{\none}) ) ) )
                \\
     & \red^*  \bigoplus_{B_i \in \perm{B}} (\nu x, y_1, x_1, \cdots  y_l, x_l)(  y_1 . \some_{\emptyset} ;y_{1}.\close;\zero \mid \cdots \mid y_l . \some_{\emptyset} ;y_{l}.\close;\zero \\
             & \hspace{1cm}x.\overline{\some}. \outact{x}{y_{l+1}} . (y_{l+1} . \some_{\emptyset} ; y_{l+1}.\close;\zero \mid x.\overline{\some};x.\some_{u,(\lfv{M} \setminus x_{l+1} , \cdots , x_k )}; \\
             & \hspace{1cm} x(x_{l+1}). \cdots x.\overline{\some}. \outact{x}{y_k} . (y_k . \some_{\emptyset} ; y_{k}.\close;\zero \mid x.\overline{\some};x.\some_{u,(\lfv{M} \setminus  x_k )};x(x_k) . \\
            &\hspace{1cm} x.\overline{\some}; \outact{x}{y_{k+1}}. ( y_{k+1} . \some_{u,\lfv{M} } ;y_{k+1}.\close; \piencodf{M}_u \mid x.\overline{\none} )) \cdots ) \mid \\
         & \hspace{1cm}  x_1.\some_{\lfv{B_i(1)}} ; \piencodf{B_i(1)}_{x_1} \mid \cdots \mid  x_l.\some_{\lfv{B_i(l)}} ; \piencodf{B_i(l)}_{x_l} \mid  y_1. \overline{\none} \mid \cdots \mid y_l. \overline{\none}\\
          & \hspace{1.5cm} x.\some_{\emptyset} ; x(y_{l+1}). ( y_{l+1}.\overline{\some};y_{l+1} . \overline{\close} \mid x.\some_{\emptyset} ; x. \overline{\none}) ) 
                \\
                & \red^* 
                    \bigoplus_{B_i \in \perm{B}} (\nu x, x_1, \cdots  , x_l)(
           x.\some_{u,(\lfv{M} \setminus x_{l+1} , \cdots , x_k )};x(x_{l+1}) . \cdots \\
     & \hspace{1.5cm} x.\overline{\some}. \outact{x}{y_k} . (y_k . \some_{\emptyset} ; y_{k}.\close;\zero \mid x.\overline{\some};x.\some_{u,(\lfv{M} \setminus  x_k )};x(x_k) . \\
      &\hspace{1.5cm} x.\overline{\some}; \outact{x}{y_{k+1}}. ( y_{k+1} . \some_{u,\lfv{M}} ;y_{k+1}.\close; \piencodf{M}_u \mid x.\overline{\none} ) )  \mid \\
     &\hspace{1.5cm}   x_1.\some_{\lfv{B_i(1)}} ; \piencodf{B_i(1)}_{x_1} \mid \cdots \mid  x_l.\some_{\lfv{B_i(l)}} ; \piencodf{B_i(l)}_{x_l} \mid  x. \overline{\none} ) 
                \\
        & \red \bigoplus_{B_i \in \perm{B}} (\nu  x_1, \cdots  , x_l)(  u . \overline{\none} \mid x_1 . \overline{\none} \mid  \cdots \mid x_{l} . \overline{\none} \mid (\lfv{M} \setminus x_{1} , \cdots , x_k ) . \overline{\none}  \mid \\
        & \hspace{1.5cm} x_1.\some_{\lfv{B_i(1)}} ; \piencodf{B_i(1)}_{x_1} \mid \cdots \mid  x_l.\some_{\lfv{B_i(l)}} ; \piencodf{B_i(l)}_{x_l}  ) 
                \\
                & \red^*  \bigoplus_{B_i \in \perm{B}}  u . \overline{\none} \mid (\lfv{M} \setminus \{  x_1, \cdots , x_k \} ) \cup \lfv{B} . \overline{\none}  
            \end{aligned}
            \end{equation}
            }
            \hrule 
        \caption{Reductions of an encoded explicit substitution}
            \label{fig:proofreductions1}
        \end{figure}

        On the other hand, we have:
        
        \begin{equation}\label{eq:compl_fail2-fail}
        \begin{aligned}
            \piencodf{\expr{M}}_u &= \piencodf{\sum_{B_i \in \perm{B}}  \fail^{\widetilde{y}}}_u= \bigoplus_{B_i \in \perm{B}} \piencodf{\fail^{\widetilde{y}} }_u\\
            &= \bigoplus_{B_i \in \perm{B}} u . \overline{\none} \mid (\lfv{M} \setminus \{  x_1, \cdots , x_k \} ) \cup \lfv{B} . \overline{\none}  \\
        \end{aligned}
        \end{equation}
        
        Therefore, by \eqref{eq:compl_fail1-fail}
        and  \eqref{eq:compl_fail2-fail} the result follows.
        
        \item Case $\redlab{RS{:}Cons_1}$:
        
        Then, 
        $\expr{N} = \fail^{\widetilde{x}}\ B$ with $B =  \bag{N_1 , \dots , N_k} $ and $\expr{N} \red  \sum_{\perm{B}} \fail^{\widetilde{x} \cup \widetilde{y}} = \expr{M}$, where $ \widetilde{y} = \lfv{B}$. 
        
        On the one hand, we have: 
        
        \begin{equation}\label{eq:compl_cons1-fail}
        \begin{aligned}
            \piencodf{N}_u &= \piencodf{ \fail^{\widetilde{x}}\ B }_u\\
            &= \bigoplus_{B_i \in \perm{B}} (\nu v)(\piencodf{\fail^{\widetilde{x}}}_v \mid v.\some_{u,\lfv{B}} ; \outact{v}{x} . ([v \leftrightarrow u] \mid \piencodf{B_i}_x ) ) \\
            &= \bigoplus_{B_i \in \perm{B}} (\nu v)( v . \overline{\none} \mid \widetilde{x}. \overline{\none} \mid v.\some_{u, \lfv{B}} ; \outact{v}{x} . ([v \leftrightarrow u] \mid \piencodf{B_i}_x ) ) \\
            & \red \bigoplus_{B_i \in \perm{B}} u . \overline{\none} \mid \widetilde{x}. \overline{\none} \mid \widetilde{y}. \overline{\none} \\
            &= \bigoplus_{\perm{B}} u . \overline{\none} \mid \widetilde{x}. \overline{\none} \mid \widetilde{y}. \overline{\none} \\
        \end{aligned}
        \end{equation}
        
        On the other hand, we have:
        \begin{equation}\label{eq:compl_cons2-fail}
        \begin{aligned}
            \piencodf{\expr{M}}_u &= \piencodf{\sum_{\perm{B}} \fail^{\widetilde{x} \cup \widetilde{y}}}_u
            = \bigoplus_{\perm{B}} \piencodf{\fail^{\widetilde{x} \cup \widetilde{y}} }_u\\
            &= \bigoplus_{\perm{B}} u . \overline{\none} \mid \widetilde{x}. \overline{\none} \mid \widetilde{y}. \overline{\none} \\
        \end{aligned}
        \end{equation}
        
        Therefore, by \eqref{eq:compl_cons1-fail}
        and  \eqref{eq:compl_cons2-fail} the result follows.

        \item Cases $\redlab{RS{:}Cons_2}$ and $\redlab{RS{:}Cons_3}$: These cases follow by IH similarly to Case 7. \qedhere

    \end{enumerate}
\end{proof}

\opsound*

\begin{proof}
By induction on the structure of $\expr{N} $ and then induction on the number of reductions of $\piencodf{\expr{N}} \red_{\pequiv}^* Q$

\begin{myEnumerate}
    \item  $\expr{N} = x$, $\expr{N} = \fail^{\emptyset}$ and $\expr{N} = \lambda x . (M[ \widetilde{x} \leftarrow x ])$.
    
    These cases are trivial since no reduction can take place. 
    
    
    
    

    
    \item $\expr{N} =  (M\ B) $.

        Then, 
        $$ \piencodf{(M\ B)}_u = \bigoplus_{B_i \in \perm{B}} (\nu v)(\piencodf{M}_v \mid v.\some_{u,\lfv{B}} ; \outact{v}{x} . ([v \leftrightarrow u] \mid \piencodf{B_i}_x ) )  $$ and we are able to perform the  reductions from $\piencodf{(M\ B)}_u$. 

        We now proceed by induction on $k$, with  $\piencodf{\expr{N}}_u \red^k Q$. 
        
        
    
            The interesting case is when $k \geq 1$ (the case $k=0$ is trivial).

            Then, for some process $R$ and $n, m$ such that $k = n+m$, we have the following:
            \[
            \begin{aligned}
               \piencodf{\expr{N}}_u & =  \bigoplus_{B_i \in \perm{B}} (\nu v)( \piencodf{ M}_v \mid v.\some_{u,\lfv{B}} ; \outact{v}{x} . (  \piencodf{ B_i}_x \mid [v \leftrightarrow u] ) )\\
               & \red^m  \bigoplus_{B_i \in \perm{B}} (\nu v)( R \mid v.\some_{u , \lfv{B}} ; \outact{v}{x} . ( \piencodf{ B_i}_x \mid [v \leftrightarrow u] ) ) \\
               &\red^n  Q\\
            \end{aligned}
            \]
            Thus, the first $m \geq 0$ reduction steps are  internal to $\piencodf{ M}_v$; type preservation in \spi ensures that, if they occur,  these reductions  do not discard the possibility of synchronizing with $v.\some$. Then, the first of the $n \geq 0$ reduction steps towards $Q$ is a synchronization between $R$ and $v.\some_{u, \lfv{B}}$.
            
            We consider two sub-cases, depending on the values of  $m$ and $n$:
            \begin{myEnumerate}
                \item When $m = 0$ and $n \geq 1$:
                    
                    Thus $R = \piencodf{\expr{M}}_v$, and there  are two possibilities of having an unguarded $v.\overline{\some}$ or $v.\overline{\none}$ without internal reductions.   By the diamond property (Proposition~\ref{prop:conf1_lamrsharfail}) we will be reducing each non-deterministic choice of a process simultaneously.
                    Then we have the following for each case:

                    \begin{myEnumerate}
                        \item $M = (\lambda x . M' [\widetilde{x} \leftarrow x]) \linexsub{N_1 / y_1} \cdots \linexsub{N_p / y_p} \qquad (p \geq 0)$.
                        
                           $$
                        \begin{aligned}
                            \piencodf{M}_v &= \piencodf{(\lambda x . M' [\widetilde{x} \leftarrow x]) \linexsub{N_1 / y_1} \cdots \linexsub{N_p / y_p}}_v \\
                            &= (\nu\widetilde{y}) ( \piencodf{(\lambda x . M' [\widetilde{x} \leftarrow x])}_v \mid y_1.\some_{\lfv{N_1}};  \piencodf{ N_1 }_{y_1} \mid \cdots \mid y_p.\some_{\lfv{N_p}};\piencodf{ N_p }_{y_p} )\\
                            &= (\nu \widetilde{y}) ( \piencodf{(\lambda x . M' [\widetilde{x} \leftarrow x])}_v \mid Q'' ), \text{ for } \widetilde{y} = y_1 , \cdots , y_p\\
                            &= (\nu \widetilde{y}) ( v.\overline{\some};v(x).\piencodf{ M'[\widetilde{x} \leftarrow x] }_v \mid Q'' )\\
                        \end{aligned}
                        $$
                        where 
                        $Q'' = y_1.\some_{\lfv{N_1}};\piencodf{ N_1 }_{y_1} \mid \cdots \mid y_p.\some_{\lfv{N_p}};\piencodf{ N_p }_{y_p}$.

                    With this shape for $M$, the encoding of $\mathbb{N}$ becomes:
     \[
     \small
        \begin{aligned}
        \piencodf{\expr{N}}_u & = \piencodf{(M\ B)}_u\\
        &= \bigoplus_{B_i \in \perm{B}} (\nu v)( \piencodf{ M}_v \mid v.\some_{u,\lfv{B}} ; \outact{v}{x} . (  \piencodf{ B_i}_x \mid [v \leftrightarrow u] ) )\\
        & = \bigoplus_{B_i \in \perm{B}} (\nu v)( (\nu \widetilde{y}) ( v.\overline{\some};v(x).\piencodf{ M'[\widetilde{x} \leftarrow x] }_v \mid Q'' ) \mid \\
        & \hspace{2cm} v.\some_{u,\lfv{B}} ; \outact{v}{x} . ( \piencodf{ B_i}_x \mid [v \leftrightarrow u] ) )\\
        & \red \bigoplus_{B_i \in \perm{B}} (\nu v , \widetilde{y})( v(x).\piencodf{ M'[\widetilde{x} \leftarrow x] }_v \mid \outact{v}{x} . (  \piencodf{ B_i}_x \mid [v \leftrightarrow u] ) \mid Q'' )  = Q_1\\
        & \red \bigoplus_{B_i \in \perm{B}} (\nu v, \widetilde{y} ,x)( \piencodf{ M'[\widetilde{x} \leftarrow x] }_v \mid  \piencodf{ B_i}_x \mid [v \leftrightarrow u]  \mid Q'')  = Q_2\\
        & \red \bigoplus_{B_i \in \perm{B}} (\nu x, \widetilde{y})( \piencodf{ M'[\widetilde{x} \leftarrow x] }_u \mid  \piencodf{ B_i}_x   \mid Q'') = Q_3\\
        \end{aligned}
                    \]

                    We also have that 
                    
                    \[
                    \begin{aligned}
                        \expr{N} &=(\lambda x . M' [\widetilde{x} \leftarrow x]) \linexsub{N_1 / y_1} \cdots \linexsub{N_p / y_p}\ B \\
                        &\pequiv ((\lambda x . M' [\widetilde{x} \leftarrow x]) B) \linexsub{N_1 / y_1} \cdots \linexsub{N_p / y_p} \\
                       & \red  M'[\widetilde{x} \leftarrow x] \esubst{B}{x} \linexsub{N_1 / y_1} \cdots \linexsub{N_p / y_p} = \expr{M}
                    \end{aligned}
                    \]
                    
                    Furthermore, we have: 
                    \[
                     \begin{aligned}
                           \piencodf{\expr{M}'}_v &= \piencodf{M'[\widetilde{x} \leftarrow x] \esubst{B}{x}\linexsub{N_1 / y_1} \cdots \linexsub{N_p / y_p}}_v \\
                           &= \bigoplus_{B_i \in \perm{B}} (\nu x)( \piencodf{ M'[\widetilde{x} \leftarrow x]}_v \mid  \piencodf{ B_i}_x \mid Q'' ) 
                    \end{aligned}
                    \]
                        
                    We consider different possibilities for $n \geq 1$; in all of the thesis holds: \begin{myEnumerate}
                        \item $n = 1$:
                        
                        Then  $Q = Q_1$  and  $\piencodf{\expr{N}}_u \red^1 Q_1$. In addition,
                        
                        \begin{myEnumerate}
                            \item  \( Q_1 \red^2 Q_3 = Q' \), 
                            \item $\expr{N} \red^1 M'[\widetilde{x} \leftarrow x] \esubst{B}{x} = \expr{N}'$,
                            \item  $\piencodf{M'[\widetilde{x} \leftarrow x] \esubst{B}{x}}_u = Q_3$.
                        \end{myEnumerate}
                        and the result follows.

                        \item  $n = 2$:
                         
                       Then $Q = Q_2$ and $ \piencodf{\expr{N}}_u \red^2 Q_2$. In addition, 
                        
                        \begin{itemize}
                            \item $Q_2 \red^1 Q_3 = Q'$ , 
                            \item $\expr{N} \red^1 M'[\widetilde{x} \leftarrow x] \esubst{B}{x} = \expr{N}'$
                            \item $\piencodf{M'[\widetilde{x} \leftarrow x]) \esubst{B}{x}}_u = Q_3$
                        \end{itemize}
                        and the result follows.
                         
                        \item $n \geq 3$:
                        
                       Then $ \piencodf{\expr{N}}_u \red^3 Q_3 \red^l Q$, for $l \geq 0$. In addition,  $\expr{N} \red \expr{M}'$ and  $Q_3 = \piencodf{\expr{M}'}_u$. By the IH, there exist $ Q'$ and $\expr{N}'$ such that $Q \red^i Q'$, $\expr{M}' \red_{\pequiv}^j \expr{N}'$ and $\piencodf{\expr{N}'}_u = Q'$ . Finally,
                        $\piencodf{\expr{N}}_u \red^3 Q_3 \red^l Q \red^i Q'$ and $\expr{N} \rightarrow \expr{M}'  \red_{\pequiv}^j \expr{N}'$, and the result follows.
                        
                    \end{myEnumerate}
                       \item $M = \fail^{\widetilde{z}}$.

                    \[
                        \begin{aligned}
                            \piencodf{M}_v &= \piencodf{\fail^{\widetilde{z}}}_v= v.\overline{\none} \mid \widetilde{z}.\overline{\none} \\
                        \end{aligned}
                    \]
                    
                    With this shape for $M$, the encoding of $\expr{N}$ becomes:
                    \[
                    \begin{aligned}
                        \piencodf{\expr{N}}_u & = \piencodf{(M\ B)}_u\\
                        &= \bigoplus_{B_i \in \perm{B}} (\nu v)( \piencodf{ M}_v \mid v.\some_{u, \lfv{B}} ; \outact{v}{x} . (  \piencodf{ B_i}_x \mid [v \leftrightarrow u] ) )\\
                        & = \bigoplus_{B_i \in \perm{B}} (\nu v)(  v.\overline{\none}\mid \widetilde{z}.\overline{\none} \mid v.\some_{u, \lfv{B}} ;  \outact{v}{x} . (  \piencodf{ B_i}_x \mid [v \leftrightarrow u] ) )\\
                        & \red \bigoplus_{B_i \in \perm{B}}   u.\overline{\none} \mid \widetilde{z}.\overline{\none}  \mid \lfv{B}.\overline{\none} \\
                        &= \bigoplus_{\perm{B}}   u.\overline{\none} \mid \widetilde{z}.\overline{\none}  \mid \lfv{B}.\overline{\none} \\
                        \end{aligned}
                    \]

                    \end{myEnumerate}
                    
                  Also, 
                    \[  \expr{N} = \fail^{\widetilde{z}} \ B  \red \sum_{\perm{B}} \fail^{\widetilde{z} \cup \lfv{B} }  = \expr{M}.  \]
                    
                    Furthermore, 
                    \[
                     \begin{aligned}
                           \piencodf{\expr{M}}_u &= \piencodf{\sum_{\perm{B}} \fail^{\widetilde{z} \cup \lfv{B} }}_u \\
                         &  = \bigoplus_{\perm{B}}\piencodf{ \fail^{\widetilde{z} \cup \lfv{B} }}_u\\
                          & = \bigoplus_{\perm{B}}    u.\overline{\none} \mid \widetilde{z}.\overline{\none}  \mid \lfv{B}.\overline{\none}
                    \end{aligned}
                    \]

                \item When $m \geq 1$ and $ n \geq 0$, the distinguish two cases:
                    
                    \begin{myEnumerate}
                        \item $n = 0$:
                            
                            Then, 
                            $$ 
                            \begin{aligned} &\bigoplus_{B_i \in \perm{B}} (\nu v)( R \mid v.\some_{u, \lfv{B}} ; \outact{v}{x} . (  \piencodf{ B_i}_x \mid [v \leftrightarrow u] ) ) = Q,\\ 
                            &~ \text{ and }  \piencodf{M}_u \red^m R.
                            \end{aligned} 
                            $$ 
                        By the IH there exist $R'$  and $\expr{M}' $ such that $R \red^i R'$, $M \red_{\pequiv}^j \expr{M}'$, and $\piencodf{\expr{M}'}_u = R'$.  Hence, 
                    
                            \[ 
                            \begin{aligned}
                               \piencodf{\expr{N}}_u & =  \bigoplus_{B_i \in \perm{B}} (\nu v)( \piencodf{ M}_v \mid v.\some_{u, \lfv{B}} ; \outact{v}{x} . (  \piencodf{ B_i}_x \mid [v \leftrightarrow u] ) )\\
                               & \red^m  \bigoplus_{B_i \in \perm{B}} (\nu v)( R \mid v.\some_{u , \lfv{B}} ; \outact{v}{x} . (  \piencodf{ B_i}_x \mid [v \leftrightarrow u] ) ) = Q\\
                               & \red^i  \bigoplus_{B_i \in \perm{B}} (\nu v)( R' \mid v.\some_{u, \lfv{B}} ; \outact{v}{x} . (  \piencodf{ B_i}_x \mid [v \leftrightarrow u] ) ) = Q'\\
                            \end{aligned}
                            \]
                            and so the \lamrsharfail term can reduce as follows: $\expr{N} = (M\ B) \red_{\pequiv}^j M'\ B = \expr{N}'$ and  $\piencodf{\expr{N}'}_u = Q'$.

                        \item $n \geq 1$:
                        
                            Then  $R$ has an occurrence of an unguarded $v.\overline{\some}$ or $v.\overline{\none}$, which implies it is of the form 
                            $ \piencodf{(\lambda x . M' [\widetilde{x} \leftarrow x]) \linexsub{N_1 / y_1} \cdots \linexsub{N_p / y_p}}_v $ or $ \piencodf{\fail}_v $, and the case follows by IH.

                    \end{myEnumerate}

            \end{myEnumerate}

        
        This concludes the analysis for the case $\expr{N} = (M \, B)$.
        
        \item $\expr{N} = M [ \widetilde{x} \leftarrow x ]$.
    
            The sharing variable $x$ is not free and the result follows by vacuity.
        
        \item $\expr{N} = (M[\widetilde{x} \leftarrow x])\esubst{ B }{ x }$. Then,
            
            \[
                \begin{aligned}
                    \piencodf{\expr{N}}_u &=\piencodf{ (M[\widetilde{x} \leftarrow x])\esubst{ B }{ x } }_u = \bigoplus_{B_i \in \perm{B}} (\nu x)( \piencodf{ M[\widetilde{x} \leftarrow x]}_u \mid \piencodf{ B_i}_x )
                \end{aligned}
            \]


            \begin{myEnumerate}
                \item  $\size{\widetilde{x}} = \size{B}$.
                
                    Then let us consider the shape of the bag $ B$.
                    
     \begin{myEnumerate}
                
            \item When $B = 1$
                         
                            We have the following
\[
\begin{aligned}
\piencodf{\expr{N}}_u &= (\nu x)( \piencodf{ M[ \leftarrow x]}_u \mid \piencodf{ 1}_x ) \\
&=   (\nu x)(  x. \overline{\some}. \outact{x}{y_i} . ( y_i . \some_{u, \lfv{M}} ;y_{i}.\close; \piencodf{M}_u \mid x. \overline{\none}) \mid\\ &\qquad x.\some_{\emptyset} ; x(y_n). ( y_n.\overline{\some};y_n . \overline{\close} \mid x.\some_{\emptyset} ; x. \overline{\none}) )  \\
& \red  (\nu x)(   \outact{x}{y_i} . ( y_i . \some_{u, \lfv{M}} ;y_{i}.\close; \piencodf{M}_u \mid x. \overline{\none}) \mid \\
& \hspace{1cm}  x(y_n). ( y_n.\overline{\some};y_n . \overline{\close} \mid x.\some_{\emptyset} ; x. \overline{\none}) )           \qquad = Q_1                      \\
 & \red  (\nu x, y_i)(     y_i . \some_{u, \lfv{M}} ;y_{i}.\close; \piencodf{M}_u \mid x. \overline{\none} \mid  y_n.\overline{\some};\\
 & \hspace{1cm}y_n . \overline{\close} \mid x.\some_{\emptyset} ; x. \overline{\none})
                 \qquad                = Q_2 \\
& \red  (\nu x, y_i)( y_{i}.\close; \piencodf{M}_u \mid x. \overline{\none} \mid   y_n . \overline{\close} \mid x.\some_{\emptyset} ;  x. \overline{\none})
                                     \qquad = Q_3
                                \\
& \red (\nu x)(  \piencodf{M}_u \mid x. \overline{\none} \mid   x.\some_{\emptyset} ; x. \overline{\none}) 
 \qquad = Q_4\\
& \red   \piencodf{M}_u   \qquad = Q_5
                            \end{aligned}
                            \]
                            Notice how $Q_2$ has a choice however the $x$ name can be closed at any time so for simplicity we only perform communication across this name once all other names have completed their reductions.
                            
                           Now proceed by induction on the number of reductions $\piencodf{\expr{N}}_u \red^k Q$.
                            
                            \begin{myEnumerate}
                                
                                \item  $k = 0$:
                                
                                This case is trivial.

                                \item  $k = 1$: ($2 \leq  k \leq 4$: is similar.)
                                
                                    Then,  $Q = Q_1$ and $ \piencodf{\expr{N}}_u \red^1 Q_1$. In addition,  $Q_1 \red^4 Q_5 = Q'$,   $\expr{N} \red^0 M[ \leftarrow x] \esubst{ 1 }{ x } \pequiv M$ and $\piencodf{ M }_u = Q_5$, and the result follows.
                                    
                                
                                \item  $k \geq 5$:
                                
                                    Then $ \piencodf{\expr{N}}_u \red^5 Q_5 \red^l Q$, for $l \geq 0$. Since $Q_5 = \piencodf{ M }_u$,  by the IH it follows  that there exist $ Q' $ and $ \expr{N}' $ such that  $ Q \red^i Q' ,  M \red_{\pequiv}^j \expr{N}'$ and $\piencodf{\expr{N}'}_u = Q'$. 
                         
                         Then,  $ \piencodf{\expr{N}}_u \red^5 Q_5 \red^l Q \red^i Q'$ and by the contextual reduction  one has $\expr{N} = (M[ \leftarrow x])\esubst{ 1 }{ x } \red_{\pequiv}^j  \expr{N}' $ and the case holds.

                            \end{myEnumerate}               
                        
                        \item$B = \bag{N_1, \cdots ,N_l}$, for $l \geq 1$.
                        
                            Then,  consider the reductions in \figref{fig:proofreductions2}.
                            
                            \begin{figure}[!t]
                            \hrule 
                            { \small
                            \[
                            \begin{aligned}
                                          \piencodf{\expr{N}}_u &=\piencodf{(M[\widetilde{x} \leftarrow x])\esubst{ B }{ x }}_u\\
                             & = \bigoplus_{B_i \in \perm{B}} (\nu x)( \piencodf{ M[\widetilde{x} \leftarrow x]}_u \mid \piencodf{ B_i}_x ) \\
                                 &=  \bigoplus_{B_i \in \perm{B}} (\nu x)( x.\overline{\some}. \outact{x}{y_1}. (y_1 . \some_{\emptyset} ;y_{1}.\close;\zero \mid x.\overline{\some};x.\some_{u, (\lfv{M} \setminus x_1 , \cdots , x_l )}; 
                                 \\
                               &\qquad x(x_1) . \cdots x.\overline{\some}. \outact{x}{y_l} . (y_l . \some_{\emptyset} ; y_{l}.\close;\zero \mid x.\overline{\some};x.\some_{u,(\lfv{M} \setminus x_l )};x(x_l) . \\
                              &\qquad  x.\overline{\some}; \outact{x}{y_{l+1}}. ( y_{l+1} . \some_{u,\lfv{M}} ;y_{l+1}.\close; \piencodf{M}_u \mid x.\overline{\none} )) \cdots ) \mid \\
                             & \qquad  x.\some_{\lfv{B}} ; x(y_1). x.\some_{y_1, \lfv{B} };x.\overline{\some} ; \outact{x}{x_1}. (x_1.\some_{\lfv{B_i(1)}} ; \piencodf{B_{i}(1)}_{x_1}  \\
                             &\qquad   \mid y_1. \overline{\none}\mid \cdots x.\some_{\lfv{B_{i}(l)}} ; x(y_l). x.\some_{y_l, \lfv{B_{i}(l)}} ;x.\overline{\some} ; \outact{x}{x_l}. (x_l.\some_{\lfv{B_{i}(l)}} ;   \\
                              & \qquad \piencodf{B_{i}(l)}_{x_l}\mid y_l. \overline{\none}\mid x.\some_{\emptyset} ; x(y_{l+1}). ( y_{l+1}.\overline{\some};y_{l+1} . \overline{\close} \mid x.\some_{\emptyset} ; x. \overline{\none})
                                  )
                                  )
                                  )
                               \\
                                & \red ^{5l}
                                \bigoplus_{B_i \in \perm{B}} (\nu x , x_1,y_1, \cdots , x_l,y_1)( y_1 . \some_{\emptyset} ;y_{1}.\close;\zero \mid  \cdots  y_l . \some_{\emptyset} ; y_{l}.\close;\zero \mid  \\
                                  &\qquad x.\overline{\some}; \outact{x}{y_{l+1}}. ( y_{l+1} . \some_{u,\lfv{M}} ;y_{l+1}.\close; \piencodf{M}_u \mid x.\overline{\none} )
                                   \mid \\
                                  & \qquad  x_1.\some_{\lfv{B_{i}(1)}} ; \piencodf{B_{i}(1)}_{x_1}  \mid y_1. \overline{\none} \mid \cdots   x_l.\some_{\lfv{B_{i}(l)}} ; \piencodf{B_{i}(l)}_{x_l}  \mid y_l. \overline{\none}\mid\\
                                  & \qquad x.\some_{\emptyset} ; x(y_{l+1}). ( y_{l+1}.\overline{\some};y_{l+1} . \overline{\close} \mid x.\some_{\emptyset} ; x. \overline{\none})
                                  )
                               \\
                               & \red ^{5}
                               \bigoplus_{B_i \in \perm{B}} (\nu  x_1,y_1, \cdots , x_l,y_1 )(  y_1 . \some_{\emptyset} ;y_{1}.\close;\zero \mid  \cdots  y_l . \some_{\emptyset} ; y_{l}.\close;\zero \\
                                 & \qquad \mid \piencodf{M}_u \mid   x_1.\some_{\lfv{B_{i}(1)}} ; \piencodf{B_{i}(1)}_{x_1}  \mid y_1. \overline{\none} \mid \cdots   x_l.\some_{\lfv{B_{i}(l)}} ; \piencodf{B_{i}(l)}_{x_l}  \mid y_l. \overline{\none} )
                           \\
                               & \red ^{l}  \bigoplus_{B_i \in \perm{B}} (\nu x_1,\cdots , x_l )( \piencodf{M}_u \mid x_1.\some_{\lfv{B_{i}(1)}} ; \piencodf{B_{i}(1)}_{x_1}  \mid  \cdots \mid  x_l.\some_{\lfv{B_{i}(l)}} ; \piencodf{B_{i}(l)}_{x_l} )\\
                               &   
                        = Q_{6l + 5}\\
                            \end{aligned}
                            \]
                            }
                            \hrule 
                            \caption{Reductions of encoded explicit substitution}
                                \label{fig:proofreductions2}
                            \end{figure}

                            The proof follows by induction on the number of reductions $\piencodf{\expr{N}}_u \red^k Q$.
                            
                            \begin{myEnumerate}
                                \item $k = 0$:
                                
                                This case is trivial. Take $\piencodf{\expr{N}}_u=Q=Q'$ and $\expr{N}=\expr{N}'$.
                                
                                \item  $1 \leq k \leq 6l + 5$:
                                    
                                    Then,  $ \piencodf{\expr{N}}_u \red^k Q_k$.
                                    Observing the reductions in \figref{fig:proofreductions2}, one has 
                                    $Q_k \red^{6l + 5 - k} Q_{6l + 5} = Q'$ ,

                                    $\expr{N} \red^1 \sum_{B_i \in \perm{B}}M\ \linexsub{B_i(1)/x_1} \cdots \linexsub{B_i(l)/x_l} = \expr{N}'$ and 
                                    
                                    $\piencodf{\sum_{B_i \in \perm{B}}M\ \linexsub{B_i(1)/x_1} \cdots \linexsub{B_i(l)/x_l}}_u = Q_{6l + 5}$, and the result follows.
                                
                                \item $k > 6l + 5$:
                                
                            Then, $ \piencodf{\expr{N}}_u \red^{6l + 5} Q_{6l + 5} \red^n Q$, for $n \geq 1$. In addition,
                                    
                                    $\expr{N} \red^1 \sum_{B_i \in \perm{B}}M\ \linexsub{B_i(1)/x_1} \cdots \linexsub{B_i(l)/x_l}$ and 
                                    
                                    $Q_{6l + 5} = \piencodf{\sum_{B_i \in \perm{B}}M\ \linexsub{B_i(1)/x_1} \cdots \linexsub{B_i(l)/x_l}}_u$. By the IH there exist  $ Q'$ and $\expr{N}'$ such that $  Q \red^i Q'$, $$\sum_{B_i \in \perm{B}}M\ \linexsub{B_i(1)/x_1} \cdots \linexsub{B_i(l)/x_l} \red_{\pequiv}^j \expr{N}'$$ and $\piencodf{\expr{N}'}_u = Q'$. Finally,
                                    
                                $\piencodf{\expr{N}}_u \red^{6l + 5} Q_{6l + 5} \red^n Q \red^i Q'$ and
                                
                                $\expr{N} \rightarrow \sum_{B_i \in \perm{B}}M\ \linexsub{B_i(1)/x_1} \cdots \linexsub{B_i(l)/x_l}  \red_{\pequiv}^j \expr{N}'$.

                            \end{myEnumerate}

                    \end{myEnumerate}

                \item  $\size{\widetilde{x}} > \size{B}$.
                
                    Then,
                    $\expr{N} = M[x_1, \cdots , x_k \leftarrow x]\ \esubst{ B }{ x }$ with $B = \bag{N_1 ,  \cdots , N_l},$ for $ k > l$. Also,
                    $$ \expr{N} \red  \sum_{B_i \in \perm{B}}  \fail^{\widetilde{z}} = \expr{M}\  \text{  and } \ \widetilde{z} = (\lfv{M} \setminus \{  x_1, \cdots , x_k \} ) \cup \lfv{B}. $$
                    On the one hand, we have \figref{fig:proofreductions3}.
                    Hence $k = l + m$ for some $m \geq 1$

                    \begin{figure}[!t]
                    \hrule 
                    { \small
                    
                    \[
                    \begin{aligned}
                        \piencodf{N}_u &= \piencodf{M[x_1, \cdots , x_k \leftarrow x]\ \esubst{ B }{ x }}_u \\
                        &=  \bigoplus_{B_i \in \perm{B}} (\nu x)( \piencodf{ M[x_1, \cdots , x_k \leftarrow x]}_u \mid  \piencodf{ B_i}_x ) 
                        \\
                        &=  \bigoplus_{B_i \in \perm{B}} (\nu x)(                            x.\overline{\some}. \outact{x}{y_1}. (y_1 . \some_{\emptyset} ;y_{1}.\close;\zero \mid x.\overline{\some};x.\some_{u, (\lfv{M} \setminus x_1 , \cdots , x_k )}; \\
                                &\qquad x(x_1) . \cdots x.\overline{\some}. \outact{x}{y_k} . (y_k . \some_{\emptyset} ; y_{k}.\close;\zero \mid x.\overline{\some};x.\some_{u,(\lfv{M} \setminus x_k )};x(x_k) . \\
                                &\qquad x.\overline{\some}; \outact{x}{y_{k+1}}. ( y_{k+1} . \some_{u,\lfv{M}} ;y_{k+1}.\close; \piencodf{M}_u \mid x.\overline{\none} )) \cdots ) \mid \\
                                &\qquad x.\some_{\lfv{B}} ; x(y_1). x.\some_{y_1, \lfv{B}} ;x.\overline{\some} ; \outact{x}{x_1}. (x_1.\some_{\lfv{B_i(1)}} ; \piencodf{B_i(1)}_{x_1} \\
                                &\qquad  \mid y_1. \overline{\none} \mid \cdots x.\some_{\lfv{B_i(l)}} ; x(y_l). x.\some_{y_l, \lfv{B_i(l)} };x.\overline{\some} ; \outact{x}{x_l}. (x_l.\some_{\lfv{B_i(l)}} ; \\
                                &\qquad   \piencodf{B_i(l)}_{x_l} \mid y_l. \overline{\none} \mid x.\some_{\emptyset} ; x(y_{l+1}). ( y_{l+1}.\overline{\some};y_{l+1} . \overline{\close} \mid x.\some_{\emptyset} ; x. \overline{\none}) ) 
                               ) )
                        \\
                        & \red^{5l} \bigoplus_{B_i \in \perm{B}} (\nu x, y_1, x_1, \cdots  y_l, x_l)( 
                                  y_1 . \some_{\emptyset} ;y_{1}.\close;\zero \mid \cdots \mid y_l . \some_{\emptyset} ;y_{l}.\close;\zero \\
                                &\qquad x.\overline{\some}. \outact{x}{y_{l+1}} . (y_{l+1} . \some_{\emptyset} ; y_{l+1}.\close;\zero \mid x.\overline{\some};x.\some_{u,(\lfv{M} \setminus x_{l+1} , \cdots , x_k )}; \\
                                &\qquad x(x_{l+1}) . \cdots x.\overline{\some}. \outact{x}{y_k} . (y_k . \some_{\emptyset} ; y_{k}.\close;\zero \mid x.\overline{\some};x.\some_{u,(\lfv{M} \setminus x_k )};x(x_k) . \\
                                &\qquad x.\overline{\some}; \outact{x}{y_{k+1}}. ( y_{k+1} . \some_{u,\lfv{M}} ;y_{k+1}.\close; \piencodf{M}_u \mid x.\overline{\none} )) \cdots ) \mid \\
                                &\qquad   x_1.\some_{\lfv{B_i(1)}} ; \piencodf{B_i(1)}_{x_1} \mid \cdots \mid  x_l.\some_{\lfv{B_i(l)}} ; \piencodf{B_i(l)}_{x_l} \mid \\
                                &\qquad y_1. \overline{\none} \mid \cdots \mid y_l. \overline{\none}\\
                                &\qquad x.\some_{\emptyset} ; x(y_{l+1}). ( y_{l+1}.\overline{\some};y_{l+1} . \overline{\close} \mid x.\some_{\emptyset} ; x. \overline{\none}) ) \\
                        & \red^{l+ 5} 
                           \bigoplus_{B_i \in \perm{B}} (\nu x, x_1, \cdots , x_l)( x.\some_{u,(\lfv{M} \setminus x_{l+1} , \cdots , x_k )};x(x_{l+1}) . \cdots \\
                                &\qquad x.\overline{\some}. \outact{x}{y_k} . (y_k . \some_{\emptyset} ; y_{k}.\close;\zero \mid x.\overline{\some};x.\some_{u,(\lfv{M} \setminus x_k )};x(x_k) . \\
                                &\qquad x.\overline{\some}; \outact{x}{y_{k+1}}. ( y_{k+1} . \some_{u,\lfv{M}} ;y_{k+1}.\close; \piencodf{M}_u \mid x.\overline{\none} ) )  \mid \\
                                &\qquad   x_1.\some_{\lfv{B_i(1)}} ; \piencodf{B_i(1)}_{x_1} \mid \cdots \mid  x_l.\some_{\lfv{B_i(l)}} ; \piencodf{B_i(l)}_{x_l} \mid   x. \overline{\none} ) \\
                        & \red
                          \bigoplus_{B_i \in \perm{B}} (\nu  x_1, \cdots  , x_l)(  u . \overline{\none} \mid x_1 . \overline{\none} \mid  \cdots \mid x_{l} . \overline{\none} \mid (\lfv{M} \setminus \{  x_1, \cdots , x_k \} ). \overline{\none} \mid  
                          \\
                               & \qquad \qquad x_1.\some_{\lfv{B_i(1)}} ; \piencodf{B_i(1)}_{x_1} \mid \cdots \mid  x_l.\some_{\lfv{B_i(l)}} ; \piencodf{B_i(l)}_{x_l}  ) 
                               \\ 
                        & \red^{l}  \bigoplus_{B_i \in \perm{B}}  u . \overline{\none} \mid (\lfv{M} \setminus \{  x_1, \cdots , x_k \} ). \overline{\none} \mid \lfv{B}. \overline{\none} 
                        \\
                        & = Q_{7l + 6}  
                    \end{aligned}
                    \]
                    
                    }
                    \hrule 
                    \caption{Reductions of an encoded explicit substitution that leads to failure}
                        \label{fig:proofreductions3}
                    \end{figure}

                    Now we proceed by induction on the number of reductions $\piencodf{\expr{N}}_u \red^j Q$.
                            
                            \begin{myEnumerate}
                                \item $j = 0$:
                               
                               This case is trivial. 
                                
                                \item $1 \leq j \leq 7l + 6$:
                                    
                                 Then,
                                 
                                 $ \piencodf{\expr{N}}_u \red^j Q_j \red^{7l + 6 - j} Q_{7l + 6} = Q'$ , $\expr{N} \red^1 \sum_{B_i \in \perm{B}}  \fail^{\widetilde{z}} = \expr{N}'$ and $\piencodf{\sum_{B_i \in \perm{B}}  \fail^{\widetilde{z}} }_u = Q_{7l + 6}$, and the result follows.
                                
                                \item  $j > 7l + 6$:
                                
                            Then, $ \piencodf{\expr{N}}_u \red^{7l + 6} Q_{7l + 6} \red^n Q$, for $n \geq 1$. Also,  $\expr{N} \red^1 \sum_{B_i \in \perm{B}}  \fail^{\widetilde{z}}$. However no further reductions can be performed.

                            \end{myEnumerate}

                \item $\size{\widetilde{x}} < \size{B}$.    
                   
                   Proceeds similarly to the previous case.
                    
            \end{myEnumerate}

        \item  $\expr{N} = M \linexsub {N' /x}$. 
    
            Then,
            \[
            \begin{aligned}
                \piencodf{M \linexsub {N' /x}}_u &= (\nu x) ( \piencodf{ M }_u \mid   x.\some_{\lfv{N'}};\piencodf{ N' }_x  )  \\
            \end{aligned}
            \]
            
            Then we have
            \[
            \begin{aligned}
               \piencodf{\expr{N}}_u & =  (\nu x) ( \piencodf{ M }_u \mid   x.\some_{\lfv{N'}};\piencodf{ N' }_x  ) \\
               & \red^m  (\nu x) ( R \mid   x.\some_{\lfv{N'}};\piencodf{ N' }_x  )\\
               & \red^n  Q\\
            \end{aligned}
            \]
            
            for some process $R$, where $\red^n$ is a reduction that initially synchronizes with $x.\some_{\lfv{N'}}$ when $n \geq 1$, $n + m = k \geq 1$. Type preservation in \spi ensures reducing $\piencodf{ M}_v \red^m$ does not consume possible synchronizations with $x.\some$ if they occur. Let us consider the the possible sizes of both $m$ and $n$.
            
            \begin{myEnumerate}
                \item For $m = 0$ and $n \geq 1$.
                
                    In this case  $R = \piencodf{M}_u$  and  there are two possibilities of having an unguarded $x.\overline{\some}$ or $x.\overline{\none}$ without internal reductions.  
                    
                    \begin{myEnumerate}
                        \item $M = \fail^{x, \widetilde{y}}$

                            \[
                            \begin{aligned}
                               \piencodf{\expr{N}}_u & =  (\nu x) ( \piencodf{ M }_u \mid   x.\some_{\lfv{N'}};\piencodf{ N' }_x  ) \\
                               & = (\nu x) ( \piencodf{ \fail^{x, \widetilde{y}} }_u \mid   x.\some_{\lfv{N'}};\piencodf{ N' }_x  ) \\
                               & = (\nu x) ( u.\overline{\none} \mid x.\overline{\none} \mid  \widetilde{y}.\overline{\none} \mid x.\some_{\lfv{N'}};\piencodf{ N' }_x  ) \\
                               & \red u.\overline{\none} \mid \widetilde{y}.\overline{\none} \mid \lfv{N'}.\overline{\none} \\
                            \end{aligned}
                            \]
                            
                            Notice that no further reductions can be performed.
                            
                          Thus, 
                          \[ \piencodf{\expr{N}}_u \red u.\overline{\none} \mid \widetilde{y}.\overline{\none} \mid \lfv{N'}.\overline{\none}  = Q'.\]
                            We also have that  \[\expr{N} \red \fail^{ \widetilde{y} \cup \lfv{N'} } = \expr{N}' \text{ and } \piencodf{ \fail^{\widetilde{y} \cup \lfv{N'}} }_u = Q',\]
                            and the result follows.
                            
                        \item $\headf{M} = x $
                    
                            By the diamond property (Proposition~\ref{prop:conf1_lamrsharfail}) we will be reducing each non-deterministic choice of a process simultaneously.
                            \revd{B29}{Then by Proposition~\ref{prop:NEEDTONAME} we have the following:
                            \[
                            \begin{aligned}
                             \piencodf{\expr{N}}_u     &\red^* (\nu x) ( \bigoplus_{i \in I}(\nu \widetilde{y})(\piencodf{ x }_{j} \mid P_i) \mid   x.\some_{\lfv{N'}};\piencodf{ N' }_x  ) \\
                                 &= (\nu x) ( \bigoplus_{i \in I}(\nu \widetilde{y})(x.\overline{\some} ;[x \leftrightarrow j ] \mid P_i)  \mid   x.\some_{\lfv{N'}};\piencodf{ N' }_x  ) \\
                                 &\red (\nu x) ( \bigoplus_{i \in I}(\nu \widetilde{y})([x \leftrightarrow j ] \mid P_i) \mid   \piencodf{ N' }_x  ) &  = Q_1 \\
                                 &\red  \bigoplus_{i \in I}(\nu \widetilde{y})( \piencodf{ N' }_j \mid P_i ) & = Q_2 \\
                            \end{aligned}
                            \]
                            We also have that 
                            \[
                            \begin{aligned}
                                \expr{N} &=M \linexsub {N' /x} 
                                 \red  M \headlin{ N' /x }
                                 = \expr{M}'.
                            \end{aligned}
                            \]
                            where by Proposition \ref{prop:NEEDTONAME} we obtain
                            \[
                            \piencodf{M \headlin{ N'/x }}_u \red^* \bigoplus_{i \in I}(\nu \widetilde{y})( \piencodf{ N' }_j \mid P_i ) = Q_2.
                            \]
                            and finally from Proposition \ref{prop:NEEDTONAME} there exists an $\expr{M}$ with $\expr{M} \pequiv \expr{M}'$ such that:
                            \[
                            \begin{aligned}
                               \piencodf{\expr{M}}_u &= \bigoplus_{i \in I}(\nu \widetilde{y})( \piencodf{ N' }_j \mid P_i ) & = Q_2. 
                            \end{aligned}
                            \]
                            for simplicity we assume 
                            that $\piencodf{\expr{N}}_u \red Q_1$
                           \begin{myEnumerate}
                                \item  $n = 1$:
                                Then $Q = Q_1$ and  $ \piencodf{\expr{N}}_u \red^1 Q_1$. Since,
                                $Q_1 \red^1 Q_2 = Q'$, $\expr{N} \red^1 M \headlin{ N'/x} \pequiv  \expr{M} = \expr{N}'$ and $\piencodf{\expr{M}}_u = Q_2$, the result follows.
                                 \item  $n \geq 2$:
                                 Then,  $ \piencodf{\expr{N}}_u \red^2 Q_2 \red^l Q$, for $l \geq 0$. Also,  $\expr{N} \rightarrow \expr{M}$, $Q_2 = \piencodf{\expr{M}}_u$. By the IH there exist $ Q'$ and $\expr{N}'$ such that $ Q \red^i Q'$, $\expr{M} \red_{\pequiv}^j \expr{N}'$ and $\piencodf{\expr{N}'}_u = Q'$ . Finally, $\piencodf{\expr{N}}_u \red^2 Q_2 \red^l Q \red^i Q'$ and $\expr{N} \rightarrow \expr{M}  \red_{\pequiv}^j \expr{N}' $, and the result follows.
                            \end{myEnumerate}
                            }    
                        
                    \end{myEnumerate}
 \item For $m \geq 1$ and $ n \geq 0$.
                    
            \begin{myEnumerate}
            \item $n = 0$:
            
               Then,
               \[(\nu x) ( R \mid   x.\some_{\lfv{N'}};\piencodf{ N' }_x  )  = Q ~\text{and }~ \piencodf{M}_u \red^m R.\] 
               
             By the IH there exist $R'$  and $\expr{M}' $ such that $R \red^i R'$, $M \red_{\pequiv}^j \expr{M}'$ and $\piencodf{\expr{M}'}_u = R'$. 
              Hence, 
              \[ 
               \begin{aligned}
                   \piencodf{\expr{N}}_u & = (\nu x) ( \piencodf{M}_u  \mid   x.\some_{\lfv{N'}};\piencodf{ N' }_x  )\\ &\red^m  (\nu x) ( R \mid   x.\some_{\lfv{N'}};\piencodf{ N' }_x  )  = Q.
                \end{aligned}
                 \]
                Also, 
                \[ 
                \begin{aligned}
                   Q & \red^i   (\nu x) ( R' \mid   x.\some_{\lfv{N'}};\piencodf{ N' }_x  ) = Q'\\
                \end{aligned}
                \]
                and the term can reduce as follows:
                
                $\expr{N} = M \linexsub {N' /x} \red_{\pequiv}^j \sum_{M_i' \in \expr{M}'} M_i' \linexsub {N' /x} = \expr{N}'$ and  $\piencodf{\expr{N}'}_u = Q'$.

            \item When $n \geq 1$
            
                Then $R$ has an occurrence of an unguarded $x.\overline{\some}$ or $x.\overline{\none}$, and the case follows by IH. \qedhere
                        
                    \end{myEnumerate}
            \end{myEnumerate}
    \end{myEnumerate}
\end{proof}

\subsection{Success Sensitiveness}
\label{app:succtwo}

\pressucctwo*

\begin{proof}
In both cases, by induction on the structure of $M$. 
\begin{enumerate}
\item We only need to consider terms of the following form:

    \begin{itemize}

        \item $ M = \checkmark $.
        
        This  case is immediate.
        
        \item $M = N\ B$.
        
       By definition, $\headf{N \ B} = \headf{N}$. Hence we consider that $\headf{N} = \checkmark$. Then,
       
       $$ \piencodf{N \ B}_u = \bigoplus_{B_i \in \perm{B}} (\nu v)(\piencodf{N}_v \mid v.\some_{u, \lfv{B}} ; \outact{v}{x} . ([v \leftrightarrow u] \mid \piencodf{B_i}_x ) ) $$
       and by the IH  $\checkmark$ is unguarded in $\piencodf{N}_u$ after a sequence of reductions.

        \item  $M = (N[\widetilde{x} \leftarrow x])\esubst{ B }{ x }$.
        
        By definition, $\headf{(N[\widetilde{x} \leftarrow x])\esubst{ B }{ x }} = \headf{N[\widetilde{x} \leftarrow x]} = \headf{N} = \checkmark$ where $ \widetilde{x} = x_1 , \cdots , x_k $ and $\#(x,M) = \size{B}$.
        \[
            \begin{aligned}
                \piencodf{(N[\widetilde{x} \leftarrow x])\esubst{ B }{ x }}_u &=  \bigoplus_{B_i \in \perm{B}} (\nu x)( \piencodf{ N[\widetilde{x} \leftarrow x]}_u \mid \piencodf{ B_i}_x ) \\
                &\red^*  \bigoplus_{B_i \in \perm{B}} (\nu \widetilde{x})( \piencodf{ N}_u \mid x_1.\some_{\lfv{B_i(1)}};\\
                & \qquad \qquad  \qquad \piencodf{ B_i(1) }_{x_1} \mid \cdots \mid x_k.\some_{\lfv{B_i(k)}};\piencodf{ B_i(k) }_{x_k} ) \\
            \end{aligned} 
        \]

        and by the IH  $\checkmark$ is unguarded in $\piencodf{N}_u$ after a sequence of reductions.

        \item $M = M' \linexsub {N /x}$.
        
       By definition,  $\headf{M' \linexsub{N /x}}  = \headf{M'} \checkmark$. Then, $$\piencodf{M' \linexsub {N /x}}_u = (\nu x) ( \piencodf{ M' }_u \mid   x.\some_{\lfv{N}};\piencodf{ N }_x  )$$ and by the IH $\checkmark$ is unguarded in $\piencodf{N}_u$.

    \end{itemize}

   \item  We only need to consider terms of the following form:    
         
    \begin{itemize}
        \item  $M = \checkmark$.
        
        This case is trivial.
        
        \item $M = N\ B$.
        
        Then,
        \[\piencodf{N \ B}_u = \bigoplus_{B_i \in \perm{B}} (\nu v)(\piencodf{N}_v \mid v.\some_{u,\lfv{B}} ; \outact{v}{x} . ([v \leftrightarrow u] \mid \piencodf{B_i}_x ) ).\] 
        The only occurrence of an unguarded $\checkmark$ is within $\piencodf{N}_v$. By the IH we have that $\headf{N} = \checkmark$ and finally $\headf{N \ B} = \headf{N}$.
        
        

        \item $M = (N[\widetilde{x} \leftarrow x])\esubst{ B }{ x }$.
        
        Then,
        $$
            \begin{aligned}
                \piencodf{(N[\widetilde{x} \leftarrow x])\esubst{ B }{ x }}_u &=  \bigoplus_{B_i \in \perm{B}} (\nu x)( \piencodf{ N[\widetilde{x} \leftarrow x]}_u \mid \piencodf{ B_i}_x ) \\
            \end{aligned} 
        $$ However in both $\piencodf{ N[\widetilde{x} \leftarrow x]}_u$ and $\piencodf{ B_i}_x$ we have that both are guarded and hence $\checkmark$ cannot occur without synchronizations.
        
        \item $M = M' \linexsub {N /x}$.
        
        Then,
        $$\piencodf{M' \linexsub {N /x}}_u = (\nu x) ( \piencodf{ M' }_u \mid   x.\some_{\lfv{N}};\piencodf{ N }_x  ),$$ an unguarded occurrence of $\checkmark$ can only occur within $\piencodf{ M' }_u $. By the IH we have $\headf{M'} = \checkmark$ and hence $\headf{M' \linexsub{N /x}}  = \headf{M'}$. \qedhere
           \end{itemize}    
\end{enumerate}
\end{proof}

\successsenscetwo*

\begin{proof}
We proceed with the proof in two parts.

\begin{myEnumerate}
    
    \item Suppose that  $\expr{M} \Downarrow_{\checkmark} $. We will prove that $\piencodf{\expr{M}} \Downarrow_{\checkmark}$.

    By \defref{def:app_Suc3}, there exists $ \expr{M}' = M_1 + \cdots + M_k$ such that $\expr{M} \red^* \expr{M}'$ and with
    $\headf{M_j} = \checkmark$, for some  $j \in \{1, \ldots, k\}$. By completeness there exists $Q$ such that $\piencodf{\expr{M}}_u  \red^* Q = \piencodf{\expr{M}'}_u$.
    
    We wish to show that there exists $ Q'$ such that $Q \red^* Q'$ and $Q'$ has an unguarded occurrence of $\checkmark$.
    
    Since $Q = \piencodf{\expr{M}'}_u$ and due to compositionality and the homormorphic preservation of non-determinism, we have that
    \[
        \begin{aligned}
            Q &= \piencodf{M_1}_u \oplus \cdots \oplus \piencodf{M_k}_u\\
        \end{aligned}
    \]
    
    By Proposition \ref{Prop:checkprespi} (1) we have that $$\headf{M_j} = \checkmark \implies \piencodf{M_j}_u \red^*  (P \mid \checkmark) \oplus Q''$$ 
    for some $Q''$. 
    Hence, $Q \red^*  (P \mid \checkmark) \oplus Q'' = Q'$, as wanted.
    
    

    \item Suppose that $\piencodf{\expr{M}}_u \Downarrow_{\checkmark}$. We will prove that $ \expr{M} \Downarrow_{\checkmark}$.

    By operational soundness (Theorem~\ref{l:app_soundnesstwo}): if $ \piencodf{\expr{N}}_u \red^* Q$
    then there exist $Q'$  and $\expr{N}' $ such that 
    $Q \red^* Q'$, $\expr{N}  \red^*_{\pequiv} \expr{N}'$ 
    and 
    $\piencodf{\expr{N}'}_u = Q'$.
   Since $\piencodf{\expr{M}}_u \red^* P_1 \oplus \ldots \oplus P_k$, and $P_j= P_j'' \mid \checkmark$, for some $j$. 
   
   Notice that if $\piencodf{\expr{M}}_u$ is itself a term with unguarded $\checkmark$, say $\piencodf{\expr{M}}_u=P \mid \checkmark$, then $\expr{M}$ is itself headed with $\checkmark$, from Proposition \ref{Prop:checkpres} (2).
   
   In the case $\piencodf{\expr{M}}_u= P_1 \oplus \ldots \oplus P_k$, $k\geq 2$, and $\checkmark$ occurs unguarded in an $P_j$, The encoding acts homomorphically over sums and the reasoning is similar. We have that $P_j = P_j' \mid \checkmark$ we apply Proposition \ref{Prop:checkpres} (2). \qedhere
\end{myEnumerate}
\end{proof}

\end{document}